\newtheorem{theorem}{Theorem}[section]
\newtheorem{example}[theorem]{Example}
\newtheorem{lemma}[theorem]{Lemma}
\newtheorem{proposition}[theorem]{Proposition}
\newtheorem{conjecture}[theorem]{Conjecture}
\newtheorem{remark}{Remark}
\newtheorem{definition}{Definition}
\newcommand{\DETAILS}[1]{}
\newcommand{\Ran}{\operatorname{Ran}}
\newcommand{\tr}{\operatorname{Tr}\,}
\renewcommand{\H}{\operatorname{He}}
\newcommand{\1}{\mathds{1}}
\newcommand{\C}{\mathbb{C}}
\newcommand{\R}{\mathbb{R}}
\newcommand{\N}{\mathbb{N}}
\newcommand{\BR}{{\mathbb R}}
\newcommand{\D}{{\mathcal D}}
\newcommand{\cF}{{\mathcal{F}}}
\newcommand{\cG}{{\mathcal{G}}}
\newcommand{\cM}{{\mathcal{M}}}
\newcommand{\cR}{{\mathcal{R}}}
\renewcommand{\D}{D}
\newcommand{\Q}{{Q}}
\renewcommand{\O}{{O}}
\newcommand{\gH}{\mathfrak{H}}
\newcommand{\gS}{\mathfrak{S}}
\newcommand{\cE}{\mathcal{E}}
\newcommand{\cQ}{\mathcal{Q}}
\newcommand{\ii}{\infty}
\newcommand{\eps}{{\varepsilon}}
\newcommand{\al}{{L}}
\newcommand{\U}{{U}}
\newcommand{\V}{{V}}
\newcommand{\ran}{\rangle}
\newcommand{\lan}{\langle}
\newcommand\pscal[1]{{\ensuremath{\left\langle #1 \right\rangle}}}
\newcommand{\norm}[1]{ \left| \! \left| #1 \right| \! \right| }
\newcommand{\nn}{\nonumber}
\newcommand{\lk}{L_k}
\begin{document}

\title[Compactness of molecular reaction paths]{Compactness of molecular reaction paths in quantum mechanics}

\author[I. Anapolitanos]{Ioannis Anapolitanos}
\address{Dept.~of Math., Karlsruhe Institute of Technology, Karlsruhe, Germany} 
\email{ioannis.anapolitanos@kit.edu}

\author[M. Lewin]{Mathieu Lewin}
\address{CNRS \& CEREMADE, University Paris-Dauphine, PSL University, 75016 Paris, France} 
\email{mathieu.lewin@math.cnrs.fr}

\maketitle

\bigskip

\begin{abstract}
We study isomerizations in quantum mechanics. We consider a neutral molecule composed of $N$ quantum electrons and $M$ classical nuclei and assume that the first eigenvalue of the corresponding $N$-particle Schrödinger operator possesses two local minima with respect to the locations of the nuclei. An isomerization is a mountain pass problem between these two local configurations, where one minimizes over all possible paths the highest value of the energy along the path. Here we state a conjecture about the compactness of min-maxing sequences of such paths, which we then partly solve in the particular case of a molecule composed of two rigid sub-molecules that can move freely in space. More precisely, under appropriate assumptions on the multipoles of the two molecules, we are able to prove that the distance between them stays bounded during the whole chemical reaction. We obtain a critical point at the mountain pass level, which is called a transition state in chemistry. Our method requires to study the critical points and the Morse indices of the classical multipole interactions, as well as to generalize existing results about the van der Waals force. This paper generalizes previous works by the second author in several directions. 

\bigskip

\noindent\textsl{Final version to appear in \emph{Arch. Rat. Mech. Anal.}}
\end{abstract}

\bigskip\bigskip

\tableofcontents

\bigskip

In this paper we study isomerizations, which are chemical reactions with the property that the reactant is a molecule with the same atoms as the product, but in a different spacial configuration, for example HCN $\rightarrow$ CNH. The question how much energy is needed for such a reaction to take place is a very fundamental problem, which occupies a large amount of the numerical computations ran today in quantum chemistry. If the reaction is slow, it is customary to rely on the Born-Oppenheimer approximation of the $N$-body Schr\"odinger equation, that is, to assume that the nuclei are classical and pointwise particles, whereas the electrons are quantum and placed in their ground state. The two stable configurations of the molecule then correspond to local minima with respect to the positions of the nuclei. The chemical reaction is described by a path connecting these two states. The difference of the maximum and initial energies along one path corresponds to the amount of energy needed to bring the system from one end point to the other. The \emph{activation energy} is the lowest energy needed for the reaction to happen, and it is obtained by minimizing over all possible paths linking the two stable configurations. 

From the point of view of critical point theory, a chemical reaction corresponds to a \emph{mountain pass problem}~\cite{AmbRab-73,Struwe,Jabri-03}. For neutral molecules, it has been conjectured in~\cite{Lewin-04b,Lewin-PhD,Lewin-06} that \emph{all} possible isomerizations happen without breaking the molecule into pieces. Mathematically, this means that one can find sequences of reaction paths approaching the optimal activation energy, which only involve nuclei in a compact set. In this case, after passing to the limit, one obtains a critical point at the mountain pass level, which is called the \emph{transition state} in chemistry. As usual in critical point theory, it is not obvious to get a corresponding optimal path in the limit, but there is no doubt that this path should exist as well in most cases. 

The previously mentioned conjecture is partly motivated by the van der Waals force. This purely quantum effect occurs when two neutral molecules are placed at a large distance $L$ from each other. An attractive force of the order of $1/L^7$ is induced as a result of quantum correlations between the two systems. The existence of this force has been rigorously proved in a celebrated paper of Lieb and Thirring~\cite{LieThi-86} and its exact expression under some non-degeneracy assumptions has then been more recently derived for individual atoms by Sigal and the first author of this paper in~\cite{Anapolitanos-16,AnaSig-17}. The van der Waals force implies that \emph{all} neutral molecules can bind in the Born-Oppenheimer approximation, that is, the energy always has a global minimum with respect to the nuclear positions. This is because the energy of a molecule which splits into pieces is necessarily higher than the lowest possible energy, due to the van der Waals force. 

Isomerizations are however much more subtle than minima. Comparing energies is not sufficient to deduce that a path approaching the optimal activation energy will not involve nuclei very far from each other. For two molecules placed far away, there usually exist some orientations for which they repel, for instance when they have dipole moments oriented in opposite directions. Deforming a whole path involving such states requires more information than absolute energies. It is necessary to know the number of directions in which one can decrease the energy, that is, the Morse index of the ``critical points at infinity''~\cite{Bahri-89,Lewin-04b}. At least two such directions are necessary to deform a one-dimensional path and decrease its maximal energy.\footnote{Two directions are needed since the path could be already tangent to one direction of descent, as it is for a mountain pass.} In other words, the Morse index at infinity must be equal to two or higher, in order to eliminate the loss of compactness of optimal sequences of paths. 

In~\cite{Lewin-04b}, it was proved that the critical points of the dipole-dipole interaction which have a positive energy all have a Morse index greater than or equal to 2. This was used to prove the compactness of all reaction paths, for a system containing two rigid molecules, each having a non-degenerate ground state with a non-vanishing dipole. In~\cite{Lewin-06} the completely different situation of a molecule with only one moving atom (like for the reaction HCN $\rightarrow$ CNH) was treated. If the single atom escapes to infinity, then only the van der Waals force pertains since, by symmetry, the atomic ground state has no multipole in average. 
All the other cases were left open in~\cite{Lewin-04b,Lewin-06}. 

In this paper, we extend these results in many directions, and make important progress for the case of a molecule composed of two rigid sub-molecules. We are able to get several new results on the critical points of the (classical) multipolar interactions. Those dominate the energy in case the two molecules have low order multipoles, that is, such that the energy is of the order $1/L^p$ with $p<6$. On the other hand, if sufficiently many of their multipoles vanish, then the van der Waals force dominates the energy to leading order and it may be used to prove the compactness of reaction paths. Our approach then requires to improve several of the existing results on the van der Waals force and our findings in this direction may be of general interest. 

In the next section we define the system properly and describe the main open questions. Then we recall some results from~\cite{Lewin-04b,Lewin-06} and state our new results. The rest of the paper is devoted to the proof of our theorems.

\subsubsection*{Acknowledgements} 
M.L. would like to thank \'Eric S\'er\'e who suggested him this problem 15 years ago.  I.A. is grateful to Marcel Griesemer, Dirk Hundertmark and Semjon Wugalter for inspiring discussions on van der Waals forces between molecules, and to Michal Jex for discussions on the physical background of the problem. He gratefully acknowledges Matthias Roth for numerous stimulating discussions on the expansion of derivatives of the energy. Both authors thank Semjon Wugalter for useful remarks which have led us to include the spin into account. The research of I.A. was funded by the Deutsche Forschungsgemeinschaft (DFG, German Research Foundation) - Project-ID 258734477 - SFB 1173 and at an early stage under the grant number GR 3213/1-1. 
This project has received funding from the European Research Council (ERC) under the European Union's Horizon 2020 research and innovation programme (grant agreement MDFT No 725528 of M.L.).

%%%%%%%%%%%%%%%%%%%%%%%%%%%%%%%%%%%%%%%%%%%%%%%%%%%%%%%%
%%%%%%%%%%%%%%%%%%%%%%%%%%%%%%%%%%%%%%%%%%%%%%%%%%%%%%%%
\section{Model and main results}
%%%%%%%%%%%%%%%%%%%%%%%%%%%%%%%%%%%%%%%%%%%%%%%%%%%%%%%%
%%%%%%%%%%%%%%%%%%%%%%%%%%%%%%%%%%%%%%%%%%%%%%%%%%%%%%%%

%%%%%%%%%%%%%%%%%%%%%%%%%%%%%%%%%%%%%%%%%%%%%%%%%%%%%%%%
\subsection{A conjecture for isomerizations in quantum mechanics}

\subsubsection{Schrödinger Hamiltonian for molecules}
We consider a system composed of $N$ quantum electrons and $M$ nuclei, of charges $z_1,\dots ,z_M\in\N$ and located at $y_1,\dots ,y_M\in\R^3$, with $y_j\neq y_k$ for $j\neq k$. In units where Planck's constant is $\hbar=1$, the electron mass is $m_e=\frac{1}{2}$ and the elementary charge is
$e=1$,  the Born-Oppenheimer Hamiltonian of the full system reads
\begin{multline}
H_N(Y,Z):=\sum_{j=1}^N-\Delta_{x_j}-\sum_{j=1}^N\sum_{m=1}^M\frac{z_m}{|x_j-y_m|}+\sum_{1\leq j<k\leq N}\frac{1}{|x_j-x_k|}\\
+\sum_{1\leq \ell<m\leq M}\frac{z_mz_\ell}{|y_m-y_\ell|}
\label{eq:Hamiltonian_full}
\end{multline}
where we have used the shorthand notation $Y=(y_1,\dots ,y_M)$ and $Z=(z_1,\dots ,z_M)$. 
The total nuclear charge will be denoted by
$$|Z|:=\sum_{m=1}^Mz_m.$$
Due to the fermionic nature of the electrons, the Hamiltonian~\eqref{eq:Hamiltonian_full} acts in the Hilbert space
$$L^2_a\left((\R^3\times\{\pm1/2\})^N,\C\right)\simeq \bigwedge_1^NL^2\left(\R^3\times\{\pm1/2\},\C\right)$$
of antisymmetric square-integrable wavefunctions $\Psi(x_1,s_1,\dots ,x_N,s_N)$ with spin, that is, such that
\begin{equation}\label{eq:fermions}
\pi\cdot \Psi(X_1,...,X_N):=\Psi(X_{\pi(1)},\dots ,X_{\pi(N)})=(-1)^\pi\,\Psi(X_{1},\dots ,X_{N}) 
\end{equation}
for any permutation $\pi\in\mathfrak{S}_N$, and where $X=(x,s)\in \R^3\times\{\pm1/2\}$. None of the next results turns out to depend on the statistics of the particles, nor of the presence of the spin, but we stick to this case for obvious physical reasons. The operator $H_N(Y,Z)$ is essentially self-adjoint on $C^\ii_c$ and its domain is the Sobolev space
$$D\big(H_N(Y,Z)\big)=H^2_a\left((\R^3\times\{\pm1/2\})^N,\C\right)\simeq \bigwedge_1^NH^2\left(\R^3\times\{\pm1/2\},\C\right).$$
Often we just write $L^2_a$ and $H^2_a$ when there is no possible confusion about the value of $N$.

We denote by 
$$\boxed{E_N(Y,Z):=\min\sigma\big(H_N(Y,Z)\big)}$$
the bottom of the spectrum of $H_N(Y,Z)$, which is a translation-invariant function of the nuclear positions $Y\in(\R^3)^M$. 
When $N<|Z|+1$ (neutral or positively charged molecules), the HVZ~\cite{Hunziker-66,VanWinter-64,Zhislin-60} and Zhislin-Sigalov~\cite{Zhislin-60,ZhiSig-65} theorems imply that $E_N(Y,Z)$ is an eigenvalue of $H_N(Y,Z)$, lying strictly below the essential spectrum:
$$E_N(Y,Z)<\min\sigma_{\rm ess}\big(H_N(Y,Z)\big)=E_{N-1}(Y,Z).$$
In other words, for every $Y$ there exists at least one eigenfunction $\Psi$, such that 
$$H_N(Y,Z)\Psi=E_N(Y,Z)\,\Psi.$$
When $y_j\to y_k$ for some $j\neq k$, then $E_N(Y,Z)\to+\ii$ due to the nuclear repulsion. We therefore use the convention $E_N(Y,Z)=+\ii$ when two nuclei are on top of each other.

\subsubsection{Binding and the van der Waals force}
It has been proved that all neutral molecules can bind in the Born-Oppenheimer approximation, which means that there is always a minimum with respect to the nuclear positions. This minimum is of course never unique, since the system is invariant under translations.

\begin{theorem}[All neutral molecules can bind~\cite{Morgan-79,MorSim-80,LieThi-86}]\label{thm:molbind}
Assume that $N=|Z|$ (neutral case). Then there exist some nuclear positions $\bar Y=(\bar{y}_1,\dots ,\bar{y}_M)\in(\R^3)^M$ such that 
\begin{equation}
E_N(\bar{Y},Z)=\min_{Y\in(\R^3)^M}E_N(Y,Z).
\label{eq:exist_min}
\end{equation}
More precisely, we have 
\begin{equation}
\min_{Y\in(\R^3)^M}E_N(Y,Z)< \liminf_{\sum_{m\neq\ell}|y_m-y_\ell|\to\ii}E_N(Y,Z).
\label{eq:binding}
\end{equation}
\end{theorem}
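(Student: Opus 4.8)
The plan is to deduce the existence of a minimizer~\eqref{eq:exist_min} from the strict binding inequality~\eqref{eq:binding}, and to establish~\eqref{eq:binding} by combining a soft geometric lower bound on the energy of dissociated configurations with a sharp trial-state upper bound on $\min_Y E_N$. For the first implication, take a minimizing sequence $Y^{(n)}$ for $E_N(\cdot,Z)$. If $\sum_{m\neq\ell}|y_m^{(n)}-y_\ell^{(n)}|\to\infty$ along a subsequence, then along it $\liminf_n E_N(Y^{(n)},Z)\ge\liminf_{\sum_{m\neq\ell}|y_m-y_\ell|\to\infty}E_N(Y,Z)$, which together with~\eqref{eq:binding} contradicts $\lim_n E_N(Y^{(n)},Z)=\min_Y E_N(Y,Z)$. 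Hence the pairwise distances stay bounded, so after an overall translation the configuration is bounded and, along a subsequence, $Y^{(n)}\to\bar Y$; the limit has no collision, since otherwise $E_N(Y^{(n)},Z)\to+\infty$, and $E_N(\cdot,Z)$ is continuous near $\bar Y$ because for $N\le|Z|$ it is an isolated eigenvalue of $H_N(Y,Z)$ below the essential spectrum, depending analytically on $Y$. Therefore $E_N(\bar Y,Z)=\min_Y E_N(Y,Z)$.

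For the lower bound, localizing the electrons by an IMS partition of unity shows that for any sequence of nuclear configurations with $\sum_{m\neq\ell}|y_m-y_\ell|\to\infty$ one has, after extracting a subsequence along which the nuclei split into $k\ge 2$ clusters with bounded intracluster and diverging intercluster distances,
\begin{equation*}
\liminf E_N(Y,Z)\ \ge\ \Sigma:=\inf\Big\{\,\inf_{Y_1}E_{N_1}(Y_1,Z_1)+\inf_{Y_2}E_{N_2}(Y_2,Z_2)\ :\ Z=Z_1\sqcup Z_2,\ N=N_1+N_2\,\Big\},
\end{equation*}
the infimum being over splittings of the nuclei into two nonempty groups and over $N_i\ge0$; here one compares with the bottoms of the spectra of the subsystems and reduces the number of clusters to two, using that grouping nuclei together never raises the ground-state energy. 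It thus suffices to prove $\min_Y E_N(Y,Z)<\Sigma$.

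Fix a near-optimal splitting $(Z_1,Z_2),(N_1,N_2)$ for $\Sigma$; we may assume $N_i\le|Z_i|$, since transferring the excess electrons of an overcharged cluster onto the complementary positive ion does not increase $\Sigma$, so by the HVZ and Zhislin--Sigalov theorems each subsystem possesses a ground state $\Psi_i$ at a near-minimizing configuration $Y_i$. Placing the two clusters at mutual distance $L$ and using the suitably antisymmetrized product $\Psi_1\otimes\Psi_2$ as a trial function, we obtain
\begin{equation*}
\min_Y E_N(Y,Z)\ \le\ \Sigma+\big\langle\Psi_1\otimes\Psi_2,\,W_L\,\Psi_1\otimes\Psi_2\big\rangle+(\text{corrections exponentially small in }L),
\end{equation*}
where $W_L$ is the Coulomb interaction between the two clusters and the remainder comes from antisymmetrization and normalization. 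If some $N_i<|Z_i|$, neutrality of the whole molecule forces the clusters to carry opposite net charges $\pm q$ with $q\neq0$, and the expectation of $W_L$ equals $-q^2/L+O(L^{-2})<0$ for $L$ large. If both clusters are neutral, one expands $W_L$ in multipoles and chooses the separation vector and relative orientation so that the leading nonvanishing multipole--multipole contribution is negative; and when all multipolar interactions vanish identically in the orientation (for instance when one cluster is spherically symmetric), what remains is the van der Waals attraction $-cL^{-6}+o(L^{-6})$ with $c>0$. In every case the correction is strictly negative for $L$ large, so $\min_Y E_N(Y,Z)<\Sigma\le\liminf_{\sum_{m\neq\ell}|y_m-y_\ell|\to\infty}E_N(Y,Z)$, which is~\eqref{eq:binding}.

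The genuinely hard step is the van der Waals estimate in the neutral, multipole-free case: the product trial state yields no gain at any polynomial order in $1/L$, and extracting the $-cL^{-6}$ term requires replacing it by a correlated trial function implementing second-order perturbation theory in $W_L$, with a uniform control of all error terms --- this is precisely the theorem of Lieb and Thirring~\cite{LieThi-86}. The geometric lower bound of the second step, the ionic and multipolar subcases of the third, and the compactness argument for~\eqref{eq:exist_min} are comparatively routine.
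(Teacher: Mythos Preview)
Your strategy matches the paper's sketch and the cited references: deduce~\eqref{eq:exist_min} from~\eqref{eq:binding} by compactness and continuity, bound the dissociation limit below by $\Sigma$ via IMS/geometric localization, and beat $\Sigma$ from above with a trial state (Coulomb attraction in the ionic case, van der Waals in the neutral case, as in~\eqref{eq:split_non_neutral}--\eqref{eq:vdW_LT}).

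There is, however, a genuine gap. Your claim that ``transferring the excess electrons of an overcharged cluster onto the complementary positive ion does not increase $\Sigma$'' --- equivalently, that a (near-)optimal splitting can always be taken with $N_i\le|Z_i|$ --- is precisely the open conjecture that the minimum in~\eqref{eq:limit_Morgan_Simon} is attained in the neutral case, which the paper itself flags as unresolved. The inequality $E_{N_1-1}+E_{N_2+1}\le E_{N_1}+E_{N_2}$ for $N_1>|Z_1|$, $N_2<|Z_2|$ does not follow from HVZ: one has $E_{N_1}\le E_{N_1-1}$ and $E_{N_2+1}\le E_{N_2}$, and these two inequalities point in opposite directions. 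The correct fix (this is~\cite[Lemma~6]{Lewin-04b}, invoked in the paper's footnote) is to argue that at \emph{any} optimal splitting both subsystems possess ground states, even if one is overcharged: if $N_1>|Z_1|$, $N_2<|Z_2|$ and cluster~1 had no ground state, then $E_{N_1}=E_{N_1-1}$ by HVZ while $E_{N_2+1}<E_{N_2}$ by Zhislin, so $(N_1-1,N_2+1)$ would be strictly better, contradicting optimality. With ground states available for the actual optimizer, your ionic and neutral trial-state arguments then go through unchanged.

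A smaller point: your neutral-case dichotomy (leading multipole negative at some orientation, else van der Waals) is more delicate than necessary and leaves the borderline $n_1+n_2\ge5$ unaddressed. Lieb--Thirring's route~\eqref{eq:vdW_LT} is cleaner: averaging the upper bound over $SO(3)\times SO(3)$ annihilates all multipolar contributions and leaves a strictly negative $-C/L^6$, so some orientation achieves energy $<\Sigma$ regardless of the multipole structure.
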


The inequality~\eqref{eq:binding} means that it is not energetically favorable to split the molecule in pieces. Since $Y\mapsto E_N(Y,Z)$ is continuous, this immediately implies the existence of a global minimizer, as stated  in~\eqref{eq:exist_min}. 

The behavior of the energy when the molecule splits into several subsystems receeding from each other as in~\eqref{eq:binding} is rather well understood. Let us explain this in the case when the molecule splits in two pieces, placed at a distance $L\to\ii$, for instance in the direction $e_1=(1,0,0)$. This corresponds to taking $Y=(Y_1,Y_2+L e_1)$ where it is understood in our notation that
$$(y_1,\dots ,y_M)+Le_1=(y_1+Le_1,\dots ,y_M+Le_1).$$
We assume for simplicity that the two molecules are rigid, that is, the nuclear positions $Y_1$ and $Y_2$ do not depend on $L$. 
Morgan and Simon have shown in~\cite{Morgan-79,MorSim-80} (see also~\cite{AveSei-75,Ahlrichs-76,ComSei-78,Lewin-04b}) that
\begin{equation}
\lim_{L\to\ii} E_N(Y_1,Y_2+Le_1,Z)=\min_{N_1+N_2=N}\big\{E_{N_1}(Y_1,Z_1)+E_{N_1}(Y_2,Z_2)\big\}
\label{eq:limit_Morgan_Simon}
\end{equation}
where we have denoted $Z=(Z_1,Z_2)$ for shortness. In other words, in the dissociation limit the energy becomes the sum of the energies of the two sub-molecules, but one should not forget to optimize over the ways of distributing the electrons. It is a famous conjecture that the minimum on the right side of~\eqref{eq:limit_Morgan_Simon} is attained only in the neutral case $N_1=|Z_1|$ and $N_2=|Z_2|$, see~\cite{AnaSig-17} and references therein.

If nevertheless~\eqref{eq:limit_Morgan_Simon} is attained for some $N_1\neq |Z_1|$, then placing the electrons in this manner one finds a Coulomb attraction between the two charged systems, which results in the following upper bound
\begin{multline}
E_N(Y_1,Y_2+Le_1,Z)\leq E_{N_1}(Y_1,Z_1)+E_{N_2}(Y_2,Z_2)\\+\frac{(N_1-|Z_1|)(N_2-|Z_2|)}{L}+o\left(\frac1L\right) 
\label{eq:split_non_neutral}
\end{multline}
where $(N_1-|Z_1|)(N_2-|Z_2|)<0$ since $N=N_1+N_2=Z_1+Z_2$. This is proved by considering $\Psi_1\wedge\Psi_2(\cdot-Le_1)$ for $\Psi_{1}$ and $\Psi_2$ two  ground states of $E_{N_1}(Y_1,Z_1)$ and $E_{N_2}(Y_2,Z_2)$, respectively\footnote{These ground states exist by~\cite[Lemma 6]{Lewin-04b}.}, see e.g.~\cite[Lemma 2]{Lewin-04b}. Hence one gets immediately~\eqref{eq:binding}.

In the neutral case $N_1=|Z_1|$ in~\eqref{eq:binding}, Lieb and Thirring have shown in~\cite{LieThi-86} that 
\begin{multline}
\int_{SO(3)}\int_{SO(3)} E_N(UY_1,VY_2+Le_1,Z)\,{\rm d}U\,{\rm d}V\\
\leq E_{N_1}(Y_1,Z_1)+E_{N_2}(Y_2,Z_2)-\frac{C}{L^6}
\label{eq:vdW_LT}
\end{multline}
for some $C>0$. Hence there is at least one orientation of each of the molecules for which 
$$E_N(UY_1,VY_2+Le_1,Z)\leq E_{N_1}(Y_1,Z_1)+E_{N_2}(Y_2,Z_2)-\frac{C}{L^6}.$$
A similar argument in case the molecule splits into several sub-systems implies~\eqref{eq:binding} and then~\eqref{eq:exist_min}.

In~\cite{Anapolitanos-16,AnaSig-17}, the exact value of the constant $C_{\rm vdW}$ in the expansion
\begin{equation}
E_N(Y_1,Y_2+Le_1,Z)= E_{N_1}(Y_1,Z_1)+E_{N_2}(Y_2,Z_2)-\frac{C_{\rm vdW}}{L^6}+o\left(\frac1{L^6}\right)
\label{eq:vdW_AnaSig}
\end{equation}
was found, assuming that the two individual molecules are atoms, with some irreducibility assumptions on the ground state eigenspaces. Under more stringent assumptions, Morgan and Simon had already proved in~\cite{MorSim-80} that $E_N(Y_1,Y_2+Le_1,Z)$ can be expanded as an infinite power series in $L^{-1}$ and mentioned the van der Waals interpretation of the 6th order term. 

In this paper we will generalize~\eqref{eq:vdW_LT} and~\eqref{eq:vdW_AnaSig} in several directions.

\subsubsection{Isomerizations}
Next we consider a molecule which possesses two locally stable configurations, that is, we assume that the function $Y\mapsto E_N(Y,Z)$ has two local minima, called $\bar{Y}_0$ and $\bar{Y}_1$. We then consider all the continuous paths linking them and define the mountain pass level by
\begin{equation}
\boxed{c:=\inf_{\substack{Y(t)\in C^0([0,1],(\R^3)^M)\\ Y(0)=\bar Y_0\\Y(1)=\bar Y_1}}\;\max_{t\in[0,1]} E_N\big(Y(t),Z\big).}
\label{eq:mountain_pass_level_general_case}
\end{equation}
Since the energy blows up when two nuclei go on top of each other, only paths with nuclei in 
$$(\R^3)^M\setminus\bigg\{\prod_{j\neq k}|y_j-y_k|=0\bigg\}$$
will be involved in~\eqref{eq:mountain_pass_level_general_case}. 

Of course, if $\bar{Y}_0$ and $\bar{Y}_1$ can be deduced from one another by translating or rotating the whole system, then the problem is trivial and we have $c=E_N\big(\bar{Y}_0,Z\big)=E_N\big(\bar{Y}_1,Z\big)$. This trivial situation is avoided by assuming that going from $\bar{Y}_0$ to $\bar{Y}_1$ actually requires to pass an energy barrier:
$$\boxed{c>\max\Big\{E_N\big(\bar{Y}_0,Z\big),E_N\big(\bar{Y}_1,Z\big)\Big\}.}$$
This condition is satisfied if, for instance, $\bar{Y}_0$ and $\bar{Y}_1$ are strict local minima, up to the natural symmetries of the system.

The mountain pass problem~\eqref{eq:mountain_pass_level_general_case} plays an important role in quantum chemistry. It is well understood in terms of the theory of ``rare events" in probability, see for instance \cite{EVan-06,LuNol-15} and  references therein. The idea is to add a Brownian motion to the nuclei and look for the most probable path they will take to go from one minimum to the other one, waiting an exponentially long time for this to happen. Here we concentrate on the (deterministic) definition of the mountain pass level~\eqref{eq:mountain_pass_level_general_case}. 

\begin{conjecture}[Compactness of all isomerizations~\cite{Lewin-04b,Lewin-PhD,Lewin-06}]\label{conjecture}
Let $N=|Z|$ (neutral case). Assume that $\bar Y_0$ and $\bar Y_1$ are two local minima of $Y\mapsto E_N(Y,Z)$ and that
$$c>\max\big\{ E_N(\bar Y_0,Z),E_N(\bar Y_1,Z)\big\}.$$
Then there exists a min-maxing sequence of paths $\{Y_n(t)\}\subset C^0([0,1],(\R^3)^M)$ with $Y(0)=\bar Y_0$ and $Y(1)=\bar Y_1$, that is, satisfying 
$$\lim_{n\to\ii}\max_{t\in[0,1]} E_N(Y_n(t),Z)=c,$$
which is compact in the sense that the nuclei do not escape to infinity:
\begin{equation}
 |y_{n,j}(t)|\leq R,
 \label{eq:compactness}
\end{equation}
for all $j$, all $n\geq1$ and all $t\in[0,1]$.
\end{conjecture}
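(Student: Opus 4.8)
The natural approach is a concentration--compactness argument at the level of paths, in the spirit of critical point theory with loss of compactness. The plan is to start from an arbitrary min-maxing sequence of paths $\{\tilde Y_n(t)\}$ and either show it is already compact or replace it by a new min-maxing sequence whose paths stay in a fixed ball. The first step is to analyse precisely how compactness can fail: if $\{\tilde Y_n\}$ is not compact then, along a subsequence, there are times $t_n$ at which the nuclei of $\tilde Y_n(t_n)$ split into $k\geq 2$ clusters whose mutual distances tend to infinity, with internal configurations converging (after translation) to limiting configurations $Y_1^\infty,\dots,Y_k^\infty$ and with the electrons distributed according to some partition $N=N_1+\cdots+N_k$, $Z=(Z_1,\dots,Z_k)$. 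By the Morgan--Simon asymptotics~\eqref{eq:limit_Morgan_Simon}, the energy at such a configuration converges to $\sum_i E_{N_i}(Y_i^\infty,Z_i)$, and since $\max_t E_N(\tilde Y_n,Z)\to c$ this forces $\sum_i E_{N_i}(Y_i^\infty,Z_i)\leq c$. Using translation invariance one fixes the center of mass of one cluster; the remaining degrees of freedom at infinity are then the relative cluster positions — of which, to leading order, only the directions $\omega\in S^2$ matter, the radial distances $L\to\infty$ being directions of escape that should not be counted — together with the relative orientations in $SO(3)$ and the internal coordinates of each cluster, which may be frozen at critical values of the individual energies $E_{N_i}(\cdot,Z_i)$.

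Second, I would isolate the effective energy governing these directions. For neutral clusters, the leading interaction between two clusters at distance $L$ is either the classical multipole interaction $W(\omega,U,V)/L^p$ with $p<6$, when the relevant low-order multipoles do not all vanish, or the van der Waals attraction $-C(U,V)/L^6$ when enough of them vanish. The core of the argument is then a Morse-index statement: at every critical point of this effective interaction at which the interaction energy is nonnegative, the Morse index in the variables $(\omega,U,V)$ is at least two. Granting this, a standard deformation lemma near such a critical point at infinity — combined with the elementary fact that when the interaction energy is negative one can simply pull the clusters back toward one another and strictly decrease the energy — turns $\{\tilde Y_n\}$ into a new min-maxing sequence from which the offending cluster decomposition has been removed; iterating over the finitely many possible decompositions yields~\eqref{eq:compactness} and, by the usual arguments, a critical point at the level $c$. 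The reason one needs two descent directions rather than one is exactly the footnote above: a min-maxing path may already be tangent to a direction of descent (as a mountain pass always is), so a single such direction does not suffice to lower $\max_t E_N$ along a one-parameter family of paths.

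Third, and this is where I expect the real difficulty, one has to prove the Morse-index $\geq 2$ statement, which splits into two essentially independent problems. (a) A purely classical one: classify the critical points of the multipole interaction $W(\omega,U,V)$ between two rigid charge distributions with prescribed multipole moments, and compute their Morse indices on $S^2\times SO(3)\times SO(3)$, showing that those with $W\geq 0$ have index $\geq 2$; this was done for dipole--dipole interactions in~\cite{Lewin-04b}, but for higher multipoles — and for more than two clusters — the combinatorics become heavy, which is presumably why only the two-sub-molecule case can be treated here. (b) A quantum one: when the relevant multipoles vanish, one needs the analogous non-degeneracy and index statement for the van der Waals interaction $-C(U,V)/L^6$, which requires extending the known expansions and irreducibility results for the van der Waals constant (so far available essentially for atoms, \cite{Anapolitanos-16,AnaSig-17}) to general rigid molecules, together with control of the dependence of $C$ on the orientations. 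A further, more technical obstacle is to make the deformation argument uniform: one must patch the local deformations near the (possibly several, possibly nested) critical points at infinity without creating new non-compactness elsewhere, and simultaneously keep the nuclei away from the collision set where $E_N=+\infty$.
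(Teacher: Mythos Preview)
The statement you are addressing is labelled a \emph{Conjecture} in the paper, and indeed the paper does not prove it; it only establishes the conclusion in the restricted setting of two rigid sub-molecules (Theorems~\ref{thm:first_cases}, \ref{thm:no-multipoles_in_average}, \ref{thm:with-multipoles}), and even there the case $n_1+n_2=5$ is left open. So there is no ``paper's own proof'' to compare against.

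That said, your outline is a faithful description of the general \emph{programme} announced in the introduction, and your identification of the two essentially independent obstacles --- the Morse-index problem for classical multipole interactions, and the extension of the van der Waals asymptotics to general molecules with control of the orientation dependence --- is exactly right. Where your proposal diverges from what the paper actually carries out in the two-molecule case is in the implementation of step~(a). You frame the key statement as ``critical points of $W$ with $W\geq 0$ have Morse index $\geq 2$'', to be fed into an abstract deformation lemma near the critical point at infinity. The paper proves instead the pair of statements: (i) any point $(U,V)$ with $|\nabla_{U,V}\cF^{(n,m)}|\leq\delta$ and $\mathrm{Hess}_{U,V}\cF^{(n,m)}\geq -\delta$ satisfies $\cF^{(n,m)}\leq -\delta$ (Proposition~\ref{prop:localmin}); and (ii) the sublevel sets $\{\cF^{(n,m)}<-\delta\}$ are path-connected for small $\delta>0$ (Proposition~\ref{prop:connected}). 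This reformulation avoids the abstract deformation machinery entirely: one flows the endpoints $(U(t_0),V(t_0))$ and $(U(t_1),V(t_1))$ to local pseudo-minima of $\cE(L,\cdot,\cdot)$, uses Proposition~\ref{prop:fullexp} and (i) to guarantee those land in $\{\cF^{(n_1,n_2)}<-\delta\}$, and then uses (ii) to join them by an explicit path on which the full energy (via the expansion of Theorem~\ref{thm:expansion_energy}) stays below $c$. The practical gain is that everything is constructive and stable under the $O(1/L)$ perturbations coming from the subleading terms in the expansion; your Morse-index formulation would, to give a uniform deformation in $L$, require some non-degeneracy of the critical points of $\cF^{(n,m)}$ that the paper never assumes. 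Your acknowledgement that the full conjecture --- arbitrary cluster decompositions, more than two clusters, non-rigid internal geometry --- remains out of reach is entirely accurate, and the paper makes no claim to the contrary.
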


Assuming that the conjecture is right, one would like to deduce the existence of a \emph{transition state}, that is, a critical point at the mountain pass level. Unfortunately, we have to face here the problem that the function $Y\mapsto E_N(Y,Z)$ is Lipschitz but not necessarily $C^1$ in case of degeneracy. If the mountain pass is attained at such a point where $E_N(Y,Z)$ is degenerate, the Born-Oppenheimer approximation becomes questionable, but it is still reasonable to ask what can be said. 

The way out found in~\cite{Lewin-04b,Lewin-PhD,Lewin-06} is to remark that the total energy 
$$(Y,\Psi)\mapsto \pscal{\Psi,H_N(Y,Z)\Psi}$$
is itself $C^1$. It indeed makes sense to let the electrons evolve along the path as well. For this reason, let us choose two ground states $\bar\Psi_0$ and $\bar\Psi_1$ corresponding to the eigenvalues $E_N(\bar Y_0,Z)$ and $E_N(\bar Y_1,Z)$, respectively, and define a new mountain pass level by
\begin{equation}
\boxed{c':=\inf_{\substack{
% Y(t)\in C^0([0,1],(\R^3)^M)\\ \Psi(t)\in C^0([0,1],H_a^2((\R^3\times\{\pm1/2\})^N,\C))\\ 
(Y,\Psi)(0)=(\bar Y_0,\bar\Psi_0)\\ (Y,\Psi)(1)=(\bar Y_1,\bar\Psi_1)}}\;\max_{t\in[0,1]} \pscal{\Psi(t),H_N\big(Y(t),Z\big)\Psi(t)}}
\label{eq:mountain_pass_level_general_case_with_Psi}
\end{equation}
which is obviously larger than or equal to $c$.
It is understood here that $Y(t)$ is a path as before and that $\Psi(t)$ is a continuous path of wavefunctions in $H^1_a((\R^3\times\{\pm1/2\})^N,\C)$ such that $\|\Psi(t)\|_{L^2}=1$ for all $t\in[0,1]$.

If $E_N(Y(t),Z)$ has a constant multiplicity along one path $Y(t)$ then, by usual perturbation theory~\cite{Kato}, one can choose a corresponding ground state $\Psi(t)$ in a smooth manner and the energy along the path is just $E_N(Y(t),Z)$ for all $t\in[0,1]$. Note that a path $\Psi(t)$ with only real-valued functions which starts at $\bar\Psi_0$ may end up at the wrong final state $-\bar\Psi_1$ instead of $\bar\Psi_1$, and then $c'$ could be different from $c$. However, for complex-valued wavefunctions, we can always choose the phase appropriately and bring $\bar\Psi_0$ to $\bar\Psi_1$, while staying at all times on the ground state. The fact that the wavefunctions are complex-valued therefore plays a role here. When the multiplicity is constant along one path, we conclude that dressing the path with electrons does not change the maximal energy. 

Even when the multiplicity of $E_N(Y(t),Z)$ is not constant along one path $Y(t)$, it was shown in~\cite[Thm.~4]{Lewin-04b} (see also~\cite[chapter~B.II]{Lewin-PhD}) that for every $\eps>0$, one can find a continuous path $\Psi(t)$ such that 
$$\max_{t\in[0,1]}\pscal{\Psi(t),H_N(Y(t),Z)\Psi(t)}\leq \max_{t\in[0,1]}E_N(Y(t),Z)+\eps.$$
Roughly speaking, the idea is to follow the ground state eigenspace on the pieces of the path where it is non-degenerate, and to paste these curves appropriately in the neighborhood of the degeneracies. This is explained in a slightly different manner in Appendix~\ref{app:dressed_path}. 
So we conclude after passing to the limit $\eps\to0$ that 
$$\boxed{c=c'}$$
in all cases.

Now that we can work with ``dressed'' paths where the electrons are not necessarily in their ground states, we can take benefit of the smoothness of the energy $(Y,\Psi)\mapsto \pscal{\Psi,H_N(Y,Z)\Psi}$. The following result was proved in~\cite[Thm.~4]{Lewin-04b}, using standard tools from critical point theory~\cite{Rabinowitz-86,Ghoussoub-93}, together with some known spectral properties of the $N$-body Schrödinger operator.

\begin{theorem}[Existence of a transition state~{\cite[Thm.~4]{Lewin-04b}}]\label{thm:existence_mountain_pass}
Let $N=|Z|$ (neutral case). Assume that $\bar Y_0$ and $\bar Y_1$ are two local minima of $Y\mapsto E_N(Y,Z)$ such that 
$$c>\max\big\{ E_N(\bar Y_0,Z),E_N(\bar Y_1,Z)\big\}$$
and that there exists a min-maxing sequence of paths 
$$\{Y_n(t)\}\subset C^0([0,1],(\R^3)^M)$$ 
for the mountain pass problem~\eqref{eq:mountain_pass_level_general_case}, satisfying the compactness property~\eqref{eq:compactness}. Then one can find a path of normalized wavefunctions $\Psi_n(t)\in H^2_a((\R^3\times\{\pm1/2\})^N,\C)$ such that $(Y_n(t),\Psi_n(t))$ is a min-maxing sequence for~\eqref{eq:mountain_pass_level_general_case_with_Psi},
\begin{multline*}
\lim_{n\to\ii}\max_{t\in[0,1]}E_N\big(Y_n(t),Z\big)=\lim_{n\to\ii}\max_{t\in[0,1]}\pscal{\Psi_n(t),H_N(Y_n(t),Z)\Psi_n(t)}\\=\lim_{n\to\ii}E_N\big(Y_n(t_n),Z\big)=c,
\end{multline*}
with, in addition, 
$$\Big(H_N(Y_n(t_n),Z)-c\Big)\Psi(t_n)\to0,$$
$$\nabla_Y\pscal{\Psi(t_n),H_N\big(Y_n(t_n),Z\big)\Psi(t_n)}\to0,$$
for some $t_n\in[0,1]$.
Extracting a subsequence, we find in the limit a critical point $(\tilde Y,\tilde\Psi)$ at the mountain pass level:
$$c=E_N(\tilde Y,Z),\quad H_N(\tilde Y,Z)\tilde \Psi=E_N(\tilde Y,Z)\tilde \Psi,\quad \pscal{\tilde \Psi,\nabla_YH_N\big(\tilde Y,Z\big)\tilde \Psi}=0.$$
% with
% $${\rm Hess}_Y\pscal{\tilde \Psi,H_N\big(\tilde Y,Z\big)\tilde \Psi}\quad\text{having at most one negative eigenvalue.}$$
\end{theorem}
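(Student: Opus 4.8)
\emph{Proof proposal.} The plan is to convert the statement into a standard mountain--pass theorem for the $C^1$ functional
$$\cF(Y,\Psi):=\pscal{\Psi,H_N(Y,Z)\Psi}$$
on the Hilbert manifold $\cM:=\cU\times S$, where $\cU=\{Y\in(\R^3)^M:\ y_i\neq y_j\text{ for }i\neq j\}$ and $S$ is the unit sphere of $H^1_a$, and then to use the spectral theory of $H_N(Y,Z)$ to pass to the limit. First I would \emph{dress} the given min-maxing paths: the construction behind the equality $c=c'$ (Appendix~\ref{app:dressed_path}) attaches to $Y_n(t)$ a continuous path $\Psi_n(t)\in S$ with the correct endpoints $(\bar Y_0,\bar\Psi_0)$, $(\bar Y_1,\bar\Psi_1)$, tracking the ground-state eigenspace, so that $\sup_t\big(\cF(Y_n(t),\Psi_n(t))-E_N(Y_n(t),Z)\big)\to0$ and $(Y_n,\Psi_n)$ becomes min-maxing for $c'=c$. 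Because $E_N(Y,Z)\le\cF(Y,\Psi)$ for normalized $\Psi$ and $\max_t\cF(Y_n,\Psi_n)\to c<\infty$, one has $E_N(Y_n(t),Z)\le c+o(1)$ all along the path; combined with the hypothesis $|y_{n,j}(t)|\le R$ and the fact that $E_N\to+\infty$ at a nuclear collision, this traps the nuclear part of each path in a fixed compact set $\cK\subset\cU$ bounded away from collisions. Since $\cF$ equals $E_N(\bar Y_i,Z)<c$ at the endpoints, $c$ is a genuine mountain-pass value for $\cF$.

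Next I would produce a Palais--Smale sequence at level $c$ with bounded nuclei. Applying Ekeland's variational principle in the Ghoussoub--Preiss form to $\gamma\mapsto\max_t\cF(\gamma(t))$ over the complete space of admissible paths yields, after an $o(1)$-small modification of $(Y_n,\Psi_n)$, points $(\hat Y_n,\hat\Psi_n)\in\cM$ with $\cF(\hat Y_n,\hat\Psi_n)\to c$ and $\|\n_\cM\cF(\hat Y_n,\hat\Psi_n)\|\to0$; the modification being small, $\hat Y_n$ still lies in a compact subset of $\cU$ and the dressed states still track the ground state, so $\mu_n:=\cF(\hat Y_n,\hat\Psi_n)=\pscal{\hat\Psi_n,H_N(\hat Y_n,Z)\hat\Psi_n}\to c$ and, crucially, $E_N(\hat Y_n,Z)=\mu_n+o(1)\to c$. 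Reading off the two components of the constrained gradient gives
$$\big(H_N(\hat Y_n,Z)-\mu_n\big)\hat\Psi_n\to0\ \text{ in }H^{-1}_a,\qquad \pscal{\hat\Psi_n,\n_YH_N(\hat Y_n,Z)\hat\Psi_n}\to0,$$
all expressions being meaningful because $\n_YH_N$ is a Coulomb-type potential, infinitesimally form-bounded uniformly over the relevant compact set of nuclear positions. These are exactly the almost-critical relations in the statement, with $t_n$ the location of the chosen near-peak point.

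Finally I would pass to the limit. Extract $\hat Y_n\to\tilde Y$ in $\cK$, so that $H_N(\hat Y_n,Z)\to H_N(\tilde Y,Z)$ in the norm-resolvent sense and $\mu_n\to c$; by continuity of the ground-state energy on $\cK$, $E_N(\tilde Y,Z)=\lim E_N(\hat Y_n,Z)=c$. Here the neutrality $N=|Z|$ enters decisively: by the Zhislin/HVZ theorem the ground-state energy is a \emph{discrete} eigenvalue, $c=E_N(\tilde Y,Z)<\min\sigma_{\rm ess}(H_N(\tilde Y,Z))=E_{N-1}(\tilde Y,Z)$. Thus the near-critical level $c$ sits strictly below the essential spectrum of the limiting operator, and the usual compactness argument applies: $\hat\Psi_n$ is bounded in $H^1_a$, and approximating $\hat\Psi_n$ by $P_{(-\infty,c+\delta]}(H_N(\hat Y_n,Z))\hat\Psi_n$ with $c+\delta<\min\sigma_{\rm ess}(H_N(\tilde Y,Z))$ --- a projection of uniformly bounded finite rank converging strongly to $P_{(-\infty,c+\delta]}(H_N(\tilde Y,Z))$ --- shows that $\hat\Psi_n$ is precompact; extracting a subsequence, $\hat\Psi_n\to\tilde\Psi$ strongly in $H^1_a$ with $\|\tilde\Psi\|_{L^2}=1$. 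Passing to the limit in the two displayed relations yields $H_N(\tilde Y,Z)\tilde\Psi=c\,\tilde\Psi$ (hence $\tilde\Psi\in H^2_a$ by elliptic regularity, and $c=E_N(\tilde Y,Z)$ since the eigenfunction is the limit of states tracking the ground state) and $\pscal{\tilde\Psi,\n_YH_N(\tilde Y,Z)\tilde\Psi}=0$. This is the desired transition state, and the intermediate equalities follow by bookkeeping along the construction.

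The main obstacle is the compactness step: ruling out that a fraction of the electrons escapes to infinity while the nuclei stay bounded. The key point is that this does \emph{not} require a new estimate on $c$ but rather the tightness of the dressing --- because the dressed states track the ground-state eigenspace, the near-critical point of $\cF$ sits over a nuclear configuration whose ground-state energy equals $c$, so the \emph{strict} HVZ inequality for neutral molecules places $c$ below the essential spectrum of $H_N(\tilde Y,Z)$ and standard spectral theory finishes the job. Making the tightness of the dressing precise across the degeneracies of $E_N(Y(t),Z)$ --- the same delicate point as in the proof of $c=c'$ --- is where the real work lies; the $C^1$ variational machinery, Ekeland's principle, the norm-resolvent convergence and the elliptic regularity are all routine.
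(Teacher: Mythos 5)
Your proposal takes essentially the same route as the paper's (which is itself a citation to \cite[Thm.~4]{Lewin-04b} together with a one-sentence remark): dress the min-maxing paths via Lemma~\ref{lem:dressing_paths}, apply standard $C^1$ mountain-pass/Ekeland machinery to the functional $(Y,\Psi)\mapsto\langle\Psi,H_N(Y,Z)\Psi\rangle$, and use the HVZ/Zhislin gap together with the nuclear compactness to recover $H^2$-precompactness of the near-critical wavefunctions. The one place worth spelling out is the assertion that after the Ekeland modification the states ``still track the ground state'' so that $E_N(\hat Y_n,Z)\to c$: this does hold, but because the Ekeland near-peak is $o(1)$-close (in the path metric) to a point on the original dressed path, and both $\cF$ and $E_N$ are continuous in $H^1\times(\R^3)^M$ on the relevant compact set, so the tracking inequality from Lemma~\ref{lem:dressing_paths} passes to the modified point; otherwise Palais--Smale alone only places $c$ in $\sigma(H_N(\tilde Y,Z))$, not at its bottom.
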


The sequence $\Psi_n(t_n)$ is compact in $H^2_a$ since $E_N(Y_n(t_n),Z)$ stays below the essential spectrum and $Y_n(t_n)$ is bounded by assumption.

We remark that the energy $Y\mapsto \pscal{\Psi,H_N(Y,Z)\Psi}$ is $C^1$ in $Y$ but not necessarily $C^2$ for a fixed $\Psi\in H^2_a$. However the ground state energy $Y\mapsto E_N(Y,Z)$ is known to be real analytic~\cite{AveSei-75,ComSei-78,ComSei-80,ComDucSei-81,Hunziker-86} away from its degeneracies (at least without taking the spin into account). Therefore, if $\tilde\Psi$ is non degenerate, then using some Morse information as in~\cite{Ghoussoub-93}, we can get the additional property that the Hessian of $Y\mapsto E_N(Y,Z)$ at the mountain pass point $\tilde Y$ has at most one negative eigenvalue.

As usual in critical point theory, it is not obvious to pass to the limit in the sequence of paths $(Y_n(t),\Psi_n(t))_{t\in[0,1]}$, since the latter need not be equicontinuous in $t$. We however conjecture that there is always an optimal path (e.g. composed of gradient lines) on which the maximal energy is the transition state $(\tilde Y,\tilde \Psi)$. 

In the next section we consider the only setting for which the conjecture~\ref{conjecture} could be proven in some cases. Namely we assume that the molecule is composed of two rigid subsystems, which can only be rotated and translated with respect to each other.

%%%%%%%%%%%%%%%%%%%%%%%%%%%%%%%%%%%%%%%%%%%%%%%%%%%%%%%%
\subsection{Case of two rigid molecules: model and some first cases}

We consider two rigid molecules, with nuclear positions $Y_1\in(\R^3)^{M_1}$, $Y_2\in (\R^3)^{M_2}$ and charges $Z_1\in\N^{M_1}$, $Z_2\in\N^{M_2}$. After an appropriate translation, we may assume that $0$ belongs to both $Y_1$ and $Y_2$. We then consider all possible ways of placing these two molecules in space. Without loss of generality, we can place the first molecule at the origin and the second one at a distance $L$ in the direction $e_1=(1,0,0)$, and simply rotate the two molecules using $U,V\in SO(3)$ (see Figure~\ref{fig:rigid_mol}). Our sole variables are therefore $(L,U,V)\in (0,\ii)\times SO(3)\times SO(3)$. For shortness we introduce the new variable 
$$\tau=(L,U,V)\in (0,\ii)\times SO(3)\times SO(3)$$
and denote by
\begin{equation}\label{def:Ytau}
Y(\tau)=(UY_1,VY_2+Le_1),
\end{equation}
the nuclear positions, as well as by
$$\boxed{\cE(\tau):=E_N\big(Y(\tau),Z\big)}$$
the corresponding ground state energy. As before, we use the notation
\begin{equation}\label{def:UY}
UY=U(y_1,\dots ,y_M):=(Uy_1,\dots ,Uy_M)\in(\R^3)^M.
\end{equation}

\begin{figure}[t]
\centering
\includegraphics[width=9cm]{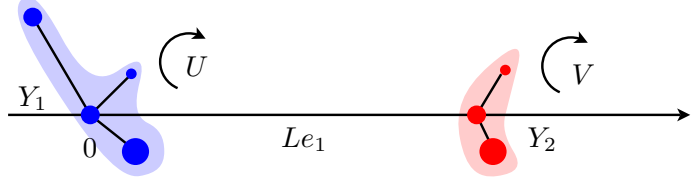}
\caption{The case of two rigid molecules.\label{fig:rigid_mol}}
\end{figure}

Now we assume that there exist two local strict minima $\tau_0,\tau_1\in (0,\ii)\times SO(3)\times SO(3)$ of $\cE$ and consider the mountain pass level
\begin{equation}
\boxed{ c:=\inf_{\substack{\tau(0)=\tau_0,\\ \tau(1)=\tau_1}}\;\max_{t\in[0,1]}\;\cE(\tau(t))}
 \label{eq:mountain_pass}
\end{equation}
where, as usual, it is understood that $\tau:[0,1]\to (0,\ii)\times SO(3)\times SO(3)$ is continuous. The assumption that the internal geometry of the two small molecules remains fixed during the chemical reaction is of course very restrictive, but it is reasonable in some particular cases (for example for the reaction HCN $\to$ CNH). We hope that this study will shed a new light on the problem and stimulate works with less stringent assumptions. 

From the results of~\cite{Morgan-79,MorSim-80,Lewin-04b}, we have
\begin{equation}
\boxed{\lim_{L\to\ii}\cE(\tau)=\min_{N_1+N_2=N}\big(E_{N_1}(Y_1,Z_2)+E_{N_2}(Y_2,Z_2)\big):=e_\ii}
\label{eq:limit_L_infinity}
\end{equation}
uniformly in $U,V\in SO(3)$. As we have said, it is a famous conjecture that the minimum on the right side is attained in the neutral case. If this is not the case, the mountain pass problem is actually rather easy, as stated in the following result. 

\begin{theorem}[Some first cases~{\cite[Thm.~4]{Lewin-04b}}]\label{thm:first_cases}
Assume that $N=Z_1+Z_2$ (neutral case). Assume that $\tau_0$ and $\tau_1$ are two local minima of $\tau\mapsto \cE(\tau)$ 
such that $c>\max\big\{ \cE(\tau_0),\cE(\tau_1)\big\}$ and define $e_\ii$ as in~\eqref{eq:limit_L_infinity}. If
\begin{itemize}
 \item[$\bullet$] \emph{either} $c\neq e_\ii$
%  \item[$\bullet$] or $c>e_\ii$
 \item[$\bullet$] \emph{or} $c=e_\ii$ but there exists $N_1\neq |Z_1|$ such that $e_\ii=E_{N_1}(Y_1,Z_1)+E_{N_2}(Y_2,Z_2)$,
\end{itemize}
then one can find a compact min-maxing sequence of paths $\tau_n(t)$ for the min-max problem~\eqref{eq:mountain_pass}, that is, such that
$$L_n(t)\leq C$$
for all $n$ and all $t\in[0,1]$. In particular, there exists a mountain pass at the level $c$, satisfying similar properties as in Theorem~\ref{thm:existence_mountain_pass}, with $Y$ replaced by~$\tau$.
\end{theorem}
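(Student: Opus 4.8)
The main point is to produce, under either hypothesis, a min-maxing sequence for~\eqref{eq:mountain_pass} confined to a fixed region $\{\lambda_*\le L\le\Lambda\}\times SO(3)\times SO(3)$; the transition state then follows by transcribing the proof of Theorem~\ref{thm:existence_mountain_pass} to the variable $\tau$. Two a priori facts reduce everything to the behaviour at $L\to\ii$. First, $\cE(\tau)=E_N(Y(\tau),Z)\to+\ii$ as $L\to0$, uniformly in $U,V$: since $0\in Y_1$ and $0\in Y_2$, the two origin-nuclei of the rotated molecules sit at distance $L$ and contribute a repulsive Coulomb term of order $1/L$, while the rest of the energy stays bounded below. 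Hence there is $\lambda_*>0$ with $\cE(\tau)>c$ for $L\le\lambda_*$, and likewise every path of maximal energy close to $c$ avoids the nuclear-collision set on which $\cE=+\ii$; so compactness can only fail through $L\to\ii$. Second, by~\eqref{eq:limit_L_infinity} one has $\cE(\tau)\to e_\ii$ as $L\to\ii$, uniformly in $U,V$.

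If $c<e_\ii$ there is nothing to do: choose $\Lambda$ with $\cE(\tau)>(c+e_\ii)/2>c$ for $L\ge\Lambda$; any min-maxing sequence has, for $n$ large, maximal energy below $(c+e_\ii)/2$, hence $L_n(t)\le\Lambda$ for all $t$, and is already compact. Suppose instead $c>e_\ii$, or $c=e_\ii$ together with a splitting $e_\ii=E_{N_1}(Y_1,Z_1)+E_{N_2}(Y_2,Z_2)$ with $N_1\neq|Z_1|$, in which case $N_2:=N-N_1$ obeys $N_2-|Z_2|=-(N_1-|Z_1|)$, so $(N_1-|Z_1|)(N_2-|Z_2|)=-(N_1-|Z_1|)^2<0$. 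I would then establish the uniform sub-level bound: there is $\Lambda<\ii$ with $\cE(\tau)<c$ for every $\tau$ with $L\ge\Lambda$. For $c>e_\ii$ this is immediate since $\cE(\tau)\to e_\ii<c$ uniformly. For $c=e_\ii$ it follows from the Coulomb upper bound~\eqref{eq:split_non_neutral} applied to the rotated configurations $UY_1,VY_2$: rotation invariance of $E_{N_1}$ and $E_{N_2}$ turns its right-hand side into $e_\ii-(N_1-|Z_1|)^2/L+o(1/L)$, and, granting that this remainder is uniform in $U,V$ (which one gets by using the rotated ground states of $E_{N_1}(Y_1,Z_1)$ and $E_{N_2}(Y_2,Z_2)$ as trial functions), it is $<c$ for $L$ large.

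In the remaining cases, with this bound in hand, I would enlarge $\Lambda$ so that it also exceeds the $L$-coordinates of $\tau_0$ and $\tau_1$, set $c_\Lambda:=\max_{U,V\in SO(3)}\cE(\Lambda,U,V)$, which is finite and $<c$ by compactness of $SO(3)\times SO(3)$, and ``clamp'' paths to $\{L\le\Lambda\}$: given $\tau(t)=(L(t),U(t),V(t))$ joining $\tau_0$ to $\tau_1$ with $\max_t\cE(\tau(t))\le c+1/n$, the path $\tilde\tau(t):=(\min\{L(t),\Lambda\},U(t),V(t))$ is continuous, has the same endpoints, stays in $\{L\le\Lambda\}$, and satisfies $\cE(\tilde\tau(t))\le\max\{c+1/n,\,c_\Lambda\}=c+1/n$, because on the set where $L(t)>\Lambda$ the clamped point lies on the slice $\{L=\Lambda\}$, where $\cE\le c_\Lambda$. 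Letting $n\to\ii$ gives the required compact min-maxing sequence. In all cases, with such a sequence confined to $\{\lambda_*\le L\le\Lambda\}\times SO(3)\times SO(3)$, the transition state is obtained as in Theorem~\ref{thm:existence_mountain_pass} with $Y$ replaced by $\tau$: one checks $c=c'$ in the rigid setting by the same argument, the dressed functional $(\tau,\Psi)\mapsto\pscal{\Psi,H_N(Y(\tau),Z)\Psi}$ is $C^1$ since $\tau\mapsto Y(\tau)$ is smooth, the Palais--Smale sequences produced by that argument lie in the above compact $\tau$-region because the paths do, and their $\Psi$-components are precompact in $H^2_a$ since $E_N$ stays strictly below $\sigma_{\rm ess}$ there. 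The whole argument is soft; the one step calling for real care is the uniformity in $U,V$ of the remainder in~\eqref{eq:split_non_neutral}, while the clamping and the re-run of the critical-point machinery of Theorem~\ref{thm:existence_mountain_pass} in the $\tau$-variable (accounting for the residual rotational symmetry about $e_1$) are routine.
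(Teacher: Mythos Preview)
Your proof is correct and follows essentially the same strategy as the paper: in each case one shows that $\max_{U,V}\cE(\Lambda,U,V)<c$ for some large $\Lambda$, and then modifies any min-maxing path so that it never leaves $\{L\le\Lambda\}$. The only difference is in the modification step: the paper performs a surgery, cutting the path at the first and last times it hits $\{L=\Lambda\}$ and replacing the excised piece by a new path lying entirely in the slice $\{L=\Lambda\}$ (using that $SO(3)^2$ is connected), whereas you simply clamp the $L$-coordinate via $\tilde\tau(t)=(\min\{L(t),\Lambda\},U(t),V(t))$. Your clamping is a bit slicker since it avoids choosing an auxiliary path in $SO(3)^2$, but both arguments rest on the same key inequality $c_\Lambda<c$ and are equivalent in difficulty.
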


Since we will need similar arguments later, we quickly sketch the proof of Theorem~\ref{thm:first_cases} from~\cite{Lewin-04b}.

\begin{proof}[Sketch of the proof of Theorem~\ref{thm:first_cases}]
If $c<e_\ii$ then min-maxing sequences of paths can never have $L$ too large, and they are all compact. If $c>e_\ii$ then there might be non-compact min-maxing sequences, but one can remove the part going to infinity with an error that will not modify the maximal value of the energy on the path, which is larger than or equal to $c>e_\ii$. More precisely, let us choose $\lk$ so large that 
$$\max_{U,V\in SO(3)}\cE(\lk,U,V)\leq \frac{c+e_\ii}{2}<c$$
and $\max(L_0,L_1)<\lk$ for the two end points of the paths. 
If we have a path which has some $L(t)\geq \lk$, we may look at the first time $t_0$ and the last time $t_1$ for which $L(t)= \lk$. Then we replace the piece corresponding to $t\in[t_0,t_1]$ by a path which has $L(t)\equiv \lk$ on $[t_0,t_1]$ and connects $(U(t_0),V(t_0))$ to $(U(t_1),V(t_1))$ in any way, using the connectedness of $SO(3)$ (see Figure~\ref{fig:cut_path}). On this new piece, the maximal energy is $\leq (c+e_\ii)/2$, hence the global maximum on the path, which ought to be $\geq c>e_\ii$, is not changed. Using this method for any min-maxing sequence of paths, we get a new sequence which is compact.

The second case where $c=e_\ii$ but $e_\ii$ is attained in a non-neutral arrangement of electrons is similar. We now have
$$\max_{U,V\in SO(3)}\cE(U,V,L)\leq e_\ii-\frac{C}{L}+o\left(\frac1L\right)$$
as explained before in~\eqref{eq:split_non_neutral}.
By assumption we have $c=e_\ii$ and therefore we may choose $\lk$ such that 
$$\max_{U,V\in SO(3)}\cE(U,V,\lk)\leq c-\frac{C}{2\lk}<c.$$
The same surgery as before does not change the maximum value along the path.
\end{proof}

\begin{figure}[t]
\centering
\includegraphics[width=10cm]{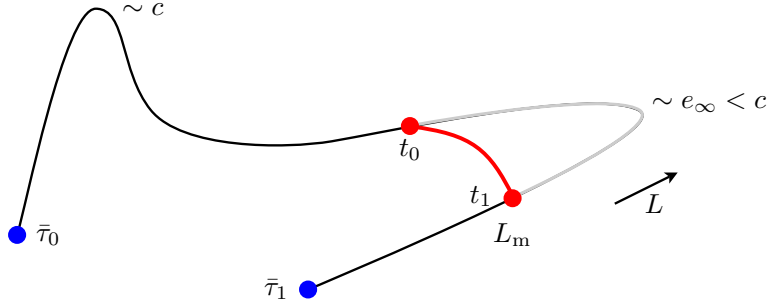}
\caption{Surgery for a path on which $L$ becomes too large, in the case when $c>e_\ii$.\label{fig:cut_path}}
\end{figure}

From the previous theorem, we see that it remains to study the case for which the mountain pass level equals the sum of the energies of the two neutral molecules, which is itself lower than all the non-neutral arrangements of electrons:
\begin{multline}
 \boxed{c=E_{|Z_1|}(Y_1,Z_1)+E_{|Z_2|}(Y_2,Z_2)<\min_{N_1'\neq |Z_1|}\big(E_{N'_1}(Y_1,Z_1)+E_{N'_2}(Y_2,Z_2)\big).}
 \label{eq:main_assumption}
\end{multline}
We work under this assumption in the rest of the paper and call for shortness
$$E_1:=E_{|Z_1|}(Y_1,Z_1),\qquad E_2:=E_{|Z_2|}(Y_2,Z_2)$$
the two ground state energies of the neutral sub-molecules, as well as 
$$H_1:=H_{|Z_1|}(Y_1,Z_1),\qquad H_2:=H_{|Z_2|}(Y_2,Z_2)$$
the Hamiltonians of the two neutral molecules.

%%%%%%%%%%%%%%%%%%%%%%%%%%%%%%%%%%%%%%%%%%%%%%%%%%%%%%%%
\subsection{Case of two rigid molecules: new results}

In this section we state our new results, which generalize those in~\cite{Lewin-04b,Lewin-06}. 
We have to investigate the precise behavior of the system when the molecule splits. If the two molecules have multipoles whose interaction is of the order $L^{-p}$ with $p<6$, then the energy will be given by the corresponding term to leading order, and we have to study the critical points of this multipolar interaction. If all multipolar interactions with $p\leq 6$ vanish, then the leading term will be the van der Waals force and we may take advantage of the fact that it is always attractive, similarly as in the proof of Theorem~\ref{thm:first_cases}. 

\subsubsection{Expansion of the energy: multipoles and the van der Waals force}
In this section we expand the energy $\cE(L,U,V)$ up to order $L^{-6}$ and get the van der Waals energy as well as all the lower order multipolar energies. We therefore first need to define multipoles. One difficulty is that ground states can be degenerate, and there is then no canonical definition.

Suppose that we are given a bounded measure $\rho$ decaying faster than any polynomial at infinity. The multipoles of $\rho$ appear when expanding the associated Coulomb potential $\rho\ast|x|^{-1}$ at large distances. More precisely, the \emph{$2^n$--pole moment of $\rho$} is the tensor $\cM^{(n)}_\rho(h_1,\dots ,h_n)$ defined by
\begin{equation}\label{def:nmult}
\boxed{\cM^{(n)}_\rho(h_1,\dots ,h_n)=\frac{(-1)^n}{n!} \int_{\R^3} |z|^{2n+1} (h_1\cdot\nabla)\cdots (h_n\cdot\nabla)\left(\frac{1}{|z|}\right)\,{\rm d}\rho(z)}
\end{equation}
with the convention
$$\cM^{(0)}_\rho=\int_{\R^3}{\rm d}\rho(z),$$
see~\cite{StoSto-66} and~\cite[Section 4.2]{KhrSheMas-98}. 
Later in Section~\ref{sec:multipoles} we give explicit formulas for the dipole $D:=\cM_\rho^{(1)}$, the quadrupole $Q:=\cM_\rho^{(2)}$, the octopole $O:=\cM_\rho^{(3)}$ and the hexadecapole $\H:=\cM_\rho^{(4)}$. We will always identify the dipole $D:=\cM_\rho^{(1)}$ with a vector in $\R^3$ and the quadrupole $Q:=\cM_\rho^{(2)}$ with a $3\times3$ real symmetric matrix.
Using that for $z\neq0$ 
$$|Le_1-z| = L |z| \left|\frac{e_1}{|z|}-\frac{z}{L|z|}\right|= \frac{L}{|z|} \left|z-\frac{e_1}{L}|z|^2\right|,$$ 
Taylor's theorem gives
\begin{equation}\label{exp:Taylorformal}
\frac{1}{|Le_1-z|}=\frac{|z|}{L} \frac{1}{|z-\frac{e_1}{L}|z|^2|}=\frac{1}{L} \sum_{n=0}^{\infty}\frac{(-1)^n |z|^{2n+1}}{n! L^n} \left(e_1 \cdot \nabla \right)^n \frac{1}{|z|},
\end{equation}
a convergent series in a neighborhood of the origin. From this follows the (formal) multipolar expansion of the Coulomb potential
\begin{equation*}
\int \frac{{\rm d}\rho(z)}{|L e_1-z|}=\sum_{n=0}^\infty \frac{\cM^{(n)}_\rho(e_1,\dots ,e_1)}{L^{n+1}}.
\end{equation*}
Similarly, if we look at the classical Coulomb interaction of two measures $\rho_1$ and $\rho_2$, we have
\begin{equation}
\iint_{\R^3\times\R^3}\frac{{\rm d}\rho_1(x)\;{\rm d}\rho_2(y)}{|x-y+Le_1|}=\sum_{n,m\geq0}\frac{\cF^{(n,m)}(\rho_1,\rho_2,e_1)}{L^{n+m+1}}
\label{eq:def_multipole_expansion}
\end{equation}
where
\begin{multline}
\cF^{(n,m)}(\rho_1,\rho_2,e_1)=\frac{(-1)^m}{ (\prod_{j=1}^n(2j-1))\;(\prod_{i=1}^m(2i-1))}\sum_{j_1,\dots ,j_n}\sum_{k_1,\dots ,k_m}\\
\times \cM^{(n)}_{\rho_1}(e_{j_1},\dots ,e_{j_n})\cM^{(m)}_{\rho_2}(e_{k_1},\dots ,e_{k_m}) \left(\partial_{z_{j_1}} \cdots  \partial_{z_{k_m}}\frac{1}{|z+e_1|}\right)_{|z=0}
\label{def:nmult_int}
\end{multline}
depends on the two multipoles $\cM^{(n)}_{\rho_1}$ and $\cM^{(m)}_{\rho_1}$ as well as on the direction $e_1$. The expansion \eqref{eq:def_multipole_expansion} is well known (see for example ~\cite{BurBon-81}) but for convenience of the reader we sketch its proof in Appendix \ref{app:multipolar_expansion}. Later in Section~\ref{sec:multipoles} we give the exact expression of $\cF^{(n,m)}(\rho_1,\rho_2,e_1)$ for $n+m\leq5$.

Now we state a result which provides an upper bound to the energy in terms of two eigenfunctions of the neutral sub-molecules. 
To this end, we need to introduce the van der Waals correlation function. Let $\Psi_1$ and $\Psi_2$ be any two normalized eigenfunctions of, respectively, $H_1$ and $H_2$. We then introduce the corresponding rotated molecular densities
\begin{equation}
 \rho_{1,U}^{\Psi_1}(x):=\sum_{j=1}^{M_1}z_{1,j}\delta_{y_{1,j}}(U^{-1}x)-\rho_{\Psi_1}(U^{-1}x)
 \label{eq:def_rho_1_Psi}
\end{equation}
and
\begin{equation}
 \rho_{2,V}^{\Psi_2}(x):=\sum_{j=1}^{M_2}z_{2,j}\delta_{y_{2,j}}(V^{-1}x)-\rho_{\Psi_2}(V^{-1}x)
 \label{eq:def_rho_2_Psi}
\end{equation}
where we recall that the electronic density is defined by
\begin{multline}
\rho_\Psi(x)=N\sum_{s_1\in\{\pm1/2\}}\cdots \sum_{s_N\in\{\pm1/2\}}\\
\times \int_{\R^3}\cdots \int_{\R^3}|\Psi(x,s_1,x_2,s_2,\dots ,x_N,s_N)|^2\,{\rm d}x_2\cdots {\rm d}x_N. 
\end{multline}
Note that $\rho_{1,U}^{\Psi_1}$ and $\rho_{2,V}^{\Psi_2}$ depend on the chosen electronic states $\Psi_1$ and $\Psi_2$. If the first eigenvalue is degenerate, there are several non-equivalent possibilities. 
With a slight abuse of notation we write
$$\cF^{(n,m)}(\Psi_1,\Psi_2,U,V):=\cF^{(n,m)}(\rho_{1,U}^{\Psi_1},\rho_{2,V}^{\Psi_2})$$
for the corresponding multipolar energies, in order to emphasize the dependence on the wavefunctions. It is known that $\Psi_1$ and $\Psi_2$ decay exponentially at infinity, see e.g.~\cite{Combes-73,Simon-74,Hof-77,FroHer-82,Griesemer-04}. This implies that the multipoles $\cM^{(n)}$ and the multipolar interactions $\cF^{(n,m)}(\Psi_1,\Psi_2,U,V)$ are well defined for all $n,m\geq1$.

We will need a notion of non-degeneracy for the electronic eigenfunctions. Since the molecular Schrödinger Hamiltonian does not depend on the spin, the two kernels $\ker(H_1-E_1)$ and $\ker(H_2-E_2)$ are invariant under any permutation of the spins. We then say that the ground state eigenspace is irreducible when the group representation is itself irreducible.

\begin{definition}[Irreducibility]\label{def:irreducible}
For $k=1$ or $k=2$, we say that $H_k$ has an \emph{irreducible ground state eigenspace} when the only subspaces of $\ker(H_k-E_k)$ invariant under all the possible permutations of the spins are $\{0\}$ and the whole eigenspace $\ker(H_k-E_k)$. 
\end{definition}

We remark that this definition allows for a positive multiplicity. It also implies that for any given $\Psi_k\in\ker(H_k-E_k)$, 
$$\ker(H_k-E_k)={\rm span}\big(\pi\cdot \Psi_k,\ \pi\in\gS_{|Z_k|}\big)$$
where 
$$\pi\cdot \Psi_k(x_1,s_1,\dots ,x_N,s_N):=\Psi(x_1,s_{\pi^{-1}(1)},\dots ,x_N,s_{\pi^{-1}(N)}).$$
That is, all the vectors are cyclic and the ground state is indeed unique up to spin relabeling.

Let us denote by $\cG_k:=\ker(H_k-E_k)$ the two eigenspaces, by $\Pi_{12}$ the orthogonal projection onto $\cG_1\otimes \cG_2$ in the tensor space 
\begin{equation}
L^2_a((\R^3\times\{\pm1/2\})^{|Z_1|},\C)\otimes L^2_a((\R^3\times\{\pm1/2\})^{|Z_2|},\C)
\label{eq:tensor_space}
\end{equation}
and by $\Pi_{12}^\perp=1-\Pi_{12}$ its orthogonal complement.
We introduce as well the dipolar interaction function
\begin{multline}
f_{(U,V)}(x_1,\dots ,x_{N}):= UD_1(x_1,\dots ,x_{|Z_1|})\cdot VD_2(x_{|Z_1|+1},\dots ,x_N)\\
-3\left(e_1\cdot UD_1(x_1,\dots ,x_{|Z_1|})\right)\left(e_1\cdot VD_2(x_{|Z_1|+1},\dots ,x_N)\right)
\label{eq:def_f_U_V}
\end{multline}
where
\begin{align}
D_1(x_1,\dots ,x_{|Z_1|})&:=\sum_{j=1}^{|Z_1|}x_j-\sum_{m=1}^{M_1}z_{m}y_{m},\\
D_2(x_{|Z_1|+1},\dots ,x_N)&:=\sum_{j=|Z_1|+1}^{N}x_j-\sum_{m=M_1+1}^{M_1+M_2}z_{m}y_{m}
\end{align}
are the `instantaneous' dipoles of the two molecules. The \emph{van der Waals correlation function} is the positive function of $\Psi_1$, $\Psi_2$, $U$ and $V$ defined by
\begin{multline}
 C_{\rm vdW}(\Psi_1,\Psi_2,U,V):=\bigg\langle \Pi_{12}^\perp f_{(U,V)}\Psi_1\otimes \Psi_2,\\
 \big(H_1\otimes\1+\1\otimes H_2-E_1-E_2\big)^{-1}_{|(\cG_1\otimes\cG_2)^\perp}\Pi_{12}^\perp f_{(U,V)}\Psi_1\otimes\Psi_2\bigg\rangle,
 \label{eq:def_C_vdW_Psi_1_2}
\end{multline}
where the scalar product is in the tensor space~\eqref{eq:tensor_space}. We remark that although $\Psi_1\otimes\Psi_2$ belongs to the ground state space $\cG_1\otimes\cG_2$, the function $f_{(U,V)}\Psi_1\otimes\Psi_2$ can be shown to always have a component in its orthogonal complement (Proposition~\ref{prop:positivity_vdW} below). Therefore $\Pi_{12}^\perp f_{(U,V)}\Psi_1\otimes\Psi_2\neq0$ and $C_{\rm vdW}(\Psi_1,\Psi_2,U,V)>0$.
Then we have the following result.

\begin{theorem}[Multipolar/van der Waals expansion of the energy]\label{thm:expansion_energy}
Let $\Psi_1$ and $\Psi_2$ be any two normalized ground states of, respectively, $H_1$ and $H_2$. Then the function 
$$(U,V)\mapsto C_{\rm vdW}(\Psi_1,\Psi_2,U,V)$$
is continuous and positive on $SO(3)\times SO(3)$. 
We have the upper bound
\begin{multline}
 \cE(L,U,V)\leq E_1+E_2+\sum_{2\leq n+m\leq 5}\frac{\cF^{(n,m)}(\Psi_1,\Psi_2,U,V)}{L^{n+m+1}}\\
 -\frac{C_{\rm vdW}(\Psi_1,\Psi_2,U,V)}{L^6}+O\left(\frac{1}{L^7}\right)
 \label{eq:upper}
\end{multline}
where the $O(1/L^{7})$ is uniform in $U,V\in SO(3)$.
If in addition the two ground state eigenspaces are irreducible and
\begin{multline}
E_{|Z_1|}(Y_1,Z_1)+E_{|Z_2|}(Y_2,Z_2)<\min_{N_1'\neq |Z_1|}\Big(E_{N'_1}(Y_1,Z_1)+E_{N'_2}(Y_2,Z_2)\Big)
 \label{eq:main_assumption2}
\end{multline}
then~\eqref{eq:upper} is an equality, where each term is independent of the chosen $\Psi_1$ and $\Psi_2$.
\end{theorem}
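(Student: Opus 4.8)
The plan is to establish the three assertions in turn.

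\emph{Continuity and positivity of $C_{\rm vdW}$.} The family $(U,V)\mapsto f_{(U,V)}\Psi_1\otimes\Psi_2$ is continuous (in fact real-analytic) into $L^2$ of the tensor space~\eqref{eq:tensor_space}: by~\eqref{eq:def_f_U_V} the function $f_{(U,V)}$ is a quadratic polynomial in the electronic coordinates whose coefficients depend polynomially on the entries of $U$ and $V$, while $\Psi_1\otimes\Psi_2$ decays exponentially, so multiplication gives an $L^2$-continuous map. The operator $\cH_0:=H_1\otimes\1+\1\otimes H_2$ has ground state eigenvalue $E_1+E_2$ with eigenspace $\cG_1\otimes\cG_2$ and a spectral gap bounded below by $\min(g_1,g_2)>0$ above it, where $g_k$ is the gap of $H_k$ above $E_k$ (use $\sigma(\cH_0)=\sigma(H_1)+\sigma(H_2)$ and that $E_k$ is an isolated eigenvalue below the essential spectrum). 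Hence the restricted resolvent $(\cH_0-E_1-E_2)^{-1}_{|(\cG_1\otimes\cG_2)^\perp}$ is a fixed bounded positive operator and $\Pi_{12}^\perp$ a fixed projection, so $C_{\rm vdW}$ is continuous in $(U,V)$, and uniformly so by compactness of $SO(3)\times SO(3)$. Non-negativity is immediate, and strict positivity is Proposition~\ref{prop:positivity_vdW}, which gives $\Pi_{12}^\perp f_{(U,V)}\Psi_1\otimes\Psi_2\neq0$.

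\emph{Upper bound.} I would use the Rayleigh--Ritz principle. After translating the second molecule by $Le_1$ one has, on the tensor space and — up to a full $N$-particle antisymmetrization — on the physical space, the exact splitting $H_N(Y(\tau),Z)=\cH_0+W_L$, with $W_L$ the intermolecular Coulomb interaction. Take as trial state the antisymmetrization of $\Phi_L:=\Psi_1\otimes\Psi_2-L^{-3}\eta$, where $\eta:=(\cH_0-E_1-E_2)^{-1}_{|(\cG_1\otimes\cG_2)^\perp}\Pi_{12}^\perp f_{(U,V)}\Psi_1\otimes\Psi_2$ is an exponentially decaying first-order correction. Since $\Psi_1$ and $\eta$ are localized near the first molecule while $\Psi_2(\cdot-Le_1)$ is localized near the second, the exchange terms produced by the antisymmetrization change norms and energies only by $O(e^{-cL})$, as in~\cite[Lemma~2]{Lewin-04b}. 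The multipolar expansion~\eqref{eq:def_multipole_expansion}, truncated at total order five with an $O(L^{-7})$ error estimated as in Appendix~\ref{app:multipolar_expansion}, yields $\langle\Psi_1\otimes\Psi_2,W_L\Psi_1\otimes\Psi_2\rangle=\sum_{2\le n+m\le5}\cF^{(n,m)}(\Psi_1,\Psi_2,U,V)L^{-(n+m+1)}+O(L^{-7})$, the $n=0$ and $m=0$ terms vanishing by neutrality. Using $(\cH_0-E_1-E_2)\eta=\Pi_{12}^\perp f_{(U,V)}\Psi_1\otimes\Psi_2$, $\eta\perp\cG_1\otimes\cG_2$, and $W_L\Psi_1\otimes\Psi_2=L^{-3}f_{(U,V)}\Psi_1\otimes\Psi_2+O(L^{-4})$, a direct computation of the Rayleigh quotient shows that the cross terms contribute $-2L^{-6}C_{\rm vdW}$, that $L^{-6}\langle\eta,\cH_0\eta\rangle$ contributes $L^{-6}C_{\rm vdW}+L^{-6}(E_1+E_2)\|\eta\|^2$, that $\|\Phi_L\|^2=1+L^{-6}\|\eta\|^2$, and that the $(E_1+E_2)\|\eta\|^2$ terms cancel upon dividing, so one is left with precisely the right-hand side of~\eqref{eq:upper}; all remainders are uniform in $(U,V)$ by the bounds of the first step. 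This step does not use irreducibility.

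\emph{Equality under irreducibility and~\eqref{eq:main_assumption2}.} For the matching lower bound I would combine a geometric localization with a Feshbach--Schur reduction. With an IMS-type partition of unity indexed by the cluster decompositions of the $N$ electrons, and using the exponential decay of the $N$-body ground state so that the localization errors are $O(e^{-cL})$, one checks that on every piece other than the one keeping all electrons near one of the two molecules in the \emph{neutral} arrangement the localized energy is at least $E_1+E_2+g-o(1)$ for a fixed $g>0$: a non-neutral splitting costs the gap in~\eqref{eq:main_assumption2}, while electrons left in the intermediate region or escaping to infinity cost the strict binding of the two neutral sub-molecules. As the target is $E_1+E_2+O(L^{-3})<E_1+E_2+g$, the ground state is, up to $O(e^{-cL})$, governed by $\cH_0+W_L$ on the two-cluster neutral sector of the tensor space. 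There I would run Feshbach--Schur with respect to $\Pi_{12}=\Pi_1\otimes\Pi_2$: since $\cH_0\Pi_{12}^\perp\ge(E_1+E_2+\min(g_1,g_2))\Pi_{12}^\perp$ and $W_L$ is $\cH_0$-bounded with small relative bound for $L$ large, the Feshbach operator $F(E)$ on the finite-dimensional space $\cG_1\otimes\cG_2$ is well defined for $E$ near $E_1+E_2$, and $\inf\sigma(\cH_0+W_L)$ is the solution $E^*$ of $\min\sigma(F(E))=E$. Here the irreducibility is crucial: $\cH_0$, $W_L$ and $f_{(U,V)}$ are spin-independent, hence commute with the $\gS_{|Z_1|}\times\gS_{|Z_2|}$ action on the spins; since $\cG_1$ and $\cG_2$ are irreducible for the two factors, $\cG_1\otimes\cG_2$ is irreducible for the product group, so by Schur's lemma $F(E)=\varphi(E)\,\Pi_{12}$ is scalar. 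Expanding $\varphi$ — the first-order term $\langle\Psi_1\otimes\Psi_2,W_L\Psi_1\otimes\Psi_2\rangle$ by~\eqref{eq:def_multipole_expansion}, and the restricted resolvent in the second-order term around $E=E_1+E_2$ — gives $\varphi(E_1+E_2)=E_1+E_2+\sum_{2\le n+m\le5}\cF^{(n,m)}L^{-(n+m+1)}-C_{\rm vdW}L^{-6}+O(L^{-7})$, and since $\varphi$ is Lipschitz in $E$ with constant $O(L^{-6})$ one gets $E^*=\varphi(E_1+E_2)+O(L^{-9})$, matching~\eqref{eq:upper} and forcing equality. The same scalar identity shows that $\langle\Psi_1\otimes\Psi_2,W_L\Psi_1\otimes\Psi_2\rangle$ and $C_{\rm vdW}(\Psi_1,\Psi_2,U,V)$ take the same value on every unit vector of $\cG_1\otimes\cG_2$, so (varying $L$) each $\cF^{(n,m)}$ and $C_{\rm vdW}$ is independent of the chosen ground states. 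The main obstacle is precisely this lower bound, namely carrying out the Feshbach--Schur analysis rigorously with all errors — the relative boundedness of $W_L$, the expansion of $(\Pi_{12}^\perp(\cH_0+W_L-E)\Pi_{12}^\perp)^{-1}$ and of the truncated multipole series of $W_L$ to exactly the order isolating the $L^{-6}$ term without contamination from the lower multipoles, and the geometric localization and exchange corrections — controlled uniformly in $(U,V)\in SO(3)\times SO(3)$ and for genuine molecules rather than atoms; this is where the strengthened van der Waals estimates generalizing~\cite{LieThi-86,Anapolitanos-16,AnaSig-17} enter.
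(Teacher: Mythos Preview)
Your overall strategy is sound and close to the paper's, but your lower bound takes a genuinely different technical route, and it is worth spelling out the trade-off.

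For the upper bound, the paper uses essentially the same trial state idea, but with two refinements: it localizes the ground states with smooth cutoffs $\chi_L$, and in the correction term it uses the full intermolecular interaction $I_\tau$ rather than only its leading dipole--dipole part $f_{(U,V)}/L^3$, i.e.\ the trial state is
\[
\Phi_\tau=\Phi_{1,\tau}\otimes\Phi_{2,\tau}-\chi_\tau\,\Pi_{12,\tau}^\perp R_\tau\Pi_{12,\tau}^\perp I_\tau\,\Phi_{1,\tau}\otimes\Phi_{2,\tau}.
\]
The cutoffs make the antisymmetrization produce functions with \emph{disjoint} supports (so the exchange terms vanish exactly rather than being $O(e^{-cL})$), and using $I_\tau$ instead of $f_{(U,V)}/L^3$ becomes crucial for the lower bound, as explained below. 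Your simplified trial state still yields~\eqref{eq:upper}, so this is a matter of convenience for the upper bound.

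For the lower bound, both you and the paper use Feshbach--Schur together with the irreducibility/Schur's lemma argument to reduce the Feshbach operator to a scalar. The difference lies in the choice of projection. You project onto $\cG_1\otimes\cG_2$, which gives an off-diagonal term $A^{-+}=\Pi_{12}^\perp W_L\Pi_{12}$ of size $O(L^{-3})$. Since the crude bound $E\ge\min\sigma(A^{++})-\eps^{-1}\|A^{-+}\|^2$ then only gives an $O(L^{-6})$ error, you are forced to expand the full resolvent $(\Pi_{12}^\perp(\cH_0+W_L-E)\Pi_{12}^\perp)^{-1}$ around the free one to extract $C_{\rm vdW}$ exactly. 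This is the route of~\cite{AnaSig-17} and, as you correctly flag, it requires nontrivial work: $W_L$ is \emph{not} small in operator norm (its relative bound is $O(1)$, not $o(1)$), so a Neumann expansion is not directly available, and one must instead use that the resolvent preserves exponential localization (boosted-Hamiltonian estimates) to see that $W_L$ is effectively $O(L^{-3})$ on the relevant vectors. The paper sidesteps this entirely by projecting onto the span $\Upsilon_\tau$ of the trial functions $\Phi_\tau$ themselves. Because the correction $\chi_\tau R_\tau I_\tau\Phi_{1,\tau}\otimes\Phi_{2,\tau}$ already solves $(\cH_0-E_1-E_2)\cdot=\Pi_{12,\tau}^\perp I_\tau\Phi_{1,\tau}\otimes\Phi_{2,\tau}$ up to commutator errors, the off-diagonal term becomes $\|\Pi_\tau^\perp H_\tau\Pi_\tau\|=O(L^{-4})$, so the crude bound gives an $O(L^{-8})$ error and no resolvent expansion is needed at all. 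The diagonal block $\Pi_\tau H_\tau\Pi_\tau$ is still a scalar by the same Schur's lemma argument, and its value is exactly the energy of the trial state computed in the upper bound. In short: your approach works but pays for its simpler projection with a harder resolvent analysis; the paper pays for a more elaborate projection with an almost trivial lower bound.
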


This theorem gives an upper bound which is unconditional and is sharp in the irreducible case under assumption~\eqref{eq:main_assumption2}. It is an interesting question to find the expansion of the energy in the non-irreducible case. 

In the case of two molecules, our upper bound~\eqref{eq:upper} improves the upper bound~\eqref{eq:vdW_LT} obtained by Lieb and Thirring in \cite{LieThi-86}, where only the average over rotations $U,V\in SO(3)$ was considered. This has the effect of eliminating all the multipolar terms since
$$\int_{SO(3)}\cF^{(n,m)}(\Psi_1,\Psi_2,U,V)\,{\rm d}U=\int_{SO(3)}\cF^{(n,m)}(\Psi_1,\Psi_2,U,V)\,{\rm d}V=0$$
for all $n,m\geq1$. Lieb and Thirring have indeed considered the case of several molecules, but combining our method with the one used in \cite[Theorem 1.5]{Anapolitanos-16}, our upper bound~\eqref{eq:upper} can be generalized to a system of several molecules as well. The corresponding result is stated with a sketch of the proof in Appendix \ref{app:vdWseveral} for completeness. 
Note that results in the direction of Theorem~\ref{thm:expansion_energy} have been proven in the case of systems of several atoms in \cite{MorSim-80,AnaSig-17,Anapolitanos-16}. However, an upper bound of the form \eqref{eq:upper} without any assumptions on the multiplicity of the ground state energies did not appear there, even in the case of individual atoms.

We note that Theorem \ref{thm:expansion_energy} holds if the spin of the electrons is not taken into account or when the fermionic statistics is dropped. In the absence of spin and fermionic statistics, ground states are automatically non-degenerate (see, e.g.,~\cite[Sec.~XIII.12]{ReeSim4}), and thus \eqref{eq:upper} is always an equality, provided that~\eqref{eq:main_assumption2} holds.

In \cite{AnaSig-17} the energy expansion was provided for a system of atoms using the Feshbach map. In \cite[Theorem 1.5]{Anapolitanos-16} an upper bound similar to~\eqref{eq:upper} was provided for several atoms but without spin and with the irreducibility assumption. The test function used to get the bound was inspired by a formula for the ground states of the system given by the Feshbach map, and the idea was to replace the full resolvent of the system in the formula with the non-interacting one. Here we use similar ideas but we deal with the case of molecules, for which the multipolar energies then appear.

\subsubsection{The case with no multipoles $n+m\leq 5$ in average}

Next we come back to our mountain pass problem. First, we use the attractivity of the van der Waals force for all $U,V\in SO(3)$ and therefore require the vanishing of all the multipolar interactions for $n+m\leq5$  in~\eqref{eq:upper} for all $U,V\in SO(3)$. We are still free to choose the eigenfunctions $\Psi_1$ and $\Psi_2$ as we wish. Indeed, we may even average over several possibilities and consider a mixed state for each of the two molecules. Although we could require the vanishing of the appropriate number of multipoles for one (unknown) mixed state, checking the validity of this assumption for real systems could be difficult. For this reason, we will make one specific choice of mixed state, which seems physically reasonable and has good symmetry properties. As we will explain, this can be used to easily prove the vanishing of the appropriate number of multipoles.

We replace the pure state $|\Psi_k\rangle\langle\Psi_k|$ by the uniform average over the corresponding eigenspace
$$\frac{\1_{\{E_k\}}(H_k)}{r_k}=\frac{1}{r_k}\sum_{j=1}^{r_k}|\Psi_{k,j}\rangle\langle\Psi_{k,j}|$$ 
where $r_k:=\dim\ker(H_k-E_k)$ and $(\Psi_{k,j})_{j=1}^{r_k}$ is any chosen orthonormal basis of $\cG_k=\ker(H_k-E_k)$, for $k\in\{1,2\}$. We therefore call
\begin{equation}
 \rho_{1}(x):=\sum_{j=1}^{M_1}z_{1,j}\delta_{y_{1,j}}(x)-\frac1{r_1}\sum_{j=1}^{r_1}\rho_{\Psi_{1,j}}(x)
 \label{eq:def_rho_1}
\end{equation}
and
\begin{equation}
 \rho_{2}(x):=\sum_{j=1}^{M_2}z_{2,j}\delta_{y_{2,j}}(x)-\frac1{r_2}\sum_{j=1}^{r_2}\rho_{\Psi_{2,j}}(x)
 \label{eq:def_rho_2}
\end{equation}
the two corresponding averaged molecular densities. The second term is just the density of the trace-class operators $(r_k)^{-1}\1_{\{E_k\}}(H_k)$ and therefore the above two densities do not depend on the chosen orthonormal basis. They are canonical densities associated with each molecule. 
The rotated densities are then defined by 
$$\rho_{1,U}(x):=\rho_1(U^{-1}x),\qquad \rho_{2,V}(x):=\rho_2(V^{-1}x)$$
and we denote by $\cM^{(m)}_{\rho_{1,U}}$ and $\cM^{(n)}_{\rho_{2,V}}$ the corresponding multipoles, as well as by
$$\cF^{(n,m)}(U,V):=\cF^{(n,m)}(\rho_{1,U},\rho_{2,V})$$
the associated (averaged) multipolar energies. 

\begin{definition}[First non-zero average multipole]
For $k\in\{1,2\}$, we call $n_k$ the smallest integer $n\geq1$ such that $\cM^{(n)}_{\rho_{k}}\neq0$. If all the multipoles vanish, we let $n_k=+\ii$.
\end{definition}

Since the two molecules are neutral, we have $n_k\geq1$. We remark that if $H_k$ is invariant under the action of a subgroup $G$ of $O(3)$, then so is its first eigenspace, and therefore the corresponding averaged density $\rho_{k}(x)$. In particular the multipole moments of the $k$-th molecule are also invariant under $G$. It follows for instance that for an atom the density  $\rho_k$ is spherically symmetric and thus all its multipoles moments vanish: $n_k=+\ii$. 

\begin{example}
In the case of an oxygen molecule there is no average dipole moment because there is no vector that is invariant under the symmetry group of the oxygen molecule. Hence $n_k\geq2$. Similarly the methane molecule is symmetric with respect to the tetrahedral group and thus in this case the average dipole and quadrupole moments vanish: $n_k\geq3$. Indeed, the quadrupole moment is traceless so if it does not vanish it has a simple eigenvalue. But this cannot happen since the tetrahedral group does not have any invariant one dimensional subspace. In fact it is experimentally known that oxygen and methane have non-vanishing quadrupole and octopole, respectively. Therefore it is expected that $n_k=2$ and $n_k=3$ in these two cases.
\end{example}

We now denote by 
\begin{multline}
 C_{\rm vdW}(U,V):=\frac{1}{r_1r_2}\sum_{j_1=1}^{r_1}\sum_{j_2=1}^{r_2}\bigg\langle \Pi_{12}^\perp f_{(U,V)}\Psi_{1,j_1}\otimes \Psi_{2,j_2},\\
 \left(H_1\otimes\1+\1\otimes H_2-E_1-E_2\right)^{-1}_{|(\cG_1\otimes\cG_2)^\perp}\Pi_{12}^\perp f_{(U,V)}\Psi_{1,j_1}\otimes\Psi_{2,j_2}\bigg\rangle
 \label{eq:def_C_vdW}
\end{multline}
the averaged van der Waals correlation function. By averaging the upper bound~\eqref{eq:upper} we immediately get the similar inequality with averaged quantities
\begin{multline}
 \cE(L,U,V)\leq E_1+E_2+\sum_{2\leq n+m\leq 5}\frac{\cF^{(n,m)}(U,V)}{L^{n+m+1}}\\
 -\frac{C_{\rm vdW}(U,V)}{L^6}+O\left(\frac{1}{L^7}\right).
 \label{eq:upper_averaged}
\end{multline}
From this we can derive the following result.

\begin{theorem}[Compactness for $n_1+n_2\geq6$]\label{thm:no-multipoles_in_average}
Assume that $N=|Z_1|+|Z_2|$ (neutral case). 

Assume also that~\eqref{eq:main_assumption} holds. If 
$$\boxed{n_1+n_2\geq 6,}$$ 
then one can find a compact min-maxing sequence of paths $\tau_n(t)$ for the min-max problem~\eqref{eq:mountain_pass}, that is, such that
$$L_n(t)\leq C$$
for all $n$ and all $t\in[0,1]$. In particular, there exists a mountain pass at the level $c$, satisfying similar properties as in Theorem~\ref{thm:existence_mountain_pass}, with $Y$ replaced by~$\tau$.
\end{theorem}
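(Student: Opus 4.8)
The plan is to deduce this theorem from the averaged upper bound~\eqref{eq:upper_averaged} exactly as Theorem~\ref{thm:first_cases} was deduced from the crude splitting estimate, the new ingredient being that the condition $n_1+n_2\geq 6$ kills \emph{all} the multipolar terms in~\eqref{eq:upper_averaged}, leaving only the (attractive) van der Waals term. So the first step is to record that, under $n_1+n_2\geq 6$, one has $\cF^{(n,m)}(U,V)=0$ for every $2\leq n+m\leq5$ and every $(U,V)\in SO(3)\times SO(3)$. This is pure bookkeeping: by formula~\eqref{def:nmult_int} the number $\cF^{(n,m)}(\rho_{1,U},\rho_{2,V})$ is bilinear in the single multipoles $\cM^{(n)}_{\rho_{1,U}}$ and $\cM^{(m)}_{\rho_{2,V}}$; a change of variables in~\eqref{def:nmult} gives $\cM^{(n)}_{\rho_{1,U}}(h_1,\dots,h_n)=\cM^{(n)}_{\rho_1}(U^{-1}h_1,\dots,U^{-1}h_n)$, so $\cM^{(n)}_{\rho_{1,U}}$ vanishes identically as soon as $\cM^{(n)}_{\rho_1}$ does, and likewise for the second molecule. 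Since the molecules are neutral we have $\cM^{(0)}_{\rho_k}=0$, and by definition of $n_k$ we have $\cM^{(n)}_{\rho_k}=0$ for all $0\leq n<n_k$; as $n+m\leq5<6\leq n_1+n_2$ forces $n<n_1$ or $m<n_2$, each such term indeed vanishes.

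Plugging this into~\eqref{eq:upper_averaged} gives, uniformly in $(U,V)$,
\[
\cE(L,U,V)\leq E_1+E_2-\frac{C_{\rm vdW}(U,V)}{L^{6}}+O\!\left(\frac1{L^{7}}\right).
\]
By Theorem~\ref{thm:expansion_energy}, $(U,V)\mapsto C_{\rm vdW}(\Psi_{1,j_1},\Psi_{2,j_2},U,V)$ is continuous and positive, hence so is the uniform average $C_{\rm vdW}(U,V)$ from~\eqref{eq:def_C_vdW}; being positive and continuous on the compact set $SO(3)\times SO(3)$ it is bounded below by some $c_0>0$. Absorbing the uniform $O(1/L^{7})$ into this negative term, we may then choose $\lk$, which we also take larger than $L_0$ and $L_1$, such that
\[
\max_{U,V\in SO(3)}\cE(\lk,U,V)\leq E_1+E_2-\frac{c_0}{2\,\lk^{6}}<E_1+E_2=e_\ii=c,
\]
the last two equalities coming from assumption~\eqref{eq:main_assumption} together with~\eqref{eq:limit_L_infinity}.

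With such an $\lk$ fixed, the conclusion follows by the same surgery as in the proof of Theorem~\ref{thm:first_cases}. Given any min-maxing sequence of paths $\tau_n(t)=(L_n(t),U_n(t),V_n(t))$: if $L_n(t)>\lk$ somewhere, let $t_0$ be the first and $t_1$ the last time at which $L_n(t)=\lk$ (these exist by the intermediate value theorem because $L_n(0),L_n(1)<\lk$), and replace the portion $t\in[t_0,t_1]$ by a path with $L_n(t)\equiv\lk$ joining $(U_n(t_0),V_n(t_0))$ to $(U_n(t_1),V_n(t_1))$ inside $SO(3)\times SO(3)$, using the connectedness of $SO(3)$. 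On the modified portion the energy stays $<c$ by the previous step, while the new path is still a path from $\tau_0$ to $\tau_1$ and hence has maximal energy $\geq c$; therefore its maximal energy coincides with that of the unmodified part and is $\leq\max_t\cE(\tau_n(t))$, so the new sequence is again min-maxing, and now satisfies $L_n(t)\leq\lk$ for all $n$ and all $t$. Feeding this compact min-maxing sequence into Theorem~\ref{thm:existence_mountain_pass} (with $Y$ replaced by $\tau$) produces the transition state at level $c$.

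In other words, once Theorem~\ref{thm:expansion_energy} is available the present statement is essentially a corollary, and I do not expect a genuine obstacle in its proof. The only points that need a little care are the two just used: that $\cF^{(n,m)}$ is truly ``diagonal'' in the multipole orders, so that a single low-order vanishing multipole annihilates the whole term (this is where the hypothesis is phrased as $n_1+n_2\geq6$ rather than, say, a condition on one mixed multipole), and that the remainder in~\eqref{eq:upper_averaged} is uniform in $(U,V)$ so that it can be swallowed by $-c_0/L^{6}$ for $L$ large. Both are guaranteed by the statement of Theorem~\ref{thm:expansion_energy} and by the compactness of $SO(3)\times SO(3)$.
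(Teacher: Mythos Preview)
Your proof is correct and follows essentially the same approach as the paper: both use the averaged upper bound~\eqref{eq:upper_averaged}, observe that $n_1+n_2\geq6$ kills all multipolar terms of order $\leq5$, invoke the continuity and positivity of $C_{\rm vdW}$ on the compact set $SO(3)^2$ to get a uniform lower bound, and then repeat the surgery argument from Theorem~\ref{thm:first_cases}. You are somewhat more explicit than the paper about why the multipolar terms vanish (the bilinearity of $\cF^{(n,m)}$ in the individual multipoles and the rotation-equivariance of $\cM^{(n)}$), which the paper leaves implicit.
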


Theorem~\ref{thm:no-multipoles_in_average} applies to the case when one of the two sub-molecules is an atom, since then $n_k=+\ii$. This was the main result of~\cite{Lewin-06}. All the other cases are new. For example, as we explained above in the case of methane-methane we have $n_i \geq 3$ so  $n_1+n_2 \geq 6$. Thus two methane molecules attract each other unconditionally with interaction energy at least as strong as $L^{-6}$ and Theorem \ref{thm:no-multipoles_in_average} applies.
The proof of the theorem goes along the same lines as that of Theorem~\ref{thm:first_cases}, with the help of the energy expansion in Theorem~\ref{thm:expansion_energy}.

\begin{proof}
The assumption $n_1+n_2\geq 6$ and Theorem~\ref{thm:expansion_energy} imply that 
$$\max_{U,V\in SO(3)}\cE(L,U,V)\leq E_1+E_2-\frac{\min_{U,V\in SO(3)}C_{\rm vdW}(U,V)}{L^6}+O\left(\frac{1}{L^7}\right).$$
However, from Theorem~\ref{thm:expansion_energy}, the function  $U,V\mapsto C_{\rm vdW}(U,V)$ is continuous and positive on the compact set $SO(3)^2$, hence its minimum $C$ is positive. Therefore, we may find a large enough $\lk$ such that 
$$\max_{U,V\in SO(3)}\cE(\lk,U,V)\leq E_1+E_2-\frac{C}{2(\lk)^6}$$
and the argument is the same as in the proof of Theorem~\ref{thm:first_cases}.
\end{proof}

\begin{remark}
Theorem~\ref{thm:no-multipoles_in_average} holds the same if we assume that there exist some possibly non-uniform mixed states $\sum_{j=1}^{r_k}\alpha_{k,j}|\Psi_{k,j}\rangle\langle\Psi_{k,j}|$ for which the corresponding multipoles vanish for all $n,m$ with $n+m \leq 5$. 
\end{remark}

\subsubsection{The irreducible case with $n_1+n_2< 5$}
Next we turn to the case when the two sub-molecules have multipoles such that $n+m\leq 5$. This is the hardest part of the proof, where it becomes necessary to study some fine properties of the multipolar interaction energies. In this case we will need the additional assumption that the two molecules have irreducible ground state eigenspaces (Definition~\ref{def:irreducible}). We are able to treat all cases for $n_1+n_2<5$ with a reasonable additional non-degeneracy assumption on the octopole moment. We are unfortunately not able to treat the case $n_1+n_2=5$ where the leading multipolar term is of the same order as the van der Waals interaction. We are indeed using two completely different proofs for $n_1+n_2<5$ and $n_1+n_2\geq 6$ and we are not able to match them when $n_1+n_2=5$.

\begin{theorem}[Compactness for $n_1+n_2<5$ in the irreducible case]\label{thm:with-multipoles}
Assume that $N=Z_1+Z_2$ (neutral case) and that the two Hamiltonians $H_1$ and $H_2$ have irreducible ground state eigenspaces, such that 
$$\boxed{n_1+n_2<5.}$$
Assume that $\tau_0$ and $\tau_1$ are two local minima of $\tau\mapsto \cE(\tau)$ such that $c>\max\big\{ \cE(\tau_0),\cE(\tau_1)\big\}$. 
Assume also that~\eqref{eq:main_assumption} holds. If $n_k=3$ for $k=1$ or $k=2$, we make the additional non-degeneracy assumption on the octopole
\begin{equation}
\boxed{O_k(v,\cdot,\cdot)\equiv 0\implies v=0.}
\label{eq:octopole_assumption}
\end{equation}
Then one can find a compact min-maxing sequence of paths $\tau_n(t)$ for the min-max problem~\eqref{eq:mountain_pass}, that is, such that
$$L_n(t)\leq C$$
for all $n$ and all $t\in[0,1]$. In particular, there exists a mountain pass at the level $c$, satisfying similar properties as in Theorem~\ref{thm:existence_mountain_pass}, with $Y$ replaced by~$\tau$. 
\end{theorem}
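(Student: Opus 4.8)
The plan is to run the same surgery as in the proof of Theorem~\ref{thm:first_cases}, but the new difficulty is that here $c=e_\ii=E_1+E_2$, so a portion of a path on which $L$ is large need not have energy below $c$ and cannot simply be cut out and replaced by a constant-$L$ arc. Under~\eqref{eq:main_assumption} and the irreducibility hypothesis, Theorem~\ref{thm:expansion_energy} (in its averaged form~\eqref{eq:upper_averaged}) gives, uniformly in $U,V\in SO(3)$,
\[
\cE(L,U,V)=E_1+E_2+\frac{W(U,V)}{L^{\,n_1+n_2+1}}+o\!\left(\frac{1}{L^{\,n_1+n_2+1}}\right),\qquad W(U,V):=\cF^{(n_1,n_2)}(U,V),
\]
where $W$ is the first non-vanishing averaged multipolar interaction, a non-trivial function of $(U,V)$. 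Since $n_1+n_2<5$ we have $n_1+n_2+1\le5<6$, so this leading term dominates all the remaining multipolar corrections and the van der Waals term. Thus the behaviour of near-optimal paths at large $L$ is entirely governed by the classical function $W$ on $SO(3)\times SO(3)$, and the problem is reduced to understanding its critical geometry.

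\textbf{Reduction to a Morse-index statement.} The energy, hence $W$, is invariant under the free circle action $(U,V)\mapsto(R_\theta U,R_\theta V)$, with $R_\theta$ the rotation of angle $\theta$ about $e_1$; thus $W$ descends to a smooth function on the $5$-manifold $(SO(3)\times SO(3))/SO(2)$. The key claim is: \emph{every critical point of $W$ with $W\ge0$ has Morse index at least $2$ in the directions transverse to its $SO(2)$-orbit.} Granting this, compactness follows by the deformation scheme of~\cite{Lewin-04b}. Given a min-maxing sequence of paths $\tau_n$, one acts on the portions where $L_n(t)$ exceeds a large threshold. On the sub-portions where $W(U_n(t),V_n(t))<0$ one has $\cE<c$ and they can be pushed back to bounded $L$ by decreasing $L$ alone; on the sub-portions where $W\ge0$ one first deforms the family of orientations $(U_n(t),V_n(t))$ along $-\nabla_{(U,V)}W$ into $\{W<0\}$ and then decreases $L$. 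The index-$\ge2$ hypothesis is exactly what allows this last deformation: at a critical point of $W$ lying in $\{W\ge0\}$, a one-parameter family of orientations could be tangent to a single direction of descent and get stuck — just as a path at a mountain pass — but with two descent directions transverse to the orbit (the orbit direction being a harmless zero mode) the family can always be slipped past. The borderline set $\{W=0,\ \nabla W=0\}$ is handled by inspecting the sub-leading terms in~\eqref{eq:upper}: the gap $n_1+n_2+1\le5<6$ together with the strict negativity of the van der Waals coefficient leaves enough room there — this is precisely why the resonant case $n_1+n_2=5$ is excluded. Once a min-maxing sequence with $L_n(t)\le C$ is produced, the transition state follows from Theorem~\ref{thm:existence_mountain_pass} with $Y$ replaced by $\tau$.

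\textbf{The main obstacle.} The heart of the argument, and the step I expect to be hardest, is the Morse-index claim. Since $n_k\ge1$ and $n_1+n_2\le4$, only the cases $(n_1,n_2)\in\{(1,1),(1,2),(1,3),(2,2)\}$ occur, up to exchanging the two molecules; the dipole--dipole case $(1,1)$ is~\cite{Lewin-04b}. For the others one writes $W=\cF^{(n_1,n_2)}(U,V)$ explicitly via the formulas of Section~\ref{sec:multipoles} in terms of the fixed multipole tensors $D_k,Q_k,O_k$ of the two molecules, determines the critical points of $W$ on $SO(3)\times SO(3)$, and computes the Hessian quadratic form in a chart transverse to the $SO(2)$-orbit, showing that on $\{W\ge0\}$ it cannot be non-positive on a subspace of codimension $\le1$. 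When an octopole enters ($n_k=3$), the non-degeneracy assumption~\eqref{eq:octopole_assumption} is used to exclude the spurious kernel directions of the Hessian that would otherwise destroy the index count. This representation-theoretic and algebraic analysis of the critical points and Morse indices of the classical multipolar interactions is the technical core of the proof.
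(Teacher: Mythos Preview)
Your strategy is the ``critical points at infinity'' approach of~\cite{Lewin-04b}: show that every critical point of $W=\cF^{(n_1,n_2)}$ with $W\ge0$ has Morse index $\ge2$ modulo the $SO(2)$-symmetry, and then deform the orientations along the gradient flow. This is a legitimate route in principle and is exactly what~\cite{Lewin-04b} did for $(n_1,n_2)=(1,1)$. The paper, however, takes a genuinely different and more elementary path that avoids the full Morse-index analysis. Instead of deforming the whole path through the gradient flow, the paper first flows each of the two endpoints $(U_n(t_0),V_n(t_0))$ and $(U_n(t_1),V_n(t_1))$ down to a \emph{local pseudo-minimum} of $\cE(L_k,\cdot,\cdot)$ (Lemma~\ref{lem:flowpseudolocmin}); at such a point the gradient and the negative part of the Hessian of $\cF^{(n_1,n_2)}$ are $O(L^{-1})$ (Proposition~\ref{prop:fullexp}), and then one only needs to know that approximate critical points of Morse index~$0$ have $\cF^{(n_1,n_2)}\le-\delta$ (Proposition~\ref{prop:localmin}). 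This is strictly weaker than your index-$\ge2$ claim. The two pseudo-minima are then joined using the pathwise connectedness of $\{\cF^{(n_1,n_2)}<-\delta\}$ (Proposition~\ref{prop:connected}). So the trade-off is: your approach requires a finer spectral statement about the Hessian at \emph{all} critical points with $W\ge0$, which for the quadrupole-quadrupole case in particular is not obviously true and is not what the paper proves (see Lemma~\ref{Satz:nuetzlich}(ii), which only gives index $\ge1$ in the $Q_2$-variable); the paper's approach needs only the index-$0$ statement but must separately establish connectedness of the negative sublevel set, which is the content of Section~\ref{sec:prooflocmincon}.

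There is also a genuine gap in your treatment of the borderline set $\{W=0,\nabla W=0\}$. You propose to handle it via the van der Waals term and the inequality $n_1+n_2+1\le5<6$, but this does not work: between orders $L^{-(n_1+n_2+1)}$ and $L^{-6}$ there sit the intermediate multipolar interactions $\cF^{(n,m)}$ with $n_1+n_2<n+m\le5$, which have no sign and can dominate the van der Waals term. The paper circumvents this entirely: Proposition~\ref{prop:localmin} rules out such points a priori, since near an index-$0$ critical point one has $\cF^{(n_1,n_2)}\le-\delta<0$, so the zero level is never reached.
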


Theorem \ref{thm:with-multipoles} is an extension of  ~\cite[Theorem~7]{Lewin-04b} where the case $n_1=n_2=1$ was considered, without spin. All the other cases are new. Our proof is based on a careful study of the properties of the critical points of the functions $(U,V)\mapsto \cF^{(n,m)}(U,V)$ for all $n+m<5$.

\begin{remark}
A similar theorem holds in the case that we do not take spin into account. In this case the nondegeneracy assumption means that the ground state energy is a simple eigenvalue. If we neglect completely the fermionic statistics, i.e., the electrons are treated as bosons, then a similar theorem holds but the non-degeneracy assumption of the ground state energies is automatically satisfied, as mentioned before.  
\end{remark}

Before we explain the main strategy of the proof, we comment on the non-degeneracy assumption~\eqref{eq:octopole_assumption} on the octopole, in case this is the first non-vanishing multipole. Typically the dipole and quadrupole moments vanish only if the molecule is symmetric with respect to a  finite subgroup $G$ of $O(3)$ that does not have one-dimensional invariant subspaces. As mentioned before, the octopole $O_k$ is also invariant under $G$.   But if $O_k(v,\cdot,\cdot)=0$, and $G$ does not have invariant one dimensional subspaces, it follows that $O_k\equiv 0$ hence $n_k\neq 3$, i.e. the octopole moment cannot be the leading multipole moment. Hence in this situation the assumption~\eqref{eq:octopole_assumption} is always satisfied.

\begin{lemma}[Symmetry and non-degeneracy of the octopole]\label{lem:physas}
If the octopole $O\neq0$ is invariant under a symmetry group $G\subset SO(3)$ that does not have one-dimensional invariant subspaces, then~\eqref{eq:octopole_assumption} is satisfied.
\end{lemma}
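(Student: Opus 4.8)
The plan is to introduce the \emph{degeneracy space}
$$W:=\big\{v\in\R^3\ :\ O(v,\cdot,\cdot)\equiv 0\big\},$$
observe that it is a linear subspace of $\R^3$ that is invariant under $G$, and then exploit the irreducibility of the $G$-action to conclude that $W$ is either $\{0\}$ or all of $\R^3$. The hypothesis $O\neq0$ rules out the second possibility, so $W=\{0\}$, which is precisely the content of~\eqref{eq:octopole_assumption}.

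First I would note that $W$ is a subspace: since $O$ is a trilinear form on $\R^3$, the map $v\mapsto O(v,\cdot,\cdot)$ is linear (with values in the space of bilinear forms on $\R^3$), and $W$ is exactly its kernel. Next I would check that $W$ is $G$-invariant. By hypothesis $O$ is $G$-invariant, meaning $O(Rv_1,Rv_2,Rv_3)=O(v_1,v_2,v_3)$ for all $R\in G$ and all $v_i\in\R^3$. Hence, for $v\in W$, every $R\in G$ and all $w_2,w_3\in\R^3$,
$$O(Rv,w_2,w_3)=O\big(Rv,R(R^{-1}w_2),R(R^{-1}w_3)\big)=O(v,R^{-1}w_2,R^{-1}w_3)=0,$$
so $Rv\in W$. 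Thus $W$ is a $G$-invariant linear subspace of $\R^3$.

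Finally I would invoke the assumption on $G$. Because $G\subset SO(3)$ acts on $\R^3$ by orthogonal maps, the orthogonal complement of a $G$-invariant subspace is again $G$-invariant; in particular a two-dimensional invariant subspace would produce a one-dimensional invariant subspace (namely its orthogonal complement), which is excluded by hypothesis. Therefore the only $G$-invariant subspaces of $\R^3$ are $\{0\}$ and $\R^3$. Applied to $W$, this leaves two cases: if $W=\R^3$ then $O(v,\cdot,\cdot)\equiv0$ for every $v\in\R^3$, hence $O\equiv0$, contradicting $O\neq0$; so $W=\{0\}$, as claimed. There is no real obstacle in this argument; the only point that deserves a line of justification is the passage from ``$G$ has no one-dimensional invariant subspace'' to ``the standard three-dimensional representation of $G$ is irreducible,'' which relies on orthogonality of the action together with the fact that the dimension is exactly $3$.
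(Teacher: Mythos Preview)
Your proof is correct and follows essentially the same line as the paper's: both reduce to the observation that the absence of one-dimensional $G$-invariant subspaces forces the standard action of $G\subset SO(3)$ on $\R^3$ to be irreducible, and then exploit this irreducibility to conclude that the set of ``degenerate'' vectors is either $\{0\}$ or all of $\R^3$. The only cosmetic difference is packaging: you phrase it via the kernel $W=\{v:O(v,\cdot,\cdot)=0\}$ being a $G$-invariant subspace, whereas the paper starts from a single nonzero $v_1\in W$ and argues that its $G$-orbit spans $\R^3$; these are two ways of saying the same thing.
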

\begin{proof}
Assume that there exists a $v_1 \neq 0$ so that $O(v_1,\cdot,\cdot)=0$. The $G$-invariance implies that $O(Uv_1,\cdot,\cdot)=0$ for all $U\in G$. 
If $G\subset SO(3)$ has no one-dimensional invariant subspace, then it must act irreducibly on $\R^3$, because it can also not have any two-dimensional invariant subspaces, by taking the orthogonal complement. This implies that $Uv_1$ spans the whole of $\R^3$, that is, $v_1$ is cyclic. Hence we conclude that $O(v,\cdot,\cdot)\equiv0$ for all $v\in\R^3$, or equivalently $O\equiv0$, which is a contradiction. 
\end{proof}

\subsection{Sketch of the proof of Theorem~\ref{thm:with-multipoles}}
In this section we give the proof of Theorem~\ref{thm:with-multipoles}, but defer the proof of many intermediate results to the rest of the paper. We need to study the critical points of the multipolar interactions, which is rather lengthy.

We consider $\tau_n(t)=(L_n(t),U_n(t),V_n(t))$ a min-maxing sequence of paths for the mountain pass problem~\eqref{eq:mountain_pass} and a large distance 
$$\lk>\max\{L_n(0),L_n(1)\}$$ 
 (the initial and final distances of the molecules, which actually do not depend on $n$) to be chosen later. As in the proof of Theorem~\ref{thm:first_cases}, we look at the first time $t_0$ and the last time $t_1$ for which $L_n(t)=\lk$. These times depend on $\lk$ and $n$, but we do not emphasize this in our notation, unless needed for clarity. 

Our goal is to replace the path by a new one on $[t_0,t_1]$, on which $L(t)\equiv \lk$ but the difficulty is to link the two molecular orientations $(U_n(t_0),V_n(t_0))$ and $(U_n(t_1),V_n(t_1))$ by a continuous path in $SO(3)\times SO(3)$ on which the energy does not increase too much. More precisely, we want to find a continuous path $(U(t),V(t))_{t\in[t_0,t_1]}$ of rotations such that
$$(U(t_0),V(t_0))=(U_n(t_0),V_n(t_0)),\qquad (U(t_1),V(t_1))=(U_n(t_1),V_n(t_1)),$$
and
\begin{multline}
 \max_{t\in[t_0,t_1]}\cE(\lk,U(t),V(t))\\
 \leq \max\big\{\cE(\lk,U_n(t_{0}),V_n(t_{0})),\cE(\lk,U_n(t_{0}),V_n(t_{0}))\big\}+\frac1n.
 \label{eq:goal_path_rotations}
\end{multline}
An arbitrary path will not work here since there are orientations for which the energy is $>c=e_\ii$, contrary to the cases handled in Theorems~\ref{thm:first_cases} and~\ref{thm:no-multipoles_in_average} where the energy was always $<e_\ii$ at infinity. We proceed as follows. 

\subsubsection*{\bf Step 1. Replacing the two points by local (pseudo) minima} 
Our first step is to rotate the molecules at each point $t_0,t_1$ without changing the distance $\lk$ so that we reach a kind of local minimum of the energy of the system with respect to rotations. In general this is impossible if in all directions the energy takes values higher and lower than that of the current point, at arbitrary small distances. This pathological behavior should not occur for our function $(U,V)\mapsto \cE(\lk,U,V)$ which is real-analytic~\cite{Hunziker-86} under appropriate non-degeneracy assumptions. Here we use a softer argument which does not rely on analyticity. 

\begin{definition}[Local pseudo-minimum]
Let $\cE$ be any continuous function on $SO(3)\times SO(3)$. A point $(\bar U, \bar V) \in SO(3)\times SO(3)$ is called a \emph{local pseudo-minimum of $\cE$} if there exists a small constant $\eps>0$ such that for every normalized vector $v$ in the plane tangent to $(\bar U,\bar V)$ the associated geodesic $\gamma_v:[0,\eps]\to SO(3)\times SO(3)$ in the direction $v$ satisfies 
$$\cE(\gamma_v(t_n)) \geq \cE(\gamma_v(0))$$
for a sequence $t_n\to 0^+$, $t_n\neq0$. 
\end{definition}

In other words, a local pseudo-minimum is a point at which one cannot find a direction in which the energy strictly decreases. If $\cE$ is $C^2$, one must then have 
$$\nabla\cE(\bar U,\bar V)=0,\qquad {\rm Hess}\;\cE(\bar U,\bar V)\geq0$$
but $(\bar U,\bar V)$ does not need to be a true local minimum. Think of $f(x)=x^4\sin(1/x) \chi_{\{x<0\}}(x) + x^4 \chi_{\{x>0\}}(x)$, where $\chi_A$ denotes the characteristic function of $A$. Then $f$ has a local pseudo-minimum at zero. 

That we can reach a local pseudo-minimum can be shown with the help of the following lemma, which we state for simplicity on $SO(3)\times SO(3)$ but is of course much more general.

\begin{lemma}[Linking any point to a local pseudo-minimum]\label{lem:flowpseudolocmin}
Let $\cE$ be a continuous function on $SO(3)\times SO(3)$. Let $(U,V)$ be a point that is not a local pseudo-minimum of $\cE$. Then there exists a continuous path $\tau$ on $SO(3)\times SO(3)$ linking $(U,V)$ to a local pseudo-minimum $(\bar U,\bar V)$, such that the maximum value of $\cE$ on this path is $\cE(U,V)$.
\end{lemma}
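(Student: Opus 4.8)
The plan is to construct the path by a greedy descent procedure, following directions of strict decrease for as long as they are available. Since $\cE$ is only assumed continuous, I cannot use a genuine gradient flow; instead I would proceed by a transfinite or iterative selection argument combined with compactness of $SO(3)\times SO(3)$. Concretely, define the sublevel-connected component: let $\cC$ be the set of points $(U',V')$ that can be joined to $(U,V)$ by a continuous path along which $\cE$ never exceeds $\cE(U,V)$. This set is nonempty (it contains $(U,V)$), and one checks it is closed, using that if $(U_k',V_k')\in\cC$ converge to $(U',V')$ one can append short geodesic segments; the only subtlety is controlling the energy on these connecting segments, which is handled by continuity of $\cE$ together with the following trick: instead of demanding $\cE\le \cE(U,V)$ on the connecting segments, one works with the infimum $m:=\inf_{\cC}\cE$ and shows it is attained.

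The key step is to show that a point $(\bar U,\bar V)$ realizing $m=\inf\{\cE(U',V'):(U',V')\in\overline{\cC}\}$ — which exists by compactness and lower semicontinuity (here just continuity) — is a local pseudo-minimum and still lies in the "reachable" set with the required energy bound. First I would argue $(\bar U,\bar V)$ is reachable from $(U,V)$ by a path on which $\max\cE=\cE(U,V)$: take a sequence in $\cC$ converging to $(\bar U,\bar V)$ with energies decreasing to $m$; concatenate the paths witnessing membership in $\cC$ with short geodesics $\gamma$ joining consecutive terms, choosing the terms close enough that $\cE$ on $\gamma$ stays below $\cE(U,V)$ (possible since the energies along the tail are near $m<\cE(U,V)$ — note $m<\cE(U,V)$ unless $(U,V)$ is already essentially minimal, a case that must be handled separately but is easy since then $(U,V)$ is itself close to a pseudo-minimum). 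This produces a (countably-concatenated, hence after reparametrization continuous) path from $(U,V)$ to $(\bar U,\bar V)$ with maximal energy $\cE(U,V)$. Second, $(\bar U,\bar V)$ must be a local pseudo-minimum: if not, by definition there is a unit tangent direction $v$ and $\eps>0$ with $\cE(\gamma_v(t))<\cE(\gamma_v(0))=\cE(\bar U,\bar V)=m$ for all $t\in(0,\eps]$; but then $\gamma_v(\eps)$ is reachable from $(U,V)$ (append this short geodesic, on which $\cE\le m<\cE(U,V)$) with $\cE(\gamma_v(\eps))<m$, contradicting that $m$ is the infimum over the closure of the reachable set.

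The main obstacle is the one already flagged: since $\cE$ need not be $C^1$, the set $\cC$ of points reachable with energy $\le\cE(U,V)$ need not be closed, and its closure may contain points of strictly larger energy — so one cannot simply minimize over $\overline{\cC}$ naively. The fix is exactly to minimize $\cE$ over $\overline{\cC}$ and observe that the minimizer, having energy $m\le\cE(U,V)$, is approximable from within $\cC$ by points of energy arbitrarily close to $m$, which is all that is needed to append short geodesics without violating the energy constraint; the bad high-energy points of $\overline{\cC}\setminus\cC$ are simply irrelevant because they are not where the minimum is attained. A secondary technical point is assembling a genuine continuous path from infinitely many concatenated pieces of geodesic arcs of shrinking length: one parametrizes the $k$-th piece on $[1-2^{-k+1},1-2^{-k}]$ and checks continuity at the endpoint $t=1$ using that the pieces have endpoints converging to $(\bar U,\bar V)$ and diameters (hence oscillation of the parametrization) tending to zero, while the oscillation of $\cE$ on the tail tends to zero by continuity. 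Once these two points are dealt with, the argument is complete.
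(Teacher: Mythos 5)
Your approach is the same as the paper's in its essential mechanism: form the set $A$ of points reachable from $(U,V)$ along paths where $\cE$ stays $\le\cE(U,V)$, let $a=\inf_{A}\cE$, take a minimizing sequence, and argue that a subsequential limit lies in $A$ and must be a local pseudo-minimum. But there are two places where the paper's proof is cleaner and where your write-up drifts into unnecessary complication.

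First, you treat "$m<\cE(U,V)$" as something that could fail and would need a separate case; it cannot. The standing hypothesis that $(U,V)$ is \emph{not} a local pseudo-minimum means, directly from the definition, that some unit tangent direction $v$ admits $\cE(\gamma_v(t))<\cE(U,V)$ for all small $t>0$. Thus $\gamma_v(\eps)\in A$ for small $\eps$ with $\cE(\gamma_v(\eps))<\cE(U,V)$, and $a<\cE(U,V)$ at once. The remark that $(U,V)$ would "itself be close to a pseudo-minimum" in the excluded case is muddled and not needed.

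Second, the transfinite/countable concatenation is overkill. Once you have a minimizing sequence $Y_n\in A$ with $\cE(Y_n)\downarrow a$ converging (after a subsequence) to $Y$, you do not need to glue together infinitely many paths: pick a single $n$ large enough that the geodesic from $Y_n$ to $Y$ is short and, by uniform continuity of $\cE$ on the compact group, $\cE$ stays below $(a+\cE(U,V))/2<\cE(U,V)$ on it. The concatenation of one path from $(U,V)$ to $Y_n$ (already in $A$) with this one geodesic shows $Y\in A$. This removes the reparametrization-on-$[1-2^{-k+1},1-2^{-k}]$ business and the delicate continuity-at-$t=1$ check entirely. The concluding step — that $Y$ must be a pseudo-minimum, since otherwise a strictly decreasing geodesic out of $Y$ would produce a point of $A$ with energy below $a$ — is the same in both proofs.

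One more small point: the worry that $\overline{\cC}\setminus\cC$ could contain points of energy larger than $\cE(U,V)$ is moot, since $\cE$ is continuous and $\cC\subset\cE^{-1}\bigl((-\infty,\cE(U,V)]\bigr)$, which is closed; but in any case the argument above never needs $\cC$ (or $A$) to be closed as a whole — only that the particular minimizing limit $Y$ lands in it.
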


\begin{proof}
From the assumption, the energy decreases in at least one direction $v$ when we start from the point $X=(U,V)$. Let therefore $\gamma_v$ be the associated geodesic and $X'=(U',V'):=\gamma_v(\eps)$ for $\eps$ small enough such that $\cE(\gamma_v(t))<\cE(U,V)$ for all $t\in[0,\eps]$. In particular $\cE(U',V')<\cE(U,V)$. 

Next we look at the set $A$ of all the points $Y\in SO(3)\times SO(3)$ that can be reached from $X$ by a continuous path on which the energy does not exceed $\cE(X)$. Let $a=\inf\{\cE(Y),\ Y\in A\}$. Since $X'\in A$ and $\cE(X')<\cE(X)$, we have $a<\cE(X)$. We consider now a sequence $Y_n$ in $A$ with $a\leq \cE(Y_n)\leq a + 1/n$. After passing to a subsequence if necessary, we have $Y_n \rightarrow Y$ and $\cE(Y)=a$, by continuity. For $n$ large enough, $Y_n$ and $Y$ are so close that we can connect them by a geodesic along which $\cE$ varies so little around the value $a<\cE(X)$, that it can never exceed $\cE(X)$. Hence $Y$ is actually in $A$, which is thus a closed set.

Now $Y$ has to be a local pseudo-minimum of $\cE$, otherwise we would be able to find a new direction where the energy decreases further, contradicting the definition of $a$.
\end{proof}

Using Lemma \ref{lem:flowpseudolocmin}  we can connect the point $(\U(t_0), \V(t_0))$ to a local pseudo-minimum $(\U_0',\V_0')$ of $\cE(L,\cdot,\cdot)$, with a path along which the energy stays below $\cE(L,\U(t_0), \V(t_0))$. We proceed analogously for $t_1$ and connect $(U(t_1),V(t_1))$ to a local pseudo-minimum $(\U_1',\V_1')$. In the next steps we will study the properties of these two points and finally connect them by a path, without increasing the energy too much.

\subsubsection*{\bf Step 2. Properties of the leading multipolar interaction at the new points $(U'_0,V'_0)$ and $(U'_1,V'_1)$}
In step 1 we have managed to reach two points $(U'_0,V'_0)$ and $(U'_1,V'_1)$ which are local pseudo-minima of the energy $\cE(\lk,\cdot,\cdot)$. In this step we prove that 
$$\cE(\lk,U'_0,V'_0)<c=e_\ii,\qquad \cE(\lk,U'_1,V'_1)<c=e_\ii,$$
that is, the points have an energy below the dissociation threshold, provided we had chosen $\lk$ large enough (independently of $U$ and $V$). More precisely, we show that 
$$\cF^{(n_1,n_2)}(U'_0,V'_0)<0,\qquad \cF^{(n_1,n_2)}(U'_1,V'_1)<0$$
where we recall that $\cM^{(n_1)}_{\rho_1}$ and $\cM^{(n_2)}_{\rho_2}$ are the first non-vanishing multipoles of the two neutral sub-molecules. The estimate on the energy follows from the expansion in Theorem~\ref{thm:expansion_energy}.

\begin{proposition}[Leading multipolar energy at a local pseudo-minimum]\label{prop:fullexp}
We make the same assumptions as in Theorem~\ref{thm:with-multipoles}. There exists an $L_0>0$ and a constant $C$ such that, if $L>L_0$ and $(\U',\V')$ is any local pseudo-minimum of $\cE(L,\cdot,\cdot)$, then 
\begin{equation}\label{eq:almostlocmin}
\left|\nabla_{U,V}\cF^{(n_1,n_2)}(\U',\V')\right| \leq \frac{C}{L}, \qquad  {\rm Hess}_{U,V}\;\cF^{(n_1,n_2)}(\U',\V') \geq -\frac{C}{L}.
\end{equation} 
\end{proposition}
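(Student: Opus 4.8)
The plan is to reduce the proposition to a single regularity statement about $\cE(L,\cdot,\cdot)$ and then deduce \eqref{eq:almostlocmin} by an elementary argument. The regularity statement — which follows from the analysis behind Theorem~\ref{thm:expansion_energy} — is that, for $L$ larger than some $L_0$, the function $(U,V)\mapsto \cE(L,U,V)$ is of class $C^2$ (in fact real-analytic) on $SO(3)\times SO(3)$, and that, with $p:=n_1+n_2+1$,
$$\left\|\,\cE(L,\cdot,\cdot)-E_1-E_2-L^{-p}\,\cF^{(n_1,n_2)}\,\right\|_{C^2(SO(3)^2)}\;\le\;\frac{C}{L^{\,p+1}}. \qquad (\star)$$
Granting $(\star)$ and writing $R_L$ for the function inside the norm, one argues as follows. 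For a $C^2$ function, a local pseudo-minimum $(\bar U',\bar V')$ is automatically a critical point with nonnegative Hessian: along the geodesic $\gamma_v$ in any unit direction $v$, Taylor's formula and the defining sequence $t_n\to0^+$ give $t_n\nabla_v\cE+\tfrac{t_n^2}{2}\nabla^2_v\cE+o(t_n^2)\ge0$; dividing by $t_n$ and letting $n\to\infty$ yields $\nabla_v\cE(L,\bar U',\bar V')\ge0$, hence $\nabla_{U,V}\cE(L,\bar U',\bar V')=0$ (apply this to $\pm v$), and then dividing the remainder by $t_n^2$ yields $\nabla^2_v\cE(L,\bar U',\bar V')\ge0$ for every $v$, i.e. ${\rm Hess}_{U,V}\,\cE(L,\bar U',\bar V')\ge0$. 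Since $\cF^{(n_1,n_2)}=L^p\big(\cE-E_1-E_2-R_L\big)$, we get at $(\bar U',\bar V')$ that $\nabla_{U,V}\cF^{(n_1,n_2)}=-L^p\,\nabla_{U,V}R_L$ and ${\rm Hess}_{U,V}\,\cF^{(n_1,n_2)}\ge -L^p\,{\rm Hess}_{U,V}\,R_L$, so $(\star)$ (with $\|R_L\|_{C^2}\le C L^{-p-1}$) yields \eqref{eq:almostlocmin} with a new constant.

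It remains to justify $(\star)$, and this is where the irreducibility hypothesis enters decisively. First, $\cE(L,\cdot,\cdot)$ is real-analytic on $SO(3)^2$ for $L$ large. By~\eqref{eq:main_assumption} (together with the bounds $E_M(Y_k,Z_k)<E_{M-1}(Y_k,Z_k)$ for $M\le|Z_k|$), for $L$ large the bottom of the spectrum of $H_N(Y(\tau),Z)$ is an isolated eigenvalue separated from the rest of the spectrum by a gap that is uniform in $(U,V)$. The low-lying spectral subspace has dimension $r_1r_2$ and carries, for $L$ large, the external tensor representation $\cG_1\otimes\cG_2$ of the group $\gS_{|Z_1|}\times\gS_{|Z_2|}$ of spin permutations inside each molecule; since $H_N(Y(\tau),Z)$ commutes with these permutations, it restricts to this subspace to an operator commuting with that representation. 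As $\cG_1\otimes\cG_2$ is irreducible by assumption, Schur's lemma forces this restriction to be a scalar; hence the ground state energy has constant multiplicity $r_1r_2$ and, by Kato's analytic perturbation theory~\cite{Kato}, depends real-analytically on $(U,V)$. Second, the expansion of Theorem~\ref{thm:expansion_energy} holds together with its first two $(U,V)$-derivatives — the perturbative (Feshbach) construction behind it may be differentiated in $(U,V)$, or run on a fixed-width complex neighborhood of $SO(3)^2$ with Cauchy estimates. Since $R_L$ is the sum of the multipolar terms $L^{-(n+m+1)}\cF^{(n,m)}$ with $p\le n+m\le5$, of $-L^{-6}C_{\rm vdW}$, and of the $O(L^{-7})$ error, and since $p+1\le 6\le 7$ (recall $2\le n_1+n_2\le4$) while the functions $\cF^{(n,m)}$ and $C_{\rm vdW}$ are smooth on $SO(3)^2$ with $C^2$-norm bounded independently of $L$, the estimate $(\star)$ follows.

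The main obstacle is thus the regularity input $(\star)$, not the passage to \eqref{eq:almostlocmin}. One genuinely needs $\cE(L,\cdot,\cdot)$ to be $C^2$ — it is only Lipschitz for a generic minimum of eigenvalues, and against a mere Lipschitz bound on $R_L$ the pseudo-minimum property is too weak to control ${\rm Hess}_{U,V}\,\cF^{(n_1,n_2)}$ at all — and one needs the remainder of the energy expansion to be small in $C^2$, not only in $C^0$. Both facts rest on the irreducibility hypothesis through Schur's lemma, and both are part of (or small extensions of) the proof of Theorem~\ref{thm:expansion_energy}; the rest of the argument is the soft two-line computation above.
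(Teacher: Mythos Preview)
Your approach is correct in outline and takes a genuinely different route from the paper's. You argue directly that $\cE(L,\cdot,\cdot)$ is $C^2$ (irreducibility $\Rightarrow$ constant ground-state multiplicity $r_1r_2$ via Schur $\Rightarrow$ Kato analyticity) and that the expansion of Theorem~\ref{thm:expansion_energy} holds in $C^2$; the proposition then follows from the elementary observation that a $C^2$ pseudo-minimum has vanishing gradient and nonnegative Hessian. The paper instead \emph{avoids} proving smoothness of $\cE$ altogether (it explicitly says this ``might be possible'' via~\cite{Hunziker-86} but does not do it). It introduces an auxiliary function $\cF(L,U,V)$ built from the Feshbach map at the \emph{fixed} energy $E_0=\cE(L,U_0,V_0)$ with an explicitly rotated trial state, observes that whenever $\cE(L,U,V)\ge E_0$ one has $\cF(L,U,V)\ge\cE(L,U,V)\ge E_0=\cF(L,U_0,V_0)$, so $(U_0,V_0)$ is a pseudo-minimum of $\cF$ as well, and then proves directly (Lemma~\ref{lem:regularity_f}) that $\cF$ has the required $C^2$ expansion. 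The fixed $E_0$ and explicit trial function make the differentiation tractable.

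Your route is conceptually cleaner; the paper's is more self-contained. Your constant-multiplicity claim is correct and is essentially the Schur argument of Section~2.3.3 extended from $A^{++}$ to the full Feshbach operator on $\cG_{1,\tau}\otimes\cG_{2,\tau}$. The one place your justification is thin is the $C^2$ bound on $R_L$: analyticity of $\cE$ together with the $C^0$ estimate $|R_L|\le CL^{-p-1}$ does not by itself give the $C^2$ estimate. Cauchy estimates would require the $C^0$ bound on a uniform complex neighborhood of $SO(3)^2$, i.e.\ re-running the Feshbach expansion there; and ``differentiating the Feshbach construction'' for $\cE$ (rather than for a fixed-energy surrogate) means differentiating through the resolvent $(H_\tau-\cE(\tau))^{-1}$ with $\cE(\tau)$ itself varying. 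Both are doable, but this is precisely the technical work that the paper's auxiliary-$\cF$ device is designed to bypass.
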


Intuitively, Proposition \ref{prop:fullexp} says that a local minimum of $\cE(L,\cdot,\cdot)$ is close to being a local minimum of the leading multipole-multipole term $\cF^{(n_1,n_2)}$. 
One then expects that $\cF^{(n_1,n_2)}$, will be negative at such a point because its average over all orientations is zero. More precisely, the following says that close to critical points of $\cF^{(n,m)}$ with a vanishing Morse index we have a negative energy, for $n+m\leq 4$. 

\begin{proposition}[Critical points of vanishing Morse index have a negative multipolar energy]\label{prop:localmin}
Let $n,m \in \mathbb{N}$ with $n+m \in \{2,3,4\}$. If $n=3$ or $m=3$ we also assume that the octopole satisfies the non-degeneracy assumption~\eqref{eq:octopole_assumption}. Then
 there exists $\delta>0$  such that if $|\nabla_{\U,\V}\cF^{(n,m)}(\U,\V)| \leq \delta$ and  ${\rm Hess}\;\cF^{(n,m)}(\U,\V) \geq -\delta$ for some $U,V\in SO(3)$, then 
\begin{equation}\label{eq:leadingneggen}
\cF^{(n,m)}(\U,\V) \leq -\delta. 
\end{equation}
\end{proposition}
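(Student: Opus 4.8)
The plan is to argue by contradiction and compactness, reducing everything to a statement about the zero set of $\cF^{(n,m)}$ on the compact manifold $SO(3)\times SO(3)$. Suppose the claim fails: then for every $\delta>0$ there is a pair $(U_\delta,V_\delta)$ with $|\nabla\cF^{(n,m)}(U_\delta,V_\delta)|\le\delta$, $\mathrm{Hess}\,\cF^{(n,m)}(U_\delta,V_\delta)\ge-\delta$, yet $\cF^{(n,m)}(U_\delta,V_\delta)>-\delta$. Taking $\delta\to0$ and using compactness of $SO(3)^2$, we extract a limit point $(U_*,V_*)$ which is a genuine critical point of $\cF^{(n,m)}$ with $\mathrm{Hess}\,\cF^{(n,m)}(U_*,V_*)\ge0$ (Morse index zero) and $\cF^{(n,m)}(U_*,V_*)\ge0$. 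So it suffices to prove: \emph{every critical point of $\cF^{(n,m)}$ with nonnegative Hessian has strictly negative value} (for $n+m\in\{2,3,4\}$, under~\eqref{eq:octopole_assumption} when $n$ or $m$ equals $3$). Here I would invoke the normalization $\int_{SO(3)}\cF^{(n,m)}(U,V)\,{\rm d}U=0$ already recorded in the excerpt, which forces $\cF^{(n,m)}$ to take negative values somewhere; the issue is to rule out a nonnegative-value, Morse-index-$0$ critical point.

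The heart of the argument is therefore a case analysis of the critical points of the explicit trigonometric/polynomial functions $(U,V)\mapsto\cF^{(n,m)}(U,V)$ for $(n,m)\in\{(1,1),(1,2),(2,1),(1,3),(3,1),(2,2)\}$. I expect each $\cF^{(n,m)}$ to depend only on the relative orientation data — the images $Ue_1$, $Ue_2$, $Ve_1$, $Ve_2$ of a few reference directions, contracted against the (fixed) multipole tensors $\cM^{(n)}_{\rho_1}$, $\cM^{(m)}_{\rho_2}$ and the interaction kernel $\partial^{n+m}|z+e_1|^{-1}|_{z=0}$, whose explicit form the paper promises to give in Section~\ref{sec:multipoles}. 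For each such function I would: (i) write down the Euler--Lagrange (criticality) equations as the vanishing of the tangential gradient on $SO(3)^2$; (ii) at a critical point with $\cF^{(n,m)}\ge0$, exhibit an explicit tangent vector — typically an infinitesimal rotation of one molecule that flips the sign of a dipole-type contraction, e.g. rotating so as to reverse $Ue_1$ relative to $Ve_1$ — along which the energy is, to second order, strictly negative, contradicting $\mathrm{Hess}\ge0$; and (iii) treat the degenerate sub-cases where that particular tangent vector happens to be a critical direction by using a second explicit direction or the symmetry/non-degeneracy hypothesis. For $n=m=1$ this reproduces the dipole-dipole analysis of~\cite[Theorem~7]{Lewin-04b}; the dipole-quadrupole and quadrupole-quadrupole cases are new and require keeping track of the traceless-symmetric structure of $Q$; the octopole case $n=3$ (or $m=3$) is exactly where~\eqref{eq:octopole_assumption} enters, guaranteeing that no nonzero vector $v$ annihilates $O_k(v,\cdot,\cdot)$ so that the constructed descent direction is genuinely available.

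The main obstacle I anticipate is case (iii): handling the \emph{degenerate} critical configurations, where the first candidate descent direction lies in the kernel of the Hessian. There one cannot conclude from the second-order term alone and must either go to higher order along a curved path (using that geodesics in $SO(3)^2$ are explicit one-parameter subgroups, so the fourth-order term in the expansion of $\cF^{(n,m)}$ along such a curve is computable) or combine two commuting rotations and analyze a two-parameter family. Organizing this so that the non-degeneracy assumption~\eqref{eq:octopole_assumption} (and, in the other cases, the genericity forced by $n_k$ being the \emph{first} nonvanishing multipole, hence $\cM^{(n_k)}_{\rho_k}\neq0$) is exactly what closes the last sub-case is the delicate bookkeeping part; everything else is a finite, if lengthy, computation which I would relegate to the dedicated section on multipolar interactions. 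A cleaner alternative for some cases is to note that $\cF^{(n,m)}(U,V)$ is, for fixed $U$, a linear functional of the rotated tensor $V^{\otimes m}\cdot\cM^{(m)}_{\rho_2}$, so minimizing first in $V$ over the $SO(3)$-orbit of a nonzero tensor (a compact set not containing $0$ in the relevant linear span) already yields a strictly negative value unless the $U$-dependent ``field'' seen by molecule $2$ vanishes identically — a non-generic situation one can then exclude directly; I would use whichever of these two routes is shorter in each $(n,m)$ case.
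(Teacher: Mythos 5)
Your compactness reduction is sound: extracting a convergent subsequence, the existence of a uniform $\delta$ on the compact manifold $SO(3)^2$ reduces to showing that $\cF^{(n,m)}$ admits no exact critical point with nonnegative Hessian and nonnegative value. This is a slightly different framing from the paper's: for $(n,m)\in\{(1,1),(1,2),(1,3)\}$ the paper proves the quantitative statement directly, by parametrizing along explicit geodesics and producing concrete numerical bounds (e.g.\ $\delta=1/3$ for the dipole--dipole case), while only the quadrupole--quadrupole case is handled via a uniform-continuity upgrade close in spirit to your compactness argument. Your route is shorter to set up, at the cost of being non-constructive in $\delta$.

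However, the proposal stops precisely where the real content begins. You correctly flag that "the degenerate sub-cases" are the obstacle, but then defer them to "a finite, if lengthy, computation." They are not routine computations: each value of $n+m$ requires a non-trivial structural lemma which is the heart of the corresponding case. For $n+m=2$, the point is the algebraic identity $g''(0)=-g(0)$ along the natural geodesic together with the uniform bound $\|p_1-3(p_1\cdot e_1)e_1\|\geq 1$, which force $g(0)<0$ once $|g'(0)|$ and $-g''(0)$ are small. For $n+m=3$ with a quadrupole, the analogous "field" $5P_eQe-2Qe$ \emph{does} vanish when $e\in\ker Q$, and the paper must run a separate two-parameter geodesic analysis on a neighborhood of that degeneracy (second half of the proof of Lemma~\ref{lem:dipquad}). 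For the octopole, assumption~\eqref{eq:octopole_assumption} enters only through a lemma classifying the zeros of $v\mapsto O(v,v,\cdot)$ as at most three antipodal pairs (Lemma~\ref{lem:Oee0}), which is itself a page-long case analysis. And the quadrupole--quadrupole case rests on a combinatorial argument over the six permutations of eigenvalues of the commuting pair $(L(Q_1),Q_2)$, together with Lemmas~\ref{lem:immerspur0} and \ref{lem:niemalslambdai} ruling out $L(Q_1)$ being a multiple of the identity. None of these appear in the proposal, so the reduced claim -- no critical point of nonnegative Hessian has nonnegative value -- is left unsupported for every single $(n,m)$.

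One additional caution about the "cleaner alternative": knowing that the minimum over $V$ of a linear functional of $V^{\otimes m}\cM^{(m)}_{\rho_2}$ is strictly negative does not, by itself, say anything about critical points with nonnegative Hessian that are \emph{not} minima -- a positive-value saddle could in principle have a positive-semidefinite Hessian in all but one neutral direction. What must be shown is that at a critical point of non-minimal energy there is always a direction of strictly negative second variation bounded away from zero uniformly over the other molecule's orientation; that is exactly what the paper's geodesic computations provide, and it is a stronger statement than the one you outline.
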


Using Propositions~\ref{prop:fullexp} and~\ref{prop:localmin} we obtain that for $\lk$ large enough (independently of the orientations), there exists $\delta>0$ such that 
\begin{equation}\label{eq:leadingneg}
	\cF^{(n_1,n_2)}(\U_0',\V_0') \leq -\delta<0, \quad   	\cF^{(n_1,n_2)}(\U_1',\V_1') \leq -\delta<0. 
\end{equation}

\subsubsection*{\bf Step 3. Linking the new points $(U'_0,V'_0)$ and $(U'_1,V'_1)$}
We have so far managed to show that the leading multipole-multipole interaction of the two new points $(U'_0,V'_0)$ and $(U'_1,V'_1)$ is negative. We now connect $(\U_0',\V_0')$ and $(\U_1',\V_1')$ with a path along which $\cF^{(n_1,n_2)}$ stays negative, which then implies $\cE(\lk,\cdot,\cdot)<c$, by Theorem~\ref{thm:expansion_energy}. Note that for this part the non-degeneracy assumption \eqref{eq:octopole_assumption} for the octopole moment is not needed.

\begin{proposition}[Connectedness of $\{\cF^{(n,m)}\leq -\delta\}$]\label{prop:connected}
Let $n,m \in \mathbb{N}$ with $n+m \in \{2,3,4\}$. Then there exists $\delta_0>0$ such that for all $0<\delta<\delta_0$ the set $\{(\U,\V) \in SO(3): \cF^{(n,m)}(\U,\V) <-\delta\}$ is nonempty and pathwise connected.
\end{proposition}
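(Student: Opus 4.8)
The plan is to prove a slightly stronger statement that settles both conclusions at once: there is a connected ``core'' $\Sigma_{n,m}\subset SO(3)\times SO(3)$ on which $\cF^{(n,m)}\le -c_{n,m}<0$, such that every point of $\{\cF^{(n,m)}<-\delta\}$ is joined to $\Sigma_{n,m}$ by a path staying inside $\{\cF^{(n,m)}<-\delta\}$, for every $\delta\in[0,c_{n,m})$; then $\delta_0:=c_{n,m}$ does the job. Throughout I would use that $\cF^{(n,m)}(U,V)=\cF^{(n,m)}(\rho_{1,U},\rho_{2,V})$ depends on $(U,V)$ only through the rotated multipole tensors $U\cdot\cM^{(n)}_{\rho_1}$ and $V\cdot\cM^{(m)}_{\rho_2}$ (nonzero, for the intended application $n=n_1$, $m=n_2$), is bilinear in that pair by~\eqref{def:nmult_int}, is real-analytic on the compact analytic manifold $SO(3)\times SO(3)$, and is invariant under the diagonal action of the circle $SO(2)_{e_1}$ of rotations about the separation axis $e_1$.

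\emph{Nonemptiness.} Integrating the bilinear form $\cF^{(n,m)}$ over $U\in SO(3)$ averages $U\cdot\cM^{(n)}_{\rho_1}$ to zero, so $\int_{SO(3)}\cF^{(n,m)}(U,V)\,{\rm d}U=0$ for every $V$ (as already recorded below Theorem~\ref{thm:expansion_energy}). Since the relevant multipole moments are nonzero and the kernel in~\eqref{def:nmult_int} pairs nonzero harmonic tensors nontrivially, $\cF^{(n,m)}\not\equiv0$, hence $-\mu:=\min_{SO(3)^2}\cF^{(n,m)}<0$ and $\{\cF^{(n,m)}<-\delta\}$ is nonempty as soon as $\delta<\mu$ (and $c_{n,m}\le\mu$, so $\delta_0$ below is admissible).

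\emph{The core and the descent.} For each of the finitely many cases $n+m\in\{2,3,4\}$ (namely $(1,1)$, $(1,2)$, $(2,1)$, $(1,3)$, $(3,1)$, $(2,2)$, the pairs being related by $\cF^{(n,m)}\leftrightarrow\cF^{(m,n)}$), I would write $\cF^{(n,m)}$ explicitly via Section~\ref{sec:multipoles}, parametrising $(U,V)$ by the rotated multipole tensors (a pair of unit vectors when $n_k=1$, an eigenframe when $n_k=2$, and so on), and exhibit a connected $\Sigma_{n,m}$ with $\cF^{(n,m)}\le -c_{n,m}<0$ on it, arranged so that $\Sigma_{n,m}$ contains \emph{every} critical point of $\cF^{(n,m)}$ of negative value; in the dipole--dipole case one can take the $SO(2)_{e_1}$-orbit of the arc $p=\cos\alpha\,e_1+\sin\alpha\,w$, $q=\cos\alpha\,e_1-\sin\alpha\,w$, along which $\cF^{(1,1)}=-1-\cos^2\alpha\le -1$. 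Given such a $\Sigma_{n,m}$, I would run the gradient flow of $\cF^{(n,m)}$ for a bi-invariant metric: it is globally defined by compactness, $\cF^{(n,m)}$ is non-increasing along it, and by the {\L}ojasiewicz gradient inequality (applicable since $\cF^{(n,m)}$ is real-analytic) every trajectory converges to a single critical point. From any $(U,V)$ with $\cF^{(n,m)}(U,V)<0$ the limiting critical point has value $\le\cF^{(n,m)}(U,V)<0$, hence lies in $\Sigma_{n,m}$; closing up the trajectory near its limit gives a path from $(U,V)$ to $\Sigma_{n,m}$ along which $\cF^{(n,m)}\le\cF^{(n,m)}(U,V)$. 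Therefore, for any $\delta\in(0,c_{n,m})$, the connected set $\Sigma_{n,m}$ lies in $\{\cF^{(n,m)}<-\delta\}$ and every point of $\{\cF^{(n,m)}<-\delta\}$ is joined to it inside $\{\cF^{(n,m)}<-\delta\}$, so that set is path-connected.

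\emph{The main obstacle} is the claim that the explicitly chosen $\Sigma_{n,m}$ absorbs \emph{all} negative-value critical points, which amounts to the full classification of the critical set of $\cF^{(n,m)}$ --- exactly the careful study of the critical points of $(U,V)\mapsto\cF^{(n,m)}(U,V)$ carried out in the later sections. This has to be done case by case from the explicit formulas, the delicate instances being $(2,2)$ and $(1,3)$, with extra care when a multipole has continuous or finite point-symmetry (then its $SO(3)$-orbit is lower-dimensional or non-smooth and the reduced problem must be set up accordingly); note that here the octopole non-degeneracy hypothesis~\eqref{eq:octopole_assumption} is \emph{not} needed. Alternatively one can bypass the classification and verify directly that every point of $\{\cF^{(n,m)}<0\}$ connects to $\Sigma_{n,m}$ by a level-non-increasing path --- first sliding the relative azimuth about $e_1$ to make the interaction as attractive as possible, then adjusting the polar orientations --- which is longer but entirely elementary.
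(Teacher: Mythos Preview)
Your strategy is coherent in outline --- use gradient flow plus \L{}ojasiewicz to carry any point of $\{\cF^{(n,m)}<-\delta\}$ down to a negative-value critical point, then connect all such critical points through a pre-built connected core $\Sigma_{n,m}$ --- but the proof is not complete, and the part you defer is precisely the substance of the proposition.

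The construction of $\Sigma_{n,m}$ requires two things you do not establish: (i) that negative critical values are bounded away from zero (so that some $c_{n,m}>0$ exists with no critical values in $(-c_{n,m},0)$), and (ii) that all negative-value critical points lie in a single connected subset of $\{\cF^{(n,m)}\le -c_{n,m}\}$. You describe (ii) as amounting to ``the full classification of the critical set\ldots\ carried out in the later sections,'' but that is not what the paper does. In the quadrupole--quadrupole case the paper explicitly remarks that it does \emph{not} show $\{Q_2:\cF(Q_1,Q_2)<-\delta\}$ is connected for generic $Q_1$, because even the four global minimizers of $\cF_{Q_1}$ are not obviously joinable within the negative sublevel set. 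So your task of absorbing all negative critical points of $\cF^{(2,2)}$ on $SO(3)^2$ into a connected $\Sigma_{2,2}$ is a genuine obstacle, not a bookkeeping detail; it is neither in the paper nor clearly easier than the proposition itself.

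The paper's route is different and avoids any classification. For dipole--multipole interactions it writes $\cF^{(1,m)}(e,p)=\langle X_e,p\rangle$ for an explicit vector $X_e$, slides $p$ to $-X_e/|X_e|$ (a monotone descent), and then moves $e$ while tracking the corresponding minimizer, handling by hand the finitely many points where $X_e$ may vanish. For $(2,2)$ it is cleverer: rather than analysing the full critical set, it fixes a particular $\widetilde{Q_1}$ for which the minimizers in $Q_2$ \emph{can} be connected (Lemmas on the eigenvalue-sign condition), and then shows any $(Q_1,Q_2)$ can be walked over to the slice $\{\widetilde{Q_1}\}\times W_2$ by alternately minimising in $Q_2$ and advancing $Q_1$ a fixed amount along a path. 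Your ``alternative'' suggestion at the end --- bypass the classification and build level-non-increasing paths directly --- is much closer in spirit to what actually works, but it too needs the case-by-case constructions to be spelled out; the two-variable argument in the $(2,2)$ case is not just ``sliding the azimuth then adjusting polar orientations.''
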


Due to Proposition \ref{prop:connected} and \eqref{eq:leadingneg} we can connect $(\U_0',\V_0')$ and $(\U_1',\V_1')$ with a path along which $\cF^{(n_1,n_2)}$ stays below a negative constant independent of $\lk$. Thus if  $\lk$ is chosen large enough in the beginning, it follows that along this path the interaction energy $\cE(L,\cdot,\cdot)-E_1-E_2$ remains negative, which ends the proof of Theorem~\ref{thm:with-multipoles}.\qed

\begin{figure}[t]
\centering
\includegraphics[width=10cm]{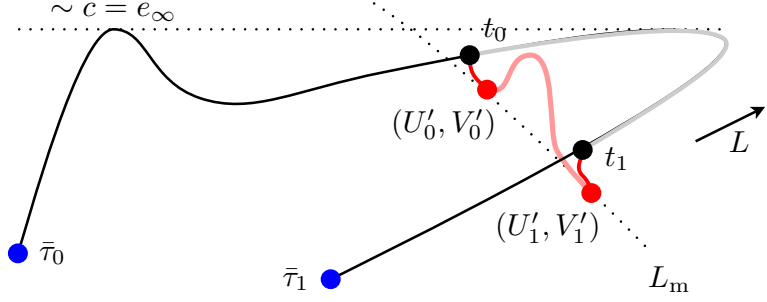}
\caption{Graphical representation of the proof of Theorem~\ref{thm:with-multipoles}.\label{fig:cut_path2}}
\end{figure}

\bigskip

Note that Propositions \ref{prop:localmin} and \ref{prop:connected} are results involving the Coulomb potential and thus can be used for other types of kinetic energies. 

The paper is now organized as follows. In Section \ref{sec:ProofvdWdom} we prove Theorem \ref{thm:expansion_energy}. In Section \ref{sec:Proofevergyexpsmooth} we prove Proposition \ref{prop:fullexp}. In Section \ref{sec:prooflocmincon} we prove Propositions \ref{prop:localmin} and \ref{prop:connected}. Since the proof of Propositions \ref{prop:localmin} and \ref{prop:connected} is rather lengthy, we treat the different multipolar interactions in separate subsections. 

%%%%%%%%%%%%%%%%%%%%%%%%%%%%%%%%%%%%%%%%%%%%%%%%%%%%%%%%%%%%%%%%%%%%
%%%%%%%%%%%%%%%%%%%%%%%%%%%%%%%%%%%%%%%%%%%%%%%%%%%%%%%%%%%%%%%%%%%%
\section{Proof of Theorem  \ref{thm:expansion_energy}: expansion of the energy}\label{sec:ProofvdWdom}
%%%%%%%%%%%%%%%%%%%%%%%%%%%%%%%%%%%%%%%%%%%%%%%%%%%%%%%%%%%%%%%%%%%%
%%%%%%%%%%%%%%%%%%%%%%%%%%%%%%%%%%%%%%%%%%%%%%%%%%%%%%%%%%%%%%%%%%%%

\subsection{Multipolar expansion}\label{sec:multipoles}

Our first goal is to provide an expansion of the Coulomb energy of two neutral densities placed at a distance $L\gg1$. For this we simply need to use Taylor's formula~\eqref{exp:Taylorformal}. Recalling the definition of the multipoles $\cM_\rho^{(n)}$ in~\eqref{def:nmult}, the precise statement is as follows.

\begin{lemma}[Expansion of the Coulomb potential]\label{lem:multipexp}
	For any $N \geq1$ there exists a constant $C_N$ such that, for all $h \in \mathbb{R}^3, L >0$ with $|h| \leq \frac{L}{2}$ and for all multi-indices $\alpha$ with $ 0 \leq |\alpha|\leq 2$,
	\begin{equation}
	\bigg|\partial_h^\alpha \bigg( \frac{1}{|Le_1-h|}-
	\sum_{n=0}^N \frac{\cM^{(n)}_{\delta_h}(e_1,\dots ,e_1)}{L^{n+1}}  \bigg) \bigg|\leq C_N \frac{ 1 + |h|^{N+1}}{L^{N+2}}
	\label{eq:expansion_Coulomb},
	\end{equation}
	where $\delta_h$ is the Dirac delta at $h$ and where we recall that the multipole $\cM^{(n)}_{\rho}$ is defined in~\eqref{def:nmult}.
\end{lemma}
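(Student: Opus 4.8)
The plan is to start from the exact Taylor expansion~\eqref{exp:Taylorformal}, which is valid as a convergent series for $|h|<L$, and to extract a quantitative remainder estimate from it. Writing $g(h):=|Le_1-h|^{-1}$ and recalling from the computation preceding~\eqref{exp:Taylorformal} that, for a fixed $z=h\neq0$,
$$
\frac{1}{|Le_1-h|}=\frac{1}{L}\sum_{n=0}^{\infty}\frac{(-1)^n|h|^{2n+1}}{n!\,L^n}\,(e_1\cdot\nabla)^n\frac{1}{|h|},
$$
I would identify the $n$-th term with $\cM^{(n)}_{\delta_h}(e_1,\dots,e_1)/L^{n+1}$ directly from the definition~\eqref{def:nmult} (with $\rho=\delta_h$). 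So the truncated sum in~\eqref{eq:expansion_Coulomb} is exactly the degree-$N$ Taylor polynomial in the variable $1/L$, and the object to bound is the tail of this series together with its $h$-derivatives up to order two.

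The cleanest route is not to sum the tail termwise but to use the integral form of Taylor's remainder in the auxiliary variable. Introduce $s=1/L\in(0,2/|h|)$ and set $\varphi(s):=s\,|h|\,\big|\,e_1-\tfrac{s}{|h|}\,h^{*}\big|^{-1}$ after the rescaling used in the text, so that $g(h)=\varphi(1/L)$ and $\varphi$ is real-analytic on a disk of radius $1/|h|$ (the singularity of $z\mapsto|z-e_1 s|h|\,|^{-1}$ at $|z|=s|h|$ translates into $\varphi$ being analytic for $s<1/|h|$, which is guaranteed by $|h|\le L/2$, i.e. $s\le \tfrac12/|h|$). Then
$$
g(h)-\sum_{n=0}^{N}\frac{\varphi^{(n)}(0)}{n!}\,L^{-n}
=\frac{L^{-(N+1)}}{N!}\int_{0}^{1}(1-t)^{N}\,\varphi^{(N+1)}\!\big(t/L\big)\,{\rm d}t,
$$
and $\varphi^{(n)}(0)/n!=\cM^{(n)}_{\delta_h}(e_1,\dots,e_1)$. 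On the interval $t/L\in[0,\tfrac12/|h|]$ one has $|e_1-\tfrac{t}{L|h|}h^{*}|\ge \tfrac12$, so all derivatives of $\varphi$ are controlled by Cauchy estimates on the disk of radius $\tfrac34/|h|$ around $t/L$: $|\varphi^{(N+1)}(t/L)|\le C_N\,|h|\,(|h|)^{N+1}=C_N|h|^{N+2}$ when $|h|\ge 1$, and $|\varphi^{(N+1)}(t/L)|\le C_N$ when $|h|\le1$ (handled by Cauchy estimates on a disk of fixed radius, say $1/2$, since then $\varphi$ is analytic on a uniform neighbourhood of $0$). Combining the two regimes gives the stated bound $C_N(1+|h|^{N+1})/L^{N+2}$ for $\alpha=0$; the polynomial power $|h|^{N+1}$ rather than $|h|^{N+2}$ comes out slightly better than the naive count because the leading $|h|$ in $\varphi$ can be absorbed, but in any case $|h|^{N+2}/L^{N+2}\le (|h|/L)|h|^{N+1}/L^{N+1}$ with $|h|/L\le\tfrac12$, so the weaker exponent suffices.

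For the derivatives $\partial_h^{\alpha}$ with $|\alpha|\le 2$ I would differentiate the remainder formula under the integral sign. The $h$-derivatives act on $\varphi^{(N+1)}(t/L)$ through the dependence of $\varphi$ on $|h|$ and on the unit vector $h/|h|$; away from $h=0$ these are smooth, and one again applies Cauchy estimates now to the joint function of $s$ and $h$ on a polydisk whose radius in $s$ is $\asymp 1/|h|$. Each $h$-derivative costs at most a factor $1/|h|$ near infinity (from differentiating $h/|h|$ or from the chain rule through $s|h|$), which is harmless since we already have room. Near $h=0$ one instead differentiates the absolutely convergent series termwise; the $n$-th term is a homogeneous polynomial in $h$ of degree $n$ times $L^{-(n+1)}$, so $\partial_h^\alpha$ of the tail $\sum_{n>N}$ is bounded by $C\sum_{n>N}(n)^{|\alpha|}|h|^{n-|\alpha|}L^{-(n+1)}\le C_N L^{-(N+2)}$ for $|h|\le L/2$, uniformly. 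The main obstacle is bookkeeping: making the Cauchy-estimate constants uniform in $h$ across the crossover between the ``$|h|$ small'' and ``$|h|$ large'' regimes, and checking that the $h$-derivatives do not deteriorate the power of $L$; but since we keep a full spare factor $|h|/L\le\tfrac12$ and only need $|\alpha|\le 2$, two (or three) such spare factors are available, so no real difficulty arises. Everything else is a routine application of Taylor's theorem with remainder plus Cauchy's inequality for analytic functions.
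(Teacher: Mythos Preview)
Your approach is essentially correct but takes a more circuitous route than the paper's. The paper makes the single substitution $u=h/L$, reducing everything to the function $u\mapsto |e_1-u|^{-1}$ on the fixed compact ball $|u|\le\tfrac12$; real-analyticity there gives
\[
\Big|\partial_u^\alpha\Big(\tfrac{1}{|e_1-u|}-\textstyle\sum_{n=0}^N\cM^{(n)}_{\delta_u}(e_1,\dots,e_1)\Big)\Big|\le C_N|u|^{N+1-|\alpha|}
\]
with constants automatically uniform, and the $h$-derivatives come for free via the chain rule $\partial_h^\alpha=L^{-|\alpha|}\partial_u^\alpha$. By contrast, you expand in $s=1/L$ with $h$ kept as a parameter. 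The price is that the radius of analyticity of your $\varphi$ in $s$ is $1/|h|$, so the Cauchy estimates carry an $|h|$-dependence you must track by hand, and the $h$-derivatives do not factor cleanly (forcing the case split near $h=0$ because $h/|h|$ is singular there). What you gain is nothing in particular; the paper's normalisation simply absorbs all of this bookkeeping.

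Two small technical slips worth fixing: the disk of radius $\tfrac34/|h|$ around $t/L$ does not stay inside the domain $|s|<1/|h|$ (since $t/L$ can be as large as $1/(2|h|)$); take radius $\tfrac14/|h|$ instead. And there is an off-by-one: since $\varphi(0)=0$, matching $\sum_{n=0}^N\cM^{(n)}_{\delta_h}/L^{n+1}$ requires the Taylor polynomial of $\varphi$ to order $N+1$, so the remainder involves $\varphi^{(N+2)}$ rather than $\varphi^{(N+1)}$. Neither issue is fatal, but both illustrate why the paper's change of variable is the cleaner way to organise the estimate.
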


The terms on the left hand side of \eqref{eq:expansion_Coulomb} can be computed with the help of formula \eqref{def:nmult}. For $N=5$ we find 
\begin{multline}
	\sum_{n=0}^5 \frac{\cM^{(n)}_{\delta_h}(e_1,\dots ,e_1)}{L^{n+1}} =\frac{1}{L}+\frac{h \cdot e_1}{L^2}+ \frac{3(h \cdot e_1)^2-|h|^2}{2 L^3}\\
	+\frac{5(h \cdot e_1)^3-3 (h \cdot e_1)|h|^2 }{2 L^4}
	+ \frac{3 |h|^4- 30 (h \cdot e_1)^2 |h|^2 +35 (h \cdot e_1)^4}{ 8 L^5 }\\
	+ \frac{15 (h \cdot e_1) |h|^4- 70 (h \cdot e_1)^3 |h|^2+63 (h \cdot e_1)^5}{8 L^6}.
\end{multline}
We can now give the expression of the first four multipole moments of a neutral charge distribution $\rho$:
\begin{align}
 &\D(\rho)_i=\big(\cM^{(1)}_\rho\big)_i= \int_{\R^3} z_i \,{\rm d}\rho(z)\nn\\
 &\Q(\rho)_{ij}=\big(\cM^{(2)}_\rho\big)_{ij} =\frac{1}{2} \int_{\R^3} (3 z_i z_j -|z|^2 \delta_{ij}) \,{\rm d}\rho(z)\nn \\
 &\O(\rho)_{ijk}=\big(\cM^{(3)}_\rho\big)_{ijk}= \frac{1}{2} \int_{\R^3} ( 5 z_i z_j z_k - |z|^2  (z_i \delta_{jk}+ z_j \delta_{ik}+ z_k \delta_{ij})) \,{\rm d}\rho(z)\nn\\
 &\H(\rho)_{ijk\ell}=\big(\cM^{(4)}_\rho\big)_{ijk\ell}= \frac{1}{8} \int_{\R^3}\Big(35 z_i z_j z_k z_\ell\nn\\
  &\qquad\qquad\qquad\qquad\qquad- 30 |z|^2 P_S(z_i z_j \delta_{k\ell}) + 3 |z|^4 P_S(\delta_{ij} \delta_{k\ell})\Big)\,{\rm d}\rho(z),
  \label{def:multipoles}
\end{align}
where
$$P_S(z_i z_j \delta_{k\ell}):=\frac{z_i z_j \delta_{k\ell} + z_i z_k \delta_{j\ell} + z_i z_\ell \delta_{jk} + z_j z_k \delta_{i\ell} + z_j z_\ell \delta_{ik} + z_k z_\ell \delta_{ij}}{6}$$
and
$$P_S(\delta_{ij} \delta_{k\ell}):=\frac{\delta_{ij} \delta_{k\ell} + \delta_{ik} \delta_{j\ell} + \delta_{i\ell} \delta_{jk}}{3}.$$
In other words $P_S$ symmetrizes a given expression with respect to the indices $i,j,k,\ell$. 

From the Taylor expansion in Lemma~\ref{lem:multipexp} we can deduce an expansion for the interaction of two charge distributions placed far away. 

\begin{lemma}[Multipolar expansion]\label{lem:multipolar_expansion}
Let $\rho_1$ and $\rho_2$ be two bounded measures with support in the ball $B(0,L/3)$, such that $\rho_1(\R^3)=\rho_2(\R^3)=0$. For any $N\geq1$ there exists a constant $C$ such that
	\begin{multline}
	\bigg\|\int_{\R^3}\int_{\R^3}\frac{{\rm d}\rho_1(x)\,{\rm d}\rho_2(y)}{|Ux-Vy-Le_1|}-\sum_{2\leq n+m\leq N}\frac{\cF^{(n,m)}(\rho_{1,U},\rho_{2,V},e_1)}{L^{n+m+1}}\bigg\|_{C^2(SO(3)^2,\R)}\\
\leq \frac{C}{L^{N+2}}\int_{\R^3}(1+|x|^{N+1})\,{\rm d}|\rho_1|(x)\; \int_{\R^3}(1+|y|^{N+1})\,{\rm d}|\rho_2|(y).
\label{eq:multipolar_expansion}
	\end{multline}
We recall that the multipolar interactions $\cF^{(n,m)}(\rho_{1,U},\rho_{2,V},e_1)$ are defined above in~\eqref{def:nmult_int}.
\end{lemma}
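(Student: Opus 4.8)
The plan is to reduce everything to the one–centre expansion of Lemma~\ref{lem:multipexp}, applied at the point $h=Ux-Vy$, and then to resum binomially so as to separate the $x$– and $y$–dependence. Since $\supp\rho_1,\supp\rho_2\subset B(0,L/3)$ we have $|Ux-Vy|\le|x|+|y|\le 2L/3<L$, which lies in the range of validity of (the proof of) Lemma~\ref{lem:multipexp} — with a constant depending only on the ratio $2/3$; tracking homogeneity in that proof one gets, for $x\in\supp\rho_1$, $y\in\supp\rho_2$ and every multi-index $\alpha$ with $|\alpha|\le 2$,
\begin{equation*}
\Bigl|\partial_h^\alpha\Bigl(\tfrac1{|Le_1-h|}-\textstyle\sum_{n=0}^N\tfrac{\cM^{(n)}_{\delta_h}(e_1,\dots,e_1)}{L^{n+1}}\Bigr)\Bigr|_{h=Ux-Vy}\ \le\ C_N\,\frac{|Ux-Vy|^{N+1-|\alpha|}}{L^{N+2}},
\end{equation*}
the sharper exponent $N+1-|\alpha|$ being what makes the final moment count come out right.

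Next, each $w\mapsto\cM^{(n)}_{\delta_w}(e_1,\dots,e_1)$ is a homogeneous polynomial of degree $n$ in $w$; expanding it multilinearly in $w=Ux-Vy$ writes it as $\sum_{n_1+n_2=n}q_{n_1,n_2}(Ux,Vy)$ with $q_{n_1,n_2}$ homogeneous of degree $n_1$ in $Ux$ and of degree $n_2$ in $Vy$. Inserting this into the expansion above and integrating against $\mathrm d\rho_1(x)\,\mathrm d\rho_2(y)$, every term with $n_1=0$ or $n_2=0$ drops out because $\rho_1(\R^3)=\rho_2(\R^3)=0$, and by the formal multipolar expansion~\eqref{eq:def_multipole_expansion} with \eqref{def:nmult_int} (proved in Appendix~\ref{app:multipolar_expansion}) the remaining coefficients are exactly $\iint q_{n,m}(Ux,Vy)\,\mathrm d\rho_1\,\mathrm d\rho_2=\cF^{(n,m)}(\rho_{1,U},\rho_{2,V},e_1)$ for $n,m\ge1$. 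Hence the difference $R(U,V)$ of the two sides of~\eqref{eq:multipolar_expansion} is precisely the integral of the Taylor remainder displayed above.

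It remains to estimate $R$. For the $C^0$ part one integrates the $\alpha=0$ bound and uses $(|x|+|y|)^{N+1}\le 2^N(|x|^{N+1}+|y|^{N+1})$ together with factorisation of the double integral. For the $C^2$ part one differentiates $R$ under the integral sign — legitimate, since $|Ux-Vy|\le 2L/3<L$ keeps $(U,V)\mapsto|Ux-Vy-Le_1|^{-1}$ and all its derivatives continuous and bounded on $\supp\rho_1\times\supp\rho_2$ — and applies the chain rule in a fixed framing of $SO(3)^2$ by left-invariant vector fields: a derivative of order $k\le 2$ in $(U,V)$ of a function of $h(U,V)=Ux-Vy$ is a finite sum of terms of the form $\bigl(\partial_h^\beta g\bigr)(Ux-Vy)$, $|\beta|\le k$, each multiplied by a factor of norm $\le C|x|^{a}|y|^{b}$ with $a+b=|\beta|$ (coming from the derivatives of $U\mapsto Ux$ and $V\mapsto Vy$), uniformly over the compact group. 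Combined with the displayed bound and the inequality $|x|^a|y|^b\,|Ux-Vy|^{N+1-|\beta|}\le(|x|+|y|)^{N+1}$ (which uses $a+b=|\beta|$ and $|Ux-Vy|\le|x|+|y|$), every contribution is $\le C_N L^{-N-2}(|x|+|y|)^{N+1}$; integrating and factorising then yields the stated inequality.

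The bulk of this is routine once Lemma~\ref{lem:multipexp} is in hand; the two places that need care are the identification in the second step — namely that the bihomogeneous coefficients produced by the resummation are exactly the contracted trace-free multipole expression~\eqref{def:nmult_int}, which is the combinatorial content of Appendix~\ref{app:multipolar_expansion} and ultimately rests on the standard identity relating the Cartesian moments $\int z^{\otimes k}\,\mathrm d\rho$ to the moments $\cM^{(k)}_\rho$ with their $(2k-1)!!$ normalisations — and the homogeneity bookkeeping in the last step, which must exploit the sharper remainder bound $|h|^{N+1-|\alpha|}$ rather than the uniform $1+|h|^{N+1}$ in Lemma~\ref{lem:multipexp}, so that only the $(N+1)$-st moments of $|\rho_1|$ and $|\rho_2|$ enter the estimate.
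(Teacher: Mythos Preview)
Your proposal is correct and follows essentially the same route as the paper: apply the one-centre Taylor expansion of Lemma~\ref{lem:multipexp} at $h=Ux-Vy$, split the resulting homogeneous polynomial in $h$ binomially into bihomogeneous pieces in $(Ux,Vy)$, use neutrality of $\rho_1,\rho_2$ to kill the degree-zero pieces, and identify the surviving integrals with the $\cF^{(n,m)}$ via the harmonicity argument of Appendix~\ref{app:multipolar_expansion}. You are, if anything, more careful than the paper on two points it treats implicitly: the radius mismatch (the hypothesis $|h|\le L/2$ of Lemma~\ref{lem:multipexp} versus $|Ux-Vy|\le 2L/3$, which is harmless since the proof of that lemma works for any fixed ratio $<1$) and the $C^2$ bookkeeping, where you correctly invoke the sharper remainder $|h|^{N+1-|\alpha|}$ from the proof of Lemma~\ref{lem:multipexp} rather than its stated form.
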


We recall that  we use the shorter notation $\rho_{1,U}(x):=\rho_1(U^{-1}x)$ and $\rho_{2,V}(x):=\rho_2(V^{-1}x)$.
We have stated in Lemma~\ref{lem:multipolar_expansion} only the expansion for the interaction and its first two derivatives with respect to $U,V\in SO(3)$ because this is what we will use later on. But the interaction is actually smooth in $U,V$ and the expansion~\eqref{eq:multipolar_expansion} holds in $C^k(SO(3)^2)$ for all $k\geq2$. The proof of the lemma is just Taylor's expansion~\eqref{eq:expansion_Coulomb} together with some algebraic manipulations (based on the neutrality of the two charges) to express everything in terms of the multipolar energies $\cF^{(n,m)}(\rho_{1,U},\rho_{2,V},e_1)$. The proof of Lemmas~\ref{lem:multipexp} and~\ref{lem:multipolar_expansion} is provided for completeness in Appendix~\ref{app:multipolar_expansion}.

For $n+m\leq5$, the multipolar interaction energies $\cF^{(n,m)}(\rho_1,\rho_2,e_1)$ appearing in the expansion~\eqref{eq:def_multipole_expansion} are given by the following formulas, whose derivation is also explained in Appendix~\ref{app:multipolar_expansion}. The dipole-dipole interaction is defined by
\begin{equation}\label{eq:dd}
\cF^{(1,1)}(\rho_1,\rho_2,e_1)= -3 (D_{1} \cdot e_1) (e_1 \cdot  D_{2}) + D_{1} \cdot  D_{2},
\end{equation}
with an obvious notation for the dipole $D_k$ of the charge distribution $\rho_k$. 
% Here $e_1=(1,0,0)$ appears because we have translated the first molecule by $L e_1$.
Similarly, the  dipole-quadrupole interaction term is
\begin{equation}\label{eq:dq}
\cF^{(1,2)}(\rho_1,\rho_2,e_1)=  5 (D_{1} \cdot e_1) Q_{2}(e_1, e_1) - 2 Q_{2}(D_{1},e_1),
% \cF_{21}(\U,\V)=-5 (D_{2,\V} \cdot e_1) Q_{1,\U}(e_1, e_1) + 2 Q_{1,\U}(D_{2,\U},e_1).
\end{equation}
the dipole-octopole interaction term is
\begin{equation}\label{eq:do}
% \cF_{31}(\U,\V)=   O_{1,\U}(e_1, e_1, 3 D_{2,\V} - 7 p D_{2,\V}), \quad 
\cF^{(1,3)}(\rho_1,\rho_2,e_1) = O_{2}\big(e_1, e_1, 3 D_{1} - 7 (e_1\cdot  D_{1})e_1\big),
\end{equation}
and the quadrupole-quadrupole interaction term is
\begin{equation}\label{eq:qq}
\cF^{(2,2)}(\rho_1,\rho_2,e_1)= \frac{1}{3}\; \tr\Big(\big(35 p Q_{1} p-10 p Q_{1} -10 Q_{1} p + 2 Q_{1}\big)Q_{2}\Big).
\end{equation}
where $p:=|e_1\rangle\langle e_1|$. Finally, the dipole-hexadecapole and quadrupole-octopole terms are defined by
\begin{equation}\label{eq:dh}
\cF^{(1,4)}(\rho_1,\rho_2,e_1) = 9 \H_{2}(e_1, e_1, e_1, e_1) (\D_{1} \cdot e_1)-4 \H_{2}(e_1, e_1, e_1, D_{1})
%  \cF_{41}(\U,\V) = -9 \H_{1,\U}(e_1, e_1, e_1, e_1) (\D_{2,\V} \cdot e_1) + 4 \H_{1,\U}(e_1, e_1, e_1, D_{2,\V})
\end{equation}
and
\begin{multline}\label{eq:qo}
\cF^{(2,3)}(\rho_1,\rho_2,e_1) = - 21\, O_{2}(e_1, e_1, e_1) Q_{1}(e_1, e_1) + 14\, O_{2}(e_1, e_1, Q_{1} e_1)\\ - 2 \tr(O_{2}(e_1,.,.) Q_{1}).
%  \cF_{32}(\U,\V) = 21 O_{1,\U}(e_1, e_1, e_1) Q_{2,\V}(e_1, e_1) - 14 O_{1,\U}(e_1, e_1, Q_{2,\V} e_1) + 2 Tr(O_{1,\U}(e_1,.,.) Q_{2,\V}).
\end{multline}
Here we have only defined the multipolar energies $\cF^{(n,m)}(\rho_1,\rho_2,e_1)$ for $n\leq m$. For $n\geq m$ we have by definition
$$\cF^{(n,m)}(\rho_1,\rho_2,e_1):=(-1)^{n+m}\cF^{(m,n)}(\rho_2,\rho_1,e_1).$$

\subsection{Proof of Theorem \ref{thm:expansion_energy}: upper bound and positivity of $C_{\rm vdW}$}

In this section we provide the proof of the upper bound~\eqref{eq:upper} and of the positivity of $C_{\rm vdW}$.

\subsubsection{Localizing the eigenspaces}
To prove Theorem \ref{thm:expansion_energy} it turns out to be convenient to introduce a cut-off function. Let $\chi_1: \mathbb{R}^3\rightarrow \mathbb{R}$ be a spherically symmetric $C^\infty$ function
supported in the ball $B(0,\frac{1}{4})$ and equal to $1$ on the ball
$B(0,\frac{1}{5})$ with $ 0 \leq \chi_1 \leq 1$. Define then
\begin{equation}\label{chiR}
\chi_L(x):=\chi_1(L^{-1} x).
\end{equation}
For all multi-indices $\gamma$ there exists a constant $C>0$, so that
for all $L$ large enough we have that
\begin{equation}\label{chiader}
\|\partial^\gamma \chi_L\|_{L^\infty} \leq C \al^{-|\gamma|}.
\end{equation}
Next we consider the eigenspaces $\cG_k:=\ker(H_k-E_k)$ for $k=1,2$ and call 
$$\cG_{k,L}:=(\chi_L)^{\otimes N_k}\cG_k$$ 
the localized eigenspaces. 

As pointed out in the introduction the ground states of the Hamiltonians $H_k$ are exponentially decaying, namely there exists $c>0$ such that
\begin{equation}\label{eq:expdecay}
H_k \Psi= E_k \Psi \implies \|e^{c|x|} \partial^\alpha\Psi\|_{L^2} < \infty, \quad |\alpha|\leq2,\quad k=1,2
\end{equation}
where $\alpha$ denotes a multi-index. To any orthonormal basis $\Psi_{k,1},...,\Psi_{k,r_k}$ of $\cG_k$, we can thus associate the basis $(\chi_L)^{\otimes N_k}\Psi_{k,1},...,(\chi_L)^{\otimes N_k}\Psi_{k,r_k}$ of $\cG_{k,L}$, which is exponentially close to being orthonormal. If we introduce the two projectors $\Pi_k$ and $\Pi_{k,L}$ onto $\cG_k$ and $\cG_{k,L}$ respectively, then we obtain 
\begin{equation}
\norm{(1-\Delta)(\Pi_k-\Pi_{k,L})(1-\Delta)}=O(e^{-cL})
\label{eq:norm_diff_projections}
\end{equation}
in operator norm. Since by Theorem \ref{thm:molbind}  $E_k$ is in the discrete spectrum of $H_k$ we have that 
\begin{equation}\label{eq:trivialgap}
\Pi_k^\bot (H_k-E_k) \Pi_k^\bot \geq \delta \Pi_k^\bot \text{ for some } \delta>0.
\end{equation}
Since the domain of $H_k$ is the Sobolev space $H^2$, the bounds~\eqref{eq:norm_diff_projections} and \eqref{eq:trivialgap} imply that 
\begin{equation}
\Pi_{k,L}^\perp (H_k-E_k)\Pi_{k,L}^\perp \geq \epsilon\Pi_{k,L}^\perp  
\label{eq:resolvent_estimate}
\end{equation}
for some $\epsilon>0$. This means that 
$$(H_k-E_k)_{|(\cG_{k,L})^\perp}:=\Pi_{k,L}^\perp (H_k-E_k)\Pi_{k,L}^\perp$$ 
is invertible in the space $(\cG_{k,L})^\perp$. 
In the following, we denote by 
$$\Pi_{k,L}^\perp (H_k-E_k)^{-1}_{|(\cG_{k,L})^\perp}\Pi_{k,L}^\perp$$
the associated inverse operator in the space $(\cG_{k,L})^\perp$. 
By~\eqref{eq:norm_diff_projections} we have
\begin{align*}
\Pi_{k}^\perp(H_k-E_k)\Pi_{k}^\perp\Pi_{k,L}^\perp (H_k-E_k)^{-1}_{|(\cG_{k,L})^\perp}\Pi_{k,L}^\perp&=\Pi_{k,L}^\perp+O(e^{-cL})\\
&=\Pi_{k}^\perp+O(e^{-cL}) 
\end{align*}
which proves that 
\begin{equation}
\norm{\Pi_{k}^\perp(H_k-E_k)^{-1}_{|(\cG_{k})^\perp}\Pi_{k}^\perp-\Pi_{k,L}^\perp (H_k-E_k)^{-1}_{|(\cG_{k,L})^\perp}\Pi_{k,L}^\perp}=O(e^{-cL}).
\label{eq:estim_difference_resolvents}
\end{equation}
Hence, using the cut-off eigenspaces will always only introduce exponentially small errors. 

We recall that the two molecules are rotated using $U,V\in SO(3)$ and that the second one is translated by $Le_1$. Our variable is $\tau:=(L,U,V)\in(0,\ii)\times SO(3)\times SO(3)$. We therefore introduce the corresponding rotated and translated spaces
$$\cG_{1,\tau}:=U\cG_{1,L},\qquad \cG_{2,\tau}:=T_LV\cG_{2,L}$$
where 
\begin{equation*}
\U(\Psi)(x_1,s_1,\dots ,x_k,s_k)=\Psi(\U^{-1}x_1,s_1,\dots ,\U^{-1}x_k, s_k),
\end{equation*}
\begin{equation*}
T_L(\Psi)(x_1,s_1,\dots ,x_k,s_k)=\Psi(x_1 - L e_1, s_1, \dots ,x_k - L e_1, s_k).
\end{equation*}
Similarly, we denote by $H_{1,\tau}$ and $H_{2,\tau}$ the two Hamiltonians of the individual molecules, with the nuclei rotated and translated by $\tau$. In this manner, $\cG_{k,\tau}$ is an approximate ground state eigenspace of $H_{k,\tau}$. 

\subsubsection{The test function}
Now we are ready to prove the upper bound~\eqref{eq:upper} in Theorem~\ref{thm:expansion_energy}. We consider as in the statement of the theorem two eigenfunctions $\Psi_1\in\cG_1$ and $\Psi_2\in\cG_2$ of, respectively, $H_1$ and $H_2$. We denote by
$$\Psi_{k,L}:=(\chi_L)^{\otimes N_k}\Psi_k\in \cG_{k,L}$$
the cut-off approximate eigenfunctions and by
$$\Phi_{1,\tau}:=U\left((\chi_L)^{\otimes N_1}\Psi_1\right)\in\cG_{1,\tau},\qquad \Phi_{2,\tau}:=T_LV\left((\chi_L)^{\otimes N_2}\Psi_2\right)\in\cG_{2,\tau}$$
the rotated and translated functions. Next we pick the following test function
\begin{equation}\label{def:phitau}
\boxed{\Phi_{\tau}:=  \Phi_{1,\tau}\otimes\Phi_{2,\tau} - \chi_{\tau} \Pi_{12,\tau}^\bot R_\tau \Pi_{12,\tau}^\bot I_{\tau}  \Phi_{1,\tau}\otimes\Phi_{2,\tau}.}
\end{equation}
Here 
\begin{equation}
R_\tau=\Big(H_{1,\tau}\otimes \1+\1\otimes H_{2,\tau}-E_1-E_2\Big)^{-1}_{|(\cG_{1,\tau}\otimes\cG_{2,\tau})^\perp}
\end{equation} 
is the inverse of the operator $H_{1,\tau}\otimes \1+\1\otimes H_{2,\tau}-E_1-E_2$ restricted to the orthogonal complement of $\cG_{1,\tau}\otimes\cG_{2,\tau}$, and $\Pi_{12,\tau}$ is the projection onto $\cG_{1,\tau}\otimes\cG_{2,\tau}$ in $\bigwedge_1^{N_1}L^2\otimes \bigwedge_1^{N_2}L^2$. We have also introduced 
\begin{align}
I_\tau:=&\sum_{j=1}^{|Z_1|}\sum_{k=|Z_1|+1}^N\frac{1}{|x_j-x_k|} +\sum_{\ell=1}^{M_1}\sum_{m=M_1+1}^{M_1+M_2}\frac{z_\ell z_m}{|Uy_\ell-Vy_m-Le_1|}\nn\\
&-\sum_{j=1}^{|Z_1|}\sum_{m=M_1+1}^{M_1+M_2}\frac{z_m}{|x_j-Vy_m-Le_1|}-\sum_{k=|Z_1|+1}^N\sum_{\ell=1}^{M_1}\frac{z_\ell}{|Uy_\ell-x_k|}
\label{eq:def_I_tau}
\end{align}
which is the interaction between the two molecules, and the cut-off function
$$\chi_\tau:=(\chi_{4L/3})^{\otimes N_1}\otimes (\chi_{4L/3}(\cdot-Le_1))^{\otimes N_2}$$
(the rotations $U$ and $V$ are not necessary since $\chi_1$ is radial). Note that we use the length $4L/3$ instead of $L$. This choice is made to ensure that $\chi_\tau$ is equal to 1 on the support of $\Phi_{1,\tau}\otimes\Phi_{2,\tau}$,
\begin{equation}
 \chi_\tau\Phi_{1,\tau}\otimes\Phi_{2,\tau}=\Phi_{1,\tau}\otimes\Phi_{2,\tau},
 \label{eq:choice_cut_off}
\end{equation}
and at the same time ensure that the two localized regions stay at a distance $L/3$ on the support of $\chi_\tau$. From this we conclude that the two functions in~\eqref{def:phitau} are orthogonal to each other, 
\begin{equation}
 \pscal{\Phi_{1,\tau}\otimes\Phi_{2,\tau},\chi_{\tau} \Pi_{12,\tau}^\bot R_\tau \Pi_{12,\tau}^\bot I_{\tau}  \Phi_{1,\tau}\otimes\Phi_{2,\tau}}=0
 \label{eq:orthogonality_test_fn_1}
\end{equation}
since 
$\Pi_{12,\tau}^\bot\chi_{\tau}\Phi_{1,\tau}\otimes\Phi_{2,\tau}=\Pi_{12,\tau}^\bot\Phi_{1,\tau}\otimes\Phi_{2,\tau}=0.$

A test function defined similarly as in \eqref{def:phitau} was used for spinless fermions in the nondegenerate case in \cite{Anapolitanos-16} to provide (in the case of atoms) an upper bound for the interaction energy which is sharp to leading order. It is an approximation of the exact ground state of the system as given by the Feshbach map, if we replace the full resolvent of the system with the non-interacting resolvent. See the brief sketch of the proof of Theorem 1.5 in the introduction of \cite{Anapolitanos-16}.

The test function $\Phi_\tau$ in~\eqref{def:phitau} is not anti-symmetric. However, its antisymmetrization is a sum of functions with disjoint support having the same norm, hence anti-symmetrizing it will not change anything. More precisely, we have
%$$\norm{\cQ\Phi_\tau}_{L^2}=\norm{\Phi_\tau}_{L^2}$$
$$ \frac{1}{\norm{\cQ\Phi_\tau}_{L^2}^2}\pscal{\cQ\Phi_\tau,H_N(Y(\tau),Z)\cQ\Phi_\tau}= \frac{1}{\norm{\Phi_\tau}_{L^2}^2} \pscal{\Phi_\tau,H_N(Y(\tau),Z)\Phi_\tau},$$
where $\cQ$ is the projector onto the fermionic subspace,
\begin{equation}\label{def:Q}
 \cQ=\frac{1}{N!}\sum_{\pi \in \gS_{N}}(-1)^\pi \pi,
\end{equation}
with the action of the permutation $\pi$ on wavefunctions as in~\eqref{eq:fermions}. The Hamiltonian $H_N(Y(\tau),Z)$ is here extended to $\bigwedge_1^{N_1}L^2\otimes \bigwedge_1^{N_2}L^2$ in the obvious manner.
From this we get an upper bound on the fermionic energy, as follows:
\begin{equation}\label{testfunction}
\cE(\tau)-E_{1}-E_2 \leq  \frac{1}{\| \Phi_\tau\|^2} \pscal{ \Phi_\tau, \Big(H_N(Y(\tau),Z)-E_1-E_2\Big)\Phi_\tau}.
\end{equation}
It therefore remains to compute the right side of~\eqref{testfunction}, which can be done similarly as in the proof of \cite[Theorem 1.5]{Anapolitanos-16}. A main difference is the presence of multipole moments, due to which the term $ \langle \Phi_{\tau}, I_\tau \Phi_{\tau} \rangle $ is no longer exponentially decaying, but gives the multipole-multipole expansion as stated in Lemma \ref{lem:multipolar_expansion}.

First we compute the $L^2$ norm of $\Phi_\tau$ using the orthogonality~\eqref{eq:orthogonality_test_fn_1} as follows:
\begin{equation}
\norm{\Phi_\tau}^2= \norm{\Phi_{1,\tau}}^2\norm{\Phi_{2,\tau}}^2 +\norm{\chi_{\tau} \Pi_{12,\tau}^\bot R_\tau \Pi_{12,\tau}^\bot I_{\tau}  \Phi_{1,\tau}\otimes\Phi_{2,\tau}}^2.
\label{eq:comput_L2_norm}
\end{equation}
The exponential decay implies that 
$$\norm{\Phi_{1,\tau}}^2=1+O(e^{-cL}),\qquad \norm{\Phi_{2,\tau}}^2=1+O(e^{-cL}).$$
Even if we will not use it later, let us explain in detail how to deal with the second term in~\eqref{eq:comput_L2_norm}, which turns out to be of order $L^{-6}$. We change variables and obtain
\begin{multline}
\norm{\chi_{\tau} \Pi_{12,\tau}^\bot R_\tau \Pi_{12,\tau}^\bot I_{\tau}  \Phi_{1,\tau}\otimes\Phi_{2,\tau}}^2\\
=\norm{(\chi_{4L/3})^{\otimes N} \Pi_{12,L}^\bot \big(H_1+H_2-E_1-E_2\big)^{-1}_{|(\cG_{1,L}\otimes\cG_{2,L})^\perp}\Pi_{12,L}^\bot \widetilde{I_{\tau}}  \Psi_{1,L}\otimes\Psi_{2,L}}^2
\label{eq:comput_L2_norm_2nd_term}
\end{multline}
where now $\Pi_{12,L}=\Pi_{1,L}\otimes\Pi_{2,L}$ is the projection onto $\cG_{1,L}\otimes \cG_{2,L}$ and
\begin{align}
&\widetilde{I_{\tau}}=\sum_{j=1}^{|Z_1|}\sum_{k=|Z_1|+1}^N\frac{1}{|Ux_j-Vx_k-Le_1|} +\sum_{\ell=1}^{M_1}\sum_{m=M_1+1}^{M_1+M_2}\frac{z_\ell z_m}{|Uy_\ell-Vy_m-Le_1|}\nn\\
&-\sum_{j=1}^{|Z_1|}\sum_{m=M_1+1}^{M_1+M_2}\frac{z_m}{|Ux_j-Vy_m-Le_1|}-\sum_{k=|Z_1|+1}^N\sum_{\ell=1}^{M_1}\frac{z_\ell}{|Uy_\ell-Vx_k-Le_1|}.
\label{eq:def_tilde_I_tau}
\end{align}
Only $\widetilde{I_\tau}$ contains the rotations $U,V\in SO(3)$. On the support of $\Psi_{1,L}\otimes\Psi_{2,L}$, $\widetilde{I_{\tau}}$ is indeed a smooth function of $U,V\in SO(3)^2$ and all our next arguments apply in the same way to derivatives with respect to $U,V$ of the operator estimated in~\eqref{eq:comput_L2_norm_2nd_term}. 

Since by construction $(\chi_{4L/3})^{\otimes N}=1$ on the supports of the functions in $\Ran \Pi_{12,L}$, we have
$$\Pi_{12,L}^\bot(\chi_{4L/3})^{\otimes N} \Pi_{12,L}^\bot=(\chi_{4L/3})^{\otimes N}\Pi_{12,L}^\bot=\Pi_{12,L}^\bot(\chi_{4L/3})^{\otimes N}$$
hence
\begin{multline*}
\Pi_{12,L}^\bot(H_1+H_2-E_1-E_2)\Pi_{12,L}^\bot(\chi_{4L/3})^{\otimes N} \Pi_{12,L}^\bot\\
=(\chi_{4L/3})^{\otimes N}\Pi_{12,L}^\bot(H_1+H_2-E_1-E_2)\Pi_{12,L}^\bot -\Pi_{12,L}^\bot \big[\Delta,(\chi_{4L/3})^{\otimes N}\big] \Pi_{12,L}^\bot.
\end{multline*}
From this we conclude that 
\begin{multline}
(\chi_{4L/3})^{\otimes N} R_{12,L} \widetilde{I_{\tau}}  \Psi_{1,L}\otimes\Psi_{2,L}\\
=R_{12,L} \widetilde{I_{\tau}}  \Psi_{1,L}\otimes\Psi_{2,L}-R_{12,L}\big[\Delta,(\chi_{4L/3})^{\otimes N}\big] R_{12,L}\widetilde{I_{\tau}}  \Psi_{1,L}\otimes\Psi_{2,L}
\label{eq:simplification_cut_off}
\end{multline}
and that 
\begin{multline}
\Pi_{12,L}^\bot (H_1+H_2-E_1-E_2)(\chi_{4L/3})^{\otimes N} R_{12,L} \widetilde{I_{\tau}}  \Psi_{1,L}\otimes\Psi_{2,L}\\
=\Pi_{12,L}^\perp \widetilde{I_{\tau}}  \Psi_{1,L}\otimes\Psi_{2,L}-\Pi_{12,L}^\perp\big[\Delta,(\chi_{4L/3})^{\otimes N}\big] R_{12,L}\widetilde{I_{\tau}}  \Psi_{1,L}\otimes\Psi_{2,L},
\label{eq:simplification_cut_off2}
\end{multline}
where we have used the simplified notation
\begin{equation}\label{eq:R12L}
R_{12,L}:=\Pi_{12,L}^\bot \big(H_1+H_2-E_1-E_2\big)^{-1}_{|(\cG_{1,L}\otimes\cG_{2,L})^\perp}\Pi_{12,L}^\bot.
\end{equation}
From Lemma~\ref{lem:multipexp} (see also \cite{AnaSig-17} and \cite{Anapolitanos-16}) we have
$$\widetilde{I_{\tau}}=\frac{f_{(U,V)}}{L^3}+O_{C^2(SO(3)^2)}\left(\frac{1}{L^4}\right)$$ 
uniformly on the support of $\Psi_{1,L}\otimes\Psi_{2,L}$, where we recall that $f_{(U,V)}$ is defined in~\eqref{eq:def_f_U_V}. Using the exponential decay of $\Psi_1$ and $\Psi_2$, i.e. \eqref{eq:expdecay}, we obtain that
\begin{equation}\label{eq:Ifuv}
\norm{\widetilde{I_{\tau}}\Psi_{1,L}\otimes\Psi_{2,L}-\frac{f_{(U,V)}}{L^3}\Psi_{1}\otimes\Psi_{2} }_{C^2(SO(3)^2)}=O\left(\frac{1}{L^4}\right).
\end{equation}
From~\eqref{eq:estim_difference_resolvents} we have 
\begin{equation}
\Big\| R_{12,L}- \Pi_{12}^\bot \big(H_1+H_2-E_1-E_2\big)^{-1}_{|(\cG_{1}\otimes\cG_{2})^\perp}\Pi_{12}^\bot\Big\|_{C^2(SO(3)^2)}=O(e^{-cL})
\label{eq:estim_resolvents}
\end{equation}
with $\Pi_{12}=\Pi_1\otimes\Pi_2$ the orthogonal projection onto $\cG_1\otimes\cG_2$. Using $$[\Delta,f]=(\Delta f)+2(\nabla f)\cdot\nabla$$ for the commutator involving $(\chi_{4L/3})^{\otimes N}$
together with 
\eqref{eq:simplification_cut_off}, \eqref{eq:Ifuv}, \eqref{eq:estim_resolvents} and  \eqref{chiader} we find that
\begin{multline}
\bigg\|(\chi_{4L/3})^{\otimes N} R_{12,L} \widetilde{I_{\tau}}  \Psi_{1,L}\otimes\Psi_{2,L}\\
-\frac1{L^{3}}\Pi_{12}^\bot \big(H_1+H_2-E_1-E_2\big)^{-1}_{|(\cG_{1}\otimes\cG_{2})^\perp}\Pi_{12}^\bot f_{(U,V)}  \Psi_{1}\otimes\Psi_{2}\bigg\|_{C^2(SO(3))}\!\!=O\left(\frac{1}{L^4}\right).
\end{multline}
This allows us to compute the norm of $\Phi_\tau$ with the help of \eqref{def:phitau} and \eqref{eq:orthogonality_test_fn_1} to leading order:
\begin{multline}
\norm{\Phi_\tau}^2=1
+\frac1{L^6}\norm{\Pi_{12}^\bot \big(H_1+H_2-E_1-E_2\big)^{-1}_{|(\cG_{1}\otimes\cG_{2})^\perp}\Pi_{12}^\bot f_{(U,V)}  \Psi_{1}\otimes\Psi_{2}}^2\\
+O_{C^2(SO(3))}\left(\frac{1}{L^7}\right).
\label{eq:final_estim_norm}
\end{multline}
It turns out that the precise form of the term of order $L^{-6}$ does not matter for our proof.  However, with very similar arguments, we find for the energy that 
\begin{multline}
\pscal{ \Phi_\tau, \Big(H_N(Y(\tau),Z)-E_1-E_2\Big)\Phi_\tau}=\pscal{\Psi_{1,L}\otimes\Psi_{2,L},\tilde{I_\tau} \Psi_{1,L}\otimes\Psi_{2,L}}\\
-\frac{C_{\rm vdW}(\Psi_1,\Psi_2,U,V)}{L^6}
+O_{C^2(SO(3))}\left(\frac{1}{L^7}\right),
\label{eq:energy_expansion}
\end{multline}
where $C_{\rm vdW}(\Psi_1,\Psi_2,U,V)$ is the van der Waals corellation term defined in \eqref{eq:def_C_vdW_Psi_1_2}. A corresponding detailed calculation in the case of two atoms was done in \cite{AnaLewRot-19}. We further observe that from \eqref{eq:def_tilde_I_tau} and \eqref{eq:def_rho_1_Psi}-\eqref{eq:def_rho_2_Psi} for $U=V=I$ we have that
\begin{equation}\label{eq:psiIpsirho}
\pscal{\Psi_{1,L}\otimes\Psi_{2,L},\tilde{I_\tau} \Psi_{1,L}\otimes\Psi_{2,L}} = \int \frac{\rho_{\Psi_{1,L}}(x) \rho_{\Psi_{2,L}}(y)}{|Ux-Vy-Le_1|} dx dy.
\end{equation}
 Using now \eqref{eq:psiIpsirho}, Lemma~\ref{lem:multipolar_expansion} and the fact that 
$$\int|\Psi_{k,L}|^2=1+O_{C^2(SO(3))}(e^{-cL})$$
we can obtain
\begin{multline}\label{eq:psiIpsiexp}
\pscal{\Psi_{1,L}\otimes\Psi_{2,L},\tilde{I_\tau} \Psi_{1,L}\otimes\Psi_{2,L}}=\sum_{2\leq n+m\leq 5}\frac{\cF^{(n,m)}(\Psi_1,\Psi_2,U,V)}{L^{n+m+1}}\\+O_{C^2(SO(3))}\left(\frac{1}{L^7}\right).
\end{multline}
Thus, since the left hand side of \eqref{eq:psiIpsiexp}  is at most of the order of $O_{C^2(SO(3))}(L^{-3})$, dividing \eqref{eq:energy_expansion} by the norm $\|\Phi_\tau\|^2=1+O_{C^2(SO(3))}(L^{-6})$ will only change the remainder of our expansion. This concludes the proof of the upper bound~\eqref{eq:upper}. 

We remark that from \eqref{eq:psiIpsiexp} and \eqref{eq:energy_expansion} we obtain that
\begin{multline}\label{eq:testsmoothness}
\bigg\| \frac{1}{\| \Phi_\tau\|^2} \pscal{ \Phi_\tau, \Big(H_N(Y(\tau),Z)-E_1-E_2\Big)\Phi_\tau}
-\!\!\!\sum_{2\leq n+m\leq 5}\frac{\cF^{(n,m)}(\Psi_1,\Psi_2,U,V)}{L^{n+m+1}}
\\ +\frac{C_{\rm vdW}(\Psi_1,\Psi_2,U,V)}{L^6}\bigg\|_{C^2(SO(3)^2)}\!\!=O\left(\frac{1}{L^7}\right),
\end{multline}
which is going to be very useful in Section \ref{sec:Proofevergyexpsmooth}.
Note that $C_{\rm vdW}(\Psi_1,\Psi_2,U,V)$ is smooth in $U,V$ as it is a polynomial in $U,V$. 

\subsubsection{Proof that $C_{\rm vdW}(\Psi_1,\Psi_2,U,V)>0$}
It is clear from the definition \eqref{eq:def_C_vdW_Psi_1_2} that $C_{\rm vdW}(\Psi_1,\Psi_2,U,V)\geq0$. Furthermore, $C_{\rm vdW}(\Psi_1,\Psi_2,U,V)=0$ if and only if $f_{(U,V)}\Psi_1\otimes\Psi_2$ is a ground state of $H_1\otimes \1+\1\otimes H_2$. This does not happen by the following

\begin{proposition}[Positivity of the van der Waals correlation function]\label{prop:positivity_vdW}
Suppose that $\Psi_1, \Psi_2$ are ground states of $H_1$ and $H_2$, respectively. Then $f_{(U,V)} \Psi_1 \otimes \Psi_2$ cannot be a ground state of $H_1\otimes \1+\1\otimes H_2$. In other words, $\Pi_{12}^\perp f_{(U,V)} \Psi_1 \otimes \Psi_2\neq0$ hence $C_{\rm vdW}(\Psi_1,\Psi_2,U,V)>0$.
\end{proposition}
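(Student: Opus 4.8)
The plan is to argue by contradiction. Suppose $\Pi_{12}^\perp f_{(U,V)}\Psi_1\otimes\Psi_2=0$, i.e.\ $f_{(U,V)}\Psi_1\otimes\Psi_2$ lies in the ground state eigenspace $\cG_1\otimes\cG_2$ of $H_1\otimes\1+\1\otimes H_2$; I will deduce that $\Psi_1$ is annihilated by the total electronic momentum of the first molecule, hence $\Psi_1=0$, contradicting that it is a ground state. The first, purely algebraic, step is to rewrite~\eqref{eq:def_f_U_V} as $f_{(U,V)}=D_1\cdot c\,D_2=\sum_{\alpha=1}^3 (D_1)_\alpha\, B_\alpha$, with $c:=U^{T}\big(\1-3\,|e_1\rangle\langle e_1|\big)V$ and $B_\alpha:=(c\,D_2)_\alpha$, so that $f_{(U,V)}$ is a sum of three products of a degree-one multiplication operator $(D_1)_\alpha$ acting only on the first molecule by a degree-one multiplication operator $B_\alpha$ acting only on the second. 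The crucial structural observation is that $c$ is invertible, because $\det\big(\1-3\,|e_1\rangle\langle e_1|\big)=\det\mathrm{diag}(-2,1,1)=-2\neq0$.

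The heart of the argument is the following reduction. I would first check that $(D_2)_1\Psi_2,(D_2)_2\Psi_2,(D_2)_3\Psi_2$ are linearly independent in $L^2$: any relation $\sum_\beta\lambda_\beta(D_2)_\beta\Psi_2=0$ with $\lambda\neq0$ would force $\Psi_2$ to be supported on the zero set of the non-constant affine function $x\mapsto\sum_k\lambda\cdot x_k-\lambda\cdot\sum_m z_my_m$, which is a Lebesgue-null hyperplane, hence $\Psi_2=0$. Since $c$ is invertible, $\{B_1\Psi_2,B_2\Psi_2,B_3\Psi_2\}$ is linearly independent as well. Applying $\Pi_1^\perp\otimes\1$ to the assumed identity $\sum_\alpha\big((D_1)_\alpha\Psi_1\big)\otimes\big(B_\alpha\Psi_2\big)=f_{(U,V)}\Psi_1\otimes\Psi_2\in\cG_1\otimes\cG_2$ reduces the right-hand side to zero (it lies in $\cG_1\otimes\cG_2\subset\cG_1\otimes(\text{2nd factor})$, orthogonal to the range of $\Pi_1^\perp\otimes\1$) and leaves $\sum_\alpha\big(\Pi_1^\perp(D_1)_\alpha\Psi_1\big)\otimes\big(B_\alpha\Psi_2\big)=0$; linear independence of the $B_\alpha\Psi_2$ then yields $\Pi_1^\perp(D_1)_\alpha\Psi_1=0$, that is $(D_1)_\alpha\Psi_1\in\cG_1=\ker(H_1-E_1)$ for $\alpha=1,2,3$.

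To finish, since $(D_1)_\alpha\Psi_1$ and $\Psi_1$ both lie in $\ker(H_1-E_1)$ we have $[H_1,(D_1)_\alpha]\Psi_1=(H_1-E_1)(D_1)_\alpha\Psi_1-(D_1)_\alpha(H_1-E_1)\Psi_1=0$. As $(D_1)_\alpha$ is a multiplication operator it commutes with all Coulomb terms of $H_1$, and the elementary identity $[-\Delta_{x_j},(x_j)_\alpha]=-2\,\partial_{(x_j)_\alpha}$ gives $[H_1,(D_1)_\alpha]=-2\sum_{j\leq|Z_1|}\partial_{(x_j)_\alpha}$. Hence $\sum_{j\leq|Z_1|}\partial_{(x_j)_\alpha}\Psi_1=0$ for $\alpha=1,2,3$; equivalently, $\widehat\Psi_1$ is supported on the null subspace $\{\sum_j\xi_j=0\}$, so $\Psi_1=0$, the desired contradiction. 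Therefore $\Pi_{12}^\perp f_{(U,V)}\Psi_1\otimes\Psi_2\neq0$ and $C_{\rm vdW}(\Psi_1,\Psi_2,U,V)>0$.

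I do not expect a serious obstacle here; the conceptual content is entirely in the reduction of the second paragraph, and the rest is mechanical. The only points that need a little care are the verification that $(D_1)_\alpha\Psi_1$ still belongs to the domain $H^2$ of $H_1$---which follows from the exponential decay of $\Psi_1$ together with its derivatives, see~\eqref{eq:expdecay}, and legitimizes the commutator manipulations on $\Psi_1$ rather than on compactly supported functions---and the completely routine computation of $[H_1,(D_1)_\alpha]$.
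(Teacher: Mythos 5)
Your proof is correct, and it follows the same basic mechanism as the paper — namely, that if $f_{(U,V)}\Psi_1\otimes\Psi_2$ lay in $\cG_1\otimes\cG_2$, some non-constant affine multiple of a ground state would itself be a ground state, which is ruled out because it would force translation invariance and contradict $L^2$-decay — but the execution of the reduction is genuinely different and, I think, cleaner. The paper expands $f_{(U,V)}\Psi_1\otimes\Psi_2=\sum_{n,m}a_{nm}\Psi_{1,n}\otimes\Psi_{2,m}$ over all individual coordinates $x_i^{(j)}$, applies $\Pi_1^\perp\otimes\Pi_2^\perp$ and then invokes a dimension-count fact about tensor products (if $\sum a_{nm}u_n\otimes v_m=0$ nontrivially, one of the two families must be linearly dependent) to produce a single linear dependence $\sum_n b_n\Psi_{1,n}^\perp=0$, on which it applies its isolated Lemma~\ref{lem:affine}. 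You instead exploit the bilinear structure $f_{(U,V)}=D_1\cdot cD_2$ with $c=U^T(\1-3|e_1\rangle\langle e_1|)V$ invertible, verify that $\{(D_2)_\beta\Psi_2\}$ and hence $\{B_\alpha\Psi_2\}$ are linearly independent, and project only in the first factor ($\Pi_1^\perp\otimes\1$), which kills $\cG_1\otimes\cG_2$ just as well and immediately yields $(D_1)_\alpha\Psi_1\in\cG_1$ for all three $\alpha$. This buys you two things: you avoid the slightly hand-wavy dimension-count step (and the need to argue that a particular linear combination of coordinates is non-constant), and you deliver a very concrete output — three commutator identities $\sum_j\partial_{(x_j)_\alpha}\Psi_1=0$ — making the contradiction with $\Psi_1\in L^2$ completely explicit rather than quoted. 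The spin variables are mere spectators throughout, and your remark that $(D_1)_\alpha\Psi_1\in H^2$ (by the exponential decay \eqref{eq:expdecay}) is exactly the right domain point to flag.
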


We first state and prove the following lemma
\begin{lemma}\label{lem:affine}
Let $\Psi$ be a ground state of $H_k$ and $L$ a nonconstant affine function of $x_1,\dots ,x_N$. Then $L \Psi$ cannot be a ground state of $H_k$.
\end{lemma}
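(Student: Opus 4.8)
The plan is to argue by contradiction using the fact that $H_k$ is a Schrödinger operator with a Coulomb potential, so that its ground state $\Psi$ is strictly positive (after a phase choice) and cannot vanish on a set of positive measure. Suppose that $L\Psi$ were also a ground state of $H_k$, where $L=a+b\cdot(x_1,\dots,x_N)$ is a nonconstant affine function, i.e. $b\neq 0$. Since the bottom eigenvalue $E_k$ lies below the essential spectrum (by the HVZ theorem, as recalled in the excerpt), the only obstruction to $L\Psi$ being an admissible competitor is integrability: I would first check that $L\Psi\in H^2$, which follows from the exponential decay~\eqref{eq:expdecay} of $\Psi$ and of its derivatives, together with the fact that $L$ is polynomially bounded. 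So $L\Psi$ is a genuine element of the form domain.

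Next I would exploit the eigenvalue equation. Writing $H_k=-\Delta+V$ where $\Delta=\sum_j\Delta_{x_j}$ and $V$ is the (multiplication) potential, and using $H_k\Psi=E_k\Psi$, a direct computation gives
\begin{equation}
H_k(L\Psi)=L H_k\Psi-[\Delta,L]\Psi=E_k L\Psi-2(\nabla L)\cdot\nabla\Psi,
\label{eq:commutator_affine}
\end{equation}
because $\Delta L=0$ for an affine function and $[\Delta,L]=(\Delta L)+2(\nabla L)\cdot\nabla=2(\nabla L)\cdot\nabla$. Here $\nabla L=b$ is a constant vector. If $L\Psi$ is also an eigenfunction with eigenvalue $E_k$ (it must be the same eigenvalue, since $L\Psi$ is a ground state and $E_k$ is the ground state energy), then~\eqref{eq:commutator_affine} forces
$$b\cdot\nabla\Psi=\sum_{j=1}^N b_j\cdot\nabla_{x_j}\Psi=0$$
almost everywhere. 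In other words, $\Psi$ is invariant under the one-parameter group of translations $x\mapsto x+t\,b$ in $\R^{3N}$. But a nonzero $L^2(\R^{3N})$ function cannot be invariant under a nontrivial translation, since $\|\Psi(\cdot+tb)\|_{L^2}=\|\Psi\|_{L^2}$ while $\Psi(\cdot+tb)\to 0$ weakly as $t\to\infty$; hence $\Psi=0$, a contradiction. (Alternatively, one can take the Fourier transform: $b\cdot\nabla\Psi=0$ means $(b\cdot\xi)\hat\Psi(\xi)=0$, so $\hat\Psi$ is supported on the hyperplane $b\cdot\xi=0$, which has measure zero, whence $\Psi=0$.)

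The only subtlety — and the one point that needs a little care rather than being purely routine — is justifying that $L\Psi$, if it is an eigenfunction at all, must have eigenvalue exactly $E_k$. This is where I would invoke that $L\Psi$ is assumed to be a \emph{ground state}: a ground state by definition achieves the infimum $E_k$ of the spectrum, so its eigenvalue is $E_k$, and then~\eqref{eq:commutator_affine} applies directly. I do not expect any genuine obstacle here; the exponential decay estimates~\eqref{eq:expdecay} supply all the integrability needed to make the commutator computation rigorous (one can localize with a smooth cutoff and pass to the limit if one wants to avoid manipulating unbounded multipliers directly). The heart of the argument is simply the identity~\eqref{eq:commutator_affine} combined with the impossibility of translation invariance for an $L^2$ function.
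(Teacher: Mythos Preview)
Your proof is correct and follows essentially the same route as the paper: compute $H_k(L\Psi)=E_k L\Psi-2b\cdot\nabla\Psi$ via the Leibniz rule (using $\Delta L=0$), deduce $b\cdot\nabla\Psi=0$, and conclude a contradiction with $\Psi\in L^2$ from translation invariance. One small remark: your opening ``plan'' invokes strict positivity of the ground state, which is not available here (the operators act on antisymmetric spinor-valued functions, so ground states need not be positive), but you never actually use that claim, so it does not affect the argument.
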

\begin{proof}[Proof of Lemma~\ref{lem:affine}]
If $(H_j-E_j) L\Psi=0$, then since $\Delta L=0$ and $\nabla L$ is a constant vector $\vec{b}$, from the Leibniz rule we would find that
$\vec{b} \cdot \nabla \Psi=0$. This would imply a translation invariance in the direction of the vector $\vec{b}$ which contradicts the exponential decay of $\Psi$ (in fact it already contradicts the fact that $\Psi \in L^2$).
\end{proof}

	We are now able to write the 
	
	\begin{proof}[Proof of Proposition~\ref{prop:positivity_vdW}] The function $f_{U,V} \Psi_1 \otimes \Psi_2$ can be written in the form
$$f_{U,V} \Psi_1 \otimes \Psi_2=\sum_{n,m\geq0} a_{nm} \Psi_{1,n} \otimes \Psi_{2,m},$$ 
for  some coefficients $a_{nm}$. Here for $n>0$
	$\Psi_{k,n}$ has the form $x_i^{(j)} \Psi_k$ where $x_i^{(j)}$ denotes the $j$-th component of $x_i$. For $n=0$ we have  $\Psi_{k,0}:=\Psi_k$. Assuming that $f_{U,V} \Psi_1 \otimes \Psi_2$ is a ground state of $H_1+H_2$ we find that
	\begin{equation}
	\sum_{n,m\geq1} a_{nm} \Psi_{1,n}^\bot \otimes \Psi_{2,m}^\bot =0,
	\end{equation}
	where $\Psi_{k,n}^\bot=\Pi_k^\perp\Psi_{k,n}$ and recall that $\Pi_k$ is the orthogonal projection onto the ground state eigenspace $\cG_k$. Note that we have also used that $\Psi_{1,0}^\bot=\Psi_{2,0}^\bot=0$.
	 Since the $(\Psi_{1,n}^\bot \otimes \Psi_{2,m}^\bot)_{n,m\geq1}$ are linearly dependent, it follows that one of the two sets $(\Psi_{1,n}^\bot)_{n\geq1}$ or $(\Psi_{2,m}^\bot)_{m\geq1}$ has to be linearly dependent as well. This is because $\dim(E_1\otimes E_2)=\dim(E_1)\times \dim(E_2)$. But if for instance $\sum_{n\geq1} b_n \Psi_{1,n}^\bot=0$, then $\sum_{n\geq1} b_n \Psi_{1,n}$ is a ground state of $H_1$ and we obtain a contradiction by Lemma \ref{lem:affine}. 
\end{proof}

This concludes the proof of the upper bound~\eqref{eq:upper} and of the properties of the van der Waals correlation energy.

\subsection{Proof of Theorem \ref{thm:expansion_energy}: lower bound in the irreducible case}

In this section we provide the proof of the reverse inequality in~\eqref{eq:upper} in the irreducible case, under the assumption~\eqref{eq:main_assumption2}.

\subsubsection{The Feshbach-Schur method}
In the irreducible case we are going to follow the strategy employed in \cite{AnaSig-17}, which is based on the Feshbach-Schur method. Let us recall that if we have a self-adjoint operator $A$ with lowest eigenvalue $E$ and a projection $P$ such that $P^\perp (A-E)P^\perp\geq \eps P^\perp$ for some $\eps>0$, then we can rewrite the eigenvalue equation
$$(A-E)f=0$$
using the Schur complement formula in the form
\begin{equation}
\Big(A^{++}-A^{+-}(A^{--}-E)^{-1}A^{-+}-E\Big)f_+=0
\label{eq:F-S-eigenvalue_eq}
\end{equation}
with $A^{++}=PAP$, $A^{--}=P^\perp AP^\perp$, $A^{-+}=P^\perp AP=(A^{+-})^*$ and $f_+=P f$. The other component is given by 
$$f_-=P^\perp f=(A^{--}-E)^{-1}A^{-+}f_+.$$
More precisely, $E$ is the smallest real number for which 0 is an eigenvalue of the operator
$$F(e):=A^{++}-A^{+-}(A^{--}-e)^{-1}A^{-+}-e$$
for $e< E+\eps$. Note that $F(e)$ is monotone-decreasing with respect to $e$, hence all its eigenvalues are decreasing as well. 
Taking the scalar product with $f_+$ in~\eqref{eq:F-S-eigenvalue_eq} we find 
\begin{align}
E&=\frac{\pscal{f_+,A^{++}f_+}-\pscal{f_+,A^{+-}(A^{--}-E)^{-1}A^{-+}f_+}}{\|f_+\|^2}\nn\\
&\geq \frac{\pscal{f_+,A^{++}f_+}-\eps^{-1}\norm{A^{-+}f_+}^2}{\|f_+\|^2}.\label{eq:lower_abstract_F-S}
\end{align}
In other words,
\begin{equation}
E\geq \min\sigma\Big(A^{++}-\eps^{-1}A^{+-}A^{-+}\Big) \geq \min\sigma\big(A^{++}\big)-\eps^{-1}\|A^{-+}\|^2,
\label{eq:lower_abstract_F-S_spectrum}
\end{equation}
where the operator $A^{++}$ is considered on ${\rm Ran}(P)$ only. This equation will give us the sought-after lower bound, if we choose $P$ appropriately. 

Coming back to our problem, we recall that the spectrum of $H_N(Y(\tau),Z)$ is known to converge in the limit $L\to\ii$ to the union of the spectra of the operators 
$$H_{n}(Y_1,Z_1)\otimes \1+\1\otimes H_{N-n}(Y_2,Z_2),$$ 
with $n=0,...,N$, see, e.g.,~\cite{MorSim-80,AnaSig-17} and~\cite[Thm. 2 \& 3]{Lewin-04b}. Our assumption~\eqref{eq:main_assumption2} that 
$$E_{|Z_1|}(Y_1,Z_1)+E_{|Z_2|}(Y_2,Z_2)<\min_{n\neq |Z_1|}\Big(E_{n}(Y_1,Z_1)+E_{N-n}(Y_2,Z_2)\Big)$$
implies that the ground state energy $\cE(\tau)$ of $H_N(Y(\tau),Z)$ is obtained in the limit $L \to \infty$ in the neutral case $n=|Z_1|$ and that there is a gap above it. Then, using the IMS localization method, it has been proved in~\cite{MorSim-80,AnaSig-17} that the ground state eigenspace of $H_N(Y(\tau),Z)$ is close to 
$$\cQ(U\cdot \cG_{1}\otimes T_LV\cdot \cG_{2})$$ 
where we use here the notations of the previous section. We recall that the eigenspaces of $H_k$ are denoted by $\cG_k$, whereas $\cQ$ is the projection onto antisymmetric functions. The precise statement is that there exists $\eps>0$ (independent of $\tau$ for $L$ large enough) such that 
\begin{equation}
 P^\perp\Big(H_N(Y(\tau),Z)-\cE(\tau)\Big)P^\perp\geq \eps P^\perp
 \label{eq:F-S_original_space}
\end{equation}
where $P$ is the projection onto $\cQ(U\cdot \cG_{1}\otimes T_LV\cdot \cG_{2})$, so that the Feshbach-Schur method can be applied on this space. We can also replace the projection $P$ by the projection $\Pi_{12,L}$ onto $\cQ(U\cdot \cG_{1,L}\otimes T_LV\cdot \cG_{2,L})$, which is known to be exponentially close to $P$. This is the choice made in~\cite{AnaSig-17}. 

In our proof we are going to use a different projection, in order to catch the leading orders completely in the range of $P$. We look at the space spanned by our test functions~\eqref{def:phitau} when $\Psi_1$ and $\Psi_2$ span $\cG_1$ and $\cG_2$:
\begin{multline}
\Upsilon_\tau:={\rm span}\bigg\{\cQ\left(\Phi_{1,\tau}\otimes\Phi_{2,\tau} - \chi_{\tau} \Pi_{12,\tau}^\bot R_\tau \Pi_{12,\tau}^\bot I_{\tau}  \Phi_{1,\tau}\otimes\Phi_{2,\tau}\right)\ :\\ \Phi_{1,\tau}=U\left((\chi_L)^{\otimes N_1}\Psi_1\right)\in\cG_{1,\tau},\ \Phi_{2,\tau}=T_LV\left((\chi_L)^{\otimes N_2}\Psi_2\right)\in\cG_{2,\tau}\bigg\}.
\label{eq:space_used_for_F-S}
\end{multline}
As we have proved in the previous section, 
\begin{multline*}
\norm{U\cdot \Psi_{1}\otimes V T_L\Psi_{2}-\Phi_{1,\tau}\otimes\Phi_{2,\tau} + \chi_{\tau} \Pi_{12,\tau}^\bot R_\tau \Pi_{12,\tau}^\bot I_{\tau}  \Phi_{1,\tau}\otimes\Phi_{2,\tau}}_{H^2}\\
\leq C\frac{\|\Psi_1\|_{H^1}\|\Psi_2\|_{H^1}}{L^3} 
\end{multline*}
and this shows that our space $\Upsilon_\tau$ in~\eqref{eq:space_used_for_F-S} is close to $\cQ(U\cdot \cG_{1}\otimes T_LV\cdot \cG_{2})$ and to $\cQ(U\cdot \cG_{1,L}\otimes T_LV\cdot \cG_{2,L})$, in the Sobolev norm $H^2$. More precisely, denoting by $\Pi_\tau$ the orthogonal projection onto $\Upsilon_\tau$, we have
\begin{equation}
\norm{(1-\Delta)\left(\Pi_\tau-\Pi_{12,\tau}\right)(1-\Delta)}=O\left(\frac1{L^3}\right).
\label{eq:diff_projection_F-S}
\end{equation}
From~\eqref{eq:F-S_original_space} and~\eqref{eq:norm_diff_projections} this implies again
\begin{equation}
\Pi_\tau^\perp\Big(H_N(Y(\tau),Z)-\cE(\tau)\Big)\Pi_\tau^\perp\geq \eps \Pi_\tau^\perp
\label{eq:F-S_new_space}
\end{equation}
for some $\eps>0$. Hence we may as well use the projection $\Pi_\tau$ for the Schur complement method. 

\subsubsection{Estimate on the off-diagonal term $A^{-+}$}\label{sec:setimate_off_diagonal_term}
Here we prove that, in operator norm,
\begin{align}
\|A^{-+}\|&=\norm{\Pi_\tau^\perp H_N(Y(\tau),Z)\Pi_\tau}\\
&=\norm{\Pi_\tau^\perp \Big(H_N(Y(\tau),Z)-E_1-E_2\Big)\Pi_\tau}=O\left(\frac{1}{L^4}\right)
\label{eq:estim_off_diagonal_terms}
\end{align}
which will allow us to neglect the term $\|A^{-+}\|^2=O(L^{-8})$ in~\eqref{eq:lower_abstract_F-S_spectrum}. To this end we will prove that
\begin{multline}\label{eq:H1estusefull'}
\Big\| \cQ \Pi_\tau^\perp \big(H_N(Y(\tau),Z)-E_1-E_2\big)\times\\
\times\left(\Phi_{1,\tau}\otimes\Phi_{2,\tau} - \chi_{\tau} \Pi_{12,\tau}^\bot R_\tau \Pi_{12,\tau}^\bot I_{\tau}  \Phi_{1,\tau}\otimes\Phi_{2,\tau}\right)\Big\|_{H^1}=O\left(\frac{1}{L^4}\right).
\end{multline}
In fact, estimating the $L^2$ norm would already be enough, however the estimate in the $H^1$ norm is going to be useful later for the  proof of Proposition \ref{prop:fullexp}.
Due to the locality of the Hamiltonian $H_N(Y(\tau),Z)$ and using that the antisymmetrization $\cQ$ on any function of the form $\Phi_{1,\tau}\otimes\Phi_{2,\tau} - \chi_{\tau} \Pi_{12,\tau}^\bot R_\tau \Pi_{12,\tau}^\bot I_{\tau}  \Phi_{1,\tau}\otimes\Phi_{2,\tau}$ produces a sum of functions that have disjoint support from each other, we find that
 \eqref{eq:H1estusefull'} reduces to showing that
 \begin{multline}\label{eq:H1estusefull}
 \Big\|  \Pi_{0,\tau}^\perp \big(H_N(Y(\tau),Z)-E_1-E_2\big)\times\\
 \times \left(\Phi_{1,\tau}\otimes\Phi_{2,\tau} - \chi_{\tau} \Pi_{12,\tau}^\bot R_\tau \Pi_{12,\tau}^\bot I_{\tau}  \Phi_{1,\tau}\otimes\Phi_{2,\tau}\right)\Big\|_{H^1}=O\left(\frac{1}{L^4}\right),
 \end{multline}
 where $\Pi_{0,\tau}$ is the orthogonal projection onto the function $\Phi_{1,\tau}\otimes\Phi_{2,\tau} - \chi_{\tau} \Pi_{12,\tau}^\bot R_\tau \Pi_{12,\tau}^\bot I_{\tau}  \Phi_{1,\tau}\otimes\Phi_{2,\tau}$.
A simple change of variables shows, similarly as in \eqref{eq:comput_L2_norm_2nd_term},  that
\begin{multline}\label{eq:H1change}
\Big\| \Pi_{0,\tau}^\perp \big(H_N(Y(\tau),Z)-E_1-E_2\big)\times\\
\times\left(\Phi_{1,\tau}\otimes\Phi_{2,\tau} - \chi_{\tau} \Pi_{12,\tau}^\bot R_\tau \Pi_{12,\tau}^\bot I_{\tau}  \Phi_{1,\tau}\otimes\Phi_{2,\tau}\right)\Big\|_{H^1} \\=
	\Big\| \Pi_L^\perp \big(H_1 + H_2 +  \widetilde{I_\tau}-E_1-E_2\big)\times\\ \times\left(\Phi_{1,L}\otimes\Phi_{2,L} - \chi_{4L/3}^{\otimes N} R_{12,L} \widetilde{I_{\tau}}  \Phi_{1,L}\otimes\Phi_{2,L}\right)\Big\|_{H^1},
\end{multline}
where recall that $R_{12,L}$ was defined in \eqref{eq:R12L} and $\Pi_L$ is the orthogonal projection onto
$$\text{span}\{\Phi_{1,L}\otimes\Phi_{2,L} - \chi_{4L/3}^{\otimes N} R_{12,L} \widetilde{I_{\tau}}  \Phi_{1,L}\otimes\Phi_{2,L}, \Phi_{j,L} \in \cG_{j,L}, j=1,2\}.$$
Using~\eqref{eq:expdecay} we have $\|(H_1+H_2-E_1-E_2)\Phi_{1,L}\otimes\Phi_{2,L}\|_{H^1}=O(e^{-cL})$ for some $c>0$, and we find
\begin{align*}
& \Pi_L^\perp \big(H_1 + H_2 +  \widetilde{I_\tau}-E_1-E_2\big)\left(\Phi_{1,L}\otimes\Phi_{2,L} - \chi_{4L/3}^{\otimes N} R_{12,L} \widetilde{I_{\tau}} \Phi_{1,L}\otimes\Phi_{2,L}\right)\\
&=\Pi_L^\perp \widetilde{I_\tau} \Phi_{1,L}\otimes\Phi_{2,L} - \Pi_L^\perp \widetilde{I_\tau} \chi_{4L/3}^{\otimes N}  R_{12,L}  \widetilde{I_{\tau}}  \Phi_{1,L}\otimes\Phi_{2,L}\\
&\qquad -\Pi_L^\perp(H_{1}+H_{2}-E_1-E_2)\chi_{4L/3}^{\otimes N}  R_{12,L} \widetilde{I_\tau}  \Phi_{1,L}\otimes\Phi_{2,L} + O_{H^1}(e^{-cL}).
\end{align*}
Therefore, using
$$(H_{1}+H_{2}-E_1-E_2) \chi_{4L/3}^{\otimes N}= \chi_{4L/3}^{\otimes N} (H_{1}+H_{2}-E_1-E_2) -[ \Delta, \chi_{4L/3}^{\otimes N}],$$
it follows that
\begin{align} \nonumber
&\Pi_L^\perp \big(H_1 + H_2 +  \widetilde{I_\tau}-E_1-E_2\big)\left(\Phi_{1,L}\otimes\Phi_{2,L} - \chi_{4L/3}^{\otimes N} R_{12,L} \widetilde{I_{\tau}} \Phi_{1,L}\otimes\Phi_{2,L}\right)\\ \nonumber 
&= \Pi_L^\perp\Pi_{12,L} \widetilde{I_\tau} \Phi_{1,L}\otimes\Phi_{2,L}
- \Pi_L^\perp \widetilde{I_\tau} \chi_{4L/3}^{\otimes N}  R_{12,L}  \widetilde{I_\tau}  \Phi_{1,L}\otimes\Phi_{2,L}\\ \label{eq:H1change1}
&\qquad +\Pi_L^\perp[\Delta,\chi_{4L/3}^{\otimes N}] R_{12,L} \widetilde{I_\tau} \Phi_{1,L}\otimes\Phi_{2,L} + O_{H^1}(e^{-cL}),
\end{align}
where we have also used that $$(H_1 + H_2- E_1- E_2)R_{12,L}  \widetilde{I_\tau} \Phi_{1,L}\otimes\Phi_{2,L} = \Pi_{12,L}^\bot \widetilde{I_\tau} \Phi_{1,L}\otimes\Phi_{2,L}  + O_{H^1}(e^{-cL}).$$
Since $\chi_{4L/3}^{\otimes N}$ is supported $O(L)$ away from the singularities of $\widetilde{I_\tau}$, we see that 
$$\|\widetilde{I_\tau} \chi_{4L/3}^{\otimes N}\|_{H^1} =O\left(\frac1{L}\right).$$
Hence we obtain that
\begin{equation}\label{eq:H1change2}
\norm{\Pi_L^\perp \widetilde{I_\tau} \chi_{4L/3}^{\otimes N}  R_{12,L}  \widetilde{I_\tau}  \Phi_{1,L}\otimes\Phi_{2,L}}_{H^1}=O\left(\frac{\|\Psi_1\|\,\|\Psi_2\|}{L^4}\right),
\end{equation}
because, as seen in the previous section,
$$\norm{\widetilde{I_\tau}  \Phi_{1,L}\otimes\Phi_{2,L}}_{L^2}=O\left(\frac{\|\Psi_1\|\|\Psi_2\|}{L^3}\right).$$
Therefore, from the equality $[\Delta,\chi_{4L/3}^{\otimes N}]=(\Delta \chi_{4L/3}^{\otimes N})-2(\nabla \chi_{4L/3}^{\otimes N})\cdot\nabla$ and \eqref{chiader}, we can similarly show that
\begin{equation}\label{eq:H1change3}
\norm{\Pi_L^\perp[\Delta,\chi_{4L/3}^{\otimes N}] R_{12,L}  \widetilde{I_\tau} \Phi_{1,L}\otimes\Phi_{2,L}}_{H^1}=O\left(\frac{\|\Psi_1\|\,\|\Psi_2\|}{L^4}\right).
\end{equation}
Finally, using that $\Pi_L^\perp\Pi_{12,L}=\Pi_L^\perp(\Pi_{12,L}-\Pi_L)$ together with \eqref{eq:diff_projection_F-S},
we obtain that
\begin{equation}\label{eq:H1change4}
\norm{\Pi_L^\perp\Pi_{12,L} \widetilde{I_\tau} \Phi_{1,L}\otimes\Phi_{2,L}}_{H^1}=O\left(\frac{\|\Psi_1\|\,\|\Psi_2\|}{L^6}\right).
\end{equation}
From  \eqref{eq:H1change}, \eqref{eq:H1change1}, \eqref{eq:H1change2}, \eqref{eq:H1change3} and \eqref{eq:H1change4}
we obtain \eqref{eq:H1estusefull} and therefore \eqref{eq:H1estusefull'}.
Since
$$\norm{\Phi_{1,L}\otimes\Phi_{2,L} - \chi_{4L/3}^{\otimes N}  R_{12,L}  \widetilde{I_\tau} \Phi_{1,L}\otimes\Phi_{2,L}}=\|\Psi_1\|\|\Psi_2\|\left(1+O(L^{-3})\right),$$
 the bound \eqref{eq:H1estusefull'}  implies  ~\eqref{eq:estim_off_diagonal_terms}. Thus, the last term in the Schur complement estimate~\eqref{eq:lower_abstract_F-S_spectrum} does not contribute to the order we are interested in. 

\begin{remark}\label{rem:powerx}
The estimates \eqref{eq:H1change2}, \eqref{eq:H1change3} and \eqref{eq:H1change4} and therefore as well \eqref{eq:H1estusefull} remain true in the $L^2$ norm if we multiply with some power of $|(x_1,\dots,x_N)|$ as well. This is because due to \eqref{eq:expdecay} we can write for example
$$|x|^n\widetilde{I_\tau} \Phi_{1,L}\otimes\Phi_{2,L}= |x|^n e^{-\frac{c|x|}{2}} \widetilde{I_\tau} e^{\frac{c|x|}{2}}\Phi_{1,L}\otimes\Phi_{2,L}.$$ 
Then $|x|^n e^{-\frac{c|x|}{2}}$ is bounded and 
$$\widetilde{I_\tau} e^{\frac{c|x|}{2}}\Phi_{1,L}\otimes\Phi_{2,L}=O\left(\frac{1}{L^3}\right)$$ 
still because the function $e^{\frac{c|x|}{2}}\Phi_{1,L}\otimes\Phi_{2,L}$ is still exponentially decaying.  Even in terms where the resolvent is present there is no problem as the exponential $e^{-\frac{c|x|}{2}}$  can be pushed through with the help of boosted Hamiltonians: we can write, for example 
$$R_{12,L}  \widetilde{I_\tau} \Phi_{1,L}\otimes\Phi_{2,L}= e^{-\frac{c|x|}{2}} R_{12,L,c}  e^{\frac{c|x|}{2}}\widetilde{I_\tau} \Phi_{1,L}\otimes\Phi_{2,L},$$ 
where $R_{12,L,c}= e^{\frac{c|x|}{2}} R_{12,L} e^{\frac{-c|x|}{2}}$. Standard arguments (see for example \cite[Sec.~5.4]{Anapolitanos-16}) give that $R_{12,L,c}$ is bounded.  This remark is going to be useful for the proof of Proposition \ref{prop:fullexp}.
\end{remark}

\subsubsection{Irreducibility and computation of $A^{++}$}
Now we prove that the operator localized to the space $\Upsilon_\tau$
$$\Pi_\tau H_N(Y(\tau),Z)\Pi_\tau$$
is actually a multiple of the identity on $\Upsilon_\tau$, that is,
\begin{equation}
 \Pi_\tau H_N(Y(\tau),Z)\Pi_\tau=\frac{\pscal{\Phi_\tau,H_N(Y(\tau),Z)\Phi_\tau}}{\|\Phi_\tau\|^2}\Pi_\tau
 \label{eq:multiple_identity}
\end{equation}
where $\Phi_\tau$ is any chosen non-zero vector of $\Upsilon_\tau$. This is of course obvious when $\dim\cG_1=\dim\cG_2=1$ since then $\dim \Upsilon_\tau=1$ as well. In the general case this follows from the irreducibility condition.
Indeed, for any $\Psi_1\in\cG_1$ and $\Psi_2\in\cG_2$, we have for the corresponding $\cQ\Phi_\tau\in\Upsilon_\tau$
\begin{align*}
\frac{\pscal{\cQ\Phi_\tau,H_N(Y(\tau),Z)\cQ\Phi_\tau}}{\norm{\cQ\Phi_\tau}^2}&=\frac{\pscal{\Phi_\tau,(H_{1,\tau}+H_{2,\tau}+I_\tau)\Phi_\tau}}{\norm{\Phi_\tau}^2} \\
&=\frac{\pscal{\Phi_{1,\tau}\otimes\Phi_{2,\tau}, K_\tau \Phi_{1,\tau}\otimes\Phi_{2,\tau}}}{\pscal{\Phi_{1,\tau}\otimes\Phi_{2,\tau}, T_\tau \Phi_{1,\tau}\otimes\Phi_{2,\tau}}} 
\end{align*}
where
\begin{multline*}
K_\tau:=  \Pi_{12,\tau} \left(1- I_{\tau} \Pi_{12,\tau}^\bot R_\tau \Pi_{12,\tau}^\bot \chi_{\tau}\right)\times\\
\times(H_{1,\tau}+H_{2,\tau}+I_\tau)\left(1- \chi_{\tau} \Pi_{12,\tau}^\bot R_\tau \Pi_{12,\tau}^\bot I_{\tau}\right)  \Pi_{12,\tau}
\end{multline*}
and
\begin{equation*}
T_\tau:=  \Pi_{12,\tau} \left(1- \chi_{\tau}  \Pi_{12,\tau}^\bot R_\tau \Pi_{12,\tau}^\bot I_{\tau} \right)^2  \Pi_{12,\tau}
\end{equation*}
on $\cG_{1,\tau} \otimes\cG_{2,\tau}$. One important property of the operators $K_\tau$ and $T_\tau$ is that they commute with all permutations of the spins acting solely on the first and on the second molecule, separately,
$$\pi_1\otimes\pi_2 K_\tau=K_\tau\pi_1\otimes\pi_2 , \qquad \pi_1\otimes\pi_2 T_\tau=T_\tau\pi_1\otimes\pi_2. $$ 
From the irreducibility condition, this implies that $K_\tau, T_\tau$ are  multiples of the identity on $\cG_{1,\tau} \otimes\cG_{2,\tau}$. Indeed,  any eigenspace of e.g. $K_\tau$ in $\cG_{1,\tau} \otimes\cG_{2,\tau}$ is invariant under all the $\pi_1\otimes\pi_2$, so must be equal to the whole space $\cG_{1,\tau} \otimes\cG_{2,\tau}$. This concludes the proof of~\eqref{eq:multiple_identity}.

We can now conclude the argument, using~\eqref{eq:multiple_identity} and~\eqref{eq:estim_off_diagonal_terms}. In the previous section we have computed the energy of any such fixed vector $\Phi_\tau\neq0$ with $\|\Psi_1\|=\|\Psi_2\|=1$ and we found
\begin{multline*}
\frac{\pscal{\Phi_\tau,H_N(Y(\tau),Z)\Phi_\tau}}{\|\Phi_\tau\|^2}=E_1+E_2+\sum_{2\leq n+m\leq 5}\frac{\cF^{(n,m)}(\Psi_1,\Psi_2,U,V)}{L^{n+m+1}}\\
 -\frac{C_{\rm vdW}(\Psi_1,\Psi_2,U,V)}{L^6} +O\left(\frac1{L^7}\right).
\end{multline*}
Inserting then in the Schur complement lower bound~\eqref{eq:lower_abstract_F-S} and using \eqref{eq:estim_off_diagonal_terms}, we conclude, as we wanted, that
\begin{multline*}
\cE(\tau)\geq E_1+E_2+\sum_{2\leq n+m\leq 5}\frac{\cF^{(n,m)}(\Psi_1,\Psi_2,U,V)}{L^{n+m+1}}\\
 -\frac{C_{\rm vdW}(\Psi_1,\Psi_2,U,V)}{L^6} +O\left(\frac1{L^7}\right)
\end{multline*}
and thus we have equality.

\section{Proof of Proposition \ref{prop:fullexp}}\label{sec:Proofevergyexpsmooth}

By adapting the arguments in~\cite{Hunziker-86}, it might be possible to prove that our function $\cE(\tau)$ is smooth. Therefore, at any local (pseudo-)minimum $(U_0,V_0)$ with respect to variations of $U,V\in SO(3)$ only, we must have
$$\nabla_{(U,V)}\cE(L,U_0,V_0)=0,\qquad {\rm Hess}_{(U,V)}\cE(L,U_0,V_0)\geq0.$$
In order to deduce a similar property for the leading multipolar energy, we first need to show that the expansion $\cE(\tau)$  in powers of $L$ is also valid for the first two derivatives with respect to $U,V\in SO(3)$. This was done under some more stringent non-degeneracy assumptions in~\cite[Sec.~3.3.3]{Lewin-04b} for the first derivative. An alternative approach  is provided in \cite{AnaLewRot-19} in the case of two atoms, which works for the second derivative as well. Here we use  ideas of \cite{AnaLewRot-19}, but significant modifications are necessary, mainly because we have to take spin into account. This is harder since the ground state energies can be degenerate, even under the irreducibility assumption.

We provide here a simpler argument which only uses the smoothness of the Feshbach map at fixed energy and does not rely on  smoothness of $\cE(\tau)$. 
Let $L\gg1$ and assume that $(U_0,V_0)$ is a local pseudo-minimum of $(U,V)\mapsto \cE(L,U,V)$. Our goal is to show that 
$$\left|\nabla_{U,V}\left(\sum_{2\leq n+m\leq 5}\frac{\cF^{(n,m)}(\Psi_1,\Psi_2,U_0,V_0)}{L^{n+m+1}} -\frac{C_{\rm vdW}(\Psi_1,\Psi_2,U_0,V_0)}{L^6}\right)\right|\leq \frac{C}{L^7}$$
and
\begin{multline*}
{\rm Hess}_{U,V}\left(\sum_{2\leq n+m\leq 5}\frac{\cF^{(n,m)}(\Psi_1,\Psi_2,U_0,V_0)}{L^{n+m+1}} -\frac{C_{\rm vdW}(\Psi_1,\Psi_2,U_0,V_0)}{L^6}\right)\\
\geq -\frac{C}{L^7} 
\end{multline*}
for some $C$ independent of $U_0,V_0,L$. Multiplying by $L^{n_1+n_2+1}$ where $n_1$ and $n_2$ are the indices of the first non-vanishing multipoles, we will conclude, as desired, that 
$$\left|\nabla_{U,V}\cF^{(n_1,n_2)}(\Psi_1,\Psi_2,U_0,V_0)\right|\leq \frac{C}{L}$$
and
$${\rm Hess}_{U,V}\cF^{(n_1,n_2)}(\Psi_1,\Psi_2,U_0,V_0)\geq -\frac{C}{L}.$$

From the Feshbach-Schur method recalled in the beginning of the previous section, we know that $E=\cE(\tau)$ is the unique solution to the equation
\begin{equation}
 \min\sigma\left(\Pi_\tau H_\tau\Pi_\tau-\Pi_\tau H_\tau\Pi_\tau^\perp \big(H_\tau-E\big)^{-1}_{|(\Upsilon_\tau)^\perp}\Pi_\tau^\perp H_\tau\Pi_\tau\right)=E
 \label{eq:reminder_FS}
\end{equation}
in a neighborhood of $E_1+E_2$, where we use here the shorthand notation
$$H_\tau:=H_N(Y(\tau),Z).$$
We recall that $\Pi_\tau$ is the orthogonal projection onto the space $\Upsilon_\tau$ defined above in~\eqref{eq:space_used_for_F-S}.
Note that the left side of~\eqref{eq:reminder_FS} is decreasing in $E$ whereas the right side is increasing. 
Let now 
\begin{multline*}
\Phi_{(U_0,V_0)}=\cQ \sum_{m=1}^{\dim(\cG_1)}  \sum_{n=1}^{\dim(\cG_2)} a_{mn}\times\\ \times \left(\Phi_{1,\tau_0}^{(m)}\otimes\Phi_{2,\tau_0}^{(n)} - \chi_{\tau_0} \Pi_{12,\tau_0}^\bot R_{\tau_0} \Pi_{12,\tau_0}^\bot I_{\tau_0}  \Phi_{1,\tau_0}^{(m)} \otimes \Phi_{2,\tau_0}^{(n)}\right)\in\Upsilon_{\tau_0} 
\end{multline*}
be a normalized eigenfunction of the Feshbach map at the energy $E_0:=\cE(L,U_0,V_0)$. This time $\Phi_{i,\tau}^{(k)}, k=1,...,\dim(\cG_i)$ are exact orthonormal bases of the respective cut off ground state eigenspaces $\cG_{i,\tau}.$
We define 
\begin{equation}\label{def:PhiUV}
\Phi_{(U,V)}:=\cQ  \sum_{m=1}^{\dim(\cG_1)}  \sum_{n=1}^{\dim(\cG_2)} a_{mn} \Psi_{(U,V)}^{(m,n)} \in\Upsilon_{\tau} ,
\end{equation}
where
\begin{multline}\label{def:PsiUV}
\Psi_{(U,V)}^{(m,n)}= UU_0^*\Phi_{1,\tau_0}^{(m)}\otimes VV_0^*\Phi_{2,\tau_0}^{(n)}\\ - \chi_{\tau} \Pi_{12,\tau}^\bot R_\tau \Pi_{12,\tau}^\bot I_{\tau}  UU_0^*\Phi_{1,\tau_0}^{(m)}\otimes VV_0^*\Phi_{2,\tau_0}^{(n)}.
\end{multline}
The function $\Phi_{(U,V)}$ is the function $\Phi_{(U_0,V_0)}$
 appropriately rotated. Note that $\Phi_{(U,V)}$ is not necessarily  the first eigenfunction of the Feshbach map corresponding to $U,V$, but it can always be used as a trial function. 
We finally introduce the function
\begin{equation}
\boxed{\cF(L,U,V):=\pscal{\frac{\Phi_{(U,V)}}{\|\Phi_{(U,V)\|}}, \left(H_\tau- H_\tau\Pi_\tau^\perp \big(H_\tau-E_0\big)^{-1}_{|(\Upsilon_\tau)^\perp}\Pi_\tau^\perp H_\tau \right) \frac{\Phi_{(U,V)}}{\|\Phi_{(U,V)}\|}}.}
\label{eq:trick_F-S-upper-bound}
\end{equation}
In other words, we rotate the trial function, keep the energy $E$ fixed to its value $E_0=\cE(L,U_0,V_0)$, and apply the Feshbach-Schur map. If $U,V$ are such that 
$$\cE(L,U,V)\geq \cE(L,U_0,V_0)=E_0=\cF(L,U_0,V_0),$$ 
then we have
\begin{align*}
\cE(L,U,V)&\leq \frac{\pscal{\Phi_{(U,V)}, \left(H_\tau- H_\tau\Pi_\tau^\perp \big(H_\tau-\cE(L,U,V)\big)^{-1}_{|(\Upsilon_\tau)^\perp}\Pi_\tau^\perp H_\tau \right)\Phi_{(U,V)}}}{\|\Phi_{(U,V)}\|^2} \\
 &\leq \frac{\pscal{\Phi_{(U,V)}, \left(H_\tau- H_\tau\Pi_\tau^\perp \big(H_\tau-E_0\big)^{-1}_{|(\Upsilon_\tau)^\perp}\Pi_\tau^\perp H_\tau \right)\Phi_{(U,V)}}}{\|\Phi_{(U,V)}\|^2} \\
 &=\cF(L,U,V).
\end{align*}
In particular, 
$$\cF(L,U,V)\geq \cF(L,U_0,V_0)=E_0$$
and we deduce that $(U_0,V_0)$ is a local pseudo-minimum of $\cF(L,\cdot,\cdot)$ as well. The function $\cF$ has the same expansion as $\cE$, to leading order, but what we have gained here is that it is much easier to prove that $\cF$ is regular, since $E_0$ appears in its definition instead of $\cE(\tau)$ and since the test function is appropriately chosen.

\begin{lemma}[Regularity of $\cF$ along geodesics]\label{lem:regularity_f}
	Under the assumptions of Theorem~\ref{thm:expansion_energy}, there exists $L_0>0$ and $C>0$ such that for any $L>L_0$ and any $C^2$ geodesic $(U_t, V_t)$ in $SO(3)^2$  
	\begin{multline}
	\bigg\|\cF(L,U_t,V_t)-\sum_{2\leq n+m\leq 5}\frac{\cF^{(n,m)}(\Psi_1,\Psi_2,U_t,V_t)}{L^{n+m+1}}\\
	+\frac{C_{\rm vdW}(\Psi_1,\Psi_2,U_t,V_t)}{L^6}\bigg\|_{C^2([-1,1])}\leq \frac{C}{L^7}.
	\label{eq:expansion_f}
	\end{multline}
\end{lemma}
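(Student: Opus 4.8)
The plan is to reduce $\cF(L,U,V)$ to the quadratic form of a convenient \emph{fixed} test function, borrow the $C^2(SO(3)^2)$-expansion of the energy already established in Section~\ref{sec:ProofvdWdom}, and restrict to the geodesic only at the very end. First I would observe that the Feshbach functional defining $\cF$ in~\eqref{eq:trick_F-S-upper-bound} is constant on the normalized vectors of $\Upsilon_\tau$, by exactly the argument used for~\eqref{eq:multiple_identity}: the space $\Upsilon_\tau$ is invariant under all spin permutations $\pi_1\otimes\pi_2$ acting separately on the two clusters (because $\cG_1,\cG_2$ are, and $I_\tau$, $R_\tau$, $\Pi_{12,\tau}$, $\chi_\tau$, $\cQ$ all commute with such permutations), hence so is $\Upsilon_\tau^\perp$, so both $\Pi_\tau H_\tau\Pi_\tau$ and $\Pi_\tau H_\tau\Pi_\tau^\perp(H_\tau-E_0)^{-1}_{|\Upsilon_\tau^\perp}\Pi_\tau^\perp H_\tau\Pi_\tau$ commute with $\pi_1\otimes\pi_2$ on $\Upsilon_\tau$; by irreducibility and Schur's lemma they are multiples of the identity there. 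Therefore $\cF(L,U,V)$ equals the value of the functional on the product test function $\Phi_\tau$ of~\eqref{def:phitau}, built from any fixed normalized $\Psi_1\in\cG_1$, $\Psi_2\in\cG_2$, and, using $\Pi_\tau^\perp(E_1+E_2)\Phi_\tau=0$,
\begin{multline*}
\cF(L,U,V)-E_1-E_2=\frac{1}{\|\Phi_\tau\|^2}\Big(\pscal{\Phi_\tau,(H_\tau-E_1-E_2)\Phi_\tau}\\
-\pscal{(H_\tau-E_1-E_2)\Phi_\tau,\ \Pi_\tau^\perp(H_\tau-E_0)^{-1}_{|\Upsilon_\tau^\perp}\Pi_\tau^\perp(H_\tau-E_1-E_2)\Phi_\tau}\Big).
\end{multline*}

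The first term in the numerator is exactly the quantity expanded in~\eqref{eq:testsmoothness}, so it equals $\sum_{2\le n+m\le5}\cF^{(n,m)}(\Psi_1,\Psi_2,U,V)/L^{n+m+1}-C_{\rm vdW}(\Psi_1,\Psi_2,U,V)/L^6+O_{C^2(SO(3)^2)}(L^{-7})$, and in the irreducible case these multipolar energies and $C_{\rm vdW}$ are the choice-independent ones. For the second (correction) term I would first undo the rotations and translation by conjugating with the rotation--translation unitary, as in~\eqref{eq:comput_L2_norm_2nd_term} and~\eqref{eq:H1change}: this replaces $H_\tau$ by $H_1+H_2+\widetilde I_\tau$, $\Pi_\tau$ by $\Pi_L$, $\Phi_\tau$ by its $\widetilde I_\tau$-corrected fixed counterpart, and makes the entire $(U,V)$-dependence pass through the multiplication operator $\widetilde I_\tau$ alone, which on the supports of the exponentially decaying functions at hand satisfies $\widetilde I_\tau=f_{(U,V)}/L^3+O_{C^2(SO(3)^2)}(L^{-4})$ by Lemma~\ref{lem:multipexp} and~\eqref{eq:Ifuv}, so that it and its first two $(U,V)$-derivatives applied to those functions have $L^2$-norm $O(L^{-3})$ (still with polynomial weights, by Remark~\ref{rem:powerx}). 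The off-diagonal estimate of Section~\ref{sec:setimate_off_diagonal_term}, re-run with this bookkeeping, then gives $\|\Pi_\tau^\perp(H_\tau-E_1-E_2)\Phi_\tau\|_{C^2(SO(3)^2)}=O(L^{-4})$; and since the gap $\eps$ in~\eqref{eq:F-S_new_space} is uniform in $\tau$, while $H_\tau-E_0$ and $\Pi_\tau$ depend on $(U,V)$ only through $\widetilde I_\tau$ (small, with $O(L^{-3})$ derivatives) and through an $O(L^{-3})$ motion of $\Upsilon_\tau$ (cf.~\eqref{eq:diff_projection_F-S}), the second resolvent identity yields $\|(H_\tau-E_0)^{-1}_{|\Upsilon_\tau^\perp}\|_{C^2(SO(3)^2)}=O(1)$. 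Hence the correction term is $O_{C^2(SO(3)^2)}(L^{-8})$.

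Putting these together, and dividing by $\|\Phi_\tau\|^2=1+O_{C^2(SO(3)^2)}(L^{-6})$ (which, as in the proof of Theorem~\ref{thm:expansion_energy}, only modifies the $O(L^{-7})$ remainder since the numerator is $O_{C^2}(L^{-3})$), one obtains the $C^2(SO(3)^2)$-version of the expansion~\eqref{eq:expansion_f} for $\cF(L,\cdot,\cdot)$. Restricting it to the $C^2$ geodesic $t\mapsto(U_t,V_t)$ and applying the chain rule, together with the fact that $C^2$ geodesics of $SO(3)^2$ have uniformly bounded first and second derivatives, gives~\eqref{eq:expansion_f}. I expect the main obstacle to be exactly the bookkeeping of the middle paragraph: verifying that every $O(\cdot)$ borrowed from Section~\ref{sec:ProofvdWdom}, together with its first two derivatives in $(U,V)$, is of the claimed order. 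The single delicate point is the uniform $C^2$-bound on the restricted resolvent $(H_\tau-E_0)^{-1}_{|\Upsilon_\tau^\perp}$, and this is precisely where one must use that, after the change of variables, all $L$-dependent operators other than $\widetilde I_\tau$ are independent of $(U,V)$ and that $\Upsilon_\tau$ itself moves only by $O(L^{-3})$.
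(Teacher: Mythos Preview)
Your overall architecture matches the paper's: split $\cF$ into the diagonal term (handled by~\eqref{eq:testsmoothness} together with the irreducibility argument for~\eqref{eq:multiple_identity}) and the off-diagonal Feshbach correction, then show the latter is $O(L^{-8})$ with two derivatives. The gap is in your treatment of the resolvent. You assert that ``the second resolvent identity yields $\|(H_\tau-E_0)^{-1}_{|\Upsilon_\tau^\perp}\|_{C^2(SO(3)^2)}=O(1)$'' because, after conjugation, only $\widetilde I_\tau$ depends on $(U,V)$ and it is ``small, with $O(L^{-3})$ derivatives''. That smallness holds only when $\widetilde I_\tau$ acts on functions supported in the balls $B(0,L/4)^{N_1}\times B(0,L/4)^{N_2}$, away from its Coulomb singularities (which sit at distance $\sim L$). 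The resolvent does not preserve such support, and as an operator on $H^2$ the second $(U,V)$-derivative of $\widetilde I_\tau$ (or of $H_\tau$) has $|y|^{-3}$-type singularities that are \emph{not} $(-\Delta)$-form-bounded in three dimensions. The paper makes this explicit: after one derivative of $\cR(U,V)$ the term $-\cR(\nabla_{(U,V)}H_\tau)\cR$ appears, and it is stated that this is ``not [differentiable once more], because the second derivative of $\nabla_{(U,V)}H_\tau$ is too singular''.

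The paper's way out is not to claim operator-level $C^2$ for the resolvent at all. It works with the scalar $X(U_t,V_t)$, applies Leibniz, and reduces the only dangerous contribution to the differentiability of $(U,V)\mapsto\langle\Psi',I^\tau\Psi'\rangle$ for a \emph{fixed} $\Psi'\in H^2$. A change of variables inside this integral shifts each Coulomb singularity to a fixed point and puts the $(U,V)$-dependence into the argument of $\Psi'$; two derivatives then land on $\Psi'\in H^2$, which is harmless. This device is the missing ingredient in your sketch. An alternative repair of your route would be to invoke Combes--Thomas/boosted-Hamiltonian estimates (as in Remark~\ref{rem:powerx}) to show that $\cR\,\Pi_\tau^\perp(H_\tau-E_1-E_2)\Phi_\tau$ and its first derivative remain exponentially localized, so that $\partial^k\widetilde I_\tau$ is only ever applied far from its singularities; but this must be argued explicitly and is not a consequence of the second resolvent identity alone.
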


The end of the argument follows and it remains to provide the proof of the Lemma. Because $\cF^{(m,n)}$ and $C_{\rm vdW}$ are  smooth functions of $(U,V)$, proving inequalities on their gradient and Hessian is reduced to showing the respective inequalities for their time derivatives along geodesics.   Thus in the rest of Section we prove Lemma~\ref{lem:regularity_f}.

\begin{proof}[Proof of Lemma~\ref{lem:regularity_f}]
	Under the irreducibility condition, we have proved in the previous section that $\Pi_\tau H_\tau\Pi_\tau$ is a multiple of $\Pi_\tau$, hence using $\Phi_{(U,V)}$ or a true eigenstate of the Feschbach map does not matter for this term. Thus, from \eqref{eq:testsmoothness}  it follows  that for any geodesic $(U_t,V_t)$ in $SO(3)$, we have 
	\begin{multline}
	\bigg\| \frac{1}{\|\Phi_{(U_t,V_t)}\|^2} \pscal{\Phi_{(U_t,V_t)},H_\tau\Phi_{(U_t,V_t)}}-E_1-E_2 -\!\!\sum_{2\leq n+m\leq 5}\frac{\cF^{(n,m)}(\Psi_1,\Psi_2,U_t,V_t)}{L^{n+m+1}}\\
	+\frac{C_{\rm vdW}(\Psi_1,\Psi_2,U_t,V_t)}{L^6}\bigg\|_{C^2([-1,1])}\leq \frac{C}{L^7}.
	\label{eq:expansion_f_diagonal_term}
	\end{multline}
	Therefore, we only have to show that the off-diagonal term
	$$Q(U,V):=\frac{1}{\|\Phi_{(U,V)}\|^2} \pscal{\Phi_{(U,V)}, H_\tau\Pi_\tau^\perp \big(H_\tau-E_0\big)^{-1}_{|(\Upsilon_\tau)^\perp}\Pi_\tau^\perp H_\tau \Phi_{(U,V)}}$$
	fullfills
	\begin{equation}\label{eq:QUVest}
	\|Q(U_t,V_t)\|_{C^2([-1,1])} = O\left(\frac{1}{L^8}\right).
	\end{equation}
	uniformly in the geodesics $(U_t, V_t)$ in $SO(3)^2$. In Section~\ref{sec:setimate_off_diagonal_term} we have already shown that 
	$$|Q(U,V)|\leq C\norm{\Pi_\tau^\perp H_\tau\Pi_\tau}^2=O\left(\frac{1}{L^8}\right).$$
	Using that $\Phi_{U,V}=\cQ \Phi_{U,V}$ constists, due to the antisymmetrization, of a sum of functions that have disjoint supports and all have the same norm, we obtain that
	$$\|\Phi_{(U,V)}\|^2=\frac{1}{ {N_1 + N_2 \choose N_1}} \left\|\sum_{m=1}^{\dim(G_1)}  \sum_{n=1}^{\dim(G_2)} a_{nm} \Psi_{U,V}^{(m,n)}\right\|^2.$$
	Since $K_\tau$ is a multiple of the identity on $\cG_{1,\tau} \otimes \cG_{2,\tau}$, we find that $\Psi_{U,V}^{(m_1,n_1)}$ and $\Psi_{U,V}^{(m_2,n_2)}$ are orthogonal to each other when $(m_1,n_1) \neq (m_2,n_2)$. Therefore,
	$$\|\Phi_{(U,V)}\|^2= \frac{1}{ {N_1 + N_2 \choose N_1}} \sum_{m=1}^{\dim(G_1)}  \sum_{n=1}^{\dim(G_2)} |a_{nm}|^2 \|\Psi_{U,V}^{(m,n)}\|^2,$$
	and, since we assumed that $\Phi_{(U_0,V_0)}$ is normalized arguing as in the proof of  \eqref{eq:final_estim_norm}, it follows that
	$$\|\Phi_{(U,V)}\|^2 = 1 + O_{C^2(S0(3)^2)}\left(\frac{1}{L^6}\right) .$$
	As a consequence to conclude the proof of \eqref{eq:QUVest} and therefore of Lemma \ref{lem:regularity_f} it is enough to show that
	\begin{equation}\label{eq:secdersmall}
	\big\|X(U_t,V_t)\big\|_{C^2([-1,1])}=O\left(\frac{1}{L^8}\right),
	\end{equation}
uniformly on geodesics $(U_t, V_t)$	where 
	$$X(U,V):= \pscal{\Phi_{(U,V)}, H_\tau\Pi_\tau^\perp \big(H_\tau-E_0\big)^{-1}_{|(\Upsilon_\tau)^\perp}\Pi_\tau^\perp H_\tau \Phi_{(U,V)}}$$
	is almost the same as $Q(U,V)$ except that the normalization of $\Phi_{(U,V)}$ is dropped. 
	The following lemma is similar to Lemma 3.2 in \cite{AnaLewRot-19} and will help us estimate $ \frac{d}{dt}\Pi_{\tau_t}^\perp H_{\tau_t} \Phi_{(U_t,V_t)}$, where $\tau_t=(L,U_t,V_t)$.
	For a function $\Psi \in L^2(\R^{3(N_1+N_2)})$ we define 
	$$(s_{U,V}\Psi)(X_1, X_2)=\Psi(U^{-1} X_1, V^{-1} (X_2 -  Le_1)),$$
	where $X_1=(x_1,\dots,x_{N_1})$, $X_2=(x_{N_1+1},\dots,x_{N_1+N_2})$ and $UX_1, VX_2$ is defined similarly as in \eqref{def:UY}.
	\begin{lemma}\label{lem:compodiff}
		Let $\Psi_{(.,.)}: SO(3)^2 \rightarrow L^2(\R^{3(N_1+N_2)}) $ $(U,V) \rightarrow \Psi_{(U,V)}$ such that $\Psi(U_t,V_t)$ is in $C^1([-1,1];L^2)$, for any geodesic $(U_t,V_t)$ and  so that $\Psi_{(U,V)} \in H^1$ for all $U,V$. Then 
		$s_{U_t,V_t} \Psi_{(U_t,V_t)}$ is in  $C^1([-1,1];L^2)$  for all geodesics as well and
		\begin{equation}\label{eq:compodiff}
		\frac{d}{dt}(s_{U_t,V_t} \Psi_{(U_t,V_t)})\Big|_{t=0}
		=   s_{(U_t,V_t)}  \frac{d}{dt}\Psi_{(U_t,V_t)}\Big|_{t=0} + \vec{w} \cdot \nabla (s_{(U_0,V_0)} \Psi_{(U_0,V_0)}),
		\end{equation}
		where  $\vec{w}(X_1,X_2):=\frac{d}{dt}(U_t^{-1} X_1, V_t^{-1}(X_2-L e_1))|_{t=0}$.
	\end{lemma}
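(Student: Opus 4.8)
The plan is to prove~\eqref{eq:compodiff} by a product rule, splitting the derivative of the composition $t\mapsto s_{U_t,V_t}\Psi_{(U_t,V_t)}$ into a term coming from differentiating the curve $\Psi_{(U_t,V_t)}$ and a term coming from differentiating the rigid-motion operator $s_{U_t,V_t}$; these are exactly the two terms on the right of~\eqref{eq:compodiff}. I would first establish the identity at $t=0$, then observe that the very same argument at an arbitrary base point $t_0\in[-1,1]$ (replacing $(U_0,V_0)$ by $(U_{t_0},V_{t_0})$ and $\vec w$ by the geodesic velocity at time $t_0$) yields the derivative everywhere, and that both summands depend continuously on $t_0$ — the first because $t\mapsto\Psi_{(U_t,V_t)}$ is $C^1$ in $L^2$ and $t\mapsto s_{U_t,V_t}$ is strongly continuous, the second because the geodesic velocity is continuous in $t$ and $s_{U_t,V_t}$ is strongly continuous — which upgrades the statement to $C^1([-1,1];L^2)$.

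For the identity at $t=0$, I would write, for small $t\ne0$,
\begin{equation*}
\frac{s_{U_t,V_t}\Psi_{(U_t,V_t)}-s_{U_0,V_0}\Psi_{(U_0,V_0)}}{t}=s_{U_t,V_t}\,\frac{\Psi_{(U_t,V_t)}-\Psi_{(U_0,V_0)}}{t}+\frac{\big(s_{U_t,V_t}-s_{U_0,V_0}\big)\Psi_{(U_0,V_0)}}{t}.
\end{equation*}
In the first term the difference quotient converges in $L^2$ to $\frac{d}{dt}\Psi_{(U_t,V_t)}\big|_{t=0}$ by assumption; since each $s_{U_t,V_t}$ is unitary and $s_{U_t,V_t}\to s_{U_0,V_0}$ strongly as $t\to0$ (rigid motions act strongly continuously on $L^2$), the elementary bound $\|s_{U_t,V_t}\xi_t-s_{U_0,V_0}\xi_0\|\le\|\xi_t-\xi_0\|+\|(s_{U_t,V_t}-s_{U_0,V_0})\xi_0\|$ shows that the first term converges in $L^2$ to $s_{U_0,V_0}\frac{d}{dt}\Psi_{(U_t,V_t)}\big|_{t=0}$, which is the first term of~\eqref{eq:compodiff}.

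The remaining work is the second term, i.e.\ showing that for the fixed $\Phi:=\Psi_{(U_0,V_0)}$ one has $t^{-1}(s_{U_t,V_t}-s_{U_0,V_0})\Phi\to\vec w\cdot\nabla(s_{U_0,V_0}\Phi)$ in $L^2$. Here I would write $s_{U_t,V_t}\Phi=\Phi\circ g_t$ with $g_t(X_1,X_2)=(U_t^{-1}X_1,V_t^{-1}(X_2-Le_1))$, a smooth one-parameter family of affine isometries of $\R^{3(N_1+N_2)}$ with fixed value $g_0$ at $t=0$ and velocity $\vec w=\frac{d}{dt}g_t\big|_{t=0}$. For $\Phi\in C_c^\infty$ this is the classical chain rule, uniform in $t$ because the supports of $\Phi\circ g_t$ stay in a fixed compact set; for general $\Phi$ I would approximate by $\Phi^{(k)}\in C_c^\infty$ and pass to the limit, using the fundamental theorem of calculus $\Phi^{(k)}\circ g_t-\Phi^{(k)}=\int_0^t(\nabla\Phi^{(k)})\circ g_s\cdot\dot g_s\,ds$ together with the uniform boundedness of $Dg_s$ and $Dg_s^{-1}$ for small $s$ to make the passage to the limit uniform in $t$. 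Combining the two limits gives~\eqref{eq:compodiff}. This parallels Lemma~3.2 in~\cite{AnaLewRot-19}.

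The main obstacle is precisely this second term: the translation part of $\vec w\cdot\nabla$ is built from the operators $\partial_{x_j}$, which are bounded $H^1\to L^2$, but the rotation part involves the unbounded operators $x_j\partial_k-x_k\partial_j$, so identifying the derivative of the rigid-motion flow as the first-order operator $\vec w\cdot\nabla$ requires a little decay of $\Phi$ beyond membership in $H^1$ (so that $\vec w\cdot\nabla\Phi\in L^2$) and a correspondingly weighted version of the $\Phi^{(k)}\to\Phi$ approximation. This is automatic in every application, where $\Phi$ is a cut-off — hence compactly supported or exponentially decaying — vector of the form~\eqref{def:PhiUV}, so that $\vec w$ may even be taken bounded on its support; but it is the place where the argument needs care, together with making the limit uniform in $t$ in order to reach the $C^1$ conclusion rather than mere pointwise differentiability.
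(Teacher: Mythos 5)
Your decomposition is the same product-rule split that the paper uses; the paper's one-line proof writes
\begin{multline*}
\frac{s_{U_t, V_t} \Psi_{(U_t, V_t)}- s_{U_0,V_0} \Psi_{(U_0,V_0)}}{t}=\frac{(s_{U_t, V_t} - s_{U_0,V_0}) }{t} \Psi_{(U_0,V_0)}\\ +  s_{U_t, V_t} \frac{d}{ds}\Psi_{(U_s,V_s)}\Big|_{s=0}
+ s_{U_t, V_t}  \left(\frac{ (\Psi_{(U_t, V_t)}- \Psi_{(U_0,V_0)})}{t}- \frac{d}{ds}\Psi_{(U_s,V_s)}\Big|_{s=0}\right)
\end{multline*}
and takes the limit, which after folding the last two summands together (they are $s_{U_t,V_t}$ applied to the difference quotient of $\Psi$) is exactly your two-term split, so the approach is identical and your version merely supplies the limiting arguments (unitarity and strong continuity of $s_{U_t,V_t}$, $C^\infty_c$ approximation for the flow term) that the paper leaves implicit. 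You also correctly flag that $\vec w$ grows linearly in $(X_1,X_2)$, so the rotational part of $\vec w\cdot\nabla$ is not bounded $H^1\to L^2$ and the lemma's stated hypothesis does not by itself guarantee that $\vec w\cdot\nabla(s_{U_0,V_0}\Psi_{(U_0,V_0)})\in L^2$ — the paper's proof glosses over this, but the sentence immediately following the proof of~\eqref{eq:HminusEpsismall} and Remark~\ref{rem:powerx} invoke the exponential decay~\eqref{eq:expdecay} for precisely this reason, so the extra weighted control you ask for is exactly what is used in the application.
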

	\begin{proof}
 It suffices to observe that 
\begin{multline*}
\frac{s_{U_t, V_t} \Psi_{(U_t, V_t)}- s_{U_0,V_0} \Psi_{(U_0,V_0)}}{t}=\frac{(s_{U_t, V_t} - s_{U_0,V_0}) }{t} \Psi_{(U_0,V_0)}\\ +  s_{U_t, V_t} \frac{d}{ds}\Psi_{(U_s,V_s)}\Big|_{s=0} 
+ s_{U_t, V_t}  \left(\frac{ (\Psi_{(U_t, V_t)}- \Psi_{(U_0,V_0)})}{t}- \frac{d}{ds}\Psi_{(U_s,V_s)}\Big|_{s=0}\right)
\end{multline*} 
and to take the limit $t \rightarrow 0$.
	\end{proof}
	Recall that for a geodesic $(U_t,V_t)$ in $SO(3)^2$ we defined $\tau_t=(L,U_t,V_t)$, where $L$ is fixed.
	Our next goal is to use Lemma \ref{lem:compodiff} in order to prove that
	\begin{equation}\label{est:PbotHPder}
	\frac{d}{dt}\Pi_{\tau_t}^\perp H_{\tau_t} \Phi_{(U_t,V_t)}\big|_{t=0}=O\left(\frac{1}{L^4}\right),
	\end{equation}
	uniformly in all possible geodesics. 
	
    We observe that
    $$\Pi_\tau=\sum_{m,n} \frac{1}{\|\cQ \Psi_{(U,V)}^{(m,n)}\|^2} | \cQ \Psi_{(U,V)}^{(m,n)} \rangle \langle \cQ \Psi_{(U,V)}^{(m,n)}|, $$
    which, together with \eqref{def:PhiUV}, the orthogonality of the functions $\Psi_{(U,V)}^{(m,n)}$,
	the fact that $\cQ$ commutes with $ H_\tau$ and the irreducibility of $\cG_{1,\tau}, \cG_{2,\tau}$, implies
	\begin{multline}
	\Pi_\tau^\perp H_\tau \Phi_{(U,V)}= 
	\Pi_\tau^\perp (H_\tau-E_1-E_2) \Phi_{(U,V)}
	\\= \cQ \sum_{m=1}^{\dim(\cG_1)} \sum_{n=1}^{\dim(\cG_2)} a_{mn} \Pi_{\Psi_{(U,V)}^{(m,n)}}^\perp   (H_\tau-E_1-E_2)   \Psi_{(U,V)}^{(m,n)},
	\end{multline}
	where  $\Pi_{\Psi_{(U,V)}^{(m,n)}}$ denotes the orthogonal projection into the span of $\Psi_{(U,V)}^{(m,n)}$.
	Thus, proving \eqref{est:PbotHPder} reduces in light of \eqref{def:Q} to proving that for all $\pi \in S_N$, $m \in \{1,\dots,\dim(\cG_1)\}$, $n \in \{1,\dots,\dim(\cG_2)\}$ 
	\begin{equation}\label{eq:HminusEpsismall}
	\left\|\frac{d}{dt} \pi \Pi_{\Psi_{(U_t,V_t)}^{(m,n)}}^\bot (H_{\tau_t}-E_1-E_2) \Psi_{(U_t,V_t)}^{(m,n)}\right\|=O\left(\frac{1}{L^4}\right).
	\end{equation}
	We will  prove \eqref{eq:HminusEpsismall} for the case that $\pi$ is the identity as the other permutations can be similarly handled.
	We may also assume without loss of generality that $U_0,V_0$ are both the identity. 
	Using \eqref{def:PsiUV}, we obtain similarly as in \eqref{eq:H1change} that
	\begin{multline}\label{est:nablaUV1}
	\Pi_{\Psi_{(U,V)}^{(m,n)}}^\bot (H_\tau-E_1-E_2) \Psi_{(U,V)}^{(m,n)}\\
	=s_{U,V} \left( \Pi_{\Theta_{(U,V)}^{(m,n)}}^\bot (H_{1} +  H_2 + \widetilde{I_\tau}-E_1-E_2) \Theta_{(U,V)}^{(m,n)}\right),
	\end{multline}
	where
	\begin{equation}
	\Theta_{(U,V)}^{(m,n)}=
	\Phi_{1,L}^{(m)}\otimes \Phi_{2,L}^{(n)} - \chi_{4L/3}^{\otimes N}  R_{12,L}  \widetilde{I_\tau}  \Phi_{1,L}^{(m)}\otimes \Phi_{2,L}^{(n)}. 
	\end{equation}
	Moreover \eqref{eq:H1estusefull}-\eqref{eq:H1change} give  that
	\begin{equation}\label{est:nablaUV2}
	\left\| \Pi_{\Theta_{(U,V)}^{(m,n)}}^\bot (H_1 + H_2 + \widetilde{I_\tau}-E_1-E_2) \Theta_{(U,V)}^{(m,n)}\right\|_{H^1}=O\left(\frac{1}{L^4}\right).
	\end{equation}
	The  dependence of
	$\Pi_{\Theta_{(U,V)}^{(m,n)}}^\bot (H_1 + H_2 + \widetilde{I_\tau}-E_1-E_2) \Theta_{(U,V)}^{(m,n)}$
	on $U,V$ appears only in $\widetilde{I_\tau}$. Thus with the help of \eqref{eq:Ifuv} the estimate
	$$ \Pi_{\Theta_{(U,V)}^{(m,n)}}^\bot (H_1 + H_2 + \widetilde{I_\tau}-E_1-E_2) \Theta_{(U,V)}^{(m,n)}=O_{L^2}\left(\frac{1}{L^4}\right)$$
	can be upgraded to 
	\begin{equation}\label{est:nablaUV3}
	\nabla_{(U,V)} \Pi_{\Theta_{(U,V)}^{(m,n)}}^\bot ( H_1 + H_2 + \widetilde{I_\tau}-E_1-E_2)\Theta_{(U,V)}^{(m,n)}=O_{L^2}\left(\frac{1}{L^4}\right).
	\end{equation}
	Using \eqref{est:nablaUV1}, \eqref{est:nablaUV2}, \eqref{est:nablaUV3}, Remark \ref{rem:powerx} and Lemma \ref{lem:compodiff}
	we arrive at \eqref{eq:HminusEpsismall} when $\pi$ is identity. For any other permutation the estimate \eqref{eq:HminusEpsismall}
	can be proven similarly.
	Note that Remark \ref{rem:powerx} is needed because  in the second summand of the right hand side of \eqref{eq:compodiff} we have $\vec{w}(X_1,X_2) =\|(X_1, X_2-Le_1)\|$, and because of the exponential decay of $s_{(U_0,V_0)} \Psi_{(U_0,V_0)}$ is  in the variables $X_1, X_2-Le_1$.

	Iterating the above argument we can find with Lemma \ref{lem:compodiff} that 
	$$\left\|(1-\sum_{j=1}^N \Delta_{x_j})^{-\frac{1}{2}} \Pi_{\tau_t}^\perp H_{\tau_t} \Phi_{(U_t,V_t)}\right\|_{C^2([-1,1])}=O\left(\frac{1}{L^4}\right).$$
	Indeed,  we have that 
	$$\left\|\Pi_{\Theta_{(U_t,V_t)}^{(m,n)}}^\bot (H_{1} +  H_2 + \widetilde{I_{\tau_t}}-E_1-E_2) \Theta_{(U_t,V_t)}^{(m,n)}\right\|_{C^2([-1,1])}=O\left(\frac{1}{L^4}\right)$$ because as we mentioned above the dependence of the left hand side on $U,V$ appears  only in $\widetilde{I_\tau}$. So only the directional derivative 
	$\vec{w} \cdot \nabla$
	 could cause a problem in the iteration, but the presence of  the gradient is remedied by $(1-\sum_{j=1}^N \Delta_{x_j})^{-\frac{1}{2}}$.
	Therefore, since we can introduce $(1-\sum_{j=1}^N \Delta_{x_j})^{-1/2}$ due to the presence of the resolvent in the definition of $X(U,V)$, it turns out that the only term which can pose problems in proving \eqref{eq:secdersmall} is the resolvent 
	$$\cR(U,V)=\Pi_\tau^\perp \big(H_\tau -E_0\big)^{-1}_{|\Upsilon_\tau^\perp} \Pi_\tau^\perp= \big(H_\tau^\perp -E_0\big)^{-1}_{|\Upsilon_\tau^\perp} ,$$
	when we differentiate with respect to $U,V$.  In order to explain the method, we remark that, by Leibniz' rule, it is sufficient to show that the map
	$$(U,V)\mapsto \pscal{\Psi,\cR(U,V)\Psi}$$
	is $C^2$, for any fixed $\Psi\in Q\Upsilon_\tau^\bot$. A calculation shows that on $Q\Upsilon_\tau^\bot$ we have 
	\begin{align*}
	\nabla_{(U,V)} \cR(U,V)=&\cR(U,V)(\nabla_{(U,V)} \Pi_\tau) H_\tau \cR(U,V)\\
	&+\cR(U,V) H_\tau (\nabla_{(U,V)} \Pi_\tau)\cR(U,V)\\
	&-\cR(U,V)(\nabla_{(U,V)} H_\tau)\cR(U,V)
	\end{align*}
	where we recall that $\Pi_\tau$ is a projection onto a finite space of $H^2$ functions. The first two terms are differentiable once more. The last one is not, because the second derivative of $\nabla_{(U,V)} H_\tau$ is too singular. By Leibniz's rule again, we are finally left with the study of the differentiability of the function 
	$$(U,V)\mapsto \pscal{\Psi',\nabla_{(U,V)} H_\tau \Psi'}=\nabla_{(U,V)}\pscal{\Psi',I^\tau\Psi'}$$
	for a fixed $\Psi'\in H^2(\R^{3N})$, where $I^\tau$ contains all the interaction terms even within a molecule. Let $X=(x_1,\cdots,x_N)$. After changing variables we see that the inner product on the right is equal to (discarding the spin for simplicity of notation)
	\begin{align*}
	\pscal{\Psi', I^\tau \Psi'}= R  %&\int_{\R^{3N}}\sum_{j=1}^{N}\sum_{k= 1}^N\frac{1}{|x_j-x_k|}\big|\Psi'(X)\big|^2\\
	&-\sum_{j=1}^{N}\sum_{m=M_1+1}^{M_1+M_2}\int_{\R^{3N}}\frac{z_m}{|x_j|}\big|\Psi'(X+Vy_m+Le_1)\big|^2\\
	&-\sum_{j=1}^N\sum_{\ell=1}^{M_1}\int_{\R^{3N}}\frac{z_\ell}{|x_j|}\big|\Psi'(X+Uy_\ell)\big|^2\\
	&+\sum_{\ell=1}^{M_1}\sum_{m=M_1+1}^{M_1+M_2}\frac{z_\ell z_m}{|Uy_\ell-Vy_m-Le_1|}\int_{\R^{3N}}|\Psi'(X)|^2,
	\end{align*}
	where $R$ consists of terms that are invariant with respect to $U,V$.
	This is twice differentiable for $\Psi'\in H^2(\R^{3N})$. 
	By arguing as before, we can then verify that the terms obtained after differentiation are all $O(L^{-8})$, which ends the proof of Lemma~\ref{lem:regularity_f}.
\end{proof}

This now concludes the proof of Proposition~\ref{prop:fullexp}.\qed

\section{Proof of Propositions \ref{prop:localmin} and \ref{prop:connected}}\label{sec:prooflocmincon}

In this section we prove Propositions \ref{prop:localmin} and \ref{prop:connected} for all $\cF^{(n,m)}$ with $n+m\leq 5$, by looking at all the possible values of $n$ and $m$.

\subsection{Dipole-dipole term}

The results mentioned here for the dipole-dipole interaction follow from the fact, proved in~\cite[Lemma~12]{Lewin-04b}, that $\cF^{(1,1)}$ has no critical point of zero energy, and that all its critical points of positive energy have a Morse index larger than or equal to $2$. We however give a different proof of Propositions \ref{prop:localmin} and \ref{prop:connected} here, since the arguments will be useful for all the higher order terms involving a dipole. 

In light of \eqref{eq:dd} and assuming without loss of generality that the dipole moments are unit vectors
the problem reduces to studying the function
$$f(p_1,p_2):=-3(p_1 \cdot e_1) (p_2 \cdot e_1)+ p_1 \cdot p_2,$$
on $S^2 \times S^2$. This was proved to have the critical levels $\{-2,-1,1,2\}$ in~\cite[Lemma~12]{Lewin-04b} but we will not directly use this information.

We first prove Proposition \ref{prop:localmin}. Observe that 
$f(p_1,p_2)=X_{p_1} \cdot p_2$, where
 $X_{p_1}:=p_1-3(p_1 \cdot e_1) e_1$ is never $0$. Then there exists a unit vector $v$ orthogonal to $X_{p_1}$ and an angle $\theta$
such that $p_2=\cos(\theta) R_{p_1} +\sin(\theta) v$, where $R_{p_1}:=-\frac{X_{p_1}}{\|X_{p_1}\|}$. Let $p_2(t)=\cos(\theta+t) R_{p_1}+\sin(\theta + t)v$ and $g(t):=f(p_1,p_2(t))= -\cos(\theta+t)\|X_{p_1}\|$. 
Then 
$$g(0)=-\cos(\theta)\|X_{p_1}\|,\qquad  g'(0)=\sin(\theta)\|X_{p_1}\|,\qquad g''(0)=-g(0).$$
From these equalities and the fact that $\|X_{p_1}\| \geq 1$ it follows that if $|g'(0)| \leq \frac{1}{3}$ and $g''(0) \geq -\frac{1}{3}$ then since $g'(0)^2+g''(0)^2 \geq 1$ we have that $g''(0) \geq \frac{2 \sqrt{2}}{3}$ and thus $g(0)=-g''(0) \leq -\frac{2 \sqrt{2}}{3}$. Thus if the Hessian is bigger than $-\frac{1}{3}$ and the derivative has norm less then $\frac{1}{3}$ then the value of the function is less than 
$-\frac{2 \sqrt{2}}{3}$ which concludes the proof of Proposition \ref{prop:localmin} in the case of the dipole-dipole term.

We now prove Proposition \ref{prop:connected}.  We choose
\begin{equation}
\delta_0:= \frac{1}{2}.
\end{equation}
We now consider $\delta<\delta_0$, and two points $(p_1,p_2)$ and $(p_1',p_2')$ such that $f(p_1,p_2)<-\delta$, $f(p_1',p_2')<-\delta$. We connect   $p_2$  to 
$R_{p_1}$, following the great circle of $S^2$ 
 in the plane $(R_{p_1},p_2)$, in the direction which avoids $-R_{p_1}$. Similarly we connect $p_2'$ to $R_{p_1'}$.
 Along these paths the function $f$ is decreasing.
Thus we can assume without loss of generality that $p_2=R_{p_1}$, $p_2'=R_{p_1'}$.
We denote by  $\gamma:[0,1] \rightarrow S^2$ an arbitraty path starting from $p_1$ and ending to $p_1'$. Then  the path 
$(\gamma(t),R_{\gamma(t)})$ connects $(p_1,p_2)$ to $(p_1',p_2')$ and since $f(\gamma(t),R_{\gamma(t)})=-\|X_{\gamma(t)}\| \leq-1 < -\delta_0$, along this path we have that $f<-\delta$. This concludes the proof of Proposition \ref{prop:connected} in the case of the dipole-dipole term.

\subsection{Dipole-quadrupole term}
When the dipole term is involved the argument is simplified considerably, as everything can be written as a function of the two normalized vectors $$e=V^{-1}e_1,\qquad p=V^{-1}UD_1/\|D_1\|.$$ 
Then any dipole-multipole term becomes a function on $S^2 \times S^2$ and in particular so does the dipole-quadrupole term. With this observation one can see that Propositions \ref{prop:localmin} and \ref{prop:connected} for the case of the dipole-quadrupole term follow immediately from the following lemma. The rest of the section is devoted to its proof. 
\begin{lemma}[Dipole-quadrupole]\label{lem:dipquad}
	Consider the function $f: S^2 \times S^2 \subset \R^3 \times \R^3 \to \R$ defined by
	\begin{equation}
	f(e,p)= 5 \lan p,e \ran \lan e, Q e \ran-2 \lan e, Q p \ran,
	\end{equation}
	where $\lan\cdot ,\cdot \ran$ is the standard inner product in $\R^3$ and the $3\times 3$ matrix $Q$ is real-symmetric with $Q \neq 0$ and $\tr Q=0$.
\begin{itemize}
 \item[a)] There exists $\delta_0>0$ such that for all $\delta<\delta_0$ the set $A=\{(e,p): f(e,p)<-\delta\}$ is non-empty and pathwise connected.
 
 \smallskip
 
\item[b)] There exists $\delta >0$ such that for all  $(e,p) \in S^2 \times S^2$ the following holds: if $\|\nabla_{e,p}f(e,p)\| \leq \delta$ and ${\rm Hess}_{e,p} f(e,p) \geq -\delta$ then we have that $f(e,p) \leq -\delta$. 
\end{itemize}
\end{lemma}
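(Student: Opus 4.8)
The key structural observation is that $f$ is \emph{affine in $p$}: setting $v_e:=5\langle e,Qe\rangle\,e-2Qe$, one has $f(e,p)=\langle v_e,p\rangle$, so $p\mapsto f(e,p)$ is the restriction to $S^2$ of a linear functional, with minimum $-\|v_e\|$ attained at $-v_e/\|v_e\|$ whenever $v_e\neq0$. A direct computation gives $\|v_e\|^2=5\langle e,Qe\rangle^2+4\|Qe\|^2$, hence $v_e=0\iff Qe=0\iff e\in\ker Q$. Since $Q\neq0$ and $\operatorname{tr}Q=0$, either $\operatorname{rank}Q=3$ and $\ker Q=\{0\}$, or $\operatorname{rank}Q=2$ and $\ker Q=\R e_0$ for a unit vector $e_0$ (rank $1$ is excluded by tracelessness). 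This dichotomy organizes both parts.

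For part (a), integrating in $p$ gives $\int_{S^2}f(e,p)\,dp=\langle v_e,0\rangle=0$, so $\min_p f(e,p)=-\|v_e\|\le0$ with equality only on $\ker Q$; since $Q\neq0$, $\min f=-\max_{S^2}\|v_e\|<0$ and $A:=\{f<-\delta\}$ is nonempty for $\delta$ small. For path-connectedness, note every $(e,p)\in A$ has $\|v_e\|>\delta$, and on each such fiber $\{p:\langle v_e,p\rangle<-\delta\}$ is an open spherical cap. I would deformation retract it onto its centre $-v_e/\|v_e\|$ by sliding $p$ along the minimal geodesic toward that point — a motion that only decreases $\langle v_e,\cdot\rangle$, hence stays in $A$ — and perform this continuously in $e$. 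This exhibits $A$ as homotopy equivalent to $\{e\in S^2:\|v_e\|>\delta\}$, which for $\delta$ small is either all of $S^2$ (when $\operatorname{rank}Q=3$) or $S^2$ with two small antipodal disks near $\pm e_0$ removed — path-connected in both cases.

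For part (b), let $(e,p)$ satisfy $\|\nabla f\|\le\delta$ and $\operatorname{Hess}f\ge-\delta$. The Hessian of $\langle v_e,\cdot\rangle|_{S^2}$ at $p$ equals $-f(e,p)\,\mathrm{Id}$, so the hypothesis forces $f(e,p)\le\delta$; and $\|\nabla_pf\|^2=\|v_e\|^2-f(e,p)^2\le\delta^2$. If $\|v_e\|\ge2\delta$, these give $f(e,p)^2\ge3\delta^2$ together with $f(e,p)\le\delta$, whence $f(e,p)\le-\sqrt3\,\delta<-\delta$. The only remaining possibility is $\|v_e\|<2\delta$; by the formula for $\|v_e\|^2$ this forces $\|Qe\|=O(\delta)$, and a short estimate of $\nabla_e f=P_{e^\perp}\bigl(5\langle e,Qe\rangle p+10\langle e,p\rangle Qe-2Qp\bigr)$ then forces $\|Qp\|=O(\delta)$ as well — possible only when $\operatorname{rank}Q=2$, with $(e,p)$ within $O(\delta)$ of $\{\pm e_0\}\times\{\pm e_0\}$. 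There I would compute the $e$-block of the Hessian at $e=e_0$: along the geodesic $s\mapsto\cos s\,e_0+\sin s\,\xi$ one obtains $\operatorname{Hess}_ef(\xi,\xi)=10\langle e_0,p\rangle\langle\xi,Q\xi\rangle$, and since $Q$ restricted to $e_0^\perp$ is a nonzero traceless $2\times2$ matrix it has eigenvalues $\pm\mu$ with $\mu\neq0$; choosing $\xi$ the eigenvector whose eigenvalue has sign opposite to $\langle e_0,p\rangle$ makes this a strictly negative constant $-c_0$ depending only on $Q$. By continuity of the Hessian this bad direction persists on the $O(\delta)$-neighbourhood, so $\operatorname{Hess}f$ has an eigenvalue below $-c_0/2$, contradicting $\operatorname{Hess}f\ge-\delta$ once $\delta<c_0/2$. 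Hence the case $\|v_e\|<2\delta$ never arises, and $f(e,p)\le-\delta$ always.

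The main obstacle is exactly this near-kernel regime in part (b): away from $\ker Q$ the affineness in $p$ makes everything essentially automatic, but when $e$ (and then necessarily $p$) is close to $\ker Q$ the value $f(e,p)$ is only $O(\delta)$ and its sign cannot be extracted from the $p$-variable alone. The fix rests on the purely algebraic fact that the transverse quadrupole $Q|_{(\ker Q)^\perp}$ is traceless and nonzero, hence sign-indefinite, which furnishes a uniformly negative Hessian direction and eliminates the regime. This is also the simplest instance of the general ``vanishing average forces negativity at degenerate critical points'' mechanism, precisely because $f$ is affine in the second slot; no such simplification will be available for $\cF^{(2,2)}$ or $\cF^{(2,3)}$.
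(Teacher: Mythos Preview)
Your argument is correct and follows the same route as the paper: both exploit the affineness $f(e,p)=\langle v_e,p\rangle$ with $v_e=5\langle e,Qe\rangle e-2Qe$, reduce part (a) to the connectedness of $\{e:\|v_e\|>\delta\}$ by sending $p$ to the fiber minimizer, and in (b) handle the generic case via the $p$-variable and the near-kernel case by an $e$-variation that picks up the sign-indefiniteness of the traceless restriction $Q|_{e_0^\perp}$. Your treatment of the kernel case in (b) is slightly tidier than the paper's --- you first use $\nabla_e f$ to pin $p$ near $\pm e_0$ as well and then exhibit a single clean negative $e$-Hessian direction, whereas the paper keeps $p$ general and works with a one-parameter family of mixed $(e,p)$-geodesics and explicit numerical constants --- but the underlying mechanism is identical.
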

\begin{proof}
	a) The function $f$ can be rewritten as
	\begin{equation}\label{eq:finner}
	f(e,p)= \lan 5 P_e Q e-2 Q e, p \ran,
	\end{equation}
	where $P_e$ denotes the orthogonal projection onto $e\R$.
	
	 Assume first that $\ker(Q)=\{0\}$. Then the vector $5 P_e Qe-2Qe$ is always nonzero. We choose
	 \begin{equation}\label{eq:d0}
	  \delta_0:= \min\{\|5 P_e Qe-2Qe\|: e \in S^2\}.
	 \end{equation}
	  We now consider $\delta<\delta_0$, and two points $(e_1,p_1)$ and $(e_2,p_2)$ such that $f(e_i,p_i)<-\delta$. Here $e_1, e_2$ are not to mix up with the standard basis vectors. We connect each $p_i$  to 
$$R_{e_i}:=- \frac{5 P_{e_i} Q e_i-2 Q e_i}{\|5 P_{e_i} Q e_i-2 Q e_i \|}$$ 
following the great circle in the plane $(R_{e_i},p_i)$, in the direction which avoids $-R_{e_i}$, similarly as in the case of the dipole-dipole term. Along this path the function $f$ is decreasing to $f(e_i,R_{e_i})=-\|5 P_{e_i} Q{e_i}-2Q{e_i}\|$.
	Thus we can assume without loss of generality that $p_i=R_{e_i}$. Consider now any continuous path $e(t)$ on $S^2$ which connects $e_1$ and $e_2$. Taking $p(t)=R_{e(t)}$ we obtain a continuous path (since $e\mapsto R_e$ is continuous), with 
	$$f(e(t),R_{e(t)})=-\|5 P_{e(t)} Q{e(t)}-2Q{e(t)}\|\leq-\delta_0$$
	as desired.
	
	Assume now that $Q$ has the eigenvalue $0$. Since $Q$ is nonzero and traceless 
	the eigenvalue $0$ has multiplicity 1. Let $\pm e_0$ denote the eigenvectors in $S^2$ of the eigenvalue $0$. Note that on $S^2$ we have $ 5 P_{e} Q e-2 Q e=0$ if and only if $e=\pm e_0$. 
	We choose $$\delta_0:=\|Q\|.$$ 
	We consider   $(e_1,p_1)$ and $(e_2,p_2)$ as before and deduce from the assumption $f(e_i,p_i)\leq -\delta<0$ that $e_i\neq \pm e_0$ for $i=1,2$. 
We may thus assume, as in the previous paragraph, that $p_i= R_{e_i}$. Then we construct a path starting from   $(e_1,p_1)$  as follows: By replacing $e_0$ with $-e_0$ if necessary, we can assume without loss of generality that  $e_1=e(\theta):= \cos(\theta)e_0 + \sin(\theta) e_1^\bot$,
	where $\theta \in (0,\frac{\pi}{2}]$ and
	$e_1^\bot$ is in the orthogonal complement of $e_0$ (and not of $e_1$).
	Then we consider the path $(e(\phi), R_{e(\phi)})$, $\phi \in [\theta, \frac{\pi}{2}]$. After taking the square it follows that $\|5 P_{e_1^\bot}Qe_1^\bot-2 Q e_1^\bot\| \geq 2 \|Q e_1^\bot\|= 2 |\lambda|=2 \|Q\|$, where $\pm \lambda$ are the other eigenvalues of $Q$. Thus, $f(e_1^\bot, R_{e_1^\bot}) \leq -2 \|Q\| < -\delta_0$. Using that  
	$$\|5 P_{e(\phi)} Qe(\phi)-2 Q e(\phi)\|^2=5 \langle e(\phi), Q e(\phi) \rangle^2 + 4 \|Q e(\phi)\|^2,$$ 
	we can show that $f$ is decreasing along the path $(e(\phi), R_{e(\phi)})$.
	We connect in a similar way $(e_2,p_2)$ to the point
	$(e_2^\bot, R_{e_2^\bot})$. Finally we connect $(e_1^\bot, R_{e_1^\bot})$ to $(e_2^\bot, R_{e_2^\bot})$ along a path $(e(t)^\bot, R_{e(t)^\bot})$, where $e(t)^\bot$ is in the orthogonal complement of $e_0$. Along this path $f$ remains smaller than or equal to $-\delta_0$.
	
	b) As in part a) we first assume that $\ker(Q)=\{0\}$. Let $(e,p) \in S^2 \times S^2$ be arbitrary. Then there exists an angle $\theta \in [0,2 \pi)$ and a unit vector $R_e^\bot$ which is orthogonal to $R_e$ such that $p=\cos(\theta) R_e + \sin(\theta) R_e^\bot$. We now consider the function $g(t)=f(e,\cos(t) R_e + \sin(t) R_e^\bot)$. Then 
	$$g(t)=-\|5 P_e Qe-2Qe\|\cos(t),\qquad g'(t)=\|5 P_e Qe-2Qe\|\sin(t),$$
	$$g''(t)=\|5 P_e Qe-2Qe\|\cos(t).$$ 
	Now choose $\delta$ to be ${\delta_0}/{3}$, where $\delta_0$ is the same as in \eqref{eq:d0}. If $\|\nabla_{e,p}f(e,p)\| \leq \delta$ and ${\rm Hess}_{e,p}f(e,p) \geq -\delta$ then $|g'(\theta)| \leq \delta$ and $g''(\theta) \geq -\delta$. But since $g'(\theta)^2 + g''(\theta)^2=\delta_0^2$ it follows actually that $|g''(\theta)| > \delta$ and since $g''(\theta) \geq -\delta$ we find $g''(\theta)> \delta$. Therefore, $f(e,p)=g(\theta)=-g''(\theta)<-\delta$, completing the proof in the case that $0$ is not an eigenvalue of $Q$. 
	
	Assume now that $Q$ has the eigenvalue zero and let as before $\pm e_0$ denote the eigenvectors  in $S^2$ of the eigenvalue $0$.  We may assume without loss of generality that the other eigenvalues are $\lambda=\pm1$. We will show that there are neighborhoods $U_1,U_2$ of $e_0,-e_0$ and $\delta>0$ such that if $e \in U_1 \cup U_2$ then it is impossible to have $\|\nabla_{e,p}f(e,p)\| \leq \delta$ and ${\rm Hess}_{e,p} f(e,p) \geq -\delta$. Once this is shown we can repeat the argument of the previous paragraph on $(S^2/(U_1 \cup U_2)) \times S^2$ with $\delta_0$ replaced by $\min\{\|5P_e Q_e-2Qe\|: e \in S^2/(U_1 \cup U_2) \}$. 
	  
	  We define 
$U_1=\{\cos(\theta) e_0 + \sin(\theta) e_0^\bot: e_0^\bot \bot e_0, \|e_0^\bot\|=1,  \theta \leq \epsilon\}$, where $\epsilon>0$ is to be chosen and $U_2=-U_1$. Let $e=\cos(\theta) e_0 + \sin(\theta) e_0^\bot \in U_1$ and write any $p$ as $p=\cos(\phi) Qe_0^\bot+\sin(\phi) v$ for some unit vector $v \bot Qe_0^\bot$. Note that due to our assumption $\lambda=1$ the vector $ Qe_0^\bot$ is normalized. 
We will consider only $U_1$, everything works in the same way for $-U_1$. For $t \in (-\frac{1}{2},\frac{1}{2})$ and some $\zeta \in [0,2 \pi]$ we define 
\begin{equation}\label{eq:et}
e(t)=\cos(\theta+\cos(\zeta)t) e_0 + \sin(\theta+\cos(\zeta)t) e_0^\bot
\end{equation}
and $$p(t)=\sin(\phi+\sin(\zeta)t) Qe_0^\bot+\cos(\phi+\sin(\zeta)t) v.$$
Then $(e(t),p(t))$ is a geodesic in $S^2 \times S^2$. We also define $g(t)=f(e(t),p(t))$.
Using that $Q e_0^\bot$ is normalized a simple computation gives 
\begin{multline*}
g(t)=-2\sin(\theta+t\cos(\zeta)) \sin(\phi +t \sin(\zeta))\\ + 5 \sin^2(\theta+t \cos(\zeta)) \langle e_0^\bot, Q e_0^\bot \rangle \langle e(t), p(t) \rangle. 
\end{multline*}
Then using that $|\sin(\theta)| \leq \epsilon$ we find that $$g'(0)=-2 \cos(\zeta) \cos(\theta) \sin(\phi) + O(\epsilon),$$ 
and 
\begin{multline*}
g''(0)=-4 \sin(\zeta) \cos(\zeta) \cos(\theta) \cos(\phi)\\ + 10 \cos(2 \theta) \cos^2(\zeta)\langle e_1^\bot, Q e_1^\bot \rangle \langle e(0), p(0) \rangle +O(\epsilon). 
\end{multline*}
If $|\cos(\phi)| \leq {1}/{\sqrt{2}}$ then we choose $\zeta=0$ and find $|g'(0)|\geq \sqrt{2}-O(\epsilon)$. Thus we may assume that
$|\cos(\phi)| \geq {1}/{\sqrt{2}}$.
We choose $\zeta = \zeta_0$ so that  $-4 \sin(\zeta_0) \cos(\zeta_0) \cos(\theta) \cos(\phi)<0$  and $|\cos(\zeta_0)|={|\sin(\zeta_0)|}/{10}$. Then an elementary computation gives that $g''(0) \leq -\frac{1}{100}+ O(\epsilon)$. Choosing $\delta_0<\frac{1}{100}$ and $\epsilon$ small enough we conclude the proof. 
\end{proof}

\subsection{Dipole-octopole term}
We recall the definition of the octopole moment tensor.
$$ O_{ijk}= \frac{1}{2} \int dz ( 5 z_i z_j z_k - |z|^2  (z_i \delta_{jk}+ z_j \delta_{ik}+ z_k \delta_{ij})) \rho(z) dz e^{ijk}.$$
We denote by $p$ the dipole moment vector of the second molecule. Our goal is to study the function
$f: S^2 \times S^2 \mapsto \R$ with
\begin{equation}
f(e,p)=3 O(e,e,p)-7 O(e,e,e) (e \cdot p).
\label{eq:f_dipole-octopole}
\end{equation}
We denote by $O(v,w,\cdot)$ the vector determined by the equality
$O(v,w,u)= \lan O(v,w,\cdot),u\ran$ and by $O(v,\cdot,\cdot)$ the symmetric matrix such that 
$\lan w, O(v,\cdot,\cdot) u \ran= O(v,w,u)$. To handle the more complicated octopole moment, we need the following auxiliary lemma.

\begin{lemma}\label{lem:Oee0}
	Assume that $O(\cdot,\cdot,\cdot) \neq 0$. Then there exist at most three pairwise non parallel vectors $v_1, v_2, v_3 \in S^2$ such that $$O(v_1,v_1,\cdot)=O(v_2,v_2,\cdot)=O(v_3,v_3,\cdot)=0.$$
\end{lemma}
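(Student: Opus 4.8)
The plan is to view $v \mapsto O(v,v,\cdot)$ as a polynomial (quadratic) map from $\R^3$ to $\R^3$ and to count its projective zeros. Concretely, write $T(v) := O(v,v,\cdot) \in \R^3$; each of the three components $T_1,T_2,T_3$ is a homogeneous quadratic form in the coordinates $v=(v^{(1)},v^{(2)},v^{(3)})$ of $v$. A common zero $v\in S^2$ of all three forms, together with its antipode $-v$, gives a point of the projective variety $Z := \{[v]\in\mathbb{P}^2(\R) : T_1(v)=T_2(v)=T_3(v)=0\}$. So the statement "at most three pairwise non-parallel vectors" is exactly the claim that $Z$ is finite with $|Z|\le 3$, unless $T\equiv 0$ (which is excluded since $O\neq 0$ would still be compatible with $T\equiv0$ a priori, so this needs a separate check, see below).

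First I would rule out the degenerate cases. If two of the three quadrics $T_1,T_2,T_3$ vanish identically, or if they share a common component, then $Z$ could be infinite; I must show this forces $O\equiv 0$, contradicting the hypothesis. The key point is that $O$ is a \emph{symmetric traceless} $3$-tensor, and the map $v\mapsto O(v,v,\cdot)$ recovers $O$ completely: if $O(v,v,\cdot)=0$ for all $v$ in a Zariski-dense set (in particular on an algebraic curve in $\mathbb{P}^2$ of positive dimension is not enough by itself, but if $T\equiv 0$ then certainly), then polarizing $O(v,v,w)=0$ for all $v,w$ gives $O(v,u,w)+O(u,v,w)=0$, i.e. $O(v,u,w)=0$ by symmetry. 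So $T\not\equiv 0$. More carefully, I would argue that no two of the $T_i$ can have a common linear factor shared by the third unless $O\equiv0$: a common factor would force $O(v,v,\cdot)$ to lie in a fixed plane for $v$ on a line's worth of directions, and by polarization/symmetry considerations this over-constrains the traceless symmetric $O$ to zero. This is the step I expect to be the main obstacle — making the "common component forces $O\equiv 0$" implication clean, because one has to exploit both the symmetry and the tracelessness of $O$ rather than treating $T_1,T_2,T_3$ as generic quadratic forms.

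Once the degeneracy is excluded, $Z$ is a proper intersection. Here I would invoke Bézout: the intersection of two of the quadrics, say $\{T_1=0\}\cap\{T_2=0\}$ in $\mathbb{P}^2(\mathbb{C})$, without common component, is a zero-dimensional scheme of degree $2\cdot 2 = 4$. Imposing the third equation $T_3=0$ can only cut this down. To get the sharper bound $3$ rather than $4$, I would use the extra structure: the three quadrics $T_i$ are not independent but satisfy the identity coming from tracelessness, namely $\sum_i v^{(i)} T_i(v) = O(v,v,v)\cdot$(something) — more precisely $\langle v, T(v)\rangle = O(v,v,v)$, and there is an additional syzygy forcing a linear relation among $\{v^{(i)}T_j\}$ that reduces the count. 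Concretely, the common zeros of $T_1,T_2,T_3$ all lie on the cubic curve $\{O(v,v,v)=0\}$ and on the quadric $\{T_1=0\}$; a cubic and a conic with no common component meet in at most $6$ points by Bézout, but using the symmetry $[v]\leftrightarrow[-v]$ is irrelevant here. The cleanest route to exactly $3$: a common zero $v$ of $T\equiv O(v,v,\cdot)$ satisfies $O(v,v,v)=0$ automatically; among the at most $4$ complex solutions of $T_1=T_2=0$, the locus $T_3=0$ and a dimension-type argument (or direct normal-form analysis using the eigenbasis of the "matrix part" of $O$) pins it at $\le 3$. I would most likely finish by reducing $O$ to a canonical form under $SO(3)$ — picking coordinates adapted to $O$ — and then solving the resulting explicit system $T_1=T_2=T_3=0$ by hand to read off at most three solution lines; the homogeneity and symmetry make this a short finite computation, which I would not grind through here.

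Thus the argument has three movements: (i) $T:=O(v,v,\cdot)$ is a quadratic map determined by $O$, and $O\neq0\Rightarrow T\not\equiv0$ and the $T_i$ have no common component — this is the crux; (ii) by Bézout the common zero locus in $\mathbb{P}^2(\mathbb{C})$ is finite of degree $\le 4$; (iii) the symmetry/tracelessness syzygies, or an explicit normal form, improve the bound to $3$ real pairwise non-parallel directions, completing the proof.
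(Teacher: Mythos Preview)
Your proposal has a real gap in step~(ii). You plan to apply B\'ezout to two of the quadrics $T_1,T_2,T_3$, assuming they have no common component. But this assumption can fail even when $O\neq0$. Take the tensor with $O(e_1,e_2,e_3)=\lambda\neq0$ and all other symmetrically distinct components zero; then $T(v)=2\lambda(v_2v_3,\,v_1v_3,\,v_1v_2)$, and every pair $T_i,T_j$ shares a linear factor. What you correctly argue (by polarization and tracelessness) is that a factor common to \emph{all three} forces $O\equiv0$; but B\'ezout on a pair needs that \emph{pair} to have no common component, and your proposal never establishes this --- because it is false. So the ``$\le4$'' count is not justified as stated, and step~(iii), the passage from $4$ to $3$, is not carried out at all (you defer it to an unspecified normal-form computation). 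These are exactly the two places where the content of the lemma lives.

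The paper's proof bypasses intersection theory entirely. It assumes two non-parallel zeros exist, takes them as $e_1$ and $c_1e_1+c_2e_2$, and then uses the symmetry and trace conditions to pin down every component $O(e_i,e_j,\cdot)$ explicitly in terms of one or two scalar parameters. With $O$ written out, the equation $O(d,d,\cdot)=0$ becomes a small explicit system which is solved by hand; one finds either exactly three solution directions (the coordinate axes, in the symmetric case $c_1=0$) or only the original two (when $c_1\neq0$). This is the ``normal form'' computation you gesture at in step~(iii), but it is the whole proof, not an afterthought --- and it avoids the degeneracy issue that breaks your B\'ezout step.
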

\begin{proof}
	We assume that two such vectors $v_1,v_2$ exist. Without loss of generality suppose that $v_1=e_1$ and that $v_2=c_1 e_1 + c_2 e_2$, where $c_2 \neq 0$. Then by the assumption we obtain that
	\begin{equation}\label{eq:o11}
	O(e_1,e_1,\cdot)=0 
	\end{equation}
	and
	\begin{equation}\label{eq:o12o220}
	2 c_1 O(e_1,e_2,\cdot)+c_2O(e_2,e_2,\cdot)=0.
	\end{equation}
\smallskip

\noindent \textbf{Case 1}: $c_1=0.$ Then from \eqref{eq:o12o220} it follows that
	\begin{equation}\label{eq:o220}
	O(e_2,e_2,\cdot)=0.
	\end{equation}
	and since $O(e_1,e_1,\cdot)+O(e_2,e_2,\cdot)+O(e_3,e_3,\cdot)=0$ we find from \eqref{eq:o11} and \eqref{eq:o220} that
	\begin{equation}\label{eq:o330}
	O(e_3,e_3,\cdot)=0.
	\end{equation}
	From \eqref{eq:o220} and \eqref{eq:o330} it follows that $O(e_2,e_3,e_2)=O(e_2,e_3,e_3)=0$ and thus there exists $\lambda_1 \in \R$ such that
	\begin{equation}\label{eq:o230}
	O(e_2,e_3,\cdot)=\lambda_1 e_1.
	\end{equation}
	Similarly, we can obtain that
	\begin{equation}\label{eq:o130}
	O(e_1,e_3,\cdot)=\lambda_2 e_2.
	\end{equation}
	and 
	\begin{equation}\label{eq:o120}
	O(e_1,e_2,\cdot)=\lambda_3 e_3.
	\end{equation}
	Taking the inner product with $e_1$ in \eqref{eq:o230}, $e_2$ in  \eqref{eq:o130} and $e_3$ in  \eqref{eq:o120} it follows that $\lambda_1=\lambda_2=\lambda_3=:\lambda$. If $\lambda=0$ then $O(\cdot,\cdot,\cdot)=0$ contradicting our assumption, thus $\lambda \neq 0$. Assume now that 
	$$O(d_1 e_1 + d_2 e_2 + d_3 e_3, d_1 e_1 + d_2 e_2 + d_3 e_3,\cdot) = 0.$$ 
	Then using \eqref{eq:o11} and \eqref{eq:o220}-\eqref{eq:o120} together with $\lambda_1=\lambda_2=\lambda_3=:\lambda$ we find that
	$\lambda (d_1 d_2 e_3 + d_1 d_3 e_2 + d_2 d_3 e_1 )=0$. Therefore, since $\lambda \neq 0$ we obtain that $d_1 d_2= d_2 d_3= d_3 d_1=0$. Thus at least two of the coefficients $d_i$ have to be zero so the vector $d_1 e_1 + d_2 e_2 + d_3 e_3$ is parallel to one of the vectors $e_1,e_2,e_3$.

	\smallskip
	
	\noindent\textbf{Case 2} : $c_1 \neq 0$. Then we may assume without loss of generality that $c_1=1$ and we recall that $c:=c_2 \neq 0$.  Thus \eqref{eq:o12o220} implies that
	\begin{equation}\label{eq:o12o22}
	2  O(e_1,e_2,\cdot)+c\, O(e_2,e_2,\cdot)=0.
	\end{equation}
	Taking the inner product with $e_1$ in~\eqref{eq:o12o22}, it follows from \eqref{eq:o11} that
	\begin{equation}\label{eq:o122}
	O(e_1,e_2,e_2)=0.
	\end{equation}
	Taking in \eqref{eq:o12o22} inner product with $e_2$ and using \eqref{eq:o122}, we find that
	\begin{equation}\label{eq:o222}
	O(e_2,e_2,e_2)=0.
	\end{equation}
	Since $O(e_2,\cdot,\cdot)$ is traceless it follows, using \eqref{eq:o11} and \eqref{eq:o222}, that 
	\begin{equation}\label{eq:o233}
	O(e_2,e_3,e_3)=0.
	\end{equation}
	Similarly, since $O(e_1,\cdot,\cdot)$ is traceless, using \eqref{eq:o11} and \eqref{eq:o122}, we find 
	\begin{equation}\label{eq:o133}
	O(e_1,e_3,e_3)=0.
	\end{equation}
	From \eqref{eq:o133} and \eqref{eq:o233}
	it follows that there exists $\lambda \in \R$ such that
	\begin{equation}\label{eq:o33}
	O(e_3,e_3,\cdot)= \lambda e_3.
	\end{equation}
	Therefore, since 
	$O(e_1,e_1,\cdot) + O(e_2,e_2,\cdot) + O(e_3,e_3,\cdot)=0$ we find using \eqref{eq:o11} and \eqref{eq:o33}
	\begin{equation}\label{eq:o22}
	O(e_2,e_2,\cdot)= - \lambda e_3.
	\end{equation}
	Hence, using \eqref{eq:o12o22} we obtain that
	\begin{equation}\label{eq:o12}
	O(e_1,e_2,\cdot)= \frac{c}{2} \lambda e_3.
	\end{equation}
	From \eqref{eq:o133} and $O(e_1,e_3,e_1)=0$ (see \eqref{eq:o11})
	we find that
	\begin{equation}\label{eq:o13'}
	O(e_1,e_3,\cdot)=\mu_1 e_2.
	\end{equation} 
	Taking the inner product with $e_2$ in \eqref{eq:o13'} and with $e_3$ in \eqref{eq:o12} we see that 
	\begin{equation}\label{eq:mu1lambda}
	\mu_1= \frac{c}{2} \lambda.
	\end{equation}
	and thus 
	\begin{equation}\label{eq:o13}
	O(e_1,e_3,\cdot)=\frac{c}{2} \lambda e_2.
	\end{equation} 
	From \eqref{eq:o233}  if follows that
	\begin{equation}\label{eq:o23'}
	O(e_2,e_3,\cdot)= \mu_2 e_1 + \mu_3 e_2.
	\end{equation}
	Taking the inner product with $e_2$ in \eqref{eq:o23'} and with $e_3$ in \eqref{eq:o22} we find that $\mu_3=-\lambda$. Taking then the inner product with $e_1$ in \eqref{eq:o23'} and with $e_2$ in \eqref{eq:o13} we find that $\mu_2=\frac{c}{2} \lambda$.
	Thus,
	\begin{equation}\label{eq:o23}
	O(e_2,e_3,\cdot)= \lambda( \frac{c}{2}e_1 -  e_2).
	\end{equation}
	From \eqref{eq:o11},  \eqref{eq:o33}, \eqref{eq:o22},    \eqref{eq:o12},  \eqref{eq:o13},   \eqref{eq:o23}, it follows that if $\lambda=0$ then $O(\cdot,\cdot,\cdot)=0$ contradicting our assumption. Thus $\lambda \neq 0$.
	We assume now that for some $d_1, d_2, d_3$ we have that
	\begin{equation}\label{eq:O0}
	O(d_1 e_1 + d_2 e_2 + d_3 e_3, d_1 e_1+d_2 e_2 + d_3 e_3,\cdot)=0.
	\end{equation}
	Since the right hand side of the last equation is a linear combination of the vectors in \eqref{eq:o11},  \eqref{eq:o33}, \eqref{eq:o22},    \eqref{eq:o12},  \eqref{eq:o13} and   \eqref{eq:o23}, and $e_1$ appears only in \eqref{eq:o23} it follows that
	$2 d_2 d_3 \lambda=0$ and since $\lambda \neq 0$ we find that $d_2 d_3=0$. If $d_2=0$ then
	$2 d_1 d_3 O(e_1,e_3,\cdot) + d_3^2 O(e_3, e_3,\cdot)=0$. Therefore, using \eqref{eq:o33}, \eqref{eq:o13} and that 
	$c, \lambda  \neq 0$ it follows that $d_3=0$ and thus the vector $d_1 e_1 + d_2 e_2 + d_3 e_3$ is parallel to $v_1=e_1$.
	We assume now that $d_3=0, d_2 \neq 0$.
	Then from \eqref{eq:O0} we find that
	$2 d_1 d_2 O(e_1,e_2,\cdot) + d_2^2 O(e_2, e_2,\cdot)=0$. Thus using \eqref{eq:o12o22} we find that $d_2=c d_1$ and thus the vector $d_1 e_1 + d_2 e_2 + d_3 e_3$ is parallel to $v_2=e_1+c e_2$, as desired.
\end{proof}

We will now prove the following
\begin{lemma}
	Assume that $O$ satisfies the non-degeneracy assumption~\eqref{eq:octopole_assumption}. Then there exists $\delta>0$ such that the following holds:
	If $\|\nabla_{e,p}f(e,p)\| \leq \delta$ and ${\rm \text{Hess}}_{e,p}f(e,p) \geq -\delta$, then $f(e,p)\leq -\delta$. 
\end{lemma}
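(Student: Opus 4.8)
\emph{Proof proposal.}
The plan is to follow the template already used for the dipole--quadrupole term, with the finitely many exceptional vectors produced by Lemma~\ref{lem:Oee0} playing the rôle of $\pm e_0$. Write $f(e,p)=\langle X_e,p\rangle$ with $X_e:=3\,O(e,e,\cdot)-7\,O(e,e,e)\,e$. Pairing with $e$ gives $\langle X_e,e\rangle=-4\,O(e,e,e)$, so $X_e=0$ forces $O(e,e,e)=0$ and hence $O(e,e,\cdot)=0$; conversely $O(e,e,\cdot)=0$ gives $X_e=0$. Thus the zero set $Z:=\{e\in S^2:X_e=0\}$ equals $\{e\in S^2:O(e,e,\cdot)=0\}$, which by Lemma~\ref{lem:Oee0} is of the form $\{\pm v_1,\dots,\pm v_r\}$ with $r\le 3$ (possibly $r=0$). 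For each $v_i$, assumption~\eqref{eq:octopole_assumption} forces $O(v_i,\cdot,\cdot)\neq 0$; since this symmetric operator is traceless and annihilates $v_i$, its restriction $M_i$ to $v_i^\perp$ is a nonzero symmetric traceless $2\times 2$ matrix with eigenvalues $\pm\mu_i$, $\mu_i>0$, so $M_i$ maps $v_i^\perp$ onto itself with $\|M_i w\|=\mu_i\|w\|$ for $w\in v_i^\perp$.

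First I would fix a small $\epsilon>0$ and treat $e$ in the compact set $K_\epsilon:=\{e\in S^2:\operatorname{dist}(e,Z)\ge\epsilon\}$, on which $\|X_e\|\ge c_0>0$. For such $e$, set $R_e:=-X_e/\|X_e\|$ and write $p=\cos\theta\,R_e+\sin\theta\,R_e^\perp$; along the geodesic $t\mapsto(e,\cos t\,R_e+\sin t\,R_e^\perp)$ of $S^2\times S^2$ one computes $g(t):=f(e,\cos t\,R_e+\sin t\,R_e^\perp)=-\|X_e\|\cos t$, hence $g'(\theta)^2+g''(\theta)^2=\|X_e\|^2\ge c_0^2$ and $g(\theta)=-g''(\theta)$. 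Exactly as in the dipole--dipole and dipole--quadrupole cases, if $\|\nabla_{e,p}f\|\le\delta$ and $\operatorname{Hess}_{e,p}f\ge-\delta$ with $\delta<c_0/2$, then $|g'(\theta)|\le\delta$ together with $g''(\theta)\ge-\delta$ forces $g''(\theta)>\delta$, so $f(e,p)=-g''(\theta)<-\delta$.

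The harder part is the neighbourhood $\operatorname{dist}(e,v_i)<\epsilon$ of an exceptional vector (the neighbourhood of $-v_i$ being identical). There I would use the flat coordinate $v:=\Pi_{v_i^\perp}e$, $|v|=O(\epsilon)$, in terms of which $O(v_i,v,p)$, $O(v,v,p)$, $O(v_i,v,v)$ are polynomials, so no regularity issue arises; a short expansion gives $f=6\,O(v_i,v,p)+O(|v|^2)$. The key observation is that $f$, $\nabla f$ and the $p$--Hessian of $f$ are all $O(\epsilon)$ here (because $\|X_e\|=O(\epsilon)$ there, and along any $p$--geodesic the second derivative of $f$ equals $-f$), whereas $\|\nabla_e f\|\ge|\partial_\theta f|\ge 6\,|O(v_i,v/|v|,p)|-O(\epsilon)$. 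I distinguish two cases. If $|O(v_i,v/|v|,p)|\ge\mu_i/100$ then $\|\nabla_{e,p}f\|\ge\mu_i/20$, contradicting $\|\nabla f\|\le\delta$ once $\delta$ is small. Otherwise I exploit the single order--one ingredient, the mixed second derivative: for $e$ in this neighbourhood, moving $\hat u\in v_i^\perp$ on the $e$--sphere and $\hat v\in T_pS^2$ on the $p$--sphere gives $\partial_{\hat u}\partial_{\hat v}f=6\langle M_i\hat u,\hat v\rangle+O(\epsilon)$. One of the two eigendirections $\hat u$ of $M_i$ satisfies $\|\Pi_{p^\perp}M_i\hat u\|\ge\mu_i/\sqrt 2$, since their images under $M_i$ are orthogonal and $\|\Pi_{p^\perp}M_i\hat u_1\|^2+\|\Pi_{p^\perp}M_i\hat u_2\|^2=\mu_i^2\big(2-\|\Pi_{v_i^\perp}p\|^2\big)\ge\mu_i^2$; for that $\hat u$ choose $\hat v\in T_pS^2$ with $\langle M_i\hat u,\hat v\rangle=-\|\Pi_{p^\perp}M_i\hat u\|$. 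Along the geodesic through $(e,p)$ with tangent $(\cos\zeta\,\hat u_e,\sin\zeta\,\hat v)$, where $\hat u_e$ is $\hat u$ projected onto $e^\perp$ and normalised (an $O(\epsilon)$ correction), the second derivative of $f$ equals $\cos^2\zeta\,\partial_{\hat u_e}^2 f+2\cos\zeta\sin\zeta\,\partial_{\hat u_e}\partial_{\hat v}f+\sin^2\zeta\,\partial_{\hat v}^2 f$, in which $|\partial_{\hat u_e}^2 f|\le C_O$ (a constant depending only on $O$), $\partial_{\hat v}^2 f=O(\epsilon)$, and $\partial_{\hat u_e}\partial_{\hat v}f\le-3\mu_i$; choosing $\cos\zeta$ small enough in terms of $\mu_i$ and $C_O$, and then $\epsilon$ small, makes this $\le-c_i<0$, contradicting $\operatorname{Hess}_{e,p}f\ge-\delta$ as soon as $\delta<c_i$. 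The point $e=v_i$ itself falls under the same dichotomy, with $\|\Pi_{v_i^\perp}p\|$ replacing the threshold in the first case. Taking $\delta:=\min_i\{c_0/2,\ \mu_i/20,\ c_i\}$, with $\epsilon$ fixed small enough to absorb all $O(\epsilon)$ errors, proves the lemma.

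I expect the bookkeeping near the $v_i$ to be the main obstacle: one must carefully separate which quantities are $O(\epsilon)$ from the single order--one term $6M_i$ (nonzero precisely because of~\eqref{eq:octopole_assumption}), and then tilt the geodesic so that this term forces the second derivative of $f$ to be bounded away from $0$ from above; the rest of the argument merely reproduces what was already done for the lower multipoles.
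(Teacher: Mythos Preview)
Your proposal is correct and follows essentially the same route as the paper: away from the finitely many exceptional vectors of Lemma~\ref{lem:Oee0} you run the ``rotate $p$'' argument, and near each $v_i$ you show that either the gradient is large or a tilted geodesic in $S^2\times S^2$ gives a second derivative bounded above by a negative constant, using that $O(v_i,\cdot,\cdot)\neq 0$ by assumption~\eqref{eq:octopole_assumption}.

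The only noteworthy difference is the choice of the $e$--direction of the tilted geodesic near $v_i$. The paper moves $e$ \emph{radially} (along the great circle through $v_i$ and $e$), so the leading mixed second derivative comes out proportional to $\cos\phi$, where $\phi$ measures the angle between $p$ and $O(v_i,e_1^\bot,\cdot)$; this forces the dichotomy on $\phi$ (large $|\sin\phi|$ gives a big gradient, large $|\cos\phi|$ gives a big mixed term). You instead take $\hat u$ to be an eigenvector of $M_i=O(v_i,\cdot,\cdot)|_{v_i^\perp}$, and your pigeonhole estimate $\|\Pi_{p^\perp}M_i\hat u\|\ge\mu_i/\sqrt2$ for one of the two eigenvectors makes the mixed second derivative uniformly bounded away from zero for \emph{every} $p$. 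Consequently your Case~1 is actually redundant (Case~2 alone already rules out $\operatorname{Hess}f\ge-\delta$ in the whole neighbourhood), whereas the paper's dichotomy is genuinely needed for its radial choice. Two small cosmetic points: your sentence ``$f$, $\nabla f$ and the $p$--Hessian of $f$ are all $O(\epsilon)$'' should read $\nabla_p f$, as you yourself note immediately afterward that $\nabla_e f$ can be of order one; and the constant $\mu_i/20$ is a slight overstatement of $6\mu_i/100$, but this is immaterial.
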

\begin{proof}
	The proof is similar to that of part b) in Lemma \ref{lem:dipquad}. We will only explain here the modifications. From Lemma \ref{lem:Oee0} we know that there are at most three pairs of vectors
	$\pm v$ such that $O(\pm v, \pm v,\cdot)=0$. As in part b) of Lemma \ref{lem:dipquad} we will prove that if $e$ is close to one of these points and $\delta$ is small enough  then it is impossible to have 
	 $\|\nabla_{e,p}f(e,p)\| \leq \delta$ and ${\rm Hess}_{e,p}f(e,p) \geq -\delta$.
	 
	 We assume that $O(e_1,e_1,\cdot)=0$ let $e(t)$ be defined as in \eqref{eq:et} with $e_0$ replaced by $e_1$ and 
	 $p(t)=\cos(\phi+t\sin(\zeta)) v + \sin(\phi+t\sin(\zeta))O(e_1, e_1^\bot,\cdot)$, where $v \bot O(e_1,e_1^\bot,\cdot)$.
	 Note that $O(e_1,\cdot,\cdot)$ is by assumption nonzero, hence we  may assume without loss of generality that $\|O(e_1,\cdot,\cdot)\|=1$. Since moreover $O(e_1,\cdot,\cdot)$ is traceless it follows that $\|O(e_1,e_1^\bot,\cdot)\|=1$. So then $p(t) \in S^2$ and $(e(t),p(t))$ is a geodesic in $S^2 \times S^2$. Then
	 \begin{align*}
g(t):=f(e(t),p(t))=&3 \sin(2 \theta_t)\sin(\phi_t) +3 \sin^2(\theta_t)O(e_1^\bot, e_1^\bot, p(t))\\
&\qquad - 7 \sin^3(\theta_t) O(e_1^\bot,e_1^\bot,e_1^\bot)(e(t) \cdot p(t))\\
&\qquad -21 \sin^2(\theta_t) \cos(\theta_t) O(e_1,e_1^\bot,e_1^\bot)(e(t) \cdot p(t)),
	 \end{align*}
where $\theta_t=\theta+t\cos(\zeta)$ and $\phi_t=\phi+t\sin(\zeta)$. 
Then we have 
\begin{equation}
g'(0)=6 \cos(2 \theta) \cos(\zeta) \sin(\phi) + O(\epsilon),
\end{equation}
and
\begin{multline}
g''(0)=12 \cos(2 \theta) \sin(\zeta) \cos(\zeta) \cos(\phi) + 6\cos(2\theta)\ \cos(\zeta)^2 O(e_1^\bot,e_1^\bot,p) \\
-42 \cos(2 \theta) \cos(\theta) \cos^2(\zeta)  O(e_1,e_1^\bot,e_1^\bot)(e \cdot p) + O(\epsilon),
\end{multline}
The rest of the argument works like in Lemma \ref{lem:dipquad}.
\end{proof}

Now we will prove the connectedness of the sets $\{(e,p): f(e,p)<-\delta\}$.

\begin{lemma}
	Suppose that $O(\cdot,\cdot,\cdot) \neq 0$. Then there exists $\delta_0>0$ such that for all $\delta<\delta_0$ the set $\{(e,p): f(e,p)<-\delta\}$ is nonempty and pathwise connected.
\end{lemma}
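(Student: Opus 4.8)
The plan is to follow the same two-step scheme used for the dipole-dipole and dipole-quadrupole terms: first reduce the variable $p$ to its minimizing direction, then prove connectedness in the $e$ variable. Write
\[
f(e,p)=\langle W_e,p\rangle,\qquad W_e:=3\,O(e,e,\cdot)-7\,O(e,e,e)\,e\in\R^3,
\]
so that $\min_{p\in S^2}f(e,p)=-\|W_e\|$, attained at $p=-W_e/\|W_e\|$ whenever $W_e\neq0$. For fixed $e$ the slice $\{p\in S^2:\langle W_e,p\rangle<-\delta\}$ is a spherical cap of angular radius $\arccos(\delta/\|W_e\|)<\pi/2$ centered at $-W_e/\|W_e\|$; it is nonempty precisely when $\|W_e\|>\delta$, and when nonempty it is geodesically convex, in particular contractible. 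It follows that $\{(e,p):f(e,p)<-\delta\}$ is nonempty, resp. path-connected, if and only if the base set
\[
\mathcal C_\delta:=\{e\in S^2:\|W_e\|>\delta\}
\]
is nonempty, resp. path-connected: given two points of $\{f<-\delta\}$ one first slides each inside its convex cap to the section $p=-W_e/\|W_e\|$, then transports along a path in $\mathcal C_\delta$ using the continuous section $e\mapsto -W_e/\|W_e\|$; conversely any path in $\{f<-\delta\}$ projects to a path in $\mathcal C_\delta$.

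It therefore remains to analyse $\mathcal C_\delta$, and the key point is that the zero locus of $W$ is finite. Let $Z:=\{e\in S^2:W_e=0\}$. Taking the scalar product of $W_e=0$ with $e$ gives $-4\,O(e,e,e)=0$, hence $O(e,e,e)=0$ and then $O(e,e,\cdot)=0$; conversely $O(e,e,\cdot)=0$ forces $W_e=0$. Thus $Z=\{e\in S^2:O(e,e,\cdot)=0\}$, and since $O\neq0$, Lemma~\ref{lem:Oee0} shows that $Z$ is contained in three antipodal pairs of points; in particular $Z$ is finite, so $S^2\setminus Z$ is path-connected. Note also that $W\not\equiv0$ on $S^2$, since $W\equiv0$ would give $O(e,e,\cdot)\equiv0$ and then $O\equiv0$ by polarization.

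The main step is to produce a single $\delta_0>0$ such that $\mathcal C_\delta$ is path-connected for every $0<\delta<\delta_0$. Set $g(e):=\|W_e\|^2$, the restriction to the algebraic set $S^2$ of a polynomial; it is semialgebraic, with $g^{-1}(0)=Z$ and $g\not\equiv0$. A polynomial on a compact algebraic set has only finitely many critical values (a standard consequence of the semialgebraic structure), so let $\delta_0>0$ be the smallest positive one. For $\delta'<\delta''$ in $(0,\delta_0)$ the interval $[\delta',\delta'']$ contains no critical value of $g$, so by the standard deformation argument along the gradient flow of $g$ on the closed manifold $S^2$ the inclusion $\{g\ge\delta''\}\hookrightarrow\{g\ge\delta'\}$ is a deformation retract, in particular a bijection on path-components. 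The sets $\{g\ge\delta'\}$, $\delta'\in(0,\delta_0)$, form an increasing chain whose union is $\{g>0\}=S^2\setminus Z$; since that union is path-connected and the number of path-components is constant along the chain, this common number equals $1$. Hence $\{g\ge\delta'\}$ is path-connected for every $\delta'\in(0,\delta_0)$, and therefore so is the open set $\mathcal C_\delta=\bigcup_{\delta'\in(\delta,\delta_0)}\{g\ge\delta'\}$ for each $\delta\in(0,\delta_0)$. It is moreover nonempty because $\max_{S^2}g$ is a positive critical value, hence $\ge\delta_0>\delta$. Combined with the first paragraph this proves the lemma.

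I expect the only genuine obstacle to be the uniformity in $\delta$, namely excluding that for arbitrarily small $\delta$ the superlevel set $\mathcal C_\delta$ splits off a spurious component near $Z$; the soft argument above bypasses this by invoking only the finiteness of $Z$ and of the critical values of $g$. A more hands-on alternative would imitate the construction in the proof of Lemma~\ref{lem:dipquad}(a): after reducing to the section $p=-W_e/\|W_e\|$, push $e$ radially away from a zero $v\in Z$ of $W$ and use that a nonzero traceless symmetric matrix has kernel of dimension at most one to control the growth of $\|W_e\|$ near $v$ and to see that $\{\|W_e\|\le\delta\}$ stays contractible there; but carrying this out uniformly over all (up to six) zeros simultaneously is notationally heavier, which is why I would present the topological argument.
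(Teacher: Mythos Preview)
Your reduction to the base $\mathcal C_\delta=\{e\in S^2:\|W_e\|>\delta\}$ is exactly the same first step as in the paper, and your identification $Z=\{e:W_e=0\}=\{e:O(e,e,\cdot)=0\}$ together with the appeal to Lemma~\ref{lem:Oee0} is correct. From there, however, the two arguments diverge.

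The paper proceeds by explicit local analysis near each of the (at most six) zeros $v_i\in Z$: writing $e=R_{\theta,\phi}$ in spherical coordinates around $v_i$, it computes the leading term of $W_e$ as $\theta\to0$ and shows it is nonvanishing for every $\phi$ (distinguishing the cases $O(v_i,\cdot,\cdot)\neq0$ and $O(v_i,\cdot,\cdot)=0$). This yields, for each $v_i$, a small punctured neighborhood on which $\|W_e\|$ is monotone increasing along the radial direction $\theta$, so one can escape to the complement of $\bigcup_iU_i$ along a path on which $\|W_e\|$ never decreases; one then sets $\delta_0=\min_{S^2\setminus\bigcup U_i}\|W_e\|$.

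Your route is more abstract: you invoke that $g=\|W_e\|^2$ is polynomial on $S^2$, hence has finitely many critical values, take $\delta_0$ to be the smallest positive one, and use the Morse deformation lemma on the interval $(0,\delta_0)$ to conclude that all the superlevel sets $\{g\ge\delta'\}$ are homotopy equivalent and hence share the number of path-components with their union $S^2\setminus Z$. This is a valid argument; the finiteness of critical values follows because $g$ is locally constant on its (semialgebraic) critical set, which has finitely many components. One cosmetic slip: you defined $\mathcal C_\delta$ as a superlevel set of $\|W_e\|$ but then treat $\delta$ as a level of $g=\|W_e\|^2$; this is harmless (replace $\delta_0$ by $\sqrt{\delta_0}$ at the end) but worth cleaning up.

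What each approach buys: the paper's computation is self-contained and elementary---no Morse theory, no semialgebraic facts---and produces explicit monotone paths, matching the constructive style of the surrounding subsections. Your argument is shorter, avoids the two-case computation entirely, and would apply verbatim to any nonzero polynomial $W_e$ with finite zero set. As you note in your last paragraph, the paper's method is precisely the ``hands-on alternative'' you describe; you may want to flag that you chose the soft route deliberately, and perhaps state the finiteness-of-critical-values claim as a separate remark with a one-line justification rather than a parenthetical.
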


\begin{proof}
	Like in the proof of part (a) of Lemma \ref{lem:dipquad} we may  assume that 
	$$p=-\frac{3 O(e,e,\cdot)-7 O(e,e,e) e }{\|3 O(e,e,\cdot)-7 O(e,e,e) e\|}.$$ 
	We note that this $p$ is well defined on the paths that we will construct. Indeed, from Lemma \ref{lem:Oee0} if the paths avoid the values $e=\pm v_i$ then  $O(e,e,\cdot) \neq 0$ and thus $3 O(e,e,\cdot)-7 O(e,e,e) e \neq 0$, as the inner product of the left hand side with $e$ is nonzero. Thus we will have that
	\begin{equation}
	-f(e,p)=g(e):=\|3 O(e,e,\cdot)-7 O(e,e,e) e\|.
	\end{equation}
	It is therefore enough to show that there exists $\delta_0>0$ such that for all $\delta<\delta_0$ the set 
	$\{e \in S^2: g(e)>\delta\}$ is nonempty and connected. To do this we proceed as follows. In view of Lemma \ref{lem:Oee0} there are at most six points on $S^2$ on which $g$ vanishes. For each such $v_i$ we will show that there exists a neighborhood $U_i \subset S^2$ of $v_i$ with the following property: if $v \in U_i$ but $v \neq v_i$ then there exists a path along which g is increasing until we are out of $U_i$. We will then choose $\delta_0=\min\{g(e): e \in S^2/(\cup U_i) \}$. As we may take the neighborhoods $U_i$ as small as we want, we may also assume that $S^2/(\cup U_i)$ is pathwise connected. Thus if we have two points $e_1,e_2$ for which $g(e_1), g(e_2) \geq \delta$ then we can translate them along an increasing path for $g$ to $S^2/(\cup U_i)$. Since $g \geq \delta_0 > \delta$ on $S^2/(\cup U_i)$ and $S^2/(\cup U_i)$ is pathwise connected, if we construct the neighborhoods $U_i$ the lemma is proven. 
	
	We now consider a point at which $g$ vanishes. After a change of basis we may assume without loss of generality that this point is $e_1$. For any angles $\phi, \theta$ we define $R_{\theta,\phi}= \cos(\theta) e_1 + \sin(\theta) \cos(\phi) e_2 + \sin(\theta) \sin(\phi) e_3$.
	
	\smallskip
	
	\noindent\textbf{Case 1:} Assume that $O(e_1,\cdot,\cdot) \neq 0$.
	 We have in light of the property $O(e_1,e_1,\cdot)=0$ that
	\begin{equation}\label{eq:grthph}
	3O(R_{\theta,\phi},R_{\theta,\phi},\cdot)-7O(R_{\theta,\phi},R_{\theta,\phi},R_{\theta,\phi}) R_{\theta,\phi}=3\sin(2\theta) w + O(\sin^2(\theta)),
	\end{equation} 
 where $w= O(e_1,cos(\phi)e_2 +\sin(\phi) e_3,\cdot) \neq 0$. Indeed, by assumption  the operator $O(e_1,\cdot,\cdot)$ is not zero and  since  it is traceless $O(e_1,\cdot,\cdot)$ it cannot have two linearly independent eigenvectors of eigenvalue zero. But then since $O(e_1,e_1,\cdot)=0$ we find that $w \neq 0$ for any $\phi$. Therefore $\|w\|$ is as a function of $\phi$ bounded away from zero. Thus, from \eqref{eq:grthph} it follows that we can find $\theta_0>0$ such that for all $\theta \in [0, \theta_0)$ the function $g(R_{\theta,\phi}) $ is an increasing function of $\theta$ for any fixed $\phi$. Then the neighborhood we were looking for is just the set $U=\{R_{\theta,\phi}: \theta \in [0,\theta_0) \}$. This completes the proof of the lemma in case 1.

 \smallskip
 
	\noindent\textbf{Case 2:} Assume that $O(e_1,\cdot,\cdot) \equiv 0$. Then we have that
		\begin{equation}
		3O(R_{\theta,\phi},R_{\theta,\phi},\cdot)-7O(R_{\theta,\phi},R_{\theta,\phi},R_{\theta,\phi}) R_{\theta,\phi}=3 \sin^2(\theta) w +O(\sin^3(\theta)),
		\end{equation} 
where $w = O(\cos(\phi)e_2 +\sin(\phi) e_3,\cos(\phi)e_2 +\sin(\phi) e_3,\cdot).$ It suffices then to prove that $w \neq 0$ for all $\phi$ and we can repeat the argument of the proof of case 1. Assume that there is a $\phi$ such that $w=0$. Then without loss of generality we can assume that $\phi=0$. Thus $O(e_2,e_2,\cdot)=0$ and since $O(e_1,e_1,\cdot)=0$ it follows that $O(e_3,e_3,\cdot)=0$. Using that
$O(e_j,e_j,\cdot)=0$, $j=2,3$ and $O(e_1,\cdot,\cdot)=0$ we find that $O(e_2,e_3,\cdot)=0$ and thus $O(\cdot,\cdot,\cdot)=0$ which is a contradiction. Therefore $w \neq 0$ for all $\phi$ and the argument of Case 1 can be repeated to conclude the proof of the lemma.
\end{proof}

\subsection{Quadrupole-quadrupole term}

The situation in the quadrupole-quadrupole term is more complicated than the dipole-multipole terms. We will therefore state and prove some auxiliary lemmas which will be used later to prove Propositions \ref{prop:localmin} and \ref{prop:connected}.

The quadrupole moment of a molecule, is a symmetric matrix $Q$ with vanishing trace. 
The quadrupole-quadrupole term reads (up to a multiplicative positive constant)
\begin{align}
\cF(Q_1, Q_2)&=35 \langle e_1, Q_1 e_1 \rangle  \langle e_1, Q_2 e_1 \rangle- 20  \langle e_1, Q_1 Q_2 e_1 \rangle+ 2 \tr(Q_1 Q_2)\nn\\
&=\tr\big(L(Q_1)Q_2\big),\label{def:cF}
\end{align}
where $p=|e_1\rangle\langle e_1|$ and
\begin{equation*}
L(Q_1)=35pQ_1p-10pQ_1-10Q_1p+2Q_1.
\end{equation*}
In our problem the matrices $Q_1$ and $Q_2$ can only be rotated, which means that 
$$Q_1\in W_1=\{U^T Q_1^0 U,\ U \in SO(3)\},\qquad Q_2\in W_2=\{U^T Q_2^0 U,\ U \in SO(3)\}$$
with fixed symmetric real matrices $Q^0_1$ and $Q_2^0$, such that 
$$\tr(Q_1^0)=\tr(Q_2^0)=0.$$

Before proving the propositions, we will first state and prove some Lemmas that will be useful for the proof. 

\begin{lemma}\label{lem:immerspur0}
	Let $A,B \in \BR^{3 \times 3} $ be two symmetric matrices with $\tr(B)=0$ but $B \neq 0$. If
	\begin{equation}\label{immerspur0}
	\tr(A U^T B U)=0, \quad  \forall U \in SO(3),
	\end{equation}
	then there exists
	$\lambda \in \R$ so that $A=\lambda I$.
\end{lemma}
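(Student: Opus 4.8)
The plan is to turn the hypothesis into an orthogonality relation in the space $\mathrm{Sym}(3)$ of real symmetric $3\times 3$ matrices, endowed with the Frobenius inner product $\langle M,N\rangle:=\tr(MN)$, and to exploit the $SO(3)$-module structure. Conjugation $M\mapsto U^{T}MU$ defines an orthogonal representation of $SO(3)$ on $\mathrm{Sym}(3)$, and this representation splits orthogonally as $\mathbb{R}I\oplus\mathrm{Sym}_{0}(3)$, where $\mathrm{Sym}_{0}(3)$ is the five-dimensional subspace of traceless symmetric matrices. The one structural fact I would use is that $\mathrm{Sym}_{0}(3)$ is an \emph{irreducible} real representation of $SO(3)$ (it is the spin-$2$ representation).

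With this in hand the argument is short. Set $W:=\mathrm{span}\{U^{T}BU:U\in SO(3)\}$. Since $\tr(U^{T}BU)=\tr B=0$ we have $W\subseteq\mathrm{Sym}_{0}(3)$; since conjugation is linear, $W$ is $SO(3)$-invariant; and $W\neq\{0\}$ because $B\in W$ and $B\neq 0$. Irreducibility then forces $W=\mathrm{Sym}_{0}(3)$. Now \eqref{immerspur0} says exactly that $\langle A,U^{T}BU\rangle=0$ for every $U$, hence $A\perp W=\mathrm{Sym}_{0}(3)$. Decomposing $A=A_{0}+\tfrac{1}{3}(\tr A)\,I$ with $A_{0}\in\mathrm{Sym}_{0}(3)$, and using $I\perp\mathrm{Sym}_{0}(3)$, we get $\tr(A_{0}^{2})=\langle A,A_{0}\rangle=0$; since $A_{0}$ is real symmetric, $\tr(A_{0}^{2})=\|A_{0}\|_{F}^{2}$, so $A_{0}=0$ and $A=\lambda I$ with $\lambda=\tfrac{1}{3}\tr A$.

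If one wishes to avoid invoking irreducibility of $\mathrm{Sym}_{0}(3)$, the following elementary route works. Conjugating $B$ (which does not affect the hypothesis) we may assume $B=\mathrm{diag}(b_{1},b_{2},b_{3})$ with $b_{1}+b_{2}+b_{3}=0$ and the $b_{i}$ not all zero; fix an index $i_{0}$ with $b_{i_{0}}\neq 0$. Writing $q(v):=\langle Av,v\rangle$, one checks that \eqref{immerspur0} is equivalent to $\sum_{i}b_{i}\,q(v_{i})=0$ for every positively oriented orthonormal basis $(v_{1},v_{2},v_{3})$ of $\R^{3}$, because $U^{T}BU=\sum_{i}b_{i}v_{i}v_{i}^{T}$ when the $v_{i}$ are the columns of $U^{T}$. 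Given an arbitrary unit vector $w$, choose such a basis with $v_{i_{0}}=w$ and rotate the remaining two vectors in the plane $w^{\perp}$ through an angle $\theta$; the resulting identity in $\theta$ has vanishing $\cos 2\theta$- and $\sin 2\theta$-coefficients, and its constant term reads $b_{i_{0}}\big(\tfrac{3}{2}q(w)-\tfrac{1}{2}\tr A\big)=0$, where one uses that $\sum_{i}q(v_{i})=\tr A$ independently of the basis. Since $b_{i_{0}}\neq 0$ this gives $q(w)=\tfrac{1}{3}\tr A$ for every unit vector $w$, and a symmetric matrix whose quadratic form is constant on the unit sphere is a multiple of the identity, so again $A=\lambda I$. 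The only point requiring any care is this orientation/bookkeeping in the elementary route (or, in the representation-theoretic route, the appeal to irreducibility); neither is a genuine obstacle, and I would probably present the representation-theoretic proof as the main one, with the elementary computation available as a self-contained alternative.
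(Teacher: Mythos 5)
Your proof is correct, and it takes a genuinely different route from the paper's. The paper differentiates the hypothesis along the Lie algebra $so(3)$: for every antisymmetric $w$ one gets $\tr(A[w,U^{T}BU])=0$, hence $\tr([U^{T}BU,A]\,w)=0$ for all $w$, and since the commutator of two symmetric matrices is antisymmetric this forces $[U^{T}BU,A]=0$ for every $U$. It then argues spectrally: if $A$ were not a multiple of $I$ it would have a one-dimensional eigenspace $\R v$, the commutation would make $Uv$ an eigenvector of $B$ for every $U$, hence $B$ would be a multiple of $I$ and, being traceless, zero — a contradiction. You instead treat the hypothesis as an orthogonality relation $A\perp W$ for the Frobenius pairing, observe that $W=\mathrm{span}\{U^{T}BU\}$ is a nonzero invariant subspace of $\mathrm{Sym}_{0}(3)$, and appeal to the irreducibility of the spin-$2$ representation to conclude $W=\mathrm{Sym}_{0}(3)$, so that $A_{0}=0$. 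The paper's route is more self-contained (no representation theory is invoked, only the nondegeneracy of the trace pairing on antisymmetric matrices and elementary spectral theory), while yours is conceptually more transparent once one grants irreducibility, and your second, purely computational variant is a clean way to avoid that hypothesis entirely. All three arguments are sound; none is dramatically shorter than the others.
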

\begin{proof}
	Since the Lie algebra of $SO(3)$ is the set of antisymmetric matrices $so(3)$, from \eqref{immerspur0} we obtain that
	$$\tr(A [w, U^T BU])=0, \quad \forall U \in SO(3), \ \forall w \in so(3)$$
	or that
	$$\tr([U^T BU,A] w)=0, \quad \forall U \in SO(3),\ \forall w \in so(3).$$
Since $[U^T BU,A]$ is an antisymmetric matrix, it follows that  
	\begin{equation}\label{UTBU}
	[U^T BU,A]=0, \quad \forall U \in SO(3).
	\end{equation}
	Suppose that $A$ is not multiple of the identity. Then $A$ has an eigenvalue of multiplicity 1 and let $\vec{v}$ be a corresponding eigenvector. From $\eqref{UTBU}$ it follows that $\vec{v}$ is an eigenvector of $U^T BU$
	for all $U \in SO(3)$ or equivalently $U\vec{v}$ is an eigenvector of $B$ for all $U \in SO(3)$. This implies that $B$ is a multiple of the identity contradicting the fact that $B$ is nonzero and traceless.
	Thus $A$ must be a multiple of the identity.
\end{proof}

Later we will use Lemma \ref{lem:immerspur0} for the functional $\cF(Q_1,Q_2)$ with $B=Q_2$ and $A=35 p Q_1 p-10 p Q_1-10 Q_1 p +2 Q_1$. To this end it is useful to prove that $35 p Q_1 p-10 p Q_1-10 Q_1 p+2 Q_1$ is never a multiple of the identity.
\begin{lemma}\label{lem:niemalslambdai}
	Let $A=35 p Q_1 p-10 p Q_1-10 Q_1 p+2 Q_1$, where $Q_1=Q_1^T$ and $\tr(Q_1)=0$ and assume that there exists $\lambda \in \R$ so that $A=\lambda I$. Then $Q_1=0$. 
\end{lemma}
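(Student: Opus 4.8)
The plan is to exploit that $p=|e_1\rangle\langle e_1|$ is a rank-one projection and work in the orthogonal splitting $\R^3=\R e_1\oplus e_1^\perp$. Writing $Q_1$ in the corresponding block form
$$Q_1=\begin{pmatrix} a & b^T\\ b & C\end{pmatrix},$$
where $a=\langle e_1,Q_1e_1\rangle\in\R$, $b\in\R^2$ and $C=C^T\in\R^{2\times2}$, the hypothesis $\tr Q_1=0$ becomes $a+\tr C=0$. Since $p=\operatorname{diag}(1,0,0)$ in this basis, one has $pQ_1p=\operatorname{diag}(a,0,0)$, while $pQ_1$ keeps only the first row of $Q_1$ and $Q_1p$ only its first column. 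A direct computation of the four terms then gives
$$A=35\,pQ_1p-10\,pQ_1-10\,Q_1p+2Q_1=\begin{pmatrix} 17a & -8b^T\\ -8b & 2C\end{pmatrix}.$$

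\textbf{Key steps.} First I would record the block formula for $A$ above. Then I would impose $A=\lambda I$ and read off the constraints block by block: the off-diagonal blocks force $b=0$; the lower-right block forces $C=\tfrac{\lambda}{2}I_2$, hence $\tr C=\lambda$; and the $(e_1,e_1)$-entry forces $17a=\lambda$. Combining $\tr C=\lambda$ with the trace constraint $a+\tr C=0$ yields $\lambda=-a$, so $17a=-a$, i.e. $a=0$. Consequently $\lambda=0$, and then $C=0$ and $b=0$, so that $Q_1=0$, which is the claim.

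\textbf{Main obstacle.} There is essentially no difficulty beyond careful bookkeeping of the four terms in the block decomposition; the content of the lemma is simply that the coefficient $17$ appearing in the $(e_1,e_1)$-slot of $A$, together with the coefficient $1$ coming from $2C$ in the relation $\tr C=\lambda$, makes the linear system $17a=\lambda$, $a+\lambda=0$ admit only the trivial solution. Had these numerical coefficients conspired so that $17a=\lambda=-a$ had a nonzero solution, a nonzero $Q_1$ could have survived, so the only thing worth double-checking is the arithmetic of the coefficients $35,10,10,2$ producing $17$ in the scalar slot and $-8$ in the mixed slots.
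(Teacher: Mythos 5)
Your proof is correct, and it is essentially the same argument as the paper's: both work in the orthonormal basis adapted to $p=|e_1\rangle\langle e_1|$, extract the constraint $17a=\lambda$ on the $(e_1,e_1)$-slot, force the remaining diagonal entries of $Q_1$ to equal $\lambda/2$, and then invoke $\tr Q_1=0$ to conclude $\lambda=0$ and $Q_1=0$. The paper phrases this as a sequential deduction (first showing $e_1$ is an eigenvector of $Q_1$ with eigenvalue $\lambda/17$, then simplifying $A$), whereas you present it as a single block-matrix computation, which is a slightly more systematic way to package the identical bookkeeping.
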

\begin{proof}
	From $A e_1=\lambda e_1$ it easily follows that $e_1$ is an eigenvector of $Q_1$ and that the eigenvalue is $\frac{\lambda}{17}$. As a consequence $A=\frac{15}{17}\lambda p+ 2 Q_1= \lambda I$.
	It follows immediately that  $Q_1 e_j = \frac{\lambda}{2} e_j$, for $j=2,3$ so that $e_2, e_3$ are also eigenvectors of $Q_1$. Using that $\tr(Q_1)=0$ we obtain that $\lambda=0$ and therefore $Q_1=0$. 
\end{proof}

\begin{lemma}\label{lem:mittelwert0}
	For all $Q_1 \in W_1, Q_2 \in W_2$ we have that
	\begin{equation}
	\int_{SO(3)}\cF(Q_1, U^T Q_2 U) dU=0,
	\end{equation}
	where the integral is with respect to the Haar measure.
\end{lemma}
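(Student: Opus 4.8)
The plan is to pull the averaging inside the trace and then identify the resulting $SO(3)$-averaged matrix. Using the formula $\cF(Q_1,Q_2)=\tr(L(Q_1)Q_2)$ from~\eqref{def:cF} and linearity of the trace, one has
$$\int_{SO(3)}\cF(Q_1,U^TQ_2U)\,dU=\tr\!\Big(L(Q_1)\,M\Big),\qquad M:=\int_{SO(3)}U^TQ_2U\,dU ,$$
so it suffices to show that $M=0$ (it would even be enough to know $\tr(L(Q_1)M)=0$, but $M=0$ is just as easy).

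First I would show that $M$ is a scalar matrix. It is clearly symmetric, and the change of variables $U\mapsto UV$ (legitimate by invariance of the Haar measure) gives $V^{T}MV=M$ for every $V\in SO(3)$, i.e. $M$ commutes with all rotations. Since the defining representation of $SO(3)$ on $\R^3$ is irreducible, Schur's lemma forces $M=\lambda I$ for some $\lambda\in\R$. (If one wishes to avoid Schur's lemma, the same conclusion follows by noting that $Q\mapsto\int_{SO(3)}U^{T}QU\,dU$ is the orthogonal projection of the space of symmetric matrices onto its $SO(3)$-invariant subspace, which is exactly $\R I$, because the traceless symmetric matrices form a $5$-dimensional irreducible, hence fixed-point-free, representation.) Taking traces then gives $3\lambda=\tr M=\int_{SO(3)}\tr(U^{T}Q_2U)\,dU=\tr Q_2=0$, whence $M=0$ and the stated integral vanishes.

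I do not expect any genuine obstacle here: the argument uses only the invariance of the Haar measure and the irreducibility of the standard representation of $SO(3)$, and it actually proves the stronger statement that $\int_{SO(3)}\cF(Q_1,U^{T}Q_2U)\,dU=0$ for \emph{every} symmetric $Q_1$ and \emph{every} traceless symmetric $Q_2$, not just for elements of the orbits $W_1,W_2$. The only point that deserves a line of justification is the change of variables in the Haar integral used to deduce that $M$ commutes with $SO(3)$.
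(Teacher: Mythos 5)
Your proof is correct and follows essentially the same route as the paper: both use the linearity of $\cF$ in its second argument to reduce to showing that $M:=\int_{SO(3)}U^TQ_2U\,dU$ vanishes, then establish via the change of variables $U\mapsto UV$ that $M$ commutes with all of $SO(3)$, hence is a scalar matrix, and finally take traces to conclude $M=0$ from $\tr Q_2=0$. The only cosmetic difference is that the paper phrases the intermediate step as the general identity $\int_{SO(3)}U^TAU\,dU=\tfrac{\tr A}{3}I$ for any symmetric $A$, which is exactly what you derive.
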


\begin{proof}
For any symmetric matrix, we have
	$$\int_{SO(3)}U^T A U dU=\frac{\tr(A)}{3}.$$
Indeed, after changing variables $U'=UV$,  we see that the left side commutes with all $V\in SO(3)$, hence it must be a multiple of the identity. Taking the trace gives the claimed relation. Thus we obtain 
	$$\int_{SO(3)}\cF( Q_1 , U^T Q_2 U) dU=\cF\left(Q_1, \int_{SO(3)}U^T Q_2 U dU\right)=0$$
since $\tr(Q_2)=0$. 
\end{proof}

\begin{lemma}\label{Satz:nuetzlich}
\noindent$(i)$ There exists $c_0>0$ so that 
	$$g(Q_1):=\min_{Q_2 \in W_2} \cF(Q_1, Q_2)<-c_0$$ 
	for all $Q_1 \in W_1$, and 
	$$h(Q_1):=\max_{Q_2 \in W_2} \cF(Q_1, Q_2)>c_0$$ 
	for all $Q_1 \in W_1$.
	
\medskip
	
\noindent $(ii)$  For any $Q_1$ in $W_1$ and any critical point $Q_2$ of $\cF_{Q_1}:=\cF(Q_1,\cdot)|_{W_2}$ on which  $\cF_{Q_1}(Q_2)$ is nonnegative or not a minimum, ${\rm Hess}_{Q_2} \cF_{Q_1}(Q_2)$ has at least one negative eigenvalue.

\medskip
	
\noindent $(iii)$ There exists  $\epsilon>0$ such that for all $Q_1 \in W_1$ and $Q_2 \in W_2$, if $\|\nabla_{Q_2} \cF_{Q_1}(Q_2)\| \leq \epsilon$ and ${\rm Hess}_{Q_2} \cF_{Q_1}(Q_2) \geq -\epsilon$ then $\cF_{Q_1}(Q_2)<-\epsilon$.

\medskip

\noindent $(iv)$ Let $c_0$ be as in part (i). For all $\delta<c_0$ if $\cF_{Q_1}(Q_2)<-\delta$ then there exists a continuous path $\gamma: [0,1] \rightarrow W_2$ such that $\gamma(0)=Q_2$, $\cF(Q_1,\gamma(t))<-\delta$ for all $t \in [0,1]$ and  $\cF(Q_1,\gamma(1))=g(Q_1)$.
In other words $Q_2$ can be connected to a minimizer of $\cF_{Q_1}$ with a path along which $\cF$ stays smaller than $-\delta$.
\end{lemma}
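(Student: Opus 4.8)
The plan is to prove the four assertions in the order (i)$\to$(ii)$\to$(iii)$\to$(iv), with (ii) being the decisive step. Throughout I write $A:=L(Q_1)$ and recall that the quadrupole being the first non-vanishing multipole means $Q_1^0\neq0$ and $Q_2^0\neq0$.

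\emph{Part (i).} First I would show that for every fixed $Q_1\in W_1$ the function $\cF_{Q_1}=\cF(Q_1,\cdot)$ is not identically zero on $W_2$. Indeed, if $\tr\bigl(L(Q_1)U^TQ_2^0U\bigr)=0$ for all $U\in SO(3)$, then Lemma~\ref{lem:immerspur0} applied with the symmetric matrices $L(Q_1)$ and $Q_2^0\neq0$ (traceless) forces $L(Q_1)=\lambda I$ for some $\lambda\in\R$, and then Lemma~\ref{lem:niemalslambdai} gives $Q_1=0$, contradicting $Q_1\in W_1$. Since the average vanishes, $\int_{SO(3)}\cF(Q_1,U^TQ_2U)\,dU=0$ by Lemma~\ref{lem:mittelwert0}, it follows that $g(Q_1)=\min_{W_2}\cF_{Q_1}<0<\max_{W_2}\cF_{Q_1}=h(Q_1)$ for every $Q_1$. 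The functions $g,h$ are continuous on the compact set $W_1$ (being the min/max over the compact $W_2$ of the polynomial $\cF$), and they are respectively strictly negative and strictly positive there, hence bounded away from $0$; this produces the uniform constant $c_0>0$.

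\emph{Part (ii).} A point $Q_2\in W_2$ is critical for $\cF_{Q_1}$ iff the antisymmetric matrix $[A,Q_2]$ vanishes, i.e.\ iff $A$ and $Q_2$ commute; at such a point I choose an orthonormal basis in which $A=\mathrm{diag}(a_1,a_2,a_3)$ and $Q_2=\mathrm{diag}(q_1,q_2,q_3)$, where $q_1,q_2,q_3$ are the fixed eigenvalues of $Q_2^0$. Differentiating twice the curves $t\mapsto e^{-t\omega}Q_2e^{t\omega}$ ($\omega$ antisymmetric), which at a critical point computes the Hessian, I would obtain that ${\rm Hess}_{Q_2}\cF_{Q_1}(Q_2)$ is block-diagonal in the variables $\omega_{ik}$ and equals, on the pair $(i,k)$ with $q_i\neq q_k$ (these span the tangent space of $W_2$), the scalar $-2(a_i-a_k)(q_i-q_k)$. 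Hence ${\rm Hess}_{Q_2}\cF_{Q_1}(Q_2)\geq0$ iff $(a_i-a_k)(q_i-q_k)\leq0$ whenever $q_i\neq q_k$, i.e.\ iff the eigenvalues of $A$ are paired with those of $Q_2^0$ in opposite order. By the rearrangement inequality this pairing realizes $\min_\pi\sum_i a_{\pi(i)}q_i$, and since the minimum of the linear functional $Q_2'\mapsto\tr(AQ_2')$ over the compact orbit $W_2$ is attained where $A$ and $Q_2'$ commute (hence at one of these pairings), we deduce $\cF_{Q_1}(Q_2)=\sum_i a_iq_i=\min_{W_2}\cF_{Q_1}=g(Q_1)<0$ by Part~(i), and $Q_2$ is in particular a minimizer. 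Contrapositively, any critical point of $\cF_{Q_1}$ on which $\cF_{Q_1}$ is nonnegative, or which is not a minimizer, must have a negative Hessian eigenvalue; configurations with repeated eigenvalues of $A$ only add null directions and do not change this.

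\emph{Parts (iii) and (iv).} For (iii) I argue by contradiction and compactness: if it failed there would be $\epsilon_n\to0$ and $(Q_1^{(n)},Q_2^{(n)})\in W_1\times W_2$ with $\|\nabla_{Q_2}\cF_{Q_1^{(n)}}(Q_2^{(n)})\|\leq\epsilon_n$, ${\rm Hess}_{Q_2}\cF_{Q_1^{(n)}}(Q_2^{(n)})\geq-\epsilon_n$ and $\cF_{Q_1^{(n)}}(Q_2^{(n)})\geq-\epsilon_n$; passing to a subsequence converging to $(Q_1^*,Q_2^*)$ and using smoothness of all quantities in $(Q_1,Q_2)$ we get a critical point of $\cF_{Q_1^*}$ with nonnegative value and nonnegative Hessian, contradicting (ii). For (iv), fix $Q_1,Q_2$ with $\cF_{Q_1}(Q_2)<-\delta$ and $\delta<c_0$: I follow the negative gradient flow of $\cF_{Q_1}$ on the compact manifold $W_2$ from $Q_2$, along which $\cF_{Q_1}$ is non-increasing and hence stays $<-\delta$. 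Since $\cF_{Q_1}$ is a polynomial, hence real-analytic, on the compact real-analytic manifold $W_2$, it has finitely many critical values and, by the {\L}ojasiewicz gradient inequality, each trajectory converges to a single critical point $\hat Q_2$. If $\hat Q_2$ is a minimizer of $\cF_{Q_1}$ we are done (its value is $g(Q_1)<-c_0<-\delta$); otherwise, by (ii), the Hessian at $\hat Q_2$ has a negative eigenvalue, so I extend the path a short distance in that descent direction, strictly lowering $\cF_{Q_1}$ below $\cF_{Q_1}(\hat Q_2)$, and resume the gradient flow, which then converges to a critical point of strictly smaller value. As there are only finitely many critical values this terminates, necessarily at a critical point with nonnegative-definite Hessian, i.e.\ a minimizer by (ii), yielding a continuous path from $Q_2$ to a minimizer of $\cF_{Q_1}$ along which $\cF(Q_1,\cdot)<-\delta$. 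The main obstacle is Part~(ii): correctly identifying the tangent space of the orbit $W_2$ at a commuting critical point, carrying out the second-variation computation to see the Hessian diagonalizes with entries $-2(a_i-a_k)(q_i-q_k)$, and recognizing via the rearrangement inequality that Hessian-positivity is equivalent to $\cF_{Q_1}$ attaining its minimum there; once this is established, (i), (iii), (iv) are the auxiliary lemmas plus routine compactness and gradient-flow arguments.
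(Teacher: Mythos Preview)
Your proposal is correct, and parts (i) and (ii) follow essentially the same line as the paper: the auxiliary Lemmas~\ref{lem:immerspur0}--\ref{lem:mittelwert0} plus compactness for (i), and simultaneous diagonalization of $L(Q_1)$ and $Q_2$ at a critical point followed by a second-variation computation for (ii). Your formulation of (ii) via the rearrangement inequality (Hessian $\geq0$ forces the anti-sorted pairing, hence the global minimum) is a slightly cleaner packaging of what the paper does by explicit $90^\circ$ rotations in coordinate planes and the formula $f''(0)=-2(A_{\sigma}-A_{\pi\sigma})$.

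The genuine differences are in (iii) and (iv). For (iii) you give a soft compactness--contradiction argument, passing to a limit critical point with nonnegative value and nonnegative Hessian and invoking (ii); the paper instead works quantitatively, using the identities $A_{123}<-c_0$ and $A_{321}+A_{213}+A_{132}=0$ to show that from any critical point one can reach, by a single $90^\circ$ rotation, a critical point with value $<-c_0/2$, which yields an explicit $\epsilon=c_0/8$. Your route is shorter; the paper's gives a constructive constant. For (iv) you rely on gradient flow, the \L ojasiewicz inequality and finiteness of critical values for a real-analytic function on a compact manifold, iterating ``flow to a critical point, then step in a negative-Hessian direction''. The paper avoids this machinery: if $Q_2$ is not critical it applies the elementary Lemma~\ref{lem:flowpseudolocmin} to reach a local pseudo-minimum (which by (ii) is a global minimizer), and if $Q_2$ is a non-minimizing critical point it uses the explicit monotone rotation paths~\eqref{eq:decpath} from (ii). Both arguments are valid; the paper's is more elementary, while yours is more generic and would transfer unchanged to higher-order multipole interactions without recomputing explicit paths.
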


\begin{proof}
	$(i)$ As there exists a constant $C>0$ so that
	\begin{equation}\label{est:Fschanke}
	|\cF(B,D)| \leq C \| B\|  \| D\|, \quad \forall B,D \in \R^{3 \times 3},
	\end{equation}
	where $\|.\|$ is the standard operator norm with respect to the Euclidean norm in $\R^3$, it follows that $g$ is a continuous function on $W_1$. Therefore, it is enough to show that $g(Q_1)<0$ for all $Q_1 \in W_1$. Let $Q_1 \in W_1$. Applying Lemma \ref{lem:immerspur0} for $B=Q_2$ and $A=35 p Q_1 p-10 p Q_1-10 Q_1 p +2 Q_1$  and Lemma \ref{lem:niemalslambdai}, we obtain that  there exists $Q_2 \in W_2$
	so that $\cF(Q_1,Q_2) \neq 0$. If $\cF(Q_1,Q_2)<0$, then $g(Q_1)<0$. If $\cF(Q_1,Q_2)>0$ then from the continuity of $\cF$ and from Lemma \ref{lem:mittelwert0}, it follows that there exists a $Q_2' \in W_2$ so that $\cF(Q_1,Q_2')<0$. Therefore, $g(Q_1)<0$ in any case. The function $h$ can be handled similarly. 
	
	\medskip
	
	$(ii)$ For fixed $Q_1$, we have $\cF_{Q_1} (Q_2)=\tr(L(Q_1)Q_2)$. By arguing  as in the proof of Equation \eqref{UTBU} it follows that
	a critical point  of $\cF_{Q_1}|_{W_2}$
	commutes with $L(Q_1)$, and as a consequence so does a minimizer $Q_{2,min} \in W_2$. Therefore,  there exists an orthonormal basis $\vec{v_1}, \vec{v_2}, \vec{v_3}$ of $\R^3$ such that  $\vec{v_1}, \vec{v_2}, \vec{v_3}$ are eigenvectors of both $L(Q_1)$ and $Q_{2,min}$.
	% has the same eigenspaces aligned with that of $L(Q_1)$ (the two matrices commute).  Therefore, it is only a question of which eigenvalue goes with which.% By adding a
	%constant we can always assume that $L(Q_1),Q_{2,min} >0$. This lies on the fact that $Q_{2,min} $ is traceless and therefore, the functional is only shifted by a constant.
	Let $a_1, a_2, a_3$ be the eigenvalues of $L(Q_1)$ so that $L(Q_1) \vec{v_i}= a_i \vec{v_i}$, $i=1,2,3$ and let $b_1,b_2,b_3$ be the eigenvalues of $Q_2^0$ and therefore of $Q_2$.  
	Without loss of generality we assume that $a_1 \leq a_2 \leq a_3$ and $b_1 \geq b_2 \geq b_3$.
	Then 
	$$\tr(L(Q_1) Q_{2,min})=a_1 b_1+ a_2 b_2 + a_3 b_3,$$ 
	i.e. the trace is minimized if $Q_2 \vec{v_i}= b_i \vec{v_i}$. For all other critical points we have $Q_2 \vec{v_i}= b_{\sigma(i)} \vec{v_i}$
	$i=1,2,3$, where $\sigma \in \gS_3$ is a permutation.
	Then unless $Q_2$ is another minimizer, 
	it is not a local minimum. To see this
	we note that 
	\begin{equation}
	\cF_{Q_1}(Q_2)=A_{\sigma(1) \sigma(2) \sigma(3)},
	\end{equation}
       where
	\begin{equation}
	A_{ijk}:=a_1 b_i + a_2 b_j + a_3 b_k.
	\end{equation}
	 If $Q_2$ is not a minimizer, then there exist
	$i,j$ with $a_i<a_j$ and $b_{\sigma(i)}< b_{\sigma(j)}$. We consider the rotation $U_\theta$ for which 
	$U_\theta(\vec{v_i})= \cos(\theta) \vec{v_i}+ \sin(\theta) \vec{v_j}$, 
	$U_\theta(\vec{v_j})= \cos(\theta) \vec{v_i}- \sin(\theta) \vec{v_j}$
	and $U_\theta$ leaves the orthogonal complement of $\text{span}\{\vec{v_i}, \vec{v_j}\}$ invariant. Let 
	$f(\theta):=\tr(L(Q_1) U_\theta Q_2 U_\theta^T)$. A computation shows that
	\begin{equation}\label{eq:decpath}
	f(\theta)-f(0)=-(A_{\sigma(1) \sigma(2) \sigma(3)}- A_{\pi\sigma(1) \pi\sigma(2) \pi\sigma(3)}) \sin^2(\theta) \leq 0,
	\end{equation}
	where $\pi$ is the permutation exchanging $\sigma(i), \sigma(j)$.
	So
	$$f''(0)=-2(A_{\sigma(1) \sigma(2) \sigma(3)}- A_{\pi\sigma(1) \pi\sigma(2) \pi\sigma(3)}) <0$$ 
	and hence the critical point has a Hessian with at least one negative eigenvalue and thus it is not a local minimum. % Moreover $f'(\theta)<0$ for all $\theta \in (0,\frac{\pi}{2})$ so that along the path $f$ is decreasing until $\theta=\frac{\pi}{2}$ at which the eigenvalues to the eigenvectors of $\vec{v_i}, \vec{v_j}$ of $Q_2$ are exchanged.
	Since, due to part (i), critical points of $\cF_{Q_1}$ on which $\cF_{Q_1} \geq 0$ are not minimizers, it follows immediately that the Hessian of $\cF_{Q_1}$ at $Q_2$ has at least one negative eigenvalue. In fact its lowest eigenvalue is given by
	\begin{multline}\label{eq:lambdaminHess}
\lambda_{min,Hess}=-2 \max\big\{(A_{\sigma(1) \sigma(2) \sigma(3)}- A_{\pi\sigma(1) \pi\sigma(2) \pi\sigma(3)}),\\
\pi\ \text{permutation of two indices}\big\}.
	\end{multline} 
	
\medskip

$(iii)$	The main difficulty of the proof arises from the fact that $\epsilon$ has to be independent of $Q_1$. We are first going to find an appropriate $\epsilon_0$ for all $Q_1, Q_2$ satisfying the additional assumption $\nabla_{Q_2} \cF_{Q_1}(Q_2)=0$, i.e. $Q_2$ is a critical point of $\cF_{Q_1}$. For this $Q_1, Q_2$ the matrices $L(Q_1), Q_2$ are simultaneously diagonalizable, as we already mentioned in the proof of part $(ii)$ of Lemma \ref{Satz:nuetzlich}. Let $a_1 \leq a_2 \leq a_3$ be the eigenvalues of $L(Q_1)$ and $b_1\geq b_2 \geq b_3$ be the eigenvalues of $Q_2$. Depending on the orientation of $Q_2$ the possible values of
$\cF_{Q_1}(Q_2)$ are $A_{123}$, $A_{132}$, $A_{213}$, $A_{231}$, $A_{321}$, $A_{312}$.
The values $A_{123}, A_{321}$ are the smallest and biggest values respectively and because of part $(i)$ of Lemma \ref{Satz:nuetzlich} we have that 
\begin{equation}\label{eq:A123}
A_{123}<-c_0,
\end{equation}
and $A_{321} > c_0$ where it is important that $c_0$ does not depend on $Q_1$. Observing in addition that $\tr(Q_2)=0$ implies that $A_{321} + A_{213} + A_{132}=0$, it follows that
\begin{equation}\label{eq:A213}
\min\{A_{213},A_{132}\}<-\frac{c_0}{2}.
\end{equation}
But for every permutation  $\sigma \in \gS_3$ there exists a permutation $\pi \in \gS_3$ exchanging two elements so that $\pi \sigma \in \{(123),(213)\}$.
The set  $\{(123),(132)\}$ has the same property.
In the setting of our problem, in light of equations \eqref{eq:A123} and \eqref{eq:A213}, this means that independently on which critical point we study, we can rotate around one axis 90 degrees so that we arrive at the critical point with value less than $-\frac{c_0}{2}$. Thus, using~\eqref{eq:lambdaminHess} it follows that the lowest eigenvalue of the Hessian or the value of the function has to be less than $-{c_0}/{4}$, i.e.
\begin{equation}\label{eq:fHesscritical}
\nabla_{Q_2}\cF(Q_1, Q_2)=0 \implies \min\{\cF(Q_1,Q_2), \lambda_{min, {\rm Hess}_{Q_2}\cF(Q_1,Q_2)}\} \leq -\frac{c_0}{4}.
\end{equation}

To remove the assumption $\nabla_{Q_2} \cF_{Q_1}(Q_2)=0$ and therefore finish the proof, observe that the functions $\cF$, $\nabla_{Q_2}\cF$ and ${\rm Hess}_{Q_2} \cF$ are uniformly continuous. Thus there is a $\delta>0$ so that if $(Q_1,Q_2)$ has distance less than $\delta$ from a critical point $(\widetilde{Q_1},\widetilde{Q_2})$, then $$|\cF(Q_1,Q_2)-\cF(\widetilde{Q_1},\widetilde{Q_2})| \leq \frac{c_0}{8}$$
and
$$| {\rm Hess}_{Q_2}\cF(Q_1,Q_2)- {\rm Hess}_{Q_2}\cF(\widetilde{Q_1},\widetilde{Q_2})| \leq \frac{c_0}{8},$$
which, together with \eqref{eq:fHesscritical}, implies that
\begin{equation}\label{eq:fHessfastcritical}
\min\{\cF(\widetilde{Q_1},\widetilde{Q_2}),  \lambda_{min, {\rm Hess}_{Q_2}\cF(\widetilde{Q_1},\widetilde{Q_2})}\} \leq -\frac{c_0}{8}.
\end{equation}
For the points away from a $\delta$ neighborhood of the critical points, the norm $\|\nabla_{Q_2}\cF\|$ is bounded away from zero uniformly in $Q_1$. This together with \eqref{eq:fHessfastcritical} concludes the proof of part $(iii)$.

\medskip
	
	$(iv)$ We consider a $\delta$  as in the statement of the lemma and an arbitrary $Q_2$ with the property that $\cF_{Q_1}(Q_2)<-\delta$.  We will show that there exists a path connecting $Q_2$ with a minimizer $Q_{2,min}$ along which the functional $\cF_{Q_1}$ remains smaller or equal than $\cF_{Q_1}(Q_2)$.
	
	We first consider the case that  $Q_2$ is a critical  point of $\cF_{Q_1}$ namely $[L(Q_1), Q_2]=0$. Then, as we showed in part $(ii)$ (unless $Q_2$ is a minimizer in which case there is nothing to prove), we can apply the argument with the rotations of the previous paragraph until we reach a minimizer along a decreasing path. The function $f$ in \eqref{eq:decpath} is decreasing in $[0,\frac{\pi}{2}]$, until the eigenvectors to the eigenvalues $b_{\sigma(i)}, b_{\sigma(j)}$ are exchanged.
	
	We now consider the case, that $Q_2$ is not a critical point of $\cF_{Q_1}$. Then,
	\begin{equation}\label{eq:notcritical}
	[L(Q_1), Q_2] \neq 0.
	\end{equation} 
	But this means that not all directional derivatives are zero so there is a direction of strict increase. As a consequence we can apply Lemma \ref{lem:flowpseudolocmin} to prove that we can reach a critical point $Q_{2,min}$ of $\cF_{Q_1}$, for which the Hessian is nonnegative. It follows from  part $(ii)$ that $Q_{2,min}$ must be a global minimizer of $\cF_{Q_1}$.
\end{proof}

Our proof does \emph{not} show that $\{Q_2 \in W_2: \cF(Q_1,Q_2)<-\delta\}$ is connected for $\delta<c_0$. The reason is that even in case of non-degeneracy of the eigenvalues $a_1, a_2, a_3, b_1, b_2, b_3$ coming up in the proof, the minimizer of $\cF_{Q_1}$ is not unique. This is because $\cF_{Q_1}$ is invariant under rotations of 180 degrees around the axes of the eigenvectors, and therefore in the case of non-degeneracy there are already four minimizers. It is not clear if it is possible to connect any two of them with a path along which $\cF$ stays negative. 
To deal with this problem we will prove that this is possible for some $Q_1$.

Since $Q_2$ is traceless we have that
\begin{equation}
\cF(Q_1,Q_2)=\tr(M(Q_1) Q_2)
\end{equation}
where the operator 
$$M(Q_1):=L(Q_1)-5\langle e_1, Q_1 e_1\rangle=L(Q_1)-\tr(L(Q_1))$$  is also traceless. The following lemma gives a sufficient condition for the set
$\{Q_2 \in W_2: \cF_{Q_1}(Q_2)<-\delta\}$ to be connected when $\delta>0$ is small enough.

\begin{lemma}\label{lem:eigsigns}
	$(i)$
	Suppose that $M(Q_1)$ has two positive eigenvalues and $Q_2^0$ has two nonnegative eigenvalues. Then there exists $\delta_0>0$ so that for all $\delta<\delta_0$ the set $\{Q_2 \in W_2: \cF_{Q_1}(Q_2)<-\delta\}$ is pathwise connected.
	
	\medskip
	
\noindent $(ii)$ The same conclusion holds if $M(Q_1)$ has two negative eigenvalues and $Q_2^0$ has two nonpositive eigenvalues.
\end{lemma}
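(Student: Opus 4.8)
The plan is to deduce the statement from Lemma~\ref{Satz:nuetzlich}(iv), which shows that for every $\delta<c_0$ any $Q_2$ with $\cF_{Q_1}(Q_2)<-\delta$ can be joined, through $\{Q'\in W_2:\cF_{Q_1}(Q')<-\delta\}$, to some global minimizer of $\cF_{Q_1}$ on $W_2$. Hence it suffices to prove that the set $\cM$ of global minimizers is path-connected, with connecting paths lying in $\cM$ itself (where $\cF_{Q_1}\equiv g(Q_1)<-c_0<-\delta$); once this is done, $\delta_0:=c_0$ works. First I would reduce (ii) to (i): the proof of (i) only uses that $M(Q_1)$ is traceless, symmetric and has two positive eigenvalues, so it applies with $M(Q_1)$ replaced by $-M(Q_1)$ and $Q_2^0$ by $-Q_2^0$; since $\tr\big((-M(Q_1))(-Q_2)\big)=\tr\big(M(Q_1)Q_2\big)$ and $Q_2\mapsto-Q_2$ is a homeomorphism carrying $W_2$ onto the orbit of $-Q_2^0$ and the relevant sublevel set onto the one for this reflected problem, connectedness transfers.

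For (i) I would diagonalize $M:=M(Q_1)$ in an orthonormal eigenbasis $v_1,v_2,v_3$ with eigenvalues $\mu_1\ge\mu_2>0>\mu_3$; as $\tr M=0$, the negative eigenvalue $\mu_3$ is \emph{simple}. Writing the eigenvalues of $Q_2^0$ as $b_1\ge b_2\ge 0>b_3$ (the case $Q_2^0=0$ is trivial, since then $\cF_{Q_1}\equiv0$), the assumption that two of them are nonnegative forces $b_1>0$ and $b_3<b_2$, so the negative eigenvalue $b_3$ of $Q_2^0$ is simple as well. By the rearrangement inequality applied to $\tr(MQ_2)=\sum_i\mu_i\langle v_i,Q_2v_i\rangle$, a matrix $Q_2\in W_2$ minimizes $\cF_{Q_1}$ exactly when its $b_1$-eigenvector is aligned with $v_3$, its $b_2$-eigenvector with $v_2$, and its $b_3$-eigenvector with $v_1$; and because $\mu_3,b_3$ are simple this pins the $b_1$-eigenline of a minimizer to $\R v_3$ when $b_1>b_2$, and its $b_3$-eigenline to $\R v_1$ when $\mu_1>\mu_2$. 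Going through the multiplicity cases, one finds that $\cM$ is, according to whether $\mu_1=\mu_2$ and whether $b_1=b_2$, either a single point, a circle (the set of admissible $b_3$-directions, resp.\ the orbit of a minimizer under the rotations of $v_3^\perp$), or a projective line --- in every case a nonempty connected submanifold of $W_2$, hence path-connected, which closes the argument.

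The step I expect to be the main obstacle is exactly this case check guaranteeing that $\cM$ is connected: it is where the sign hypotheses are used, for they are precisely what makes the negative eigenvalues of $M$ and of $Q_2^0$ simple, excluding the ``anti-sorted'' pairings in which a genuine binary choice of eigenvector would disconnect $\cM$. If one wishes to avoid Lemma~\ref{Satz:nuetzlich}(iv), the same conclusion can be reached more geometrically: parametrize $Q_2$ by its $b_3$-eigenline $n\in\R P^2$ and its $b_1$-eigenline $u_1\perp n$, record the identity $\cF_{Q_1}(Q_2)=(b_3-b_2)\langle n,Mn\rangle+(b_1-b_2)\langle u_1,Mu_1\rangle$ together with the inequality $0\le b_1-b_2\le b_2-b_3$ (which is just the content of $b_2\ge0$, via $b_1+b_3=-b_2$), then flow $u_1$ along a short geodesic of $n^\perp$ toward a minimal eigenvector of $M|_{n^\perp}$ (this never increases $\cF_{Q_1}$, since $b_1-b_2\ge0$; when $b_1=b_2$ the second term is simply absent) to reduce to the function $h(n)=(b_3-b_1)\langle n,Mn\rangle-(b_1-b_2)\lambda_{\max}(M|_{n^\perp})$ on $\R P^2$, and finally use $\lambda_{\max}(M|_{n^\perp})\ge\mu_2>0$ and the dominance $b_1-b_2\le b_2-b_3$ to show that $\{h<-\delta\}$ contains the complement of a small cap around $[v_3]$ while $\{h\ge-\delta\}$ sits inside that cap, so that $\{h<-\delta\}$ is $\R P^2$ with a topological disk removed, hence path-connected.
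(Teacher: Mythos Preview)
Your proof is correct and actually takes a cleaner route than the paper's. Both arguments invoke Lemma~\ref{Satz:nuetzlich}(iv) to reduce the problem to connecting global minimizers of $\cF_{Q_1}$ inside the sublevel set. The difference lies in what happens next. The paper asserts that in the non-degenerate case there are ``four minimizers'' and connects them by explicit $180^\circ$ rotations of $Q_2$ around eigenaxes of $M(Q_1)$, checking via the sign hypotheses that swapping $\mu_1\leftrightarrow\mu_2$ or $\lambda_1\leftrightarrow\lambda_2$ in the minimum value keeps it negative, so that these rotation paths stay in the sublevel set; this produces an auxiliary threshold $\delta'$ and the paper sets $\delta_0=\min\{\delta',c_0/2\}$. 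You instead observe that $W_2$ is a space of \emph{matrices}, not of rotations: a $180^\circ$ rotation around an eigenaxis of $Q_2^{\min}$ fixes $Q_2^{\min}$ as a matrix, so in the non-degenerate case the minimizer in $W_2$ is in fact unique, and in the degenerate cases $\cM$ is a circle. The sign hypotheses enter your argument only to guarantee that the negative eigenvalues $\mu_3$ and $b_3$ are simple, which is precisely what pins $\cM$ down to a connected set. This lets you take $\delta_0=c_0$ directly. The paper's ``four minimizers'' are really the four preimages in $SO(3)$ of the single minimizing matrix; its rotation paths project to loops in $W_2$, so that portion of the argument is vacuous for the lemma as stated (though it would become relevant if one needed the analogous connectedness on $SO(3)$ rather than on $W_2$).
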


\begin{proof}
	Part $(ii)$ follows from $(i)$ and the observation that  $\cF$ remains invariant if we replace $Q_1, Q_2$ with $-Q_1,-Q_2$. So we only prove $(i)$. Let $\lambda_1, \lambda_2,-(\lambda_1+\lambda_2)$, with $\lambda_1 \geq \lambda_2 > 0$, be  the eigenvalues of $M(Q_1)$ and $\mu_1, \mu_2,-(\mu_1+\mu_2)$, with $\mu_1 \geq \mu_2 \geq 0$, be the eigenvalues of $Q_2$. 

	\medskip
	
\noindent\textbf{Case 1:} There is no degeneracy of eigenvalues, namely $\lambda_1 \neq \lambda_2$ and $\mu_1 \neq \mu_2$. Then as we explained above there are four minimizers of $\cF_{Q_1}$ due to the invariance of $\cF_{Q_1}$ under  rotations of 180 degrees around the axis of any eigenvector. The corresponding minimum is then the sum
	$$-(\lambda_1 + \lambda_2 ) \mu_1 + \lambda_2 \mu_2 -\lambda_1 (\mu_1 + \mu_2)  $$
If we exchange $\mu_1$ and $\mu_2$ or if we exchange $\lambda_1$ with $\lambda_2$ the sum still remains negative.
It follows that if we rotate $Q_2$  around the eigenvector of the eigenvalue $\lambda_1$ or $-\lambda_1-\lambda_2$ then $\cF_{Q_1}$ remains negative all along the rotation. But with these rotations by 180 degrees and their combination we can connect all the minimizers of $\cF_{Q_1}$ with paths along which $\cF_{Q_1}$ remains negative and therefore remains less than some number $-\delta'<0$. Using this and part $(iii)$ of Proposition \ref{Satz:nuetzlich} it follows that for all $\delta<\delta_0$, where 
	\begin{equation}\label{def:d0}
	\delta_0:=\min\left\{\delta', \frac{c_0}{2}\right\}
	\end{equation}
	the set  $\{Q_2 \in W_2: \cF(Q_1,Q_2)<-\delta\}$ is pathwise connected.

		\medskip
	
\noindent\textbf{Case 2:} There is a degeneracy of eigenvalues, so $\lambda_1 = \lambda_2$ or $\mu_1 = \mu_2$ or both equalities hold. Then the argument above works with no modifications. The rotations described above work and under one or both of them the sets of minimizers is invariant.
\end{proof}

Since $Q_2$ is traceless it always has two nonnegative or two nonpositive eigenvalues.
In order to successfully  use Lemma \ref{lem:eigsigns} we will show  that there is always a suitable choice of $Q_1$ for which $M(Q_1)$ has two positive eigenvalues and another choice for which $M(Q_1)$ has two negative eigenvalues.

\begin{lemma}\label{lem:tptn}
	There is always a choice of $Q_1\in W_1$ for which $M(Q_1)$ has two positive eigenvalues (counting multiplicity) and  always a choice of $Q_1\in W_1$ for which $M(Q_1)$ has two negative eigenvalues.
\end{lemma}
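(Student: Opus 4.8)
The plan is to reduce the statement to a short explicit computation. First I would write $M(Q_1)$ out in the standard basis $\{e_1,e_2,e_3\}$: with $p=|e_1\rangle\langle e_1|$ and $Q_1=(q_{ij})$, $q_{11}+q_{22}+q_{33}=0$, expanding $L(Q_1)=35pQ_1p-10(pQ_1+Q_1p)+2Q_1$ gives $\tr L(Q_1)=15q_{11}$ and
\begin{equation*}
M(Q_1)=L(Q_1)-5q_{11}I=\begin{pmatrix} 12q_{11} & -8q_{12} & -8q_{13}\\ -8q_{12} & 2q_{22}-5q_{11} & 2q_{23}\\ -8q_{13} & 2q_{23} & 2q_{33}-5q_{11}\end{pmatrix},
\end{equation*}
which is traceless and symmetric. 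Since a nonzero traceless symmetric $3\times3$ matrix always has a strictly positive and a strictly negative eigenvalue, it has \emph{two} positive eigenvalues exactly when $\det M(Q_1)<0$ and two negative ones exactly when $\det M(Q_1)>0$. So it is enough to exhibit one orientation $Q_1\in W_1$ with $\det M(Q_1)<0$ and one with $\det M(Q_1)>0$.

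The decisive step is the observation that it suffices to test the special orientations for which $e_1$ is an eigenvector of $Q_1$. Concretely, let $\lambda_1\geq\lambda_2\geq\lambda_3$ be the eigenvalues of $Q_1^0$; since $Q_1^0\neq0$ is traceless we have $\lambda_1>0>\lambda_3$. Given a unit eigenvector $v$ of $Q_1^0$ with eigenvalue $\mu$, choose $U\in SO(3)$ with $Ue_1=v$ and set $Q_1=U^TQ_1^0U\in W_1$; then $Q_1e_1=\mu e_1$, so in the standard basis $Q_1$ is block diagonal with $(1,1)$-entry $\mu$ and a lower $2\times2$ block whose eigenvalues are the remaining two eigenvalues of $Q_1^0$. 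Substituting $q_{12}=q_{13}=0$, $q_{11}=\mu$ into the formula above makes $M(Q_1)$ block diagonal as well, with spectrum $\{\,12\mu,\ 2\mu'-5\mu,\ 2\mu''-5\mu\,\}$, where $\mu',\mu''$ denote the other two eigenvalues of $Q_1^0$.

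Finally I would simply pick $\mu$ appropriately. Taking $\mu=\lambda_3$ gives the eigenvalues $12\lambda_3<0$, $2\lambda_1-5\lambda_3>0$ (as $\lambda_1>0>\lambda_3$), and $2\lambda_2-5\lambda_3\geq-3\lambda_3>0$ (as $\lambda_2\geq\lambda_3$ and $\lambda_3<0$), hence two positive eigenvalues. Taking $\mu=\lambda_1$ gives $12\lambda_1>0$, $2\lambda_2-5\lambda_1\leq-3\lambda_1<0$ (as $\lambda_2\leq\lambda_1$), and $2\lambda_3-5\lambda_1<0$, hence two negative eigenvalues. All the inequalities use only $\lambda_1>0>\lambda_3$ and $\lambda_3\leq\lambda_2\leq\lambda_1$, so no degenerate case needs separate treatment, and the count is with multiplicity as required.

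I do not expect any genuine obstacle here: the only point needing care is getting the numerical coefficients in the explicit matrix $M(Q_1)$ right, and recognizing that restricting to the orientations in which $e_1$ diagonalizes $Q_1$ turns $M(Q_1)$ block diagonal so that its spectrum can be read off immediately in terms of the eigenvalues of $Q_1^0$. Everything else is forced.
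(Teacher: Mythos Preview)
Your proof is correct and follows essentially the same approach as the paper: orient $Q_1$ so that $e_1$ is an eigenvector corresponding to the largest (respectively smallest) eigenvalue of $Q_1^0$, read off the spectrum $\{12\mu,\,2\mu'-5\mu,\,2\mu''-5\mu\}$ of $M(Q_1)$, and check the signs. The paper does this in one line via the spectral decomposition $M(Q_1)=12c_1P_{v_1}+(2c_2-5c_1)P_{v_2}+(2c_3-5c_1)P_{v_3}$ when $p=P_{v_1}$, while you arrive at the same formula by first writing out $M(Q_1)$ in coordinates; the determinant criterion you mention is not actually needed (and you do not use it), since the eigenvalues are explicit.
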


\begin{proof}
	We write $Q_1=c_1 P_{v_1} + c_2 P_{v_2} + c_3 P_{v_3}$, where $c_1 \geq c_2 \geq c_3$ are the eigenvalues of $Q_1$ and $P_{v_i}$ is the orthogonal projection onto the eigenvector $v_i$. Then $c_1 > 0 > c_3$. If we choose $Q_1$ so that $p=P_{v_1}$, then 
	$M(Q_1)=12 c_1 P_{v_1} + (2 c_2 -5 c_1) P_{v_2} + (2 c_3- 5 c_1) P_{v_3}$ so that $M(Q_1)$ has one positive and two negative eigenvalues. Similarly, if we choose $Q_1$ so that $p=P_{v_3}$ then $M(Q_1)$ has one negative and two positive eigenvalues.
\end{proof}

We note that Proposition \ref{prop:localmin} is an immediate corollary of part $(iii)$ of Lemma \ref{Satz:nuetzlich}. Thus it remains to provide the

\begin{proof}[Proof of Proposition \ref{prop:connected} for the quadrupole-quadrupole term]
The traceless matrix $Q_2$ has always two nonnegative eigenvalues or two nonpositive eigenvalues. Applying Lemma \ref{lem:tptn} we choose $\widetilde{Q_1}$ so that $M(\widetilde{Q_1})$ has two positive or, respectively, two negative eigenvalues. Then from Lemma \ref{lem:eigsigns} it follows that for all $\delta \in (0, \delta_0)$, where $\delta_0$ is the same as in \eqref{def:d0}, the set
$\{Q_2 \in W_2: \cF(\widetilde{Q_1},Q_2)<-\delta\}$ is pathwise connected.

We fix $\delta>0$ with $\delta<\delta_0$ and consider $(Q_{1,0}, Q_{2,0}),  (Q_{1,1}, Q_{2,1}) \in W_1 \times W_2$ with $\cF(Q_{1,0}, Q_{2,0})<-\delta$ and  $\cF(Q_{1,1}, Q_{2,1})<-\delta$. We would like to connect these points through a path
in $W_1 \times W_2$ along which $\cF<-\delta$. 
Due to the connectedness of $\{Q_2 \in W_2: \cF(\widetilde{Q_1},Q_2)<-\delta\}$ it is enough to connect $(Q_{1,0}, Q_{2,0}) $ to some point $(\widetilde{Q_1},Q_2')$
for some $Q_2' \in W_2$ with a path along which $\cF<-\delta$ and do in a similar way the same for the point $(Q_{1,1}, Q_{2,1})$. We will show how to do this for the point  $(Q_{1,0}, Q_{2,0})$.

We first consider a differentiable function $Q: [0,1] \rightarrow W_1$, with $Q(0)=Q_{1,0}$, $Q(1)=\widetilde{Q_1}$ and $\|Q'(t)\|=\ell$ for all $t \in [0,1]$. Here $\ell>0$ is the length of the path connecting
$Q_{1,0}$ and $\widetilde{Q_1}$. We start with the construction of the path in $W_1 \times W_2$ as follows. By part $(iii)$ of Proposition \ref{Satz:nuetzlich}, we can connect $Q_{2,0}$ to a minimizer $Q_{2,0}'$ of $\cF_{Q_{1,0}}$ with a path along which $\cF_{Q_{1,0}}<-\delta$.
Then $\cF(Q_{1,0},Q_{2,0}')=g(Q_{1,0})$. But then by part $(i)$ of Proposition \ref{Satz:nuetzlich} we have that $\cF(Q_{1,0}, Q_{2,0}')<-c_0$. Now we keep $Q_{2,0}'$ fixed and we change $Q_{1,0}$ along the path $Q(t)$ until the time $t_1$, where 
$$t_1=\sup\left\{t \in [0,1]: \cF(Q(s),Q_{2,0}') < -\frac{c_0}{2},\quad \forall s \leq t\right\}.$$
If $t_1=1$ the conclusion follows. If not then we have that 
$$\cF(Q(t_1), Q_{2,0}')=-\frac{c_0}{2}<-\delta.$$
Therefore, 
\begin{equation}\label{est:bigdistance}
\left|\cF(Q(t_1), Q_{2,0}')-\cF(Q_{ 1,0}, Q_{2,0}')\right| \geq \frac{c_0}{2}.
\end{equation}
Moreover using that $\|Q'(t)\|=\ell$ for all $t \in [0,t_1]$ and that $Q(0)=Q_{1,0}$, we obtain that
$\|Q(t_1)-Q_{1,0}\| \leq \ell t_1$. This together with \eqref{est:bigdistance}, the bilinearity of $\cF$ and \eqref{est:Fschanke}   gives that
$t_1 \geq \frac{c_0}{2C \ell \|Q_2^0 \|}$, where $Q_2^0$ is the same as in the definition of $W_2$. If we iterate this procedure then each time we move forward a time step which is at least equal to $\frac{c_0}{2C \ell \|Q_2^0\|}$ along the path $Q$. Therefore, after repeating the procedure above finitely many times, we can construct a path in $W_1 \times W_2$ along which $\cF<-\delta$, and connecting $(Q_{1,0}, Q_{2,0})$ with $(\widetilde{Q_1}, Q_{2}')$, where $Q_{2}' \in W_1$ is such that   $\cF(\widetilde{Q_1}, Q_{2}')<-\delta$. 
This concludes the proof of Proposition \ref{prop:connected} for the quadrupole-quadrupole term.
\end{proof}

%%%%%%%%%%%%%%%%%%%%%%%%%%%%%%%%%%%%%%%%%%%%%%%%%%%%%%%%%%%%%%%%%%%
%%%%%%%%%%%%%%%%%%%%%%%%%%%%%%%%%%%%%%%%%%%%%%%%%%%%%%%%%%%%%%%%%%%
\appendix
%%%%%%%%%%%%%%%%%%%%%%%%%%%%%%%%%%%%%%%%%%%%%%%%%%%%%%%%%%%%%%%%%%%
%%%%%%%%%%%%%%%%%%%%%%%%%%%%%%%%%%%%%%%%%%%%%%%%%%%%%%%%%%%%%%%%%%%

\section{Dressing paths with wavefunctions}\label{app:dressed_path}

In this appendix we state and prove a result from~\cite{Lewin-04b,Lewin-PhD,Lewin-06}, in a more general setting and with a slightly different approach. We show that it is always possible to choose a continuous family of wavefunctions $\Psi(t)$ and have almost the same maximum value of the energy along any given path. 

\begin{lemma}[Dressing paths with wavefunctions]\label{lem:dressing_paths}
Let $(H(t))_{t\in[0,1]}$ be a one-parameter family of operators on a Hilbert space $\gH$, which are assumed to be
\begin{itemize}
 \item bounded from below, uniformly in $t\in[0,1]$;
 \item of constant operator domain, that is, 
 $$C^{-1}\big(H(0)^2+1\big)\leq H(t)^2+1\leq C\big(H(0)^2+1\big)$$
 for all $t\in[0,1]$;
 \item continuous in the sense that 
 $$t\mapsto \big(C+H(0)\big)^{-\frac12}\,H(t)\,\big(C+H(0)\big)^{-\frac12}$$
 is norm-continuous on $[0,1]$;
 \item such that 
 $$E(t):=\min\sigma\big(H(t)\big)<\min\sigma_{\rm ess}\big(H(t)\big)$$
 for all $t\in[0,1]$.
\end{itemize}
Let $x_0$ and $x_1$ be any normalized eigenvectors of $H(0)$ and $H(1)$ associated with the first eigenvalue $E(0)$ and $E(1)$, respectively. Then for any $\eps>0$ there exists a continuous map $t\mapsto x(t)$ in $D(H(0))$ of normalized vectors such that $x(0)=x_0$, $x(1)=x_1$ and 
$$\max_{t\in[0,1]}\pscal{x(t),H(t)x(t)}\leq \max_{t\in[0,1]}E(t)+\eps.$$
\end{lemma}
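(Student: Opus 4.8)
The plan is to build $x$ as a finite concatenation of ``transport'' arcs, along which $x(t)$ stays in a continuously varying low-lying spectral subspace of $H(t)$, joined by short ``rotation'' arcs; throughout, $\eps$ is split into a fixed number of small fractions.

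First I would record the continuity facts the hypotheses supply: $t\mapsto E(t)$ is continuous; $t\mapsto(H(t)-z)^{-1}$ is norm-continuous for $z$ in a fixed region of the common resolvent set; and for any fixed finite-dimensional $V\subseteq D(H(0))$ the form $v\mapsto\pscal{v,H(t)v}$ on $V$ varies continuously in $t$, uniformly on bounded subsets of $V$. Then, for each $t_{0}$, I pick an open interval $I_{t_{0}}\ni t_{0}$ and a level $a_{t_{0}}\in\bigl(E(t_{0}),E(t_{0})+\tfrac{\eps}{8}\bigr)$ with $a_{t_{0}}\notin\sigma(H(t))$ on $I_{t_{0}}$, so that the Riesz projection $P_{t_{0}}(t):=\1_{(-\ii,a_{t_{0}}]}(H(t))$, a contour integral of the resolvent, is finite rank, continuous in $t$ in operator \emph{and} graph norm, and contains $\ker(H(t)-E(t))$ on $I_{t_{0}}$; after shrinking so that $|E(t)-E(t_{0})|<\tfrac{\eps}{8}$ there, every $v\in\Ran P_{t_{0}}(t)$ obeys $\pscal{v,H(t)v}\le a_{t_{0}}\|v\|^{2}<\bigl(E(t)+\tfrac{\eps}{4}\bigr)\|v\|^{2}$. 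By compactness I extract a finite subcover and a partition $0=s_{0}<\dots<s_{m}=1$ with $[s_{j-1},s_{j}]\subset I_{j}:=I_{t_{k(j)}}$ for a suitable choice of centres, consecutive $I_{j}$'s overlapping near each $s_{j}$; write $P_{j}:=P_{t_{k(j)}}$ and $a_{j}:=a_{t_{k(j)}}$. After refining so that $\|P_{j}(t)-P_{j}(s_{j-1})\|<1$ on $[s_{j-1},s_{j}]$, for a unit vector $\xi_{j}\in\Ran P_{j}(s_{j-1})$ the formula $x(t):=P_{j}(t)\xi_{j}/\|P_{j}(t)\xi_{j}\|$ gives a continuous normalised curve in $\Ran P_{j}(t)$, hence with $\pscal{x(t),H(t)x(t)}<E(t)+\eps$.

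The curve $x$ is then assembled as follows. On $[0,s_{1}-\delta_{1}]$ I run the $P_{1}$-transport starting from $\xi_{1}:=x_{0}\in\ker(H(0)-E(0))\subseteq\Ran P_{1}(0)$. At each junction $s_{j}$ ($1\le j\le m-1$) I use a short interval $[s_{j}-\delta_{j},s_{j}]\subset I_{j}\cap I_{j+1}$ on which to do a rotation: a first, graph-norm-small, stage adjusts the incoming transported vector $v_{j}^{-}\in\Ran P_{j}(s_{j}-\delta_{j})$ into the genuine spectral subspace $\Ran P_{j}(s_{j})$ of $H(s_{j})$ (possible because $P_{j}$ is graph-norm continuous), and a second stage is a great-circle arc within $\Ran P_{j}(s_{j})$ ending at a fixed unit ground state $\psi_{j}\in\ker(H(s_{j})-E(s_{j}))$ (the unit sphere of a complex finite-dimensional space is path-connected, and on the spectral subspace $\Ran P_{j}(s_{j})$ the form of $H(s_{j})$ is $\le a_{j}$). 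Since $\psi_{j}\in\ker(H(s_{j})-E(s_{j}))\subseteq\Ran P_{j+1}(s_{j})$, the $P_{j+1}$-transport can resume at $s_{j}$ from $\xi_{j+1}:=\psi_{j}$, matching continuously. Because the rotation takes place in a \emph{fixed} finite-dimensional subspace of $D(H(0))$, and both $t\mapsto H(t)$ on that subspace and $t\mapsto E(t)$ are continuous, choosing $\delta_{j}$ small makes the energy along the rotation stay below $E(t)+\eps$. The last interval $[1-\delta_{m},1]$ is treated the same way, ending the arc exactly at $x_{1}\in\ker(H(1)-E(1))$, so that $x(0)=x_{0}$, $x(1)=x_{1}$, $x$ is continuous on $[0,1]$, and $\max_{t}\pscal{x(t),H(t)x(t)}\le\max_{t}E(t)+\eps$.

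The only genuinely delicate point is the bookkeeping at the junctions and the two endpoints: one has to pass between \emph{a priori} unrelated continuous families $P_{j},P_{j+1}$ of spectral subspaces of different operators while never letting the energy exceed $E(t)+\eps$. This works because both families share the non-trivial ground eigenspace of $H(s_{j})$, because a spectral subspace of $H(s_{j}-\delta_{j})$ is graph-norm close to the corresponding one of $H(s_{j})$ (so the switch between them costs little energy), and because on a fixed finite-dimensional subspace of the common domain the energy is almost constant in $t$ over a short time interval, so the in-fibre rotation to a common ground state is cheap. Everything else — the local spectral picture, finite rank and continuity of the $P_{t_{0}}$, the form estimates — is routine analytic perturbation theory for eigenvalues isolated below the essential spectrum together with compactness of $[0,1]$; I expect it to present no difficulty.
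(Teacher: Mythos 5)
Your proof is correct but follows a genuinely different route from the paper's. The paper works directly with ground states on a uniform grid $t_k=k/N$: it picks an arbitrary ground state $x_k$ of $H(t_k)$ at each node and joins $x_k$ to $x_{k+1}$ by an explicit two-part geodesic --- first from $x_k$ to $P_{k+1}x_k/\|P_{k+1}x_k\|$, where $P_{k+1}$ is the projection onto $\ker(H(t_{k+1})-E(t_{k+1}))$, then an arc inside that eigenspace ending at $x_{k+1}$ --- chosen so that $t\mapsto\pscal{x(t),H(t_{k+1})x(t)}$ is monotone non-increasing across the whole segment; the same monotonicity with $(C+H(t_{k+1}))^2$ in place of $H(t_{k+1})$ gives the uniform $D(H(0))$-bound, and the only remaining error is the modulus of continuity of $s\mapsto(C+H(0))^{-1/2}H(s)(C+H(0))^{-1/2}$, estimated directly. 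You instead transport through the Riesz projections $P(t)=\1_{(-\ii,a]}(H(t))$ onto a slightly enlarged low-lying spectral cluster, which makes the energy bound manifest ($\pscal{v,H(t)v}\le a\|v\|^2$ for $v\in\Ran P(t)$) and stitch consecutive families at junctions by short finite-dimensional rotations to a common ground state. Your spectral-transport route is conceptually cleaner and closer to the usual adiabatic picture, and it never requires choosing arbitrary ground states at grid points; the price is that it needs graph-norm continuity of the Riesz projections (which does follow from the constant-domain hypothesis, but is an extra item to verify) and the junction bookkeeping you flag --- during a junction rotation the vector leaves every $\Ran P(t)$ for a moment, so the energy must be controlled via uniform continuity of the form on a fixed finite-dimensional slice of $D(H(0))$ rather than by the spectral estimate. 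The paper's monotone-interpolation device is less transparent but sidesteps both issues, since each segment of its path is an explicit trigonometric combination of finitely many fixed $D(H(0))$-vectors and the maximum of every relevant quadratic form is pinned at the left endpoint of the segment.
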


The exact same result holds under the weaker assumption that the operators have a constant form domain, but then $t\mapsto x(t)$ can be chosen continuous only in $Q(H(0))$.

\begin{figure}[t]
\centering
\includegraphics[width=7cm]{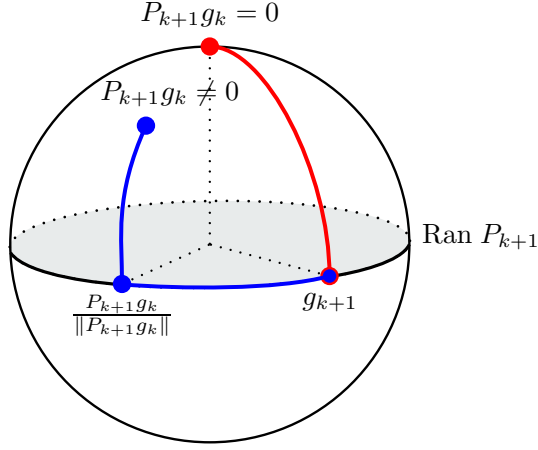}
\caption{The two cases in the proof of Lemma~\ref{lem:dressing_paths}.\label{fig:sphere}}
\end{figure}

\begin{proof}
For $N$ a large integer, we let $t_k:=k/N$ for $k=0,\dots ,N$ and choose any ground state $x_k$ of $H(k/n)$, for $k=1,\dots ,N-1$. Then we define a continuous path $x(t)$ such that $x(t_k)=x_k$ for all $k=0,\dots ,N$. 

On $[t_k,t_{k+1}]$ we define the path $x(t)$ as follows. Let $P_{k+1}$ denote the orthogonal projection onto the first eigenspace of $H((k+1)/N)$.  If $P_{k+1}x_{k}=0$, we take
$$x(t)=\cos\left(\frac{\pi}2(N t-k)\right)x_{k}+\sin\left(\frac{\pi}2(N t-k)\right) x_{k+1}$$
for $t_k\leq t\leq t_{k+1}$. 
If $P_{k+1}x_{k}\neq 0$, we construct a path in two parts as displayed in Figure~\ref{fig:sphere}. We first use
\begin{multline}
x(t)=\cos\left(\alpha_{k}(2k+1-2Nt)\right)\frac{P_{k+1}x_{k}}{\|P_{k+1}x_{k}\|}\\+\sin\left(\alpha_{k}(2k+1-2Nt)\right) \frac{(1-P_{k+1})x_{k}}{\|(1-P_{k+1})x_{k}\|} 
\label{eq:path_matrix_GS}
\end{multline}
for $t_k\leq t\leq t_k+1/(2N)$, with
$$\alpha_k=\arccos\|P_{k+1}x_{k}\|$$
where it is understood that the second term in~\eqref{eq:path_matrix_GS} vanishes when $(1-P_{k+1})x_{k}=0$. In the second part $t_k+1/(2N)\leq t\leq t_{k+1}$ we link the two vectors $P_{k+1}x_{k}\|P_{k+1}x_{k}\|^{-1}$ and $x_{k+1}$ by any path within the unit sphere of the first eigenspace of $H((k+1)/N)$. In particular if the two vectors are colinear, this amounts to just adding the appropriate complex phase. 

The above paths have been chosen to ensure that 
$$t\mapsto \pscal{x(t),H\left(\frac{k+1}{N}\right)x(t)}$$
is non-increasing for $t\in [t_k,t_{k+1}]$. Under our assumptions on $H(t)$ and since 
$$H\left(\frac{k}{N}\right)x_k=E\left(\frac{k}{N}\right)x_k$$
for all $k$, we have that $x_k$ is bounded in $D(H(0))$, by a constant independent of $N$. Moreover, 
$$t\mapsto \pscal{x(t),\left(C+H\left(\frac{k+1}{N}\right)\right)^2x(t)}$$
is also non-decreasing for $C$ large enough such that $(C+E((k+1)/N)$ is the first eigenvalue of $(C+H((k+1)/N))^2$, which proves that the map $x(t)$ stays uniformly bounded in $D(H(0))$ (hence in $Q(H(0))$) for all $t\in[0,1]$. 

Now we have for $t\in [t_k;t_{k+1}]$
\begin{align*}
&\pscal{x(t),H(t)x(t)}\\
&\qquad=\pscal{x(t),H\left(\frac{k+1}{N}\right)x(t)}+\pscal{x(t),\left(H(t)-H\left(\frac{k+1}{N}\right)\right)x(t)}\\
&\qquad\leq\pscal{x_{k},H\left(\frac{k+1}{N}\right)x_{k}}+\pscal{x(t),\left(H(t)-H\left(\frac{k+1}{N}\right)\right)x(t)}\\
&\qquad\leq E\left(\frac{k}{N}\right)+\pscal{x_{k},\left(H\left(\frac{k+1}{N}\right)-H\left(\frac{k}{N}\right)\right)x_{k}}\\
&\qquad\qquad +\pscal{x(t),\left(H(t)-H\left(\frac{k+1}{N}\right)\right)x(t)}\\
&\qquad\leq E\left(\frac{k}{N}\right)+\max_{t\in[0,1]}\norm{(H(0)+C)^{\frac12}x(t)}^2\times\\
&\qquad\quad \times\max_{\frac{k}{N}\leq t\leq\frac{k+1}{N}}\norm{(H(0)+C)^{-\frac12}\left(H(t)-H\left(\frac{k+1}{N}\right)\right)(H(0)+C)^{-\frac12}}.
\end{align*}
This shows that the maximum along the path is
\begin{multline*}
\max_{t\in[0,1]}\pscal{x(t),H(t)x(t)}\leq \max_{t\in[0,1]}E(t)\\+C\max_{|t-s|\leq 1/N}\norm{(H_0+C)^{-\frac12}\left(H(t)-H(s)\right)(H_0+C)^{-\frac12}}
\end{multline*}
where the second term is small for $N$ large. 
\end{proof}

\section{Multipolar expansion: proof of Lemmas~\ref{lem:multipexp} and~\ref{lem:multipolar_expansion}}\label{app:multipolar_expansion}

\subsection{Proof  of Lemma \ref{lem:multipexp}}
We have that
\begin{equation*}
\frac{1}{|Le_1-h|}=\frac{1}{L|e_1-u|}=\frac{1}{L\sqrt{1+|u|^2-2u\cdot e_1}}\qquad\text{with $u:=\frac{h}{L}$.}
\end{equation*}
Both $x\mapsto(1+x)^{-1/2}$ and $u\mapsto |u|^2-2u\cdot e_1$ are real analytic in a neighborhood of the origin, and so is $u\mapsto |e_1-u|^{-1}$ on the ball $|u|\leq1/2$. We conclude that 
	\begin{equation*}
	\bigg|\partial_u^\alpha \bigg( \frac{1}{|e_1-u|}-
	\sum_{n=0}^N \cM^{(n)}_{\delta_u}(e_1,\dots ,e_1)  \bigg) \bigg|\leq C |u|^{N+1-|\alpha|}
	\end{equation*}
for a constant $C$ depending on $N\geq1$ and $|\alpha|\leq N$, but not on $u$ for $|u|\leq1/2$.
Replacing $u$ by $h/L$ we find
	\begin{equation*}
	\bigg|\partial_h^\alpha \bigg( \frac{1}{|Le_1-h|}-
	\sum_{n=0}^N \frac{\cM^{(n)}_{\delta_h}(e_1,\dots ,e_1) }{L^{n+1}} \bigg) \bigg|\leq C \frac{|h|^{N+1-|\alpha|}}{L^{N+2}}\leq C\frac{1+|h|^{N+1}}{L^{N+2}}
	\end{equation*}
as desired.\qed

\subsection{Proof of Lemma~\ref{lem:multipolar_expansion}}

This is a well known result, see for example ~\cite{BurBon-81}. The Taylor expansion of the Coulomb potential in Lemma \ref{lem:multipexp}  gives
\begin{multline*}
\bigg\|\iint_{\R^3\times\R^3}\frac{{\rm d}\rho_1(x)\;{\rm d}\rho_2(y)}{|Ux-Vy+Le_1|}-
\sum_{k=0}^N B^{(k)}(U,V) \bigg\|_{C^2(SO(3)^2)}\\
\leq \frac{C}{L^{N+2}}\int_{\R^3}(1+|x|^{N+1})\,{\rm d}|\rho_1|(x)\; \int_{\R^3}(1+|y|^{N+1})\,{\rm d}|\rho_2|(y).
\end{multline*}
where
\begin{multline*}
B^{(k)}(U,V)\\=
\frac{1}{k!}
\iint_{\R^3\times\R^3}{\rm d}\rho_1(x)\;{\rm d}\rho_2(y) (U x_{a_1}-V y_{a_1})\dots (U x_{a_k}-V y_{a_k})\partial_{a_1}\dots \partial_{a_k}\frac{1}{|Le_1|}
\end{multline*}
and we have used the Einstein convention for summation in $a_j \in \{1,2,3\}$. By abuse of notation $\partial_{a_1}\dots \partial_{a_N}\frac{1}{|Le_1|}$ means partial derivatives of the Coulomb potential evaluated at $Le_1$.
It thus suffices to prove that
\begin{equation}\label{eq:BUVdec}
B^{(k)}(U,V)=\frac{1}{L^{k+1}}\sum_{m=1}^{k-1} \cF^{(m,k-m)}(U,V).
\end{equation}
We assume without loss of generality that $U=V=I$.
Expanding the product $\prod (x_{a_j}-y_{a_j})$ and using Schwarz theorem, due to which the order of the partial derivatives does not matter, we can rewrite
\begin{multline}
B^{(k)}(U,V)= \sum_{n+m=k} (-1)^m
\left(\frac{1}{n!} \int_{\R^3}{\rm d}\rho_1(x) x_{a_1}\dots x_{a_n}\right)\times\\
\times\left(\frac{1}{m!}\int_{\R^3} {\rm d}\rho_2(y) y_{b_1}\dots y_{b_m}\right)\partial_{a_1}\dots \partial_{a_n}\partial_{b_1}\dots \partial_{b_m}\frac{1}{|Le_1|},
\label{eq:pre_multipole_expansion}
\end{multline}
because there are $k \choose m$  $=\frac{k!}{n!m!}$ terms with $x$ and $y$ appearing $n,m$ times, respectively in the expansion of $\prod (x_{a_j}-y_{a_j})$.
We will now prove that replacing each integral 
$$\frac{1}{n!} \int_{\R^3}{\rm d}\rho_1(x) x_{a_1}\dots  x_{a_n}$$
with 
$$\frac{1}{\prod_{j=1}^n(2j-1)}\cM^{(n)}_{\rho_1}(e_{a_1},\dots ,e_{a_n})$$ 
does not change the sum with respect to $a_j$'s and we can similarly replace the integrals with respect to $\rho_2$. Together with \eqref{def:nmult_int} and the homogeneity of degree $-(k+1)$ of $k$-th order partial derivatives of the Coulomb potential, this concludes the proof of \eqref{eq:BUVdec} and therefore of Lemma \ref{lem:multipolar_expansion}. For $n=1$ this is trivial. We will do it for $n=2$ and then explain how this  works for all $n$ inductively. 

We have 
\begin{equation*}
\cM^{(n)}_{\rho_1}(e_{a_1},\dots ,e_{a_n})=\frac{(-1)^n}{n!} \int_{\R^3} {\rm d}\rho_1(x) |x|^{2n+1} \partial_{a_1} \cdots \partial_{a_n} \left(\frac{1}{|x|}\right),
\end{equation*}
so in particular for $n=2$ 
\begin{equation*}
\cM^{(2)}_{\rho_1}(e_{a_1},e_{a_2})=\frac{1}{2!} \int_{\R^3} {\rm d}\rho_1(x) |x|^{5} \partial_{a_1} \partial_{a_2} \left(\frac{1}{|x|}\right).
\end{equation*}
Now observe that
$$ \partial_{a_2} \left(\frac{1}{|x|}\right)=-\frac{x_{a_2}}{|x|^3}$$
from which it follows that
\begin{equation}\label{eq:Coulompar2} \partial_{a_1} \partial_{a_2} \left(\frac{1}{|x|}\right)=-
\frac{\delta_{a_1 a_2}}{|x|^3} + \frac{3 x_{a_1} x_{a_2}}{|x|^5}.
\end{equation}
Thus 
\begin{equation*}
\frac{1}{3} \cM^{(2)}_{\rho_1}(e_{a_1},e_{a_2})=\frac{1}{2!} \int_{\R^3}  \left( x_{a_1} x_{a_2} -\frac{1}{3}\delta_{a_1 a_2}|x|^2\right){\rm d}\rho_1(x).
\end{equation*}
Due to the fact that the Coulomb potential is harmonic we have that $\delta_{a_1 a_2} \partial_{a_1} \partial_{a_2} \frac{1}{|Le_1|}=0$ in the summation. So replacing $\frac{1}{2!} \int_{\R^3} {\rm d}\rho_1(x) ( x_{a_1} ,x_{a_2})$ with $ \frac{1}{3} \cM^{(2)}_{\rho_1}(e_{a_1},e_{a_2})$ in the right hand side of \eqref{eq:pre_multipole_expansion} will not change the right hand side of \eqref{eq:pre_multipole_expansion}.

For general $n$, we argue as follows. Using that 
$$\partial_a \frac{1}{|x|^{2k+1}}=-\frac{x_a}{(2k+3)|x|^{2k+3}}$$ 
for any $k \in \mathbb{N}$ we obtain by induction that
$$  \partial_{a_1} \cdots \partial_{a_n} \left(\frac{1}{|x|}\right)=(-1)^n \frac{x_{a_1}\dots  x_{a_n}}{|x|^{2n+1}}\prod_{j=1}^n(2j-1)  + R $$
where $R$ has Kronecker deltas and therefore does not change anything for similar reasons as for $n=2$. The $(-1)^n$ in the last formula cancels with the $(-1)^n$ in the definition of the multipole moment. Thus we can similarly replace  $\frac{1}{n!} \int_{\R^3}{\rm d}\rho_1(x) x_{a_1}\dots x_{a_n}$
with $\prod_{j=1}^n(2j-1)^{-1}\cM^{(n)}_{\rho_1}(e_{a_1},\dots ,e_{a_n})$ without changing the right hand side of  \eqref{eq:pre_multipole_expansion}. We thus obtain~\eqref{eq:multipolar_expansion} as desired.\qed

\subsection{Computation of $\cF^{(n,m)}$ for $n+m\leq5$}

Here we compute the multipole-multipole  terms explicitly, using \eqref{def:nmult_int}.

\subsubsection*{Dipole-dipole $m=n=1$} It is given by
\begin{align*}
- (D_1 \cdot e_{a_1}) (D_2 \cdot e_{b_1}) \partial_{a_1} \partial_{b_1}\frac{1}{|e_1|}&\stackrel{\eqref{eq:Coulompar2}}{=}-(D_1 \cdot e_{a_1}) (D_2 \cdot e_{b_1}) (-
\delta_{a_1 b_1} + 3 \delta_{a_1 1} \delta_{b_1 1})\\
&\;\,=D_1 \cdot D_2-3 (D_1 \cdot e_1)(D_2 \cdot e_1). 
\end{align*}

\subsubsection*{Quadrupole-dipole $n=2, m=1$} It is given by
$$-\frac{1}{3}Q_1(e_{a_1}, e_{a_2}) (D_2 \cdot e_{b_1})\partial_{a_1} \partial_{a_2}\partial_{b_1}\frac{1}{|e_1|}.$$
Using \eqref{eq:Coulompar2} we obtain that
\begin{equation}\label{eq:Coulompar3} \partial_{i} \partial_{j} \partial_k \left(\frac{1}{|x|}\right)=3
	\frac{\delta_{ij} x_k + \delta_{ik} x_j + \delta_{jk} x_i}{|x|^5} -15   \frac{x_i x_j x_k}{|x|^7}.
\end{equation}
Thus the quadrupole-dipole term reads
\begin{align*}
&-\frac{1}{3}Q_1(e_{a_1}, e_{a_2}) (D_2 \cdot e_{b_1})\partial_{a_1} \partial_{a_2}\partial_{b_1}\frac{1}{|e_1|}\\
&\quad = Q_1(e_{a_1}, e_{a_2}) (D_2 \cdot e_{b_1}) \left( 5   \delta_{a_1 1}  \delta_{a_2 1}  \delta_{a_3 1}-
 (\delta_{a_1 a_2} \delta_{b_1 1} + \delta_{a_1 b_1} \delta_{a_2 1}  + \delta_{a_2 b_1} \delta_{a_1 1}) \right)\\
 &\quad =5 Q_1(e_1, e_1) (D_2 \cdot e_1)-2 Q_1(e_1,D_2), 
\end{align*}
 where in the last step we used that $Q_1$ is traceless and that $Q_1(e_1,D_2)=Q_1(e_1, e_{a_k}) (D_2 \cdot e_{a_k})$.
 
 Before computing the octople-dipole and the quadrupole-quadrupole terms we note that from \eqref{eq:Coulompar3} it follows that
 \begin{equation}\label{eq:Coulompar4} \partial_{i} \partial_{j} \partial_k \partial_l \left(\frac{1}{|x|}\right)  =3
 \frac{\delta_{ij} \delta_{kl} + \delta_{ik} \delta_{jl} + \delta_{jk} \delta_{il}}{|x|^5} -15   \frac{\delta_{ij} x_k x_l + cyclic}{|x|^7}+ 105 \frac{x_i x_j x_k x_l}{|x|^9}.
 \end{equation}
 
\subsubsection*{Quadrupole-quadrupole $n=2, m=2$} It is given by
\begin{align*}
&\frac{1}{9}Q_1(e_{a_1}, e_{a_2}) Q_2(e_{b_1}, e_{b_2})\partial_{a_1} \partial_{a_2}\partial_{b_1}
 \partial_{b_2} \frac{1}{|e_1|}\\
 &\qquad  \stackrel{\eqref{eq:Coulompar4}}{=} \frac{1}{9}Q_1(e_{a_1}, e_{a_2}) Q_2(e_{b_1}, e_{b_2}) \Big( 3(\delta_{a_1 a_2} \delta_{b_1 b_2} + \delta_{a_1 b_1} \delta_{a_2 b_2} + \delta_{a_1 b_2} \delta_{a_2 b_1}) \\
 &\qquad\qquad  -15(\delta_{a_1 a_2} \delta_{b_1 1} \delta_{b_2 1} + cyclic)+ 105 \delta_{a_1 1}  \delta_{a_2 1}  \delta_{b_1 1} \delta_{b_2 1}\Big) \\
&\qquad  \;\,=\frac{1}{3}\Big(35 Q_1(e_1,e_1) Q_2(e_1,e_1)-20 \langle e_1, Q_1 Q_2 e_1 \rangle + 2\tr(Q_1 Q_2)\Big), 
\end{align*}
 where in the last step we used that $Q_1, Q_2$ are traceless.
  
\subsubsection*{Octopole-dipole $n=2, m=2$} It is given by
\begin{align*}
&-\frac{1}{15} O_1(e_{a_1}, e_{a_2}, e_{a_3}) (D_2 \cdot e_{b_1})\partial_{a_1}\partial_{a_2}\partial_{a_3}
  \partial_{b_2} \frac{1}{|e_1|}\\
  &\qquad \stackrel{\eqref{eq:Coulompar4}}{=} -\frac{1}{15} O_1(e_{a_1}, e_{a_2}, e_{a_3}) (D_2 \cdot e_{b_1}) \Big( 3(\delta_{a_1 a_2} \delta_{a_3 b_1} + \delta_{a_1 a_3} \delta_{a_2 b_1} + \delta_{a_1 b_1} \delta_{a_2 a_3})\\
  &\qquad \qquad-15(\delta_{a_1 a_2} \delta_{a_3 1} \delta_{b_1  1} + cyclic)+ 105 \delta_{a_1 1}  \delta_{a_2 1}  \delta_{a_3 1} \delta_{b_1 1}\Big)\\
  &\qquad \;\,= 3 O(e_1, e_1, D_2) - 7 O(e_1, e_1, e_1) (D_2 \cdot e_1),
\end{align*}
where in the last step we used that $O(v,.,.)$ is traceless for any $v$ and that $O_1(e_1,e_1, e_{a_k}) (D_2 \cdot e_{a_k})= O(e_1, e_1, D_2) $.
 
The dipole-hexadecapole and quadrupole-octapole terms, whose exact expressions do not play an important role in our analysis, can be similarly derived.  
\qed

\section{Interaction energy of several molecules}\label{app:vdWseveral}

In Theorem~\ref{thm:expansion_energy} we have given an expansion for the interaction energy of two molecules, because we were studying isomerizations involving two molecules only. However, our proof given in Section~\ref{sec:ProofvdWdom} can be generalized to the case of several molecules, with little modifications. It is remarkable that, at the order $L^{-6}$, there are three-body corrections which are in general non zero for polar molecules. Those are known in the Physics literature (see, e.g.,~\cite{BucMicZol-07}) but should not be confused with the more famous  Axilrod--Teller--Muto $L^{-9}$ three-body correction which plays an important role for atoms~\cite{AxiTell-43,Muto-43,LilTka-10,StaGobTka-14}. Here we state the appropriate generalization of Theorem \ref{thm:expansion_energy} in the case of several molecules and we briefly sketch its proof.

We consider a system of $K$ molecules, where each molecule is represented by its nuclear positions $Y_k=(y_{k,1},...,y_{k,M_k})$ and nuclear charges $Z_k=(z_{k,1},...,z_{k,M_k})$. We then place each molecule about one point $X_k\in\R^3$ and rotate it by applying the rotation $U_k\in SO(3)$. Denoting for simplicity our main variable by
$$\tau:=(X_1,U_1,...,X_K,U_K),$$
the whole system is described by the nuclear positions
$$Y(\tau):=(X_1+U_1Y_1,...,X_K+U_KY_K)$$
and the nuclear charges
$$Z=(Z_1,...,Z_K).$$
We denote by $L_{ij}=|X_i-X_j|$ the distance between the molecules $i,j$ and assume that $L_{ij}\gg1$ for all $1\leq i<j\leq K$. Finally, let 
$$\cE(\tau):=\min\sigma\Big(H_N(Y(\tau),Z)\Big)$$
be the corresponding ground state energy, with $N=\sum_{k=1}^K|Z_k|$. We call $H_k$ the Hamiltonians of the individual molecules and $E_k=E_{|Z_k|}(Y_k,Z_k)$ the corresponding ground state energies. 

The analogue of \eqref{eq:main_assumption2} for several molecules is
\begin{equation}\label{eq:main_assumption2sev}
	\sum_{k=1}^KE_{|Z_k|}(Y_k,Z_k) < \min_{\substack{(N_1',\dots ,N_K') \neq (|Z_1|,\dots ,|Z_K|)\\ \sum_{k=1}^KN'_k=N}}\; \sum_{k=1}^K E_{N'_k}(Y_k,Z_k).
\end{equation}
In an analogous way to $f_{U,V}$ defined in \eqref{eq:def_f_U_V} we can define $f_{ij,U_i U_j}$ and then for any two ground states  $\Psi_i, \Psi_j$ of $H_i$ and $H_j$ we define  the van der Waals correlation function 
$C_{\rm vdW}(\Psi_i, \Psi_j, U_i, U_j)$ similarly as in 
\eqref{eq:def_C_vdW_Psi_1_2}.
We use the notation 
$$R_{ij}= \Pi_{ij}^\perp\left(H_i \otimes\1+\1\otimes H_j-E_i-E_j\right)^{-1}_{|(\cG_i\otimes\cG_j)^\perp}\Pi_{ij}^\perp,$$
to denote the inverse of the two-body Hamiltonian $H_i \otimes\1+\1\otimes H_j-E_i-E_j$ restricted to the orthogonal complement of $\cG_i\otimes \cG_j$ in $\bigwedge_1^{N_i}L^2\otimes \bigwedge_1^{N_j}L^2$.
The new feature of the case of several molecules is the three body potential
\begin{equation}
W(U_i,U_j,U_k,\Psi_i,\Psi_j,\Psi_k):=  \langle  f_{ik,U_i U_k} \Psi_i\otimes  \Psi_j\otimes  \Psi_k, R_{ij}\, f_{ij,U_i U_j} \Psi_i \otimes \Psi_j\otimes  \Psi_k
 \rangle.
 \end{equation}
Note that here we could replace $R_{ij}$ by $R_{ik}$ without changing the result, or even by
\begin{multline*}
R_{ijk}=\\ \Pi_{ijk}^\perp\left(H_i \otimes\1_2+\1\otimes H_j\otimes\1+\1_2\otimes H_k-E_i-E_j-E_k\right)^{-1}_{|(\cG_i\otimes\cG_j\otimes \cG_k)^\perp}\Pi_{ijk}^\perp. 
\end{multline*}
This is because $H_i + H_j+H_k-E_i-E_j-E_k$ is equal to $H_i+ H_j-E_i-E_j$ when acting on functions in the form $\Xi\otimes \Psi_k$.

\begin{theorem}[Expansion of the ground state energy for several molecules]\label{thm:several_molecules}
Let $\Psi_1,...,\Psi_K$ be any normalized ground states of, respectively, $H_1,...,H_K$. We have the upper bound
\begin{multline}
	\cE(X_1,U_1,...,X_K,U_K)\leq \sum_{k=1}^K E_k  +\\ \sum_{1 \leq i<j \leq K}  \left(\sum_{2\leq n+m\leq 5}\frac{\cF^{(n,m)}(\Psi_i,\Psi_j,U_i,U_j)}{(L_{ij})^{n+m+1}}
	-\frac{C_{\rm vdW}(\Psi_i,\Psi_j,U_i,U_j)}{(L_{ij})^6} \right)
	\\  - \sum_{i=1}^K \sum_{j\neq k \in\{1,...,K\}\setminus\{i\}} \frac{W(U_i,U_j,U_k,\Psi_i,\Psi_j,\Psi_k)}{(L_{ij})^3 (L_{ik})^3} + O\left(\frac{1}{\min(L_{ij})^7}\right),
	\label{eq:upper_several}
\end{multline}
where the $O(\cdots)$ is uniform in $U_1,\dots ,U_K \in SO(3)$.
If in addition the ground states of $H_1,...,H_K$ are all irreducible for $1\leq k\leq K$ and~\eqref{eq:main_assumption2sev} holds, then~\eqref{eq:upper_several} is an equality and each of the above terms is independent of the chosen $\Psi_1,..., \Psi_K$.
\end{theorem}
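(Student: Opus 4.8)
The plan is to adapt the two-molecule proof from Section~\ref{sec:ProofvdWdom}, which is the strategy already announced in the statement. First I would set up the test function as a direct generalization of~\eqref{def:phitau}: take
$$\Phi_\tau := \cQ\left(\bigotimes_{k=1}^K \Phi_{k,\tau} - \sum_{1\le i<j\le K}\chi_\tau\,\Pi_{ij,\tau}^\bot R_{ij,\tau}\Pi_{ij,\tau}^\bot I_{ij,\tau}\bigotimes_{k=1}^K\Phi_{k,\tau}\right),$$
where $\Phi_{k,\tau}$ is the cut-off ground state of molecule $k$ rotated by $U_k$ and translated to $X_k$, $I_{ij,\tau}$ is the interaction between molecules $i$ and $j$ (as in~\eqref{eq:def_I_tau}), and $R_{ij,\tau}$ is the resolvent acting only on the $(i,j)$ pair with all other molecules spectators. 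The point is that $R_{ij,\tau}$ applied to $\bigotimes_k\Phi_{k,\tau}$ factors through the two-body problem, by the same observation used after the definition of $R_{ijk}$ in the excerpt: $H_i+H_j+H_k-E_i-E_j-E_k$ restricted to functions of the form $\Xi\otimes\Psi_k$ equals $H_i+H_j-E_i-E_j$.

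Next I would compute $\langle\Phi_\tau,(H_N(Y(\tau),Z)-\sum_k E_k)\Phi_\tau\rangle/\|\Phi_\tau\|^2$ by expanding along the same lines as~\eqref{eq:energy_expansion}. The linear-in-$I$ term produces $\sum_{i<j}\langle\bigotimes\Phi_{k,\tau},I_{ij,\tau}\bigotimes\Phi_{k,\tau}\rangle$, which gives $\sum_{i<j}$ of the pairwise multipolar expansion of Lemma~\ref{lem:multipolar_expansion}, i.e. the $\cF^{(n,m)}$ terms divided by $L_{ij}^{n+m+1}$, with remainder $O(L_{ij}^{-7})$. The quadratic term $-\sum_{i<j}\langle I_{ij,\tau}\bigotimes\Phi,\chi_\tau\Pi^\bot R_{ij,\tau}\Pi^\bot I_{i'j',\tau}\bigotimes\Phi\rangle$ splits into two kinds of contributions. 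When $\{i,j\}=\{i',j'\}$ one recovers, after using $\widetilde I_{ij,\tau}=f_{ij,U_iU_j}/L_{ij}^3+O(L_{ij}^{-4})$ and the decay of $\Psi$, the pair van der Waals term $-C_{\rm vdW}(\Psi_i,\Psi_j,U_i,U_j)/L_{ij}^6$. When $\{i,j\}$ and $\{i',j'\}$ share exactly one index, say $i=i'$, $j\neq k$, the resolvent $R_{ij,\tau}$ still factors through the $(i,j)$-pair, the $\Psi_k$ spectator factor is absorbed, and one obtains exactly $-W(U_i,U_j,U_k,\Psi_i,\Psi_j,\Psi_k)/(L_{ij}^3 L_{ik}^3)$, which is the announced three-body term. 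When $\{i,j\}\cap\{i',j'\}=\varnothing$ the two interactions are supported in disjoint regions and the cross term is exponentially small, hence absorbed in the $O(\min(L_{ij})^{-7})$ remainder. Dividing by $\|\Phi_\tau\|^2=1+O(\min L_{ij}^{-6})$ only affects the remainder since the expansion is $O(\min L_{ij}^{-3})$. This proves the upper bound~\eqref{eq:upper_several}.

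For the matching lower bound under irreducibility and~\eqref{eq:main_assumption2sev}, I would run the Feshbach--Schur argument of Section~\ref{sec:ProofvdWdom}. Convergence of the spectrum of $H_N(Y(\tau),Z)$ to $\bigcup$ of the cluster energies, together with~\eqref{eq:main_assumption2sev}, gives a uniform gap $\varepsilon>0$ above the ground state, with the ground eigenspace close to $\cQ\big(\bigotimes_k U_k\cG_k\big)$; one then uses the projection $\Pi_\tau$ onto the span $\Upsilon_\tau$ of the test functions above (with $\Psi_k$ ranging over $\cG_k$), which is $O(\min L_{ij}^{-3})$-close to $\cQ\big(\bigotimes_k U_k\cG_{k,L}\big)$ in $H^2$, so~\eqref{eq:F-S_new_space} still holds. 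The off-diagonal estimate $\|\Pi_\tau^\perp H_N(Y(\tau),Z)\Pi_\tau\|=O(\min L_{ij}^{-4})$ goes through exactly as in Section~\ref{sec:setimate_off_diagonal_term}, handling the pair interactions one at a time, so its square $O(\min L_{ij}^{-8})$ is negligible. Irreducibility of each $\cG_k$ forces $\Pi_\tau H_N(Y(\tau),Z)\Pi_\tau$ to be a multiple of $\Pi_\tau$ on $\Upsilon_\tau$ by the same permutation-commutation argument (the operators $K_\tau$, $T_\tau$ now commute with $\bigotimes_k\pi_k$, so are scalar on $\bigotimes_k\cG_{k,\tau}$); hence~\eqref{eq:lower_abstract_F-S} with this $\Pi_\tau$ gives $\cE(\tau)\ge$ the test-function energy minus $O(\min L_{ij}^{-8})$, i.e. equality, and each term is $\Psi$-independent because it equals the exact eigenvalue.

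The main obstacle I anticipate is bookkeeping rather than any new analytical difficulty: one must keep careful track of which pairs of interaction terms overlap in zero, one, or two indices, check that in the one-shared-index case the resolvent genuinely reduces to the two-body resolvent on the spectator-augmented space (so that $W$ is well defined independently of whether one writes $R_{ij}$, $R_{ik}$, or $R_{ijk}$), and verify that all cross terms with disjoint supports are exponentially small uniformly in $U_1,\dots,U_K$. A secondary point is that the error estimates must be uniform in the configuration, i.e. controlled by $\min_{i<j}L_{ij}$, which requires that the cut-off $\chi_\tau$ separate the molecules pairwise — this is why one uses the enlarged radius $4L/3$ as in the two-molecule case, adapted so that the $K$ localization regions remain pairwise separated by $\gtrsim\min L_{ij}$. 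None of these steps is conceptually new compared with Section~\ref{sec:ProofvdWdom} and~\cite[Theorem~1.5]{Anapolitanos-16}, which is why only a sketch is warranted.
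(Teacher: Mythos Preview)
Your overall strategy matches the paper's sketch: the same test function, the same Feshbach--Schur lower bound, the same irreducibility argument. There is one concrete error in your reasoning, however, in the treatment of the cross terms with $\{i,j\}\cap\{i',j'\}=\varnothing$.

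You claim these terms are ``exponentially small'' because ``the two interactions are supported in disjoint regions''. This is not the correct mechanism, and the stated reason is false: the Coulomb interactions $I_{ij,\tau}$ and $I_{i'j',\tau}$ are multiplication operators defined on all of configuration space; they have no support restriction at all. The cut-offs localize the \emph{wavefunctions}, not the interactions, and the inner product still integrates over every variable, so spatial separation of the molecules does not by itself kill the term. The actual reason these cross terms vanish (exactly, not merely exponentially) is orthogonality: since $I_{i'j',\tau}$ does not touch the $(i,j)$ variables, the bra vector $I_{i'j',\tau}\bigotimes_m\Phi_{m,\tau}$ still carries the factor $\Phi_{i,\tau}\otimes\Phi_{j,\tau}\in\cG_{i,\tau}\otimes\cG_{j,\tau}$ in those slots, while the ket $R_{ij,\tau}I_{ij,\tau}\bigotimes_m\Phi_{m,\tau}$ lies in $(\cG_{i,\tau}\otimes\cG_{j,\tau})^\perp$ in those same slots because $R_{ij,\tau}=\Pi_{ij,\tau}^\perp(\cdots)^{-1}\Pi_{ij,\tau}^\perp$. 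Hence the pairing over the $(i,j)$ variables is zero. This is exactly what the paper records as ``$R_{ij,\tau}\Phi_{i,\tau}\otimes\Phi_{j,\tau}=0$''. Once you replace your support argument by this orthogonality observation, the rest of your sketch is correct and aligns with the paper.
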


It is possible to provide a more precise dependence of the error term with respect to the distances $L_{ij}$, but we do not state the corresponding bound for simplicity. 

Note that if the $k$-th molecule does not have a dipole moment then the corresponding three-body energy cancels, by symmetry. 
A better way of writing the full quantum correction is
\begin{equation}
-\pscal{\left(\sum_{1\leq i<j\leq K}\frac{f_{ij,U_i,U_j}}{(L_{ij})^3}\right)\bigotimes_{n=1}^K\Psi_n\;,\;\cR_{1,...,K}\left(\sum_{1\leq k<\ell\leq K}\frac{f_{k\ell,U_k,U_\ell}}{(L_{k\ell})^3}\right)\bigotimes_{n=1}^K\Psi_n}
\label{eq:full_vdW_molecules}
\end{equation}
where
$$\cR_{1,...,K}:=\Pi_{1,...,K}^\perp\left(\sum_{n=1}^K(H_n-E_n)\right)^{-1}_{\big|\left(\bigotimes_{n=1}^K\cG_n\right)^\perp}\Pi_{1,...,K}^\perp$$
and where $\Pi_{1,...,K}$ is the projection onto the ground state eigenspace $\bigotimes_{n=1}^K\cG_n$. When expanding the two sums in~\eqref{eq:full_vdW_molecules}, we obtain the two-body van der Waals term for $\{i,j\}=\{k,\ell\}$, and the three-body term for $\#\{i,j\}\cap\{k,\ell\}=1$. If $\{i,j\}\cap\{k,\ell\}=\emptyset$, the term vanishes as explained below. It is now clear from the formula~\eqref{eq:full_vdW_molecules} that the full quantum correction is $\leq0$ and a slight adaptation of Proposition~\ref{prop:positivity_vdW} shows that it is indeed $<0$. 

\begin{proof}[Sketch of the proof of Theorem~\ref{thm:several_molecules}]
The proof is similar to that given in Section~\ref{sec:ProofvdWdom} for $K=2$. We may define the two-molecule interaction terms $I_{ij}$ and the cut-off characteristic function $\chi_{ij}$. We localize the ground states similarly as for two molecules, at a distance $\frac{1}{4}\min_{k \neq j}L_{kj}$ from each of the $X_j$. 
We denote the cut-off and rotated-translated ground states by $\Phi_{1,\tau},\dots , \Phi_{K,\tau}$ where we recall that $\tau=(X_1,U_1,...,X_K,U_K)$.
The test  function we consider is, similarly as for $K=2$, 
\begin{equation}\label{eq:Phi0sev}
\Phi_\tau:=\bigotimes_{k=1}^K\Phi_{k,\tau}
- \sum_{1 \leq i < j \leq k} \chi_{ij,\tau} R_{ij,\tau}I_{ij,\tau} \bigotimes_{k=1}^K\Phi_{k,\tau}
\end{equation}
where the index $\tau$ indicates that the quantities are translated and rotated in the obvious manner. 
Using that $(H_{j,\tau}-E_j) \Phi_{j,\tau} =O(e^{-cL})$ where $L=\min{L_{ij}}$ and similar computations as for $K=2$, we find for the energy
\begin{align}
&\pscal{\Phi_\tau,\left(H_N(Y(\tau),Z)-\sum_{k=1}^K E_k\right) \Phi_\tau }\nn\\
&=\sum_{1\leq j<k\leq K}\pscal{\Phi_{1,\tau}\otimes \dots \otimes \Phi_{K,\tau},I_{jk,\tau}\Phi_{1,\tau}\otimes \dots \otimes \Phi_{K,\tau}}\nn\\
&\quad - \sum_{1 \leq i < j \leq K} \sum_{1 \leq k < \ell \leq K}\pscal{I_{k\ell,\tau}\Phi_{1,\tau}\otimes \dots \otimes \Phi_{K,\tau},  R_{ij,\tau} I_{ij,\tau}\Phi_{1,\tau}\otimes \dots \otimes \Phi_{K,\tau}}\nn\\ 
&\quad+O\left(\frac1{L^7}\right).\label{eq:HminusEexp}
\end{align}
The first term gives the classical multipolar terms, after applying Lemma~\ref{lem:multipolar_expansion} to the total density
$$\rho=\sum_{k=1}^K\sum_{m=1}^{M_k}z_k\delta_{X_k+U_ky_{k,m}}-\rho_{\bigotimes_{k=1}^K\Phi_{k,\tau}}=\sum_{k=1}^K\left(\sum_{m=1}^{M_k}z_k\delta_{X_k+U_ky_{k,m}}- \rho_{\Phi_{k,\tau}}\right).$$
If $\{k,\ell\} \cap \{i,j\}= \emptyset $ then the corresponding term in the second sum of~\eqref{eq:HminusEexp} vanishes due to the fact that $R_{ij,\tau} \Phi_{i,\tau}\otimes \Phi_{j,\tau}=0$. Splitting then the sum according to whether $\{k,\ell\}= \{i,j\}$ or  $\#\{k,\ell\} \cap \{i,j\}=1$ we obtain the claimed upper bound with similar arguments as for $K=2$.

Assuming that the ground state eigenspaces are irreducible and that \eqref{eq:main_assumption2sev} holds, the lower bound is proved similarly as for $K=2$. We can define the orthogonal projection $\Pi_\tau$ onto the space spanned by our trial functions~\eqref{eq:Phi0sev} for all possible choices $\Psi_1,\dots,\Psi_K$. Arguing with the IMS localization formula as in \cite{MorSim-80,AnaSig-17}, it follows from \eqref{eq:main_assumption2sev} that 
$$\Pi_\tau^\perp\big(H(Y(\tau),Z)-\cE(\tau)\big)\Pi_\tau^\perp \geq \eps\Pi_\tau^\perp $$ 
for $L=\min L_{ij}$ large enough. The Feshbach-Schur method is then applicable.  The off-diagonal terms are small,
$$\norm{\Pi^\bot_\tau H(Y(\tau),Z)\Pi_\tau}=O\left(\frac1{L^4}\right)$$
for the same reasons as for $K=2$. 
Now, our trial state space still has the property of being invariant under permutations of the spins of each molecule. Hence $\Pi_\tau H(Y(\tau),Z)\Pi_\tau$ is a multiple of $\Pi_\tau$ and we obtain the lower bound. This concludes our brief sketch of the proof of the Theorem~\ref{thm:several_molecules}. 
\end{proof}

% \bibliography{biblio}

\begin{thebibliography}{10}

\bibitem{Ahlrichs-76}
Reinhart Ahlrichs.
\newblock Convergence properties of the intermolecular force series
  (1/r-expansion).
\newblock {\em Theoret. Chim. Acta}, 41(1):7--15, Mar 1976.

\bibitem{AmbRab-73}
A.~Ambrosetti and P.H. Rabinowitz.
\newblock Dual variation methods in critical point theory and applications.
\newblock {\em J. Funct. Anal.}, 14:349--381, 1973.

\bibitem{Anapolitanos-16}
Ioannis Anapolitanos.
\newblock {Remainder Estimates for the Long Range Behavior of the van der Waals
  Interaction Energy}.
\newblock {\em Ann. Henri Poincar\'e}, 17(5):1209--1261, May 2016.

\bibitem{AnaLewRot-19}
Ioannis Anapolitanos, Mathieu Lewin, and Matthias Roth.
\newblock Differentiability of the van der Waals interaction between two atoms.
\newblock  arXiv:1902.06683
 2019.

\bibitem{AnaSig-17}
Ioannis Anapolitanos and Israel~Michael Sigal.
\newblock {Long Range Behaviour of the van der Waals Force}.
\newblock {\em Comm. Pure Appl. Math.}, 70(9):1633 -- 1671, 2017.

\bibitem{LilTka-10}
O.~Anatole~von Lilienfeld and Alexandre Tkatchenko.
\newblock Two- and three-body interatomic dispersion energy contributions to
  binding in molecules and solids.
\newblock {\em The Journal of Chemical Physics}, 132(23):234109, 2010.

\bibitem{AveSei-75}
P.~Aventini and R.~Seiler.
\newblock On the electronic spectrum of the diatomic molecular ion.
\newblock {\em Comm. Math. Phys.}, 41:119--134, 1975.

\bibitem{AxiTell-43}
B.~M. Axilrod and E.~Teller.
\newblock Interaction of the van der Waals type between three atoms.
\newblock {\em J. Chem. Phys.}, 11(6):299--300, 1943.

\bibitem{Bahri-89}
A.~Bahri.
\newblock {\em Critical Points at Infinity in some Variational Problems},
  volume 182 of {\em Pitman Research Notes in Mathematics Series}.
\newblock Longman Scientific and Technical Ed., 1989.

\bibitem{BurBon-81}
E.~Burgos and H.~Bonadeo.
\newblock Electrical multipoles and multipole interactions: compact expressions
  and a diagrammatic method.
\newblock {\em Molecular Physics}, 44(1):1--15, 1981.

\bibitem{ComDucSei-81}
J.~M. Combes, P.~Duclos, and R.~Seiler.
\newblock {\em The {B}orn-{O}ppenheimer Approximation}, pages 185--213.
\newblock Springer US, Boston, MA, 1981.

\bibitem{ComSei-78}
J.~M. Combes and R.~Seiler.
\newblock Regularity and asymptotic properties of the discrete spectrum of
  electronic hamiltonians.
\newblock {\em Int. J. Quantum Chem.}, 14(2):213--229, 1978.

\bibitem{ComSei-80}
J.-M. Combes and R.~Seiler.
\newblock Spectral properties of atomic and molecular systems.
\newblock In {\em Quantum dynamics of molecules (Proc. NATO Adv. Study Inst.,
  Univ. Cambridge, Cambridge, 1979)}, volume~57 of {\em NATO Adv. Study Inst.
  Ser., Ser. B: Physics}, pages 435--482. Plenum, New York, 1980.

\bibitem{Combes-73}
J.~M. Combes and L.~Thomas.
\newblock Asymptotic behaviour of eigenfunctions for multiparticle schrödinger
  operators.
\newblock {\em Comm. Math. Phys.}, 34(4):251--270, 1973.

\bibitem{EVan-06}
Weinan {E} and Eric Vanden-Eijnden.
\newblock Towards a theory of transition paths.
\newblock {\em Journal of Statistical Physics}, 123(3):503, May 2006.

\bibitem{FroHer-82}
Richard Froese and Ira Herbst.
\newblock Exponential bounds and absence of positive eigenvalues for {$N$}-body
  {S}chr\"odinger operators.
\newblock {\em Commun. Math. Phys.}, 87(3):429--447, 1982/83.

\bibitem{Ghoussoub-93}
Nassif Ghoussoub.
\newblock {\em Duality and perturbation methods in critical point theory},
  volume 107 of {\em Cambridge Tracts in Mathematics}.
\newblock Cambridge University Press, Cambridge, 1993.

\bibitem{Griesemer-04}
M.~Griesemer.
\newblock Exponential decay and ionization thresholds in non-relativistic
  quantum electrodynamics.
\newblock {\em J. Funct. Anal.}, 210(2):321 -- 340, 2004.

\bibitem{Hof-77}
Maria {Hoffmann-Ostenhof} and Thomas {Hoffmann-Ostenhof}.
\newblock Schr{\"o}dinger inequalities and asymptotic behavior of the electron
  density of atoms and molecules.
\newblock {\em Phys. Rev. A}, 16(5):1782--1785, 1977.

\bibitem{Hunziker-66}
W.~Hunziker.
\newblock On the spectra of {S}chr{\"o}dinger multiparticle {H}amiltonians.
\newblock {\em Helv. Phys. Acta}, 39:451--462, 1966.

\bibitem{Hunziker-86}
W.~Hunziker.
\newblock Distortion analyticity and molecular resonance curves.
\newblock {\em Ann. Inst. H. Poincar\'e Phys. Th\'eor.}, 45(4):339--358, 1986.

\bibitem{Jabri-03}
Youssef Jabri.
\newblock {\em The mountain pass theorem}, volume~95 of {\em Encyclopedia of
  Mathematics and its Applications}.
\newblock Cambridge University Press, Cambridge, 2003.
\newblock Variants, generalizations and some applications.

\bibitem{StaGobTka-14}
Robert A.~DiStasio Jr, Vivekanand~V. Gobre, and Alexandre Tkatchenko.
\newblock Many-body van der Waals interactions in molecules and condensed
  matter.
\newblock {\em Journal of Physics: Condensed Matter}, 26(21):213202, 2014.

\bibitem{Kato}
Tosio Kato.
\newblock {\em Perturbation theory for linear operators}.
\newblock Springer, second edition, 1995.

\bibitem{Lewin-04b}
Mathieu Lewin.
\newblock A mountain pass for reacting molecules.
\newblock {\em Ann. Henri Poincar{\'e}}, 5(3):477--521, 2004.

\bibitem{Lewin-PhD}
Mathieu Lewin.
\newblock {\em Some nonlinear models in Quantum Mechanics}.
\newblock PhD thesis, University of Paris-Dauphine, June 2004.

\bibitem{Lewin-06}
Mathieu Lewin.
\newblock Solution of a mountain pass problem for the isomerization of a
  molecule with one free atom.
\newblock {\em Ann. Henri Poincar{\'e}}, 7(2):365--379, 2006.

\bibitem{LieThi-86}
Elliott~H. Lieb and Walter~E. Thirring.
\newblock Universal nature of {V}an {D}er {W}aals forces for {C}oulomb systems.
\newblock {\em Phys. Rev. A}, 34:40--46, 1986.

\bibitem{LuNol-15}
Jianfeng Lu and James Nolen.
\newblock Reactive trajectories and the transition path process.
\newblock {\em Probab. Theory Related Fields}, 161(1):195--244, Feb 2015.

\bibitem{Morgan-79}
John~D. {Morgan, III}.
\newblock Schr{\"o}dinger operators whose potentials have separated
  singularities.
\newblock {\em J. Operator Theory}, 1(1):109--115, 1979.

\bibitem{MorSim-80}
John~D. {Morgan, III} and Barry Simon.
\newblock Behavior of molecular potential energy curves for large nuclear
  separations.
\newblock {\em Int. J. Quantum Chem.}, 17(6):1143 -- 1166, 1980.

\bibitem{Muto-43}
Yoshio Muto.
\newblock Force between nonpolar molecules.
\newblock {\em Proc Phys Math Soc Jpn}, 17:629--631, 1943.

\bibitem{BucMicZol-07}
H~P.~B\"{u}chler, A~Micheli, and P~Zoller.
\newblock Three-body interactions with cold polar molecules.
\newblock {\em Nature Physics}, 3, 04 2007.

\bibitem{Rabinowitz-86}
Paul~H. Rabinowitz.
\newblock {\em Minimax methods in critical point theory with applications to
  differential equations}, volume~65 of {\em CBMS Regional Conference Series in
  Mathematics}.
\newblock Published for the Conference Board of the Mathematical Sciences,
  Washington, DC, 1986.

\bibitem{ReeSim4}
Michael Reed and Barry Simon.
\newblock {\em Methods of {M}odern {M}athematical {P}hysics. {IV}. {A}nalysis
  of operators}.
\newblock Academic Press, New York, 1978.

\bibitem{Simon-74}
Barry Simon.
\newblock {\em The {$P(\phi )_{2}$} {E}uclidean (quantum) field theory}.
\newblock Princeton University Press, Princeton, N.J., 1974.
\newblock Princeton Series in Physics.

\bibitem{StoSto-66}
Daniel~E. Stogryn and Alex~P. Stogryn.
\newblock Molecular multipole moments.
\newblock {\em Mol. Phys.}, 11(4):371--393, 1966.

\bibitem{Struwe}
Michael Struwe.
\newblock {\em {Variational Methods: Applications to Nonlinear Partial
  Differential Equations and Hamiltonian Systems}}.
\newblock Springer, New York, 4 edition, 2008.

\bibitem{KhrSheMas-98}
Sergei V.~Khristenko, Viacheslav Shevelko, and Alexander~I. Maslov.
\newblock {\em Molecules and Their Spectroscopic Properties}.
\newblock Springer, Berlin, Heidelberg, 01 1998.

\bibitem{VanWinter-64}
C.~{van Winter}.
\newblock Theory of finite systems of particles. {I}. {T}he {G}reen function.
\newblock {\em Mat.-Fys. Skr. Danske Vid. Selsk.}, 2(8):60 pp., 1964.

\bibitem{Zhislin-60}
G.~M. Zhislin.
\newblock Discussion of the spectrum of {S}chr{\"o}dinger operators for systems
  of many particles. (in {R}ussian).
\newblock {\em Trudy Moskovskogo matematiceskogo obscestva}, 9:81--120, 1960.

\bibitem{ZhiSig-65}
G.~M. Zhislin and A.~G. Sigalov.
\newblock The spectrum of the energy operator for atoms with fixed nuclei on
  subspaces corresponding to irreducible representations of the group of
  permutations.
\newblock {\em Izv. Akad. Nauk SSSR Ser. Mat.}, 29:835--860, 1965.

\end{thebibliography}
% \bibliographystyle{plain}

\end{document}